\newtheorem{theorem}{Theorem}[section]
\newtheorem{definition}[theorem]{Definition}
\newtheorem{lemma}[theorem]{Lemma}
\newtheorem{corollary}[theorem]{Corollary}
\newtheorem{assumption}[theorem]{Assumption}
\newtheorem{remark}[theorem]{Remark}
\newtheorem{example}[theorem]{Example}
\newtheorem*{intro*}{\textbf{Introduction and Motivation}}
\newtheorem*{results*}{\textbf{Results}}
\newtheorem*{discussion*}{\textbf{Application and Discussion}}
\newtheorem*{relatedworks*}{\textbf{Related works}}
\newtheorem*{outline*}{\textbf{Outline}}
\newcommand{\cB}{\mathcal{B}}
\newcommand{\cE}{\mathcal{E}}
\newcommand{\cH}{\mathcal{H}}
\newcommand{\cL}{\mathcal{L}}
\newcommand{\cM}{\mathcal{M}}
\newcommand{\cQ}{\mathcal{Q}}
\newcommand{\cS}{\mathcal{S}}
\newcommand{\cV}{\mathcal{V}}
\DeclareMathOperator{\Tr}{Tr}
\begin{document}

\title{Hypothesis testing and Stein's lemma in general probability theoires with Euclidean Jordan algebra and its quantum realization}

\author{Kanta Sonoda}
\email{kanta.sonoda.c8@math.nagoya-u.ac.jp}
\affiliation{Graduate School of Mathematics, Nagoya University, Furo-cho, Chikusa-ku, Nagoya, 464-8602, Japan}

\author{Hayato Arai}
\email{h.arai6626@gmail.com}
\affiliation{
Department of Basic Science, The University of Tokyo, 3-8-1 Komaba, Meguro-ku, Tokyo 153-8902, Japan
}
\affiliation{(Previous) Mathematical Quantum Information RIKEN Hakubi Research Team, 
RIKEN Cluster for Pioneering Research (CPR) and RIKEN Center for Quantum Computing (RQC), 
Wako, Saitama 351-0198, Japan.
}

\author{Masahito Hayashi}
\email{hmasahito@cuhk.edu.cn}
\affiliation{School of Data Science, The Chinese University of Hong Kong, Shenzhen, Longgang District, Shenzhen, 518172, China}
\affiliation{International Quantum Academy, Futian District, Shenzhen 518048, China}
\affiliation{Graduate School of Mathematics, Nagoya University, Furo-cho, Chikusa-ku, Nagoya, 464-8602, Japan}

\begin{abstract}
	Even though quantum information theory gives advantage over classical information theory,
	these two information theories have a structural similarity that many exponet rates of information tasks asymptotically equal to entropic quantities.
	A typical example is Stein's Lemma,
	which many researchers still keep interested in.
	In this paper, in order to analyze the mathemtaical roots of the structural similarity,
	we investigate mathematically minimum structure where Stein's Lemma holds.
	We focus on the structure of Euclidean Jordan Algebras (EJAs),
	which is a generalization of the algebraic structure in quantum theory,
	and we investigate the properties of general models of General Probabilistic Theories (GPTs) generated by EJAs.
As a result, we prove Stein's Lemma in any model of GPTs generated by EJAs
	by establishing a generalization of information theoretical tools from the mathematical properties of EJAs.
\end{abstract}

\maketitle

\section{Introduction}

\subsection{Overview}

Over the past decades, quantum information theory has emerged and flourished as an extension of classical information theory.
Even though quantum information theory has given many information protocols outperforming the bound performance in classical information theory,
these two theories have a structural similarity that many rates of information tasks asymptotically equal to entropic quantities.
One prominent example is Stein's lemma in hypothesis testing \cite {Holevo1975,Hayashi2,Nagaoka2001,HN2003,Hayashi1,HiaiPetz,OgawaNagaoka,Hayashi3}, which characterizes the optimal error exponent for state discrimination by the relative entropy in both classical and quantum theories.
This similarity can be considered as a reflection of ``classicalizations" in the proof of quantum Stein's lemma \cite{Hayashi1,HiaiPetz,OgawaNagaoka,Hayashi3},
represented by pinching.
However, as we understood the recent active works about generalized Stein's lemma \cite{G-Stein,Gap-G-Stein,G-Stein-HY,G-Stein-L}, we found it quite difficult to clarify the valid scope of such classicalizations,
which is far from fully understanding.

To explore the fundamental origins of the similarity, we start with a mathematical generalization of both classical and quantum models: General Probabilistic Theories (GPTs) \cite{Plavala,Janotta,Barret,KMI2009,Muller2013,KBBM2017,Matsumoto2018,Arai1,MAB2022,PNL2023,Short2010,Barnum2012,ALP2019,ALP2021,AH2024,BGW2020}.
GPTs provide a framework for describing general probabilistic models based only on operational axioms of states and measurements, rather than the postulates of quantum mechanics. 
This approach allows us to examine the mathematically universal structures in probabilistic models.
However, the studies of GPTs have clarified two important deficiency in general models, non-unique and non-canonical composite model \cite{Janotta,Barret} and inconsistent definitions of entropic quantities \cite{Short2010,Barnum2012}.
Because of the two deficiency of concepts, it is almost impossible to recover asymptotic rates by entropic quantities in general models, in contrast to classical and quantum theories.
The deficiency implies the additional mathematical structure for asymptotic behavior of entropic quantities.

In order to avoid the deficiency and to disucss asymptotic behavior of entropic quantities,
we focus on 
Euclidean Jordan Algebra (EJA) \cite{BGW2020,Jordan,Faraut,Olsen,Barnum1,Barnum2,Barnum3,Gowda,Mccrimmon}, which is a generalization of the algebraic structures of classical and quantum theories.
EJAs include not only classical and quantum theories but also alternative mathematical models such as quaternionic quantum systems, octonionic quantum systems,
and other type of models called Lorentz type.
Crucially, EJAs possess unique spectral decomposition and canonical composition, which enable rigorous analysis of asymptotic problems.
Therefore, we investigate hypothesis testing in GPTs associated with EJAs,
and we prove a generalized version of standard Stein's lemma in all EJAs.
Our result clarifies that EJA is the core mathematical principles underlying the relation between asymptotic exponent rates and entropic quantities.
Moreover,
our result is significant in terms of studies of EJAs because we recover the asymptotic equation between an exponent rate and an entropic quantity in quantum composite systems
in contrast to the previous studies discussing mathematical properties of a single system \cite{BGW2020,Barnum1,Barnum2,Barnum3}.

In the next section, we give a brief mathematical and technical overview of the whole discussion: definition of entropic quantities, development of information theoretical tools, and proof of Stein’s Lemma.
Roughly speaking,
we define entropic quantities, for example, relative entropy, Petz Relative R\'{e}nyi entropy, and Sandwiched Relative R\'{e}nyi entropy, through spectral decomposition
and investigate asymptotic behaviors of the spectrum of independent identical distribution (i.i.d.) states in the canonical composite system associated with EJAs.
Then,
we prove Stein's lemma even in any general models associated with EJAs,
i.e.,
the asymptotic equation between relative entropy of two states $\rho,\sigma$ and the exponent of type II error under the $\epsilon$-constraint of type I error of hypothesis testing of two i.i.d. states $\rho^{\otimes n},\sigma^{\otimes n}$.

Furthermore, we explore a more intuitive reason why EJA is the core structure of the relation.
We show that all models in GPTs associated with all EJAs can be canonically embedded into higher-dimensional quantum systems,
except for the case of Octonion, which is called exceptional because it cannot be canonically embedded into any other EJAs \cite{Jordan, Mccrimmon}.
Actually, this finding does not make the proof of Stein's lemma in EJAs trivial, but the embeddings give an alternative proof of Stein's lemma in almost all EJAs.
Moreover, the embedding clarifies the physical meaning of model of GPTs associated with EJAs.
Even though the studies of GPTs have become popular, few results \cite{AH2024} gives a rigorous physical implementation of models in GPTs,
our work is also a new result of such a direction.

In summary, we extend Stein's Lemma to a more general class of probabilistic models and provide a new proof using the structure of EJAs.
These findings deepen our understanding of the fundamental structure of the synchronized results that asymptotic information rates are given by entropic quantities.
Our results suggest that key principles of the synchronized results is the structure of EJAs, which is not only offering new directions for exploring probabilistic models in physics and information theory but also providing mathematical essence of standard quantum information theory.

\subsection{Proof Sketches and Outline of the Paper}

Now, we explain the whole organization of this paper and the sketch of the proof of Stein's Lemma in EJAs.
We draw the important implications of the proofs as Figure~\ref{figure:structure}, roughly.
Here, we remark that
all non-cited statements are proven in this paper.
However, we only write proofs of essential statements in main part of this paper.
Other proofs are written in Appendix.

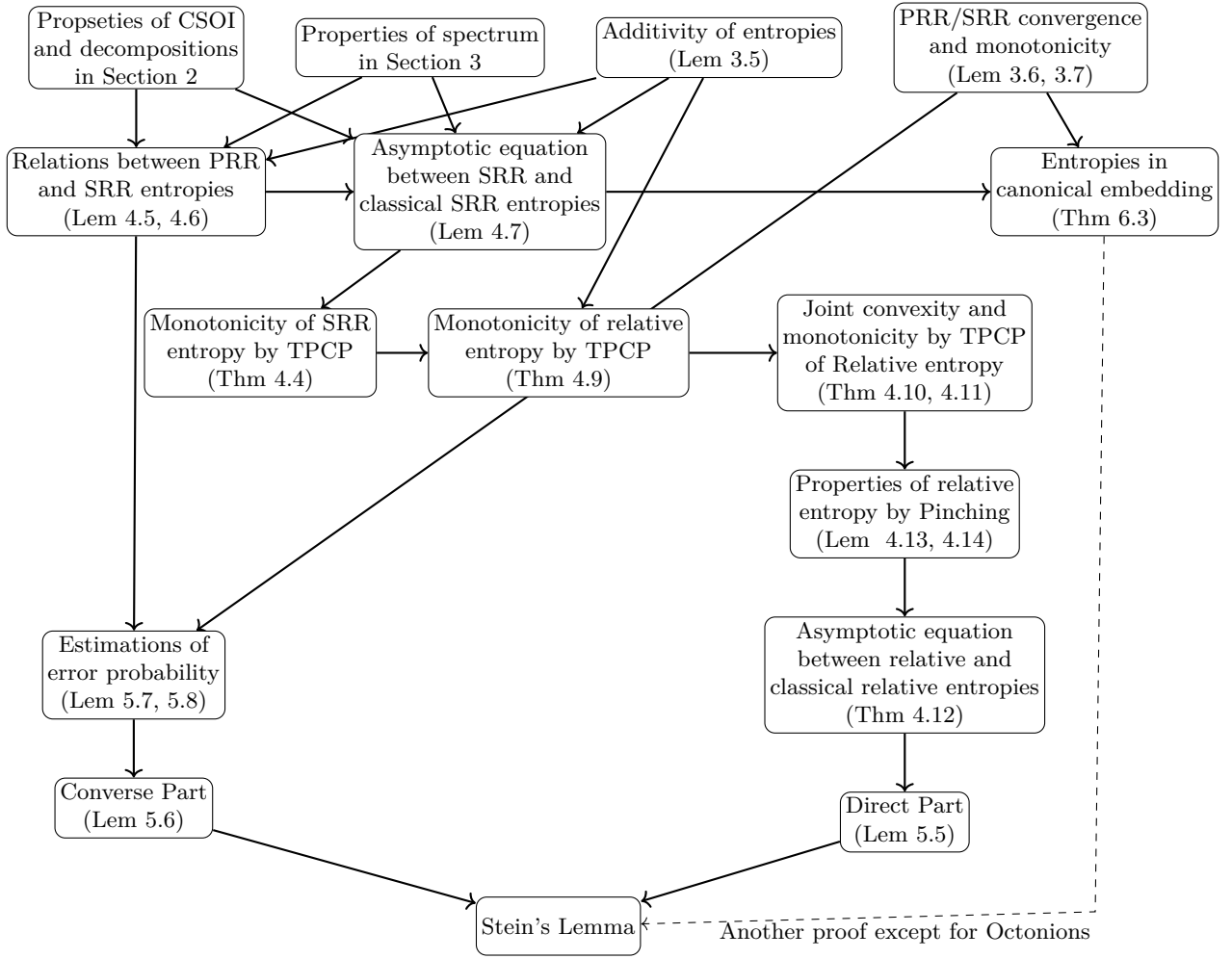
\begin{figure}[H]
\caption{The structure of the proofs.}
\label{figure:structure}
\centering
\begin{tikzpicture}[
  node distance=0.8cm and 2.2cm,
  nodestyleA/.style={
    rectangle, draw, fill=white, rounded corners,
    align=center, minimum height=0.8cm,
    inner xsep=2pt, inner ysep=2pt, font=\footnotesize
  },
textnode/.style={
    draw=none,        
    fill=none,        
    align=center,
    font=\footnotesize
  },
  arrowA/.style={->, thick},
  arrowB/.style={->, dashed}
  ]

\node[nodestyleA] (CSOI) at (0,0) {Propseties of CSOI\\ and decompositions\\ in Section~\ref{section:preliminary}};
\node[nodestyleA,right=of CSOI, xshift=-1.5cm] (Spectrum) {Properties of spectrum\\in Section~\ref{section:Information theorical tools}};
\node[nodestyleA,right=of Spectrum, xshift=-1.5cm] (Additivity) {Additivity of entropies\\(Lem~\ref{theorem:additivity})};
\node[nodestyleA,right=of Additivity, xshift=-1.5cm] (ConvPRRSRR) {PRR/SRR convergence\\
and monotonicity\\(Lem \ref{lemma:limitofRenyies}, \ref{lemma:monotonicityins})};

\node[nodestyleA,below=of CSOI] (L45L46) {Relations between
PRR\\ and SRR entropies \\ (Lem \ref{lemma:inequalityofSandwich}, \ref{lemma:observationinequalityofSandwich})};
\node[nodestyleA,right=of L45L46, xshift=-1cm] (L47) {Asymptotic equation \\
between SRR and \\classical SRR entropies\\(Lem \ref{lemma:limofSandwich})};

\node[nodestyleA,below=of L47, xshift=-3cm] (T44) {Monotonicity of SRR\\ entropy by TPCP\\(Thm \ref{theorem:MonotonicityofSandwich})};
\node[nodestyleA,right=of T44, xshift=-1.5cm] (T49) {Monotonicity of relative\\entropy by TPCP\\(Thm \ref{theorem:MonotonicityofRelative})};
\node[nodestyleA,right=of T49, xshift=-1cm] (T410) {Joint convexity and\\ monotonicity by TPCP\\ of Relative entropy\\(Thm \ref{theorem:jointconvexity}, \ref{theorem:observationinequalityofRelative})};
\node[nodestyleA,below=of T410] (L413414) {Properties of relative\\ entropy by Pinching\\ (Lem ~\ref{lemma:DirectpartofRelative1},~\ref{lemma:DirectpartofRelative2})};

\node[nodestyleA,below=of L413414] (T412) {Asymptotic equation \\
between relative and \\classical relative entropies\\(Thm~\ref{theorem:DirectpartofRelative})};
\node[nodestyleA,left=7.36cm of T412] (L57L58) {Estimations of\\error probability\\(Lem \ref{lemma:converse1}, \ref{lemma:converse2})};
\node[nodestyleA,below=of L57L58] (Converse) {Converse Part\\(Lem \ref{lemma:converse})
};

\node[nodestyleA,below=of T412] (Direct) {Direct Part\\(Lem \ref{lemma:direct})};

\node[nodestyleA,below=of Converse, xshift=5.8cm] (Main) {Stein's Lemma};

\node[nodestyleA,right=of L47, xshift=3.05cm] (QRep){Entropies in\\ canonical embedding\\(Thm \ref{TT3})};


\begin{scope}[on background layer]

\draw[arrowA] (CSOI) -- (L45L46);
\draw[arrowA] (Spectrum) -- (L45L46);
\draw[arrowA] (Additivity) -- (L45L46);

\draw[arrowA] (L45L46) -- (L47);
\draw[arrowA] (L45L46) -- (L57L58);
\draw[arrowA] (Additivity) -- (L47);
\draw[arrowA] (CSOI) -- (L47);
\draw[arrowA] (Spectrum) -- (L47);

\draw[arrowA] (L47) -- (T44);
\draw[arrowA] (T44) -- (T49);
\draw[arrowA] (T49) -- (T410);
\draw[arrowA] (Additivity) -- (T49);

\draw[arrowA] (ConvPRRSRR) -- (L57L58);
\draw[arrowA] (L57L58) -- (Converse);

\draw[arrowA] (T410) -- (L413414);
\draw[arrowA] (L413414) -- (T412);
\draw[arrowA] (T412) -- (Direct);

\draw[arrowA] (Direct) -- (Main);
\draw[arrowA] (Converse) -- (Main);

\draw[arrowA] (ConvPRRSRR) -- (QRep);
\draw[arrowA] (L47) -- (QRep);
\draw[arrowB] (QRep) -- (13.1,-11.9) -- (Main);

\node[textnode] (label) at (10.5,-12.2) {Another proof except for Octonions};

\end{scope}

\end{tikzpicture}
\end{figure}

\subsubsection{Contents in Section
~\ref{section:preliminary}
}

Section~\ref{section:preliminary} introduces mathematical frameworks of GPTs and EJAs.
Besides, we give many important prperties of EJAs in this section.

In Section~\ref{subsection:FrameworkGPTs},
we define the framework of GPTs,
which is a generalization of classical and quantum theory.
A model of GPTs is defined as a tuple of positive cone $\cQ\subset\cV$ and an unit effect $u\in\cV$ for a finitie-dimensional real vector space $\cV$ with inner product $\langle \ , \ \rangle$.
The main objects in a model of GPTs are a state $\rho$ and a measurement $\bm{M}$ defined as an element $\rho\in\cQ$ with $\langle \rho,u\rangle=1$ and a family $\bm{M}:=\{M_i\}_{i\in I}$ of the dual cone $\cQ^\ast$ satisfying $\sum_{i\in I} M_i=u$, respectively.
Also, we give many important notations, for example state space $\cS(\cQ,u)$ and measurement class $\cM(\cQ,u)$, in this section. 

In Section~\ref{subsection:EuclideanJordanalgebra},
we give the mathematical definition of EJAs and the relation between EJAs and GPTs with examples including classical theory and quantum theory.
An EJA is defined as a finite-dimensional real vector space with special type of non-associative product $\circ$, called Jordan product (Definition~\ref{definition:EuclideanJordanalgebra}).
However indeed, except for a one type called Lorentz type, all ``simple" EJAs are classified as the set of Hermitian matrices with a normed-division-algebra-valued-entries, i.e., real $\mathbb{R}$, complex $\mathbb{C}$, quaternion $\mathbb{H}$, and octonion $\mathbb{O}$ valued-entries, with the product $\frac{1}{2}\left(XY+YX\right)$ (Table~\ref{table:SimpleEJA}).
Moreover, all EJAs are written as a direct sum of simple EJAs.
In other words, the above types of simple EJAs are essential parts of EJAs.
We do not consider a concrete EJA but an abstract structure of EJAs for the proof of Stein's Lemma, but the classification is important for discussions in Section~\ref{section:quantumrealization}.
Next, we define the canonical composite systems associated with two EJAs (Definition~\ref{definition:CompositesysteminEJA}), which is important part for the $n$-shot scenario in this work.

In Section~\ref{subsection:ConceptsinEJA},
we give some important concepts and show their properties.
First, we introduce \textit{Complete System of Orthogonal Idempotents} (CSOI) and \textit{Jordan frame},
which correspond to the projections in quantum theory.
As important propositions of CSOI and Jordan frame,
we see two types of decomposition, \textit{spectral decomposition} (Theorem~\ref{theorem:Spectraldecomposition}) and \textit{Peirce decomposition} (Theorem~\ref{theorem:peircedecomposition}).
Spectral decomposition in EJAs, a decomposition on CSOI, is just a generalization of spectral decomposition of Hermitian matrices.
Peirce decomposition is a generalization of basis decomposition composed by projections and interferences of Hermitian matrices.

Second, we introduce a linear map $P_x$ called \textit{quadratic form} (Definition~\ref{definition:quadraticform}) for $x\in \cV$, which induces an important map, so-called \textit{pinching} map in quantum theory.
Then, we see some properties of the quadratic form and the above two decomposition (Theorem~\ref{theorem:simultaneousspectral} and Lemma~\ref{lemma:positivityofQradratic}, \ref{lemma:projectionPeirce}, \ref{lemma:orthogonalPeirce}, and \ref{lemma:newCSOI}), which recovers the important properties of entropic quantities for the proof of Stein's Lemma in Section~\ref{section:Information theorical tools} and \ref{section:Informationquantities}.

\subsubsection{Contents in Section
~\ref{section:Information theorical tools}
}

Section~\ref{section:Information theorical tools} develops information theorical tools as an extention of quantum information theory for the proof of Stein's Lemma.

In Section~\ref{subsection:InformationQuantities},
we define entropic quantities, including \textit{Pets Relative R\'{e}nyi} (PRR) entropy and \textit{Sandwiched Relative R\'{e}nyi} (SRR) entropy, from the spectral decomposition and the CSOI (Definition~\ref{definition:fofEJAelement} and \ref{definition:informationquantities}).
In EJAs, as the spectral decomposition, a state $\rho$ has the unique form
\begin{align}
	\rho=\sum_i\lambda_i c_i,
\end{align}
where $\lambda_i\in\mathbb{R}$ and $\{c_i\}_i$ is a CSOI.
Then, we define $f(\rho)$ as
\begin{align}
	\rho=\sum_if(\lambda_i) c_i
\end{align}
for a real function $f$, and we can define the above entropies.

Next, we prove some essential properties of PRR and SRR entropies:
additivity on tensor product (Lemma~\ref{theorem:additivity}),
\begin{align}
        D(\rho_1\otimes\rho_2 ||\sigma_1\otimes \sigma_2)                   & =D(\rho_1||\sigma_1)+D(\rho_2||\sigma_2).                                                                        \\
        D_{1+s}(\rho_1\otimes \rho_2||\sigma_1\otimes \sigma_2)             & =D_{1+s}(\rho_1||\sigma_1)+D_{1+s}(\rho_2||\sigma_2). \\
        \underline{D}_{1+s}(\rho_1\otimes \rho_2||\sigma_1\otimes \sigma_2) & =\underline{D}_{1+s}(\rho_1||\sigma_1)+\underline{D}_{1+s}(\rho_2||\sigma_2),
    \end{align}
convergence (Lemma~\ref{lemma:limitofRenyies}),
\begin{align}
        \lim_{s\to 0}D_{1+s}(\rho||\sigma)             & =\lim_{s\to 0}\frac{\phi(-s|\rho||\sigma)}{s}=D(\rho||\sigma).         \\
        \lim_{s\to 0}\underline{D}_{1+s}(\rho||\sigma) & =\lim_{s\to 0}\frac{\tilde{\phi}(-s|\rho||\sigma)}{s}=D(\rho||\sigma),
    \end{align}
and monotonicity (Lemma~\ref{lemma:monotonicityins}) on $s$,
from the properties of spectral decomposition.
Besides, we prove 
Jennsen's inequality for any convex function (Lemma~\ref{lemma:Jenseninequality})
and a bound of the number of distinct eigenvalues (Lemma~\ref{lemma:numberofeigenvalues}), i.e.,
$|C_{x^{\otimes n}}|\le (n+1)^{d-1}$
from the properties of CSOI shown in Section~\ref{subsection:ConceptsinEJA}.

In Section~\ref{subsection:pinching},
we define a generalization of a pinching map (Definition~\ref{definition:PinchingCSOI} and \ref{definition:PinchingState})
as
 \begin{align}
        \kappa_\sigma(\rho):=\sum_iP_{c_i}\rho,
\end{align}
where $P_{c_i}$ is the quadratic form of $c_i$ in Section~\ref{subsection:ConceptsinEJA}.
Next, we show that any two states are classical after pinching  (Lemma~\ref{lemma:PinchingStateisState} and \ref{lemma:classicallyofpinching}),
which prove some lemmas in the next part in this section.
Second, we define an important measurement, called \textit{pinchied measurement}, as
\begin{align}
        M^\rho_\sigma:=\{P_{c_{i,j}}M_k\}_{i,j,k},
    \end{align}
(Definition~\ref{definition:pinchedmeasurement}).
Then, we show two important properties: the following relation between relative entropy with pinching states and classical entropy with pinchied measurement (Lemma~\ref{lemma:RepresentEntropies})
\begin{align}
        \underline{D}_{1+s}(\kappa_\sigma(\rho)||\sigma)=D_{1+s}(\kappa_\sigma(\rho)||\sigma)=D_{1+s}(P_\rho^{M^\rho_\sigma}||P_\sigma^{M^\rho_\sigma})
\end{align}
and pinching inequality (Lemma~\ref{lemma:PinchingInequality}).
These properties are also shown by the properties of CSOI shown in Section~\ref{subsection:ConceptsinEJA}
and play an essential role for the proof of the direct part of Stein's Lemma.

In Section~\ref{subsection:tpcp},
we define Trace Preserving and Completely Positivity (TPCP) in EJAs (Defiition~\ref{definition:tracepreserving} to \ref{definition:tpcp})
and basic properties of TPCP map (Lemma~\ref{lemma:TPCPtensor} an d\ref{lemma:identitypreserving}).
Next, we check that partial trace and measurement are TPCP map (Lemma~\ref{lemma:TPCPofPartialtrace} and \ref{lemma:TPCPofObservation}).

\subsubsection{Contents in Section
~\ref{section:Informationquantities}
}

Section~\ref{section:Informationquantities} analyzes three information quantities, PRR entropy (in Section~\ref{subsection:relativerenyi}),
SRR entropy (in Section~\ref{subsection:sandwichedrelative}),
and Relative entropy, respectively (in Section~\ref{subsection:relative}).

The main goal is to prove Theorem~\ref{theorem:DirectpartofRelative},
i.e.,
the following relation of relative entropy with the pinchied measurement $I^{\rho^{\otimes n}}_{\sigma^{\otimes n}}$ defined in Definition~\ref{definition:pinchedmeasurement}:
\begin{align}
        \lim_{n\to\infty}\frac{1}{n}D^{I^{\rho^{\otimes n}}_{\sigma^{\otimes n}}}(\rho^{\otimes n}||\sigma^{\otimes n})=D(\rho||\sigma),
\end{align}
which shows the direct part of Stein's Lemma by combining classical Stein's Lemma.
Theorem~\ref{theorem:DirectpartofRelative} is shown by the following relations:
\begin{align}
	D(\rho||\sigma)&\ge D(P^M_\rho||P^M_\sigma) \quad\quad (\mbox{Theorem~\ref{theorem:observationinequalityofRelative}}),\\
	D(\rho||\sigma)&=D(\rho||\kappa_\sigma(\rho))+D(\kappa_\sigma(\rho)||\sigma) \quad\quad (\mbox{Lemma~\ref{lemma:DirectpartofRelative1}}),\\
	D(\rho||\kappa_C(\rho))&=H(\kappa_C(\rho))-H(\rho)\le \log|C| \quad\quad (\mbox{Lemma~\ref{lemma:DirectpartofRelative2}}).
\end{align}

Lemma~\ref{lemma:DirectpartofRelative1} is directly shown from the definition of entropy and EJAs in Appendix~\ref{appendix:proofofRelative}.
In Appendix~\ref{appendix:proofofRelative}, Lemma~\ref{lemma:DirectpartofRelative2} is shown by the joint convexity, i.e.,
\begin{align}
	D(x||y)=\sum_{i=1}^k p_iD(\rho_i||\sigma_i),\quad\quad \mbox{(Theorem~\ref{theorem:jointconvexity})}
\end{align}
and the properties of CSOI and pinching in Section~\ref{subsection:ConceptsinEJA}.
Theorem~\ref{theorem:jointconvexity} and Theorem~\ref{theorem:observationinequalityofRelative} are shown by monotonicity of relative entropy by TPCP map (Theorem~\ref{theorem:MonotonicityofRelative}), i.e., the following relation:
\begin{align}\label{intro:eq-relative}
	D(\rho||\sigma)\ge D(\kappa(\rho)||\kappa(\sigma))
\end{align}

To prove Theorem~\ref{theorem:MonotonicityofRelative} is the main aim of the first part of Section~\ref{section:Informationquantities}.
The relation \eqref{intro:eq-relative} is recovered by the convergence of SRR entropy and the same relation for SRR entropy (Theorem~\ref{theorem:MonotonicityofSandwich}), i.e., the following relation:
\begin{align}
        \underline{D}_{1+s}(\rho||\sigma)\ge \underline{D}_{1+s}(\kappa(\rho)||\kappa(\sigma)),\quad s>0.
\end{align}
This relation is proven by the fact that SRR entropy is represented by the asymptotic classical SRR entropy with the optimal measurement, i.e.,
the following relation (Lemma~\ref{lemma:limofSandwich}):
\begin{align}
        \underline{D}_{1+s}(\rho||\sigma)=\lim_{n\to\infty}\frac{1}{n}\max_{M^n}D_{1+s}(P^{M^n}_{\rho^{\otimes n}}||P^{M^n}_{\sigma^{\otimes n}}),\quad s>0.
\end{align}

Lemma~\ref{lemma:limofSandwich} is shown in Appendix~\ref{appendix:proofofSandwich} with conbining many lemmas,
Lemma~\ref{lemma:inequalityofSandwich}, Lemma~\ref{lemma:observationinequalityofSandwich},
Lemma~\ref{theorem:additivity},
properties of Pinching, and the number of spectrum in Section~\ref{subsection:pinching}.
Lemma~\ref{lemma:observationinequalityofSandwich} states the following relation:
\begin{align}
        \underline{D}_{1+s}(\rho||\sigma)\ge D_{1+s}(P^M_\rho||P^M_\sigma),\quad s>0,
\end{align}
which is also important for the proof of the converse part of Stein's Lemma.

Lemma~\ref{lemma:inequalityofSandwich} and \ref{lemma:observationinequalityofSandwich} are proven in Appendix~\ref{appendix:proofofSandwich},
but, an essential part to prove these lemmas is the same as the proof of monotonicity of PRR entropy with observation (Theorem~\ref{theorem:MonotonicityofRelativerenyi}).
We give Theorem~\ref{theorem:MonotonicityofRelativerenyi} in the main part for reader's convenience.
Theorem~\ref{theorem:MonotonicityofRelativerenyi} states the following relations:
\begin{align}
        D_{1+s}(\rho||\sigma)\ge \lim_{n\to \infty}\frac{1}{n}D_{1+s}(\kappa_{\sigma^{\otimes n}}(\rho^{\otimes n})||\sigma^{\otimes n})\ge D_{1+s}(P^M_\rho||P^M_\sigma)\quad(s>0),
\end{align}
which is the first statement in this section.
Theorem~\ref{theorem:MonotonicityofRelativerenyi} is also shown by many lemmas,
Lemma~\ref{lemma:Monotonicity of Relativerenyi1},
Lemma~\ref{lemma:Monotonicity of Relativerenyi2},
Lemma~\ref{theorem:additivity},
properties of Pinching, and the number of spectrum in Section~\ref{subsection:pinching}.

\subsubsection{Contents in Section
~\ref{section:hypothesistesting}
}

Section~\ref{section:hypothesistesting} discusses hypothesis testing in GPTs and prove Stein's Lemma.

In Section~\ref{subsection:setting},
we introduce the setting of hypothesis testing in GPTs.
Our aim is to analyze the following error probability with asymmetric setting of hypothesis testing:
\begin{align}
        \beta^n_\epsilon(\rho||\sigma):=\min_{0\le T\le u}\{\langle\sigma^{\otimes n},T\rangle|\langle\rho^{\otimes n},u-T\rangle\le \epsilon\},\quad 0<\epsilon<1,
\end{align}
We prove Stein's Lemma, i.e., the following relation:
\begin{align}
        \lim_{n\to \infty}-\frac{1}{n}\log\beta^n_\epsilon(\rho||\sigma)=D(\rho||\sigma).
\end{align}
In order to show this relation,
we introduce the following two exponents
\begin{align}
        B(\rho||\sigma)&:=\sup_{\{0\le T_n\le u\}}\left\{\varliminf_{n\to\infty}-\frac{1}{n}\log\langle\sigma^{\otimes n},T_n\rangle\mid \lim_{n\to\infty}\langle\rho^{\otimes n},u-T_n\rangle=0\right\},\\
        B^\dag(\rho||\sigma)&:=\sup_{\{0\le T_n\le u\}}\left\{\varliminf_{n\to\infty}-\frac{1}{n}\log\langle\sigma^{\otimes n},T_n\rangle\mid \varliminf_{n\to\infty}\langle\rho^{\otimes n},u-T_n\rangle<1\right\},
\end{align}
and show the direct part and converse part.
The direct part, i.e., the relation
\begin{align}
	B(\rho||\sigma)\ge D(\rho||\sigma),
\end{align}
is proven by Theorem~\ref{theorem:DirectpartofRelative} and classical Stein's Lemma.
The converse part, i.e., the relation
\begin{align}
        D(\rho||\sigma)\ge B^\dag(\rho||\sigma),
\end{align}
is proven by Lemma~\ref{lemma:converse1} and \ref{lemma:converse2},
which are shown by Lemma~\ref{lemma:limofSandwich}, 
Lemma~\ref{lemma:limitofRenyies},
and Lemma~\ref{lemma:monotonicityins}.

\subsubsection{Contents in Section
~\ref{section:quantumrealization}
}

In this section,
we give another perspective of the reason why Stein's Lemma holds even in EJAs through an embedding from some types of EJAs to quantum theory.

In Section~\ref{section:CanonicalJordanSub},
we define canonical Jordan subalgebras
and show that a corresponding state space and measurement space in canonical Jordan subalgebras can be regarded as a quotient space of the original state space and measurement space (Theorem~\ref{theorem:embed-measurement}
and \ref{theorem:embed-state}).

In Section~\ref{section:CanonicalEmbedding},
we define canonical embedding map and show that
canonical embedding map does not change SRR entropy and relative entropy (Theorem~\ref{TT3}) by applying Lemma~\ref{lemma:limitofRenyies} and Theorem~\ref{theorem:embed-measurement} in the previous sections.
As a result, we give another proof of Stein's Lemma if there exists canonical embedding map from a model into quantum theory.

In Section~\ref{section:CanonicalJordanSub} and Section~\ref{section:CanonicalEmbedding},
we see that two types of EJAs, Lorentz type and Quaternion type, satisfy the assumption of Theorem~\ref{TT3}.
As we see in Section~\ref{subsection:EuclideanJordanalgebra},
except for the octonion type,
any EJA is composed of real $\mathbb{R}$ and complex $\mathbb{C}$ types of Hermitian matrices and the above two types.
In other words, any EJA is canonically embedded into quantum theory if the EJA does not contain an octonion part, and as a result, we conclude that Stein's Lemma holds in such types of EJAs.
The existence of such canonical embedding maps for Lorentz type and Quaternion type are known in \cite{BGW2020}.
However, we give a new relation between Lorentz type and fermion annihilation and creation operators
and we recover the construction in \cite{BGW2020} by our new relation and Jordan-Wigner transformation \cite{JW1928}.

Here, we remark that we need
Lemma~\ref{lemma:limitofRenyies} for both the direct proof in Section~\ref{section:hypothesistesting} and another proof via quantum realization in Section~\ref{section:quantumrealization}.
Moreover, the direct proof in Section~\ref{section:hypothesistesting} is valid even if an EJA does not contain an octonion part.
Therefore, we need to prove Stein's Lemma directly from the definition of EJAs, as we show since Section~\ref{section:hypothesistesting},
which is the main contribution of this work.

\subsubsection{Contents in Section
~\ref{section:conclusion}
}
Finally, we conclude this paper in Section~\ref{section:conclusion}.
We give a summary of our results and open problems.

\subsubsection{Contents in Appendix}

We give the proofs of some statements in Appendix
if the statements are not co essentially related to the main structure of the whole paper.

\subsection{Abbreviations and Notations}

\begin{table}[h]
\caption{}
\centering
\begin{tabular}{cl}
	Abbreviation & Original \\
	\cline{1-2}
	\cline{1-2}
	GPTs & General Probabilistic Theories\\
	\cline{1-2}
	EJAs & Euclidean Jordan Algebras\\
	\cline{1-2}
	HT & Hypothesis Testing\\
	\cline{1-2}
	i.i.d. & independent and identical distribution \\
	\cline{1-2}
	CSOI & Complete System of Orthogonal Idempotents\\
	\cline{1-2}
	CSOPI & Complete System of Orthogonal Primitive Idempotents\\
	\cline{1-2}
	PRR entropy & Petz Relative R\'{e}nyi entropy\\
	\cline{1-2}
	SRR entropy & Sandwiched Relative R\'{e}nyi entropy\\
	\cline{1-2}
	TPCP & Trace Preserving and Completely Positive\\
	\cline{1-2}
\end{tabular}
\end{table}

\begin{table}[H]
\centering
\begin{tabular}{cll}
	Notation & Meaning & Ref\\
	\cline{1-3} \cline{1-3}
	$\cV$ & A finite-dimensional real vector space with inner product $\langle \ \rangle$ & \\
	\cline{1-3}
	$\cQ$ & A positive cone in a finite-dimensional real vector space $\cV$ & Def. \ref{definition:PositiveofGPTs}\\
	\cline{1-3}
	$\cQ^\ast$ & The dual cone of a positive cone $\cQ$ & Def.~\ref{definition:DualconeofGPTs}\\
	\cline{1-3}
	$\le_\cQ$ & The partial order defined by a positive cone $\cQ$ & Def. \ref{definition:orderinPotivecone}\\
	\cline{1-3}
	$\cS(Q,u)$ & The state space defined by a positive cone $\cQ$ and a unit $u$ & Def. \ref{definition:StateEffectMeasurement} \\
	\cline{1-3}
	$\cE(Q,u)$ & The effect space defined by the dual cone of $\cQ$ and a unit $u$ & Def. \ref{definition:StateEffectMeasurement}\\
	\cline{1-3}
	$M(Q,u)$ & The measurement space defined by the dual cone of $\cQ$ and a unit $u$ & Def. \ref{definition:StateEffectMeasurement}\\
	\cline{1-3}
	$P^{\bm{M}}_\rho$ & The probability distribution obtained by a state $\rho$ and a measurement $\bm{M}$ & Def. \ref{definition:distribution}\\
	\cline{1-3}
	$D(p||q)$ & The classical relative entropy for probability distributions $p$ and $q$  & Def. \ref{definition:classicalrelative} \\
	\cline{1-3}
	$D_{1+s}(p||q)$ & The classical relative R\'{e}nyi entropy for probability distributions $p$ and $q$  & Def. \ref{definition:classicalrelative} \\
	\cline{1-3}
	\multirow{2}{*}{$D^{\bm{M}}(\rho||\sigma)$} & The classical relative entropy associated with the probability distribution & \multirow{2}{*}{Def. \ref{definition:classicalize-relative}}\\
	& obtained by states $\rho,\sigma$ and a measurement $\bm{M}$ &\\
	\cline{1-3}
	\multirow{2}{*}{$D^{\bm{M}}_{1+s}(\rho||\sigma)$} & The classical relative R\'{e}nyi entropy associated with the probability & \multirow{2}{*}{Def.~\ref{definition:classicalize-relative}}\\
	& distribution obtained by states $\rho,\sigma$ and a measurement $\bm{M}$ &\\
	\cline{1-3}
	$\circ$ & Jordan product & Def. \ref{definition:EuclideanJordanalgebra}\\
	\cline{1-3}
	$\cQ_\cV$ & The positive cone associated with an EJA $\cV$ & Def. \ref{definition:positiveconeofEJA}\\
	\cline{1-3}
	$\mathrm{tr}x$ & The trace of an element $x$ in $\cV$ & Def. \ref{definition:trace}\\
	\cline{1-3}
	$\bm{C}_x$ & The CSOI determined by spectral decomposition of an element $x$  & Def.~\ref{definition:CSOIofx}\\
	\cline{1-3}
	$L_x$ & The linear map take the Jordan product with $x$ & Def.~\ref{definition:linearmap}\\
	\cline{1-3}
	$P_x$ & The quadratic form of $x$ &Def.~\ref{definition:quadraticform} \\
	\cline{1-3}
	$\otimes$ & The tensor prodocut in a bipartite vector space & Def.~\ref{definition:CompositesysteminEJA}\\
	\cline{1-3}
	$f(\rho)$ & The state determined by a state $\rho$ and a function $f$ & Def.~\ref{definition:fofEJAelement}\\
	\cline{1-3}
	$H(\rho)$ & von Neumann entropy of a state $\rho$ & Def.~\ref{definition:informationquantities}\\
	\cline{1-3}
	$D(\rho||\sigma)$ & Relative entropy of states $\rho$ over $\sigma$ & Def.~\ref{definition:informationquantities}\\
	\cline{1-3}
	$D_{1+s}(\rho||\sigma)$ & Petz Relative R\'{e}nyi entropy of states $\rho$ over $\sigma$ & Def.~\ref{definition:informationquantities}\\
	\cline{1-3}
	$\underline{D}_{1+s}(\rho||\sigma)$ & Sandwiched Relative R\'{e}nyi entropy of states $\rho$ over $\sigma$ & Def.~\ref{definition:informationquantities}\\
	\cline{1-3}
	$\kappa_{\bm{C}}$ & The pinching map determined by CSOI $\bm{C}$ & Def.~\ref{definition:PinchingCSOI}\\
	\cline{1-3}
	$\kappa_\sigma$ & The pinching map determined by a state $\sigma$ &Def.~\ref{definition:PinchingState} \\
	\cline{1-3}
	$M^\rho_\sigma$ & The measurement determined by pinchied state $\kappa_\sigma(\rho)$ &Def.~\ref{definition:pinchedmeasurement} \\
	\cline{1-3}
	$\mathrm{tr}_{\cV_1}$ & The partial trace map over $\cV_1$ &Def.~\ref{definition:partialtrace} \\
	\cline{1-3}
	$\kappa_{\bm{M}}$ & The observation map by a measurement $\bm{M}$ & Def.~\ref{definition:TPCPobservation}\\
	\cline{1-3}
	\multirow{2}{*}{$\beta^n_\epsilon(\rho||\sigma)$} & The optimal second type error under first type error constraint & \multirow{2}{*}{Def.~\ref{definition:beta}}\\
	& for hypothesis testing of $\rho$ and $\sigma$ & \\
	\cline{1-3}
	$B(\rho||\sigma)$ & Stein exponent with 0 error & Def.~\ref{definition:BandBdag}\\
	\cline{1-3}
	$B^\dag(\rho||\sigma)$ & Stein exponent with arbitral error &Def.~\ref{definition:BandBdag} \\
	\cline{1-3}
\end{tabular}
\end{table}

\newpage

\section{Preliminaries}\label{section:preliminary}

\subsection{Framework of GPTs}\label{subsection:FrameworkGPTs}
As a preliminary, we define some mathematical objects about GPTs.
At first, we define a positive cone and a dual cone,
which are the most basic concepts in GPTs.
Next, by using a positive cone, a dual cone and an unit effect,
we define operational concepts, i.e., a state, an effect and a measurement.
We consider these operational concepts in order to treat information theorical problems.
Next, after we define a probabilistic distribution, we prepare some well-known classical entropies.
These classical entropies will appear when we measure a state in an Euclidean Jordan algebra in later Section.
Finally, we define a composite model of GPTs.
We deal with the composite model of GPTs when we handle $n$ separate systems,
which means that we can operate information-theoritically one system repeatedly.
In this part, the space $\cV$ is denoted as a finite-dimensinal real vector space equipped with an inner product.

\begin{definition}[cone\cite{Faraut}{[Chapter1-1]}]
    A subset $\cQ\subset \cV$ is called a cone if $x\in \cQ$ and $\lambda\in\mathbb{R}_+$ imply $\lambda x\in \cQ$.
\end{definition}
We define the most basic mathematical object in GPTs as follows.
\begin{definition}[Positive cone]\label{definition:PositiveofGPTs}
    A subset $\cQ\subset \cV$ is called as a positive cone
    if $\cQ$ is a cone and holds following 3 conditions.
    \begin{itemize}
        \item[(1)]$\cQ$ has an interior point.
        \item[(2)]$\cQ\cap (-\cQ)=\{0\}$.
        \item[(3)]$\cQ$ is a closed convex set.
    \end{itemize}
\end{definition}


Now, we define another basic concept, dual cone, by using a positive cone.
\begin{definition}[Dual cone\cite{Faraut}{[Chapter1-1]}]\label{definition:DualconeofGPTs}
    A dual cone $\cQ^* \subset \cV$ of a positive cone $\cQ$ is defined as
    \begin{align}
        \cQ^*:=\{x\in \cV|\langle x,y\rangle\ge 0,\forall y\in \cQ\}.
    \end{align}
\end{definition}

The following Lemma about a dual cone holds.
\begin{lemma}[\cite{Faraut}{[Chapter1-1]}]\label{lemma:DualPositive}
    A dual cone $\cQ^*$ of a positive cone $\cQ$ is also a positive cone.
\end{lemma}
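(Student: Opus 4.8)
The plan is to verify the three defining conditions of a positive cone (Definition~\ref{definition:PositiveofGPTs}) for $\cQ^*$ one at a time, disposing of the cone and closed-convexity parts quickly and concentrating the effort on showing that $\cQ^*$ has an interior point. First, $\cQ^*$ is clearly a cone: if $x\in\cQ^*$ and $\lambda\ge 0$, then $\langle\lambda x,y\rangle=\lambda\langle x,y\rangle\ge 0$ for all $y\in\cQ$. It is closed and convex because it can be written as the intersection $\cQ^*=\bigcap_{y\in\cQ}\{x\in\cV:\langle x,y\rangle\ge 0\}$ of closed half-spaces through the origin. For condition~(2), namely $\cQ^*\cap(-\cQ^*)=\{0\}$, I would use that $\cQ$ has an interior point $e$: if $x\in\cQ^*\cap(-\cQ^*)$, then $\langle x,y\rangle\ge 0$ and $\langle x,y\rangle\le 0$ for all $y\in\cQ$, so $\langle x,y\rangle=0$ on $\cQ$; since $e+\varepsilon v\in\cQ$ for every $v\in\cV$ and all sufficiently small $\varepsilon>0$, subtracting $\langle x,e\rangle=0$ gives $\langle x,v\rangle=0$ for all $v$, hence $x=0$.

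The substantive step is condition~(1), that $\cQ^*$ has an interior point, and it is here that both closedness and pointedness ($\cQ\cap(-\cQ)=\{0\}$) of $\cQ$ are needed. Setting aside the trivial case $\cV=\{0\}$, the existence of an interior point forces $\cQ\ne\{0\}$, so $\cQ\cap S$ is a nonempty compact set, where $S$ denotes the unit sphere of $\cV$. I would first check that $0\notin\operatorname{conv}(\cQ\cap S)$: a relation $\sum_i\lambda_i x_i=0$ with $x_i\in\cQ\cap S$ and $\lambda_i>0$ would give $\lambda_1 x_1=-\sum_{i\ge 2}\lambda_i x_i\in\cQ\cap(-\cQ)=\{0\}$, contradicting $\|x_1\|=1$. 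Since convex hulls of compact sets in finite dimensions are compact, $K:=\operatorname{conv}(\cQ\cap S)$ is compact and avoids the origin, so the separating hyperplane theorem yields $w\in\cV$ and $\delta>0$ with $\langle w,k\rangle\ge\delta$ for all $k\in K$; rescaling gives $\langle w,y\rangle\ge\delta\|y\|$ for every $y\in\cQ$. Then for any $x$ with $\|x-w\|<\delta$ and any $y\in\cQ$, by Cauchy--Schwarz $\langle x,y\rangle=\langle w,y\rangle+\langle x-w,y\rangle\ge(\delta-\|x-w\|)\|y\|\ge 0$, so the open $\delta$-ball about $w$ lies in $\cQ^*$; hence $w$ is an interior point of $\cQ^*$. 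Combining this with the easy parts shows $\cQ^*$ satisfies (1)--(3), so $\cQ^*$ is a positive cone.

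The main obstacle is precisely the interior-point step; everything else is bookkeeping. Within that step the two points requiring care are the compactness of $K=\operatorname{conv}(\cQ\cap S)$, which relies on Carath\'eodory's theorem in finite dimensions, and the fact that $0\notin K$, which is exactly where pointedness of $\cQ$ is used. An alternative would be to argue by contradiction from $\cQ^*$ being contained in a hyperplane using the bidual identity $\cQ^{**}=\cQ$ for closed convex cones, but the direct separation argument above is self-contained and avoids invoking that identity.
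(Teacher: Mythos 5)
Your proof is correct. The paper itself gives no argument for this lemma---it is stated with a citation to \cite{Faraut} and never proved---so there is no in-paper route to compare against; your direct separation argument is the standard one and is complete. The two genuinely delicate points are handled properly: you use closedness of $\cQ$ exactly where it is needed (compactness of $\cQ\cap S$, hence of $K=\operatorname{conv}(\cQ\cap S)$ via Carath\'eodory, so that strict separation of $K$ from the origin applies) and pointedness exactly where it is needed (to keep $0$ out of $K$; indeed if $\cQ$ contained a line $\mathbb{R}v$ then $\cQ^*\subset v^\perp$ would have empty interior, so this hypothesis cannot be avoided). The remaining verifications (cone, closed convexity as an intersection of half-spaces, and pointedness of $\cQ^*$ from the interior point of $\cQ$) are all sound.
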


Now, we define an order in a positive cone.
This order is a convenient concept
because the dual cone satisfying Lemma~\ref{lemma:DualPositive} has a nice property of an inner product (Definition~\ref{definition:DualconeofGPTs}).
\begin{definition}[Order in Positive cone]\label{definition:orderinPotivecone}
    We define an order $\le_\cQ$ in a positive cone $\cQ$
    as $x\le_\cQ y\Leftrightarrow y-x\in \cQ$.
\end{definition}
This order in a positive cone $\cQ$ is a partial order as follows.
\begin{lemma}[Partial order]\label{lemma:partialorder}
    An order of Definition~\ref{definition:orderinPotivecone} over a positive cone $\cQ$
    is a partial order over $\cQ$.
\end{lemma}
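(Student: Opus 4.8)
The plan is to verify directly the three defining properties of a partial order—reflexivity, antisymmetry, and transitivity—for the relation $\le_\cQ$, using only conditions (1)--(3) in Definition~\ref{definition:PositiveofGPTs}. The only preliminary fact I would establish first is that a positive cone is closed under addition: given $a,b\in\cQ$, convexity (condition (3)) gives $\tfrac{1}{2}(a+b)\in\cQ$, and then the cone property gives $a+b=2\cdot\tfrac{1}{2}(a+b)\in\cQ$. I would also note that $0\in\cQ$, since $\cQ$ is nonempty (it has an interior point by condition (1)) and is a cone, so $0=0\cdot x\in\cQ$ for any $x\in\cQ$.

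For reflexivity, for any $x\in\cQ$ we have $x-x=0\in\cQ$, hence $x\le_\cQ x$. For antisymmetry, suppose $x\le_\cQ y$ and $y\le_\cQ x$; then $y-x\in\cQ$ and $x-y=-(y-x)\in\cQ$, so $y-x\in\cQ\cap(-\cQ)=\{0\}$ by condition (2), giving $x=y$. For transitivity, suppose $x\le_\cQ y$ and $y\le_\cQ z$; then $y-x\in\cQ$ and $z-y\in\cQ$, so by the closure-under-addition fact $z-x=(z-y)+(y-x)\in\cQ$, i.e.\ $x\le_\cQ z$.

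There is essentially no serious obstacle here; the argument is routine. The only mildly non-obvious ingredient is the closure of a positive cone under addition, which is why I would isolate it at the start—it is the step that actually uses the convexity assumption, whereas antisymmetry uses condition (2) and reflexivity uses the cone property together with nonemptiness. Assembling these three checks completes the proof that $\le_\cQ$ is a partial order on $\cQ$.
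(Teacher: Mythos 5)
Your proof is correct and is the standard argument: reflexivity from $0\in\cQ$, antisymmetry from condition (2) of Definition~\ref{definition:PositiveofGPTs}, and transitivity from closure of $\cQ$ under addition (convexity plus the cone property). The paper omits an explicit proof of this routine lemma, and your argument, including the isolated closure-under-addition step, is exactly what is intended.
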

From now on, we denote this partial order over a positive cone $\cQ$ as $\le_\cQ$.
When the positive cone $\cQ$ is given obviously, we abbreviate $\le_\cQ$ as $\le$.

Now, we can describe the set of states, measurements and effects.
\begin{definition}\label{definition:StateEffectMeasurement}
    Let $\cQ,\cQ^*\subset \cV$ be a positive cone and its dual cone, respectively.
    For a fixed inner point $u\in \cQ^*$ as an unit effect, we define the state space, the effect space and the measurement space as
    \begin{itemize}
        \item State space $\cS(\cQ,u):=\{\rho\in \cQ|\langle \rho,u\rangle=1\}$,
        \item Effect space $\cE(\cQ,u):=\{e \in \cQ^*|0\le \langle e,\rho\rangle \le 1,\forall \rho\in \cS(\cQ,u)\}$,
        \item Measurement class $\cM(\cQ,u):=\{\{M_i\}_{i=1}^d|M_i\in \cQ^*,\ d\in\mathbb{N}, \ \sum_{i=1}^dM_i=u\}$.
    \end{itemize}
\end{definition}
An element of the state space, the effect space and the measurement space are called
a state, an effect, and a measurement, respectively.

Next, we define the probability distribution
when a measurement is applied to a state as follows.
\begin{definition}\label{definition:distribution}
    For a measurement $\bm
{M}=\{M_i\}_{i=1}^d$ and a state $\rho$,
    we define the probability distribution as
    \begin{align}
        P^{\bm{M}}_\rho:=\left\{P^{\bm{M}}_\rho(i):=\langle M_i,\rho\rangle\right\}_{i=1}^d.
    \end{align}
\end{definition}
By Definition~\ref{definition:distribution},
we define the following (classical) Relative entropy and the (classical) Relative R\'{e}nyi entropy.
In later Section~\ref{section:Information theorical tools}, we extend these entropies to Euclidean Jordan algebraic entropies.
In fact, especially, classical Relative R\'{e}nyi entropy have two ways of an extension to Euclidean Jordan algebraic entropies
based on quantum information theory\cite{Hayashi1}.
These entropies are called Relative R\'{e}nyi entropy and Sandwiched Relative R\'{e}nyi entropy in an Euclidean Jordan algebra.
\begin{definition}[(Classical) Relative entropy]\label{definition:classicalrelative}
    Let $p=\{p_i\}_{i=1}^d$ and $q=\{q_i\}_{i=1}^d$ be two probability distributions.
    Then, we define (classical) Relative entropy $D(p||q)$ as
    \begin{align}
        D(p||q):=\sum_{i=1}^dp_i\log \frac{p_i}{q_i}.
    \end{align}
	Also, we define the (classical) Relative R\'{e}nyi entropy for $s\neq0$ as
    \begin{align}
        D_{1+s}(p||q):=\frac{1}{s}\log\sum_{i=1}^dp_i^{1+s}q_i^{-s}.
    \end{align}
\end{definition}
Since two states $\rho,\sigma$ and a measurement $\bm{M}$ give two probability distributions $P^{\bm{M}}_\rho,P^{\bm{M}}_\sigma$ by Definition~\ref{definition:distribution},
we denote the Relative entropy of Definition~\ref{definition:classicalrelative} as follows.
\begin{definition}\label{definition:classicalize-relative}
    For two states $\rho,\sigma$ and a measurement $\bm{M}$, using Definition~\ref{definition:distribution} and Definition~\ref{definition:classicalrelative},
    we denote as follows.
    \begin{align}
        D^{\bm{M}}(\rho||\sigma):=&D(P^{\bm{M}}_\rho||P^{\bm{M}}_\sigma),\\
		D^{\bm{M}}_{1+s}(\rho||\sigma):=&D_{1+s}(P^{\bm{M}}_\rho||P^{\bm{M}}_\sigma)
    \end{align}
\end{definition}


In GPTs, we focus on the following a model of GPT.
Simply speaking, a model of GPT is a minimal model in order to consider the flamework of GPTs.
\begin{definition}[Model of GPTs]\label{definition:modelofGPTs}
    A model of GPT is defined as a tuple $(\cV,\cQ,u)$, where $\cV$, $\cQ$ and $u$ are denoted as a finite-dimensional real vector space equipped with an inner product, a positive cone and an unit effect ,respectively.
\end{definition}
If we define a model of composite systems in GPTs, we can extend a size of systems.
It is important for us to evaluate the performance of information processing.
Therefore, using a model of GPT, we define an extension of system size as follows.
\begin{definition}[Model of Composite system in GPTs\cite{Plavala}]\label{definition:modelofCompositesystem}
    Let $(\cV,\cQ,u)$,$(\cV_1,\cQ_1,u_1)$ and $(\cV_2,\cQ_2,u_2)$ be models of GPTs.
    Then, the model $(\cV,\cQ,u)$ is called a model of a composite system of $(\cV_1,\cQ_1,u_1)$ and $(\cV_2,\cQ_2,u_2)$
    if the model $(\cV,\cQ,u)$ satisfies following conditions.
    \begin{itemize}
        \item[(1)]$\cV=\cV_1\otimes \cV_2$.
        \item[(2)]$\cQ_1\otimes \cQ_2\subset \cQ \subset (\cQ_1^*\otimes \cQ_2^*)^*$.
        \item[(3)]$u=u_1\otimes u_2$.
    \end{itemize}
    Here, the tensor product of two cones is defined as $\cQ_1\otimes \cQ_2:=\{\sum_i a_i\otimes b_i|a_i\in \cQ_1,b_i\in \cQ_2\}$.
\end{definition}

The first condition is derived from the Local tomography.
The Local tomography means the following postulates.
\begin{assumption}[Local tomography\cite{Janotta,Barret}]
    For a product effect $e_1\otimes e_2$, we apply this effect to the two states $\rho,\rho'\in \cS(\cQ,u)$.
    If the joint probabilities of two states are equivalent for any product effect, then $\rho=\rho'$.
\end{assumption}
We use the third condition when we apply the product measurement
$\{M_i^1\otimes M_j^2\}_{i,j=1}^{d_1,d_2}$
to the product state $\rho_1\otimes \rho_2$.
Also, this third condition is postulated under the Claim~\ref{postulate:AbilabilityofProduct}
in \cite{Plavala}{[Definition5.1]}.
The meaning to adopt of second condition is unclear.
However, if we postulate the following operational condition, we obtain this second condition.
\begin{assumption}[\cite{Plavala}{[Definition5.1]}]\label{postulate:AbilabilityofProduct}
    Let the $\cS(\cQ,u)$ and $\cE(\cQ,u)$ be the state space and the effect space of the model of composite system $\cV$, respectively.
    Then, for any states $\rho_1\in \cS(\cQ_1,u_1)$ and $\rho_2\in \cS(\cQ_2,u_2)$,
    the state  $\rho=\rho_1\otimes \rho_2$ belongs to $\cS(\cQ,u)$.
    In addition, for any effect $ e_1\in \cE(\cQ_1,u_1)$ and $e_2\in E(\cQ_2,u_2)$,
    the effect  $e=e_1\otimes e_2$ belongs to $\cE(\cQ,u)$.
\end{assumption}
Now we explain how to deduce the inclusion relation of the cones
from Assumption~\ref{postulate:AbilabilityofProduct}.
The condition for the states is used
when we show that $\cQ_1\otimes \cQ_2\subset \cQ$.
In addition, the condition for the effects is used
when we show that $\cQ_1^*\otimes \cQ_2^*\subset \cQ^*$.
Finally, we use the following two Lemmas.
\begin{lemma}[\cite{Boyd}{[Chapter2.6.1]}]
    If the relation $\cQ_1\subset \cQ_2$ holds for two positive cones $\cQ_1,\cQ_2$,
    then the following relation of two dual cones holds.
    \begin{align}
        \cQ_2^*\subset \cQ_1^*.
    \end{align}
\end{lemma}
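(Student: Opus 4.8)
The statement to prove is the standard duality-reversal fact for positive cones: if $\cQ_1 \subset \cQ_2$, then $\cQ_2^* \subset \cQ_1^*$. I would prove this directly from the definition of the dual cone (Definition~\ref{definition:DualconeofGPTs}) by a one-line set-inclusion argument. The plan is as follows.

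\textbf{Step 1 (Unfold the definitions).} Let $x \in \cQ_2^*$. By Definition~\ref{definition:DualconeofGPTs}, this means $\langle x, y \rangle \ge 0$ for every $y \in \cQ_2$.

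\textbf{Step 2 (Restrict the universally quantified condition to the smaller cone).} Since $\cQ_1 \subset \cQ_2$ by hypothesis, every $y \in \cQ_1$ is in particular an element of $\cQ_2$, and hence satisfies $\langle x, y \rangle \ge 0$. Therefore $\langle x, y \rangle \ge 0$ holds for all $y \in \cQ_1$, which is precisely the condition $x \in \cQ_1^*$ in Definition~\ref{definition:DualconeofGPTs}.

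\textbf{Step 3 (Conclude).} Since $x \in \cQ_2^*$ was arbitrary, we have shown $\cQ_2^* \subset \cQ_1^*$, as claimed.

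\textbf{Main obstacle.} There is essentially no obstacle here: the result is an immediate consequence of the definition of the dual cone together with the elementary observation that a condition quantified over a larger set implies the same condition quantified over any subset. The only thing to be careful about is that one does \emph{not} need $\cQ_1$ or $\cQ_2$ to be positive cones in the full sense of Definition~\ref{definition:PositiveofGPTs} — the monotonicity of the dual operation works for arbitrary subsets of $\cV$ — so the statement as phrased (with the positive-cone hypothesis) is comfortably within reach, and one could even remark that the hypothesis can be weakened. If a slightly more structural proof were desired, one could instead invoke that $\cQ^*$ is the intersection over $y \in \cQ$ of the closed half-spaces $\{x : \langle x,y\rangle \ge 0\}$, so shrinking $\cQ$ intersects fewer half-spaces and thus enlarges the result; but the direct element-chase above is cleanest.
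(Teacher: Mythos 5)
Your proof is correct and is exactly the standard argument; the paper does not prove this lemma itself but cites it from a textbook, and the cited proof is the same one-line element-chase you give. Your remark that the positive-cone hypothesis is not needed is also accurate.
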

\begin{lemma}[\cite{Faraut}{[Theorem1.1.1]},\cite{Boyd}{[Chapter2.6.1]}]
    For a positive cone $\cQ$,
    the following relation holds.
    \begin{align}
        \cQ^{**}=\cQ.
    \end{align}
\end{lemma}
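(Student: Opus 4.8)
The plan is to prove the two inclusions $\cQ \subseteq \cQ^{**}$ and $\cQ^{**} \subseteq \cQ$ separately. The first is immediate from the definitions and uses nothing about $\cQ$ beyond being a subset of $\cV$: if $x \in \cQ$, then for every $y \in \cQ^*$ we have $\langle x, y \rangle \ge 0$ by the defining condition of the dual cone (Definition~\ref{definition:DualconeofGPTs}), so $x$ satisfies the defining condition of $\cQ^{**} = (\cQ^*)^*$, whence $x \in \cQ^{**}$.

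For the reverse inclusion $\cQ^{**} \subseteq \cQ$, I would argue by contraposition and invoke a separation argument; this is where the defining properties of a positive cone (Definition~\ref{definition:PositiveofGPTs})—in particular closedness and convexity—enter essentially. Suppose $x \notin \cQ$. Since $\cQ$ is a nonempty closed convex set in the finite-dimensional Euclidean space $\cV$ and $\{x\}$ is compact and disjoint from it, the separating hyperplane theorem (as in \cite{Boyd}) yields a vector $y \in \cV$ and a scalar $\alpha \in \mathbb{R}$ with $\langle x, y \rangle < \alpha \le \langle z, y \rangle$ for all $z \in \cQ$.

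The cone structure of $\cQ$ then sharpens this. Because $0 \in \cQ$, taking $z = 0$ gives $\alpha \le 0$, hence $\langle x, y \rangle < 0$. Because $\lambda z \in \cQ$ for every $z \in \cQ$ and every $\lambda > 0$, we get $\langle z, y \rangle \ge \alpha / \lambda$ for all $\lambda > 0$; letting $\lambda \to \infty$ forces $\langle z, y \rangle \ge 0$ for all $z \in \cQ$, i.e.\ $y \in \cQ^*$. But then $x$ fails the defining inequality of $\cQ^{**}$ against this particular $y \in \cQ^*$, so $x \notin \cQ^{**}$. Taking the contrapositive gives $\cQ^{**} \subseteq \cQ$, and combined with the first inclusion this proves $\cQ^{**} = \cQ$.

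The only genuinely nontrivial ingredient is the separating hyperplane theorem; everything else is bookkeeping with the definitions of cone and dual cone, and I would note in passing that only closedness and convexity of $\cQ$ are used, not pointedness or the existence of an interior point. Since $\cV$ is finite-dimensional, an alternative to citing a general convex-analysis reference is to produce the separating vector explicitly as $y = \pi(x) - x$, where $\pi(x)$ is the Euclidean projection of $x$ onto the closed convex set $\cQ$, and to use the variational inequality $\langle x - \pi(x),\, z - \pi(x)\rangle \le 0$ for all $z \in \cQ$; I expect this to be the main (and only real) obstacle, and it is a standard fact.
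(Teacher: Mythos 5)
Your proof is correct; the paper itself gives no proof of this lemma, citing it to \cite{Faraut} and \cite{Boyd}, and your separating-hyperplane (bipolar) argument is exactly the standard proof found in those references. The one trivial inclusion, the sharpening of the separation via $0\in\cQ$ and scaling by $\lambda\to\infty$, and your remark that only closedness, convexity, and the cone property are needed are all accurate.
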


\subsection{Euclidean Jordan algebra}\label{subsection:EuclideanJordanalgebra}

Now, we prepare an Euclidean Jordan algebra with some examples, which we use mainly in this paper.
First, in this section, we classify an Euclidean Jordan algebra.
In fact, all of Euclidean Jordan algebras can be decomposed to a direct sum of well-known Euclidean Jordan algebras.
Second, we treat an Euclidean Jordan algebra in GPTs flamework.
An Euclidean Jordan algebra contains a GPTs concepts, such as a positive cone and a dual cone.
Moreover, these cones in an Euclidean Jordan algebra has good properties.
Finally, we give two physical examples, a Quantum system and a Classical system in Euclidean Jordan algebra.
In addition, we investigate the properties of a classical system and a quantum system by using
the operational concepts in GPTs.

\begin{definition}[Euclidean Jordan algebra~\cite{Faraut}{[Chapter3-1]}]\label{definition:EuclideanJordanalgebra}
    A finite-dimensional real vector space $\cV$ equipped with an inner product is called as a Jordan algebra
    if $\cV$ has a bilinear map (called a Jordan product) $\circ:V\times V\to V$ and satisfies the following conditions.
    \begin{itemize}
        \item[(J1)] $x\circ y=y\circ x$.
        \item[(J2)] $x^2\circ (x\circ y)=x\circ(x^2\circ y)$.
    \end{itemize}
    In addition, if a Jordan algebra $\cV$ satisfies the following condition (J3),
    $\cV$ is called as an Euclidean Jordan algebra.
    \begin{itemize}
        \item[(J3)] $\langle x\circ y,z\rangle=\langle x,y\circ z\rangle$.
    \end{itemize}
\end{definition}
Note that (J2) is necessarily to decide $x^n$ for arbitrary $n$ uniquely. 
An Euclidean condition (J3) is equivalent to the following condition called formally real.
\begin{definition}[Formally real\cite{Faraut}{[Chapter3-1]}]
    A Jordan algebra $\cV$ is called formally real
    if $\cV$ satisfies the following condition.
    \begin{align}
        x^2+y^2=0\Rightarrow x=y=0.
    \end{align}
\end{definition}
From now on, we denote $\cV$ as an Euclidean Jordan algebra,
and we only consider an Euclidean Jordan algebra $\cV$ with an unit element $u$.
Now we define the following condition in order to normalize the inner product.
\begin{definition}[simple\cite{Faraut}{[Chapter3.4]}]\label{definition:simple}
    The space $\cV$ is said to be simple if $\cV$ does not contain any non-trivial ideal.
\end{definition}

Actually, all EJA are uniquely decomposed into simple EJAs.
\begin{lemma}[\cite{Faraut}{[Proposition3.4.4]}]\label{lemma:decompositionofV}
    The space $\cV$ is written as a direct sum of simple EJAs uniquely.
\end{lemma}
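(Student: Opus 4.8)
# Proof Proposal for Lemma~\ref{lemma:decompositionofV}

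The plan is to establish both existence and uniqueness of the decomposition of $\cV$ into a direct sum of simple EJAs, leveraging the Euclidean (formally real) structure, which forces all the relevant subspaces to be genuine ideals that split off as direct summands. First I would recall that in an EJA the bilinear form $\langle x\circ y, z\rangle$ is symmetric in all three arguments by condition (J3) together with commutativity (J1); this symmetry is what makes orthogonal complements of ideals themselves ideals. Concretely, if $\cI\subseteq\cV$ is an ideal (meaning $\cV\circ\cI\subseteq\cI$), I would show that $\cI^\perp$ (the orthogonal complement with respect to the inner product) is also an ideal: for $x\in\cV$, $a\in\cI^\perp$, and $b\in\cI$, we have $\langle x\circ a, b\rangle = \langle a, x\circ b\rangle = 0$ since $x\circ b\in\cI$, so $x\circ a\in\cI^\perp$. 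Thus $\cV = \cI\oplus\cI^\perp$ as a direct sum of ideals, and each summand inherits the EJA structure (the unit $u$ decomposes accordingly as $u = u_\cI + u_{\cI^\perp}$ using that the projection onto an ideal is a Jordan algebra homomorphism — here one needs the idempotent that is the unit of $\cI$).

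For existence, I would then argue by induction on $\dim\cV$: if $\cV$ is simple we are done; otherwise $\cV$ contains a nontrivial ideal $\cI$, hence splits as $\cV = \cI\oplus\cI^\perp$ with both factors of strictly smaller dimension, and both are again Euclidean Jordan algebras (Euclideanness/formal reality passes to subalgebras, in particular to ideal summands), so the induction hypothesis applies to each. Collecting the pieces gives $\cV$ as a direct sum of simple EJAs. For uniqueness, I would characterize the simple summands intrinsically: in a direct sum $\cV = \bigoplus_k \cV_k$ of simple EJAs, the minimal nonzero ideals of $\cV$ are exactly the $\cV_k$ (any ideal of $\cV$ is a direct sum of a subset of the $\cV_k$, because an ideal intersected with a simple summand is either $0$ or the whole summand, and the ideal is the sum of these intersections — using that the $\cV_k$ annihilate each other under $\circ$). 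Since the collection of minimal nonzero ideals is manifestly independent of the chosen decomposition, the decomposition is unique (up to reordering).

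The main obstacle I anticipate is the bookkeeping around the unit element and the claim that an ideal of a unital EJA is itself unital with a unit that is an idempotent $e\in\cV$ satisfying $\cV = \cV_e \oplus \cV_{e^\perp}$ compatibly with the Jordan product — i.e.\ that ideals and "direct summands cut out by a central idempotent" coincide. This requires knowing that the unit $u$, when restricted along an ideal, produces such an idempotent, which in turn uses the Peirce decomposition relative to $e$ and the fact that in a formally real Jordan algebra the Peirce-$\tfrac12$ space relative to a central idempotent vanishes. Since Theorem~\ref{theorem:peircedecomposition} (Peirce decomposition) is available in the paper, I would invoke it here rather than re-deriving it. Everything else — the symmetry of the trilinear form, the orthogonal-complement-is-an-ideal argument, and the induction — is routine. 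In practice, since this is a standard structural fact, I would simply cite \cite{Faraut}{[Proposition 3.4.4]} as the paper already does, and present the above only as the sketch of why it holds.
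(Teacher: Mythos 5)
The paper does not prove this lemma at all; it is imported verbatim as \cite{Faraut}[Proposition 3.4.4], so there is no in-paper argument to compare against. Your sketch is the standard proof from that reference and is sound: the associativity of the trace form, $\langle x\circ a,b\rangle=\langle a,x\circ b\rangle$, makes $\cI^\perp$ an ideal, existence follows by induction on dimension, and uniqueness follows from identifying the simple summands as the minimal nonzero ideals. One small simplification: the unit-of-an-ideal step you flag as the main obstacle does not need the Peirce decomposition — writing $u=u_\cI+u_{\cI^\perp}$ and noting that $u_{\cI^\perp}\circ a\in\cI\cap\cI^\perp=\{0\}$ for $a\in\cI$ already shows $u_\cI$ is the unit of $\cI$.
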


Lemma~\ref{lemma:decompositionofV} implies that
simple Euclidean Jordan algebras are essential objects in the studies of EJAs.
In fact, a simple Euclidean Jordan algebra is completely classified as follows \cite{Jordan} (Table~\ref{table:SimpleEJA}).


\begin{table}[H]
	\caption{List about Simple Euclidean Jordan Algebras}
	\label{table:SimpleEJA}
	\centering
	\begin{tabular}{cccc}
	\cline{1-4}
	vector space $\cV$ & Jordan product & inner product & unit \\ \cline{1-4} \cline{1-4}
	$\mathrm{Sym}(m,\mathbb{R})$ & $\frac{1}{2}\left(XY+YX\right)$  & $\Tr xy$ & $I$ \\ \cline{1-4}
	$\mathrm{Herm}(m,\mathbb{C})$ & $\frac{1}{2}\left(XY+YX\right)$  & $\Tr xy$ & $I$ \\ \cline{1-4}
	$\mathrm{Herm}(m,\mathbb{H})$ & $\frac{1}{2}\left(XY+YX\right)$ & $\Tr xy$ & $I$ \\ \cline{1-4}
	$\mathbb{R}\times\mathbb{R}^d$ & $\circ_\cL$  & canonical & $(1,\cdots,1)$ \\ \cline{1-4}
	$\mathrm{Herm}(3,\mathbb{O})$ & $\frac{1}{2}\left(XY+YX\right)$ & $\Tr xy$ & $I$ \\ \cline{1-4}
	\end{tabular}
\end{table}

Now, we explain the above simple EJAs: $\mathrm{Sym}(m,\mathbb{R})$, $\mathrm{Herm}(m,\mathbb{C})$, $\mathrm{Herm}(m,\mathbb{H})$, $\mathbb{R}\times \mathbb{R}^{d}$, and $\mathrm{Herm}(3,\mathbb{O})$.
The first $\mathrm{Sym}(m,\mathbb{R})$ is a real vector space of $m\times m$ size symmetric matrices.
We will investigate the direct sum of $\mathrm{Sym}(1,\mathbb{R})$ corresponding to a classical system later in this part.
The second $\mathrm{Herm}(m,\mathbb{C})$ is a real vector space of $m\times m$ size Hermitian matrices in $\mathbb{C}$.
We will investigate this second example corresponding to a quantum system later in this part.
The third $\mathrm{Herm}(m,\mathbb{H})$ is a real vector space of $m\times m$ size Hermitian matrices in $\mathbb{H}$.
The fourth $\mathbb{R}\times \mathbb{R}^{n-1}$ is called a Lorenz cone with dimension $d$.
The fifth $\mathrm{Herm}(3,\mathbb{O})$ is a real vector space of $3\times 3$ size Hermitian matrices in $\mathbb{O}$.
We define the detailed of the third, the fourth, and the fifth types of EJAs in Section~\ref{section:quantumrealization}.

Next, we explain the relation between these simple EJAs and a second example a, quantum system.
From the first to fourth ones are said to be special and the fifth one is said to be exceptional.
The special EJA can be canonically embedded into a higher-dimensional quantum system.
We will discuss the relation between this embedding and sone information quantities in Section~\ref{section:quantumrealization}.
On the other hand, it is unknown the embedding of an exceptional EJA to Quantum system.
Our one of main result imply
the possibility of an embedding of an exceptional EJA in a Quantum system.

Next, we define a model of GPTs associated with an EJA.
From Section~\ref{subsection:FrameworkGPTs}, firstly we prepare a positive cone and its dual cone in an EJA.
Secondly, we obtain a State space, an Effect space and a Measurement class in an EJA by Definition~\ref{definition:StateEffectMeasurement}.

\begin{definition}[Positive cone in Euclidean Jordan algebra~\cite{Faraut}{[Chapter3-2]}]\label{definition:positiveconeofEJA}
    We define a canonical positive cone $\cQ_\cV$ over an EJA $\cV$ by the cone $\cQ_\cV:=\{x^2|x\in \cV\}$.
\end{definition}


\begin{lemma}[\cite{Faraut}{[Chapter3-2.1]}]\label{lemma:Qispositivecone}
    The cone $\cQ_\cV$ by Definition~\ref{definition:positiveconeofEJA} satisfies the conditions of a positive cone in GPTs.(Definition~\ref{definition:PositiveofGPTs})
\end{lemma}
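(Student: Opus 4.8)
The statement to prove is Lemma~\ref{lemma:Qispositivecone}: the cone $\cQ_\cV := \{x^2 : x \in \cV\}$ over a Euclidean Jordan algebra $\cV$ satisfies the three axioms of a positive cone from Definition~\ref{definition:PositiveofGPTs} — namely, it has an interior point, $\cQ_\cV \cap (-\cQ_\cV) = \{0\}$, and it is a closed convex set. (It is clearly a cone since $\lambda x^2 = (\sqrt{\lambda}\,x)^2$ for $\lambda \geq 0$.)

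Here is a plan for the proof.

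\textbf{Plan.} The key is the spectral decomposition in EJAs, which the paper will establish as Theorem~\ref{theorem:Spectraldecomposition}: every $x \in \cV$ can be written $x = \sum_i \lambda_i c_i$ with $\lambda_i \in \mathbb{R}$ and $\{c_i\}$ a complete system of orthogonal idempotents. First I would record the standard characterization $\cQ_\cV = \{x \in \cV : \lambda_i \geq 0 \text{ for all spectral eigenvalues}\}$: indeed, if $x = \sum_i \lambda_i c_i$ with all $\lambda_i \geq 0$, then $x = (\sum_i \sqrt{\lambda_i}\,c_i)^2$ since the $c_i$ are orthogonal idempotents; conversely any $y = x^2 = \sum_i \lambda_i^2 c_i$ has nonnegative eigenvalues. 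With this description in hand, the three axioms become transparent.

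\textbf{Axiom (2), pointedness.} If $x^2 \in \cQ_\cV \cap (-\cQ_\cV)$, then $x^2 = -z^2$ for some $z$, i.e. $x^2 + z^2 = 0$. By the formally real property of an EJA (equivalent to axiom (J3), as noted in the excerpt right after Definition~\ref{definition:EuclideanJordanalgebra}), this forces $x = z = 0$, so $x^2 = 0$. Hence $\cQ_\cV \cap (-\cQ_\cV) = \{0\}$.

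\textbf{Axiom (1), interior point, and Axiom (3), closed convex.} For the interior point I would propose the unit $u$: using the trace form $\langle x, y\rangle = \mathrm{tr}(x\circ y)$ (an admissible inner product on an EJA by (J3)), the eigenvalues of $x$ depend continuously on $x$, so the set $\{x : \text{all } \lambda_i > 0\}$ is open and nonempty (it contains $u$), and it is contained in $\cQ_\cV$; thus $u$ is interior. Closedness follows from the same continuity of the spectrum: $\cQ_\cV$ is the preimage of the closed condition ``all eigenvalues $\geq 0$''; more concretely, one can show $\cQ_\cV = \{x : \langle x, y\rangle \geq 0 \ \forall y \in \cQ_\cV\}^{*\ast}$-type self-duality, or simply that $\cQ_\cV$ equals the closed set $\{x : \langle x, c\rangle \geq 0$ for every idempotent $c\}$. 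Convexity is the only mildly non-formal point: for $x, y \in \cQ_\cV$ and $t \in [0,1]$, write $x = a^2$, $y = b^2$; one shows $tx + (1-t)y \in \cQ_\cV$ by checking all its eigenvalues are $\geq 0$, which in turn follows from the variational/positivity characterization — e.g., $z \in \cQ_\cV$ iff $\langle z, w \rangle \geq 0$ for all $w \in \cQ_\cV$ (self-duality of the symmetric cone), and this dual description is manifestly closed under convex combinations. I expect the main obstacle to be precisely this convexity step, because it genuinely uses that $\cQ_\cV$ is a \emph{symmetric} (self-dual, homogeneous) cone rather than a purely formal manipulation of squares; the cleanest route is to invoke the self-duality characterization of $\cQ_\cV$ together with Lemma~\ref{lemma:DualPositive} (that any dual cone is a positive cone), from which closedness and convexity of $\cQ_\cV$ follow immediately. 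Since all three ingredients — spectral decomposition, formal reality, and self-duality of the symmetric cone — are standard facts about EJAs available from \cite{Faraut}[Chapter 3], the proof reduces to assembling them, and in the paper this is appropriately cited rather than reproven in detail.
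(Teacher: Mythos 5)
Your decomposition of the problem is sound, and two of the three axioms are handled exactly right: pointedness via formal reality (which the paper notes is equivalent to (J3)) is the standard and correct argument, and the spectral characterization $\cQ_\cV=\{x:\text{all eigenvalues}\ge 0\}$ is correct in both directions. For comparison, the paper itself never proves this lemma --- it cites \cite{Faraut} and defers to Section~\ref{subsection:ConceptsinEJA}, where no explicit proof appears --- so any complete argument here is already more than the paper supplies.

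The one genuine problem is your proposed route to convexity and closedness. You say the ``cleanest route is to invoke the self-duality characterization of $\cQ_\cV$ together with Lemma~\ref{lemma:DualPositive}.'' But within this paper's own development that is circular: Lemma~\ref{lemma:selfduality} is proven (in Appendix~\ref{appendix:conceptsinEJA}) from Lemma~\ref{lemma:positivityofQradratic}, and the proof of Lemma~\ref{lemma:positivityofQradratic} explicitly invokes ``the convexity of $\cQ$'' when it forms the segment $y(t)=ty+(1-t)u$. The same ordering occurs in \cite{Faraut}: convexity of the cone of squares is established \emph{before} self-duality, not deduced from it. The non-circular repair is short and uses only Lemma~\ref{lemma:eigenvaluesofCompletesystem}: for an idempotent $c$, the operator $L_c$ is self-adjoint by (J3) and has eigenvalues in $\{0,\tfrac12,1\}$, hence $L_c\succeq 0$; since the $L_{c_i}$ of a CSOI commute, $L_x=\sum_i\lambda_iL_{c_i}\succeq 0$ whenever all $\lambda_i\ge 0$, and conversely $L_x\succeq 0$ forces $\lambda_j\|c_j\|^2=\langle L_xc_j,c_j\rangle\ge 0$. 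Thus
\begin{align}
\cQ_\cV=\{x\in\cV\mid L_x\succeq 0\},
\end{align}
the preimage of the closed convex cone of positive semidefinite operators under the linear map $x\mapsto L_x$, which gives closedness and convexity at once (and your interior-point claim for $u$ follows cleanly too, since $L_{u+y}=\mathrm{id}+L_y\succ 0$ for $\|y\|$ small). Your fallback characterization $\cQ_\cV=\{x:\langle x,c\rangle\ge 0\text{ for every idempotent }c\}$ can also be proven non-circularly from $\langle x^2,c\rangle=\langle L_cx,x\rangle\ge 0$, so the gap is repairable with tools you already listed --- just not by the self-duality shortcut you singled out as cleanest.
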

To prove this Lemma~\ref{lemma:Qispositivecone}, we need some additional concepts of an EJA.
Therefore, we will show in the later Section~\ref{subsection:ConceptsinEJA}.


Next, we see the self-duality of $\cQ_\cV$, i.e., $\cQ^\ast=\cQ$.
\begin{lemma}[\cite{Faraut}{[Theorem3.2.1]}]\label{lemma:selfduality}
    For an EJA $\cV$, the dual cone $\cQ^*_\cV$ of the positive cone $\cQ_\cV$
satisfies $\cQ_\cV=\cQ^*_\cV$.
\end{lemma}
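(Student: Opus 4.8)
The plan is to prove the two inclusions $\cQ_\cV\subseteq\cQ_\cV^\ast$ and $\cQ_\cV^\ast\subseteq\cQ_\cV$ separately, using only the Euclidean axiom (J3), the spectral decomposition (Theorem~\ref{theorem:Spectraldecomposition}), and the Peirce decomposition (Theorem~\ref{theorem:peircedecomposition}) — all of which can be developed independently of self-duality. The one structural fact I would isolate at the outset is that for every idempotent $c\in\cV$ the multiplication map $L_c$ (Definition~\ref{definition:linearmap}) is self-adjoint and positive semidefinite: self-adjointness is immediate from (J3), since $\langle L_c x,y\rangle=\langle c\circ x,y\rangle=\langle x,c\circ y\rangle=\langle x,L_c y\rangle$; and positive semidefiniteness is precisely the statement that the Peirce decomposition of $\cV$ relative to $c$ splits $\cV$ into the $L_c$-eigenspaces with eigenvalues $0,\tfrac12,1$. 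I would also record two easy consequences of (J3): orthogonal idempotents $c,c'$ (meaning $c\circ c'=0$) satisfy $\langle c,c'\rangle=\langle c\circ c,c'\rangle=\langle c,c\circ c'\rangle=0$, and $\langle c,c\rangle>0$ for any nonzero idempotent $c$ because $\langle\cdot,\cdot\rangle$ is a genuine inner product.

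For $\cQ_\cV\subseteq\cQ_\cV^\ast$: take $x,y\in\cQ_\cV=\{a^2\mid a\in\cV\}$ (Definition~\ref{definition:positiveconeofEJA}) and write $x=a^2$, $y=b^2$. Spectral-decomposing $a=\sum_i\lambda_i c_i$ on a complete system of orthogonal idempotents gives $a^2=\sum_i\lambda_i^2 c_i$, so $L_x=L_{a^2}=\sum_i\lambda_i^2 L_{c_i}$ is a nonnegative linear combination of positive semidefinite self-adjoint operators and is therefore itself positive semidefinite. Hence, using (J3) once more, $\langle x,y\rangle=\langle x,b\circ b\rangle=\langle x\circ b,b\rangle=\langle L_x b,b\rangle\ge 0$. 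As $x,y$ were arbitrary in $\cQ_\cV$, this shows $\cQ_\cV\subseteq\cQ_\cV^\ast$.

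For $\cQ_\cV^\ast\subseteq\cQ_\cV$: take $y\in\cQ_\cV^\ast$ and spectral-decompose $y=\sum_i\mu_i c_i$ on its CSOI. For each index $i$ the idempotent $c_i=c_i^2$ lies in $\cQ_\cV$, so the defining property of the dual cone yields $0\le\langle y,c_i\rangle=\sum_k\mu_k\langle c_k,c_i\rangle=\mu_i\langle c_i,c_i\rangle$, the cross terms vanishing by orthogonality; since $\langle c_i,c_i\rangle>0$, this forces $\mu_i\ge 0$ for every $i$. Consequently all spectral coefficients of $y$ are nonnegative, and $y=\big(\sum_i\sqrt{\mu_i}\,c_i\big)^2\in\cQ_\cV$. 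Combining the two inclusions gives $\cQ_\cV=\cQ_\cV^\ast$.

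\textbf{Main obstacle.} Given the earlier structure theorems there is essentially no obstacle: the whole argument is bookkeeping around the self-adjointness and positivity of $L_c$ for idempotents $c$, and that positivity is exactly the content of the Peirce decomposition (Theorem~\ref{theorem:peircedecomposition}) — without it the forward inclusion already fails. The only place any care is needed is the reverse inclusion, where one must use that $\langle\cdot,\cdot\rangle$ is positive definite (so $\langle c_i,c_i\rangle>0$) rather than merely satisfying (J3). If one instead wanted a fully self-contained proof, the real work would be pushed entirely into establishing the spectral and Peirce decompositions themselves, which is the standard but considerably longer route of \cite{Faraut}.
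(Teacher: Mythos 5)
Your proof is correct. The reverse inclusion $\cQ_\cV^\ast\subseteq\cQ_\cV$ is exactly the paper's argument (pair $y$ with each idempotent of its own CSOI, use orthogonality of the $c_i$ under the inner product together with $\langle c_i,c_i\rangle>0$ to force nonnegative eigenvalues, then exhibit $y$ as a square). The forward inclusion is where you genuinely diverge: the paper writes $\langle y^2,z^2\rangle=\langle P_z(y^2),u\rangle$, invokes the positivity of the quadratic form (Lemma~\ref{lemma:positivityofQradratic}) to get $P_z(y^2)=w^2$, and concludes $\langle w^2,u\rangle=\|w\|^2\ge 0$; you instead note that $L_{a^2}=\sum_i\lambda_i^2 L_{c_i}$ is a nonnegative combination of the self-adjoint operators $L_{c_i}$, each positive semidefinite because its spectrum lies in $\{0,\tfrac12,1\}$ (Lemma~\ref{lemma:eigenvaluesofCompletesystem}), so $\langle a^2,b^2\rangle=\langle L_{a^2}b,b\rangle\ge 0$. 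Your route is the more elementary of the two: it needs only (J3), the spectral theorem, and the eigenvalue bound for $L_c$, and in particular avoids Lemma~\ref{lemma:positivityofQradratic}, whose proof is a separate continuity-and-contradiction argument. The paper's route, on the other hand, gets the forward inclusion essentially for free given that lemma, which it needs elsewhere (for pinching and the entropic machinery) in any case. Both arguments are complete; there is no gap in yours.
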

We will show this Lemma~\ref{lemma:selfduality} in the Section~\ref{subsection:ConceptsinEJA}.

Recall Definition~\ref{definition:orderinPotivecone} and a self-duality of $\cQ_\cV$.
The partial orders $\le_{\cQ_\cV}$ and $\le_{\cQ^\ast_\cV}$ are equivalent.
Therefore, we denote this order as $\le$ simply.

Because of the definition of $\cQ_\cV$ and $\cQ^*_\cV$,
we obtain a state space, a effect space and a measurement class
from Definition~\ref{definition:StateEffectMeasurement},
where the unit effect $u$ is chosen as an unit element of $\cV$.


Now, we can investigate two physical examples in an EJA,
a classical system and a quantum system.
A classical system is defined as follows\cite{Janotta}.

\begin{example}[Classical system]\label{example:classicalsystem}
    We call $\cV$ is a Classical system if a real vector space $\cV=\mathbb{R}^d$
    with a canonical inner product has the following Jordan product:
    \begin{align}
        u_i\circ u_j :=\delta_{i,j}u_j,
    \end{align}
    where $u_i$ takes $1$ in $i$th element and $0$ in others,
    and where $\delta_{i,j}$ is a Kronecker delta.
    Because $\{u_i\}$ is a basis of $\cV$,
    the product of two elements $x$ and $y$ written as $x=\sum_{i=1}^d \lambda_i u_i,y=\sum_{i=1}^d \mu_i u_i$
    are given as follows.
    \begin{align}
        x\circ y=\sum_{i=1}^d \lambda_i\mu_i u_i.
    \end{align}
\end{example}
Here, we remark that the classical system is written as the direct sum of EJA of symmetric matrices.

At first, we examine the positive and the dual cones in a classical system.
For the positive cone $\cQ_\cV$ in a classical system, we obtain
\begin{align}
    x\circ x=\sum_{i=1}^dx_i^2u_i \in \cQ_\cV,\label{equation:positiveconeofClassical}
\end{align}
where $x$ is decomposed to $x=\sum_{i=1}^dx_i u_i(x_i\in\mathbb{R})$.
Because a positive cone holds a self-duality(Lemma~\ref{lemma:selfduality}),
a relation $\cQ_\cV=\cQ^*_\cV$ holds.

Secondly, we examine a state, an effect and a measurement in classical system.
The unit element is chosen as an identity element $u=\sum_{i=1}^d u_i$ in $\cV$.
Then, we see the two of the properties of a classical system, a perfect distinguishability\cite{Janotta}
and simultaneous spectrality of all elements as follows.
Any state $\rho\in \cQ_\cV$ satisfies the following relation:
\begin{align}
    1\stackrel{(a)}{=}\langle\rho, u\rangle\stackrel{(b)}{=}\langle \sum_{i=1}^d\lambda_i u_i,u\rangle=\sum_{i=1}^d\lambda_i.\label{equation:stateofclassicalsystem}
\end{align}
Here, (a) is the condition of a state (Definition~\ref{definition:StateEffectMeasurement}).
in (a), we consider the decomposition of $\rho$ as $\rho=\sum_{i=1}^d\lambda_i u_i(\lambda_i\ge 0)$ by \eqref{equation:positiveconeofClassical}.
Therefore, a state corresponds to a probability distribution $\{\lambda_i\}_{i=1}^d$.
From this result, the state space $\cS(\cQ,u)$ is the set of probability distributions with $d$-elements,
that is, $\cS(\cQ,u)$ is the convex set of pure states $u_i(i=1,\ldots,d)$.
Here, a pure state corresponds to an extremal point of the convex set in a state space.

Finally, we consider two properties of a classical system.
We characterize a classical system by a simultaneous spectral decomposition
in Appendix~\ref{appendix:conceptsinEJA} (Lemma~\ref{lemma:characterizationofClassicalsystem}).
Now we investigate a perfect distinguishability.
A perfect distinguishability of $n$ pure states $\{\rho_i\}_{i=1}^n$ means that the
exteremal effects $\{e_j\}_{j=1}^n$ single out $n$ pure states, that is,
$\langle e_j,\rho_i\rangle=\delta_{i,j}$ holds,
where $\delta_{i,j}$ is a Kronecker delta.
An extremal effect means the extremal point of the effect space $\cE(\cQ_\cV,u)$.
In a classical system, there exists $d$ pure states $u_i(i=1,\ldots,d)$.
Now we take the $d$ exteremal effects $\{e_j=u_j\}_{j=1}^d$.
Then $\langle e_j,u_i\rangle=\langle u_j,u_i\rangle=\delta_{i,j}(i,j=1,\ldots,d)$.
Therefore, in a classical system of $d$ dimension, $d$ pure
states are perfectly distinguishable.

Next, we see that a quantum system is regarded as a model of GPTs associated with the EJA $\mathrm{Herm}(m,\mathbb{C})$ \cite{Janotta,Plavala}
\begin{example}[Quantum system]\label{example:quantumsystem}
    We call $\cV$ is a Quantum system if a real vector space of complex Hermitian matrices
    with a Hilbert-Schmidt inner product has the following Jordan product:
    \begin{align}
        x\circ y:=\frac{1}{2}(xy+yx)\quad x,y\in \cV.
    \end{align}
    Here, $xy$ and $yx$ are multiplied by a matrix product.
\end{example}
We investigate the quantum system can be treated in GPTs framework.
In addition, we examine the state, the effect, and the measurement are the canonical ones in Quantum system.

At first, we examine a positive cone and a dual cone in a Quantum system.
For a positive cone $\cQ_\cV$ in a quantum system, the relation $x\circ x=\frac{1}{2}(xx+xx)=x^2(=xx)\in \cQ_\cV$ holds.
The element $x^2=xx$ has $0$ or positive eigenvalues.
Therefore, $\cQ_\cV$ is equal to the set of positive semi-definite matrices.
Besides, Lemma~\ref{lemma:selfduality} implies that the dual $\cQ_\cV^\ast$ is equivalent to $\cQ_\cV$.

Secondly, we examine the state space, the effect space, and the Measurement class.
By choosing of $u$ as an identity matrix $I$ over $\cV$,
the state space, the effect space and the Measurement class are determined as follows.
Recall of Definition~\ref{definition:StateEffectMeasurement}, a state $\rho$ satisfies the following relation:
\begin{align}
    \langle\rho,u\rangle\stackrel{(a)}{=}\mathrm{tr}\rho=1.
\end{align}
Because we choose the Hilbert-Schmidt inner product $\langle\cdot,\cdot\rangle$,
the equality (a) holds for the identity matrix $u=I$.
Therefore, a state corresponds to a density matrix, i.e.,
a positive semi-definite matrix satisfying $\mathrm{tr}\rho=1$.

Next, we examine an effect.
Recall Definition~\ref{definition:StateEffectMeasurement}, an effect $e$ satisfies the following relation:
\begin{align}
    0\le\langle\rho,e\rangle\le 1\quad\forall\rho\in \cS(\cQ_\cV,I).\label{equation:effectinQs}
\end{align}
In addition, the element $e\in \cQ_\cV^*=\cQ_\cV$ is a positive semi-definite matrix.
Therefore, an effect $e\ge 0$ holds in a matrix inequality.
On the other hand, we show the element $I-e$ is also effect as follows.
We calculate the following quantities for any $y\in\cQ_\cV$.
\begin{align}
    \langle I-e, y\rangle=\langle I,y\rangle-\langle e,y\rangle=\mathrm{tr}y-\langle e,y\rangle.\label{equation:effectinQs2}
\end{align}
Also, any $y\in\cQ_\cV$ satisfies $y/ \mathrm{tr} y\in\cS(\cQ_\cV,I)$,
and therefore,
we obtain
\begin{align}
	0\le \langle e,\frac{y}{\mathrm{tr}y}\rangle=\frac{\langle e,y\rangle}{\mathrm{tr}y}\le 1.\label{equation:effectinQs3}
\end{align} 
By combining \eqref{equation:effectinQs2} and \eqref{equation:effectinQs3}, we obtain
\begin{align}
    \langle u-e,y\rangle=\mathrm{tr}y-\langle e,y\rangle \stackrel{(a)}{\ge} \mathrm{tr}y-\mathrm{tr}y=0.
\end{align}
Now we apply \eqref{equation:effectinQs3} to (a).
Therefore, we obtain $I-e\in Q^*=Q$.
This means that a matrix $I-e$ is positive semidefinite,
which implies $I-e\ge 0$.
As a result, we obtain $0\le e\le u$.
This means $e$ is a Test (POVM element) in a Quantum system.

Finally,we examine a measurement.
Recall Definition~\ref{definition:StateEffectMeasurement}.
A measurement $\bm{M}:=\{M_i\}_{i=1}^d\in\cM(\cQ_\cV,I)$ satisfies $M_i\in \cQ^*\quad (i=1,\ldots,d)$ and $\sum_{i=1}^d M_i=I$.
The self-duality $\cQ^*=\cQ\quad(i=1,\ldots,d)$ implies $0\le M_i$.
Therefore, the family $\bm{M}$ is a POVM in a Quantum system.

\begin{remark}
	Here, we remark that EJAs give more non-trivial models of GPTs except for Classical and Quantum systems.
	A typical example of such models is given by Lorentz type, which is known as a special restriction of Quantum system in \cite{BGW2020}.
	Moreover, we show that this model is also regarded as a model determined by real and complex parts of creation and annihilation operators of Fermion in Section~\ref{section:quantumrealization}.
\end{remark}

\subsection{Concepts in Euclidean Jordan algebra}\label{subsection:ConceptsinEJA}
In this section, we introduce some concepts of an Euclidean Jordan algebra.
First, we introduce a special type of complete systems called \textit{Completely System of Orthogonal Idempotents} (CSOI), which is regarded as a generalization of projections in Quantum system.
CSOI is directly connected to two important decompositions in EJAs, Spectral decomposition and Peirce decomposition.
Thanks to these decompositions, we can analyze an EJA
in detail by applying information theoretical tools.
In addition, we will introduce the most important concept, a Quadratic form,
which is important for the definition of pinching map.
Finally, we define the canonical composite systems of EJAs.
After Section~\ref{section:Informationquantities},
we analyze asymptotic behaviors of information quantities.
Therefore, we mainly consider $n$-composite system of a single EJA.
We introduce the essential part of these concepts in this section and explain the rest part of concepts and proofs
in Appendix~\ref{appendix:conceptsinEJA}


We define special types of complete systems.
\begin{definition}[Complete system of orthogonal (primitive) idempotents\cite{Faraut}{[Chapter3-1]}]\label{definition:completesystem}
    Let $\bm{C}$ be a subset with $d$ elements in $\cV$.
    The elements in $\bm{C}=\{c_i\}_{i=1}^d$ are said to be orthogonal, idempotent ,complete, primitive
    if the elements in $\bm{C}$ satisfy the following conditions.
    \begin{itemize}
        \item[(1)]Different two elements $c_i,c_j$ are said to be orthogonal if these two elements satisfy $c_i\circ c_j =0$.
        \item[(2)]An element $c_i$ is said to be idempotent if this element satisfy $c_i^2=c_i$.
        \item[(3)]The elements $c_i(i=1,\ldots,d)$ are said to be complete system if its elements satisfy $c_1+\cdots +c_d=u$.
        \item[(4)]An element $c_i$ is said to be primitive when this element cannot be written as the sum of two non zero idempotents which is each orthogonal.
    \end{itemize}
    A family $\bm{C}$ is called Complete System of Orthogonal Idempotents (CSOI)
    if all elements in $\bm{C}$ satisfy (1)-(3) conditions.
    In addition, a family $\bm{C}$ is called Complete System of Orthogonal Primitive Idempotents (and sometimes called Jordan frame)
    if all elements in $\bm{C}$ satisfy (1)-(4) conditions.
\end{definition}

Two concepts in Definition~\ref{definition:completesystem}
are related to the important Theorems both Spectral theorem and Pierce decomposition.
Moreover, the complete system of orthogonal idempotents
mainly appear in information theorical objects
in later than Section~\ref{section:Information theorical tools}.
The following Lemma implies the concepts in Definition~\ref{definition:completesystem}
are related to operational objects in GPTs.
\begin{lemma}\label{lemma:CSOIisMeasurement}
    Let $\bm{C}=\{c_i\}$ be a complete system of orthogonal idempotents.
    Then this family $\bm{C}$ is a measurement.
	In particular, each $c_i$ is an effect.
\end{lemma}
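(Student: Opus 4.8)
The plan is to verify directly that the family $\bm{C}=\{c_i\}_{i=1}^d$ meets the two defining conditions of a measurement in Definition~\ref{definition:StateEffectMeasurement} --- namely that $c_i\in\cQ_\cV^\ast$ for each $i$ and that $\sum_{i=1}^d c_i=u$ --- and then to deduce from this that each component is an effect.

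First I would establish $c_i\in\cQ_\cV^\ast$. Since $c_i$ is idempotent, $c_i=c_i^2=c_i\circ c_i$, so $c_i$ belongs to the canonical positive cone $\cQ_\cV=\{x\circ x\mid x\in\cV\}$ of Definition~\ref{definition:positiveconeofEJA}. By the self-duality $\cQ_\cV=\cQ_\cV^\ast$ (Lemma~\ref{lemma:selfduality}) this gives $c_i\in\cQ_\cV^\ast$, which is exactly the membership condition $c_i\in\cQ^\ast$ appearing in the definition of the measurement class. The completeness condition $\sum_{i=1}^d c_i=u$ holds by the definition of a CSOI (Definition~\ref{definition:completesystem}, condition~(3)). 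Together these say precisely that $\bm{C}\in\cM(\cQ_\cV,u)$, i.e.\ $\bm{C}$ is a measurement.

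For the final assertion I would invoke the general fact that every component of a measurement is an effect. We already have $c_i\in\cQ_\cV^\ast$. Given an arbitrary state $\rho\in\cS(\cQ_\cV,u)$, the dual-cone inequality of Definition~\ref{definition:DualconeofGPTs} gives $\langle c_i,\rho\rangle\ge 0$ for every $i$; and summing over the index, $\langle c_i,\rho\rangle\le\sum_{j=1}^d\langle c_j,\rho\rangle=\langle u,\rho\rangle=1$, where the last equality is the state normalization with $u$ the unit effect. Hence $0\le\langle c_i,\rho\rangle\le 1$ for all states $\rho$, so $c_i\in\cE(\cQ_\cV,u)$.

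There is no genuine obstacle in this argument; the only point requiring attention is that it relies on the self-duality of $\cQ_\cV$ (Lemma~\ref{lemma:selfduality}), whose own proof is deferred to Section~\ref{subsection:ConceptsinEJA}, so one should make sure that later proof does not in turn appeal to Lemma~\ref{lemma:CSOIisMeasurement}, keeping the logical order intact.
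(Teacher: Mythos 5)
Your proof is correct. The paper actually states Lemma~\ref{lemma:CSOIisMeasurement} without giving any proof (it appears neither in the main text nor in Appendix~\ref{appendix:conceptsinEJA}), and your argument supplies exactly the intended justification: idempotency gives $c_i=c_i\circ c_i\in\cQ_\cV$, self-duality (Lemma~\ref{lemma:selfduality}) gives $c_i\in\cQ_\cV^\ast$, completeness gives $\sum_i c_i=u$, and positivity plus normalization of states yields $0\le\langle c_i,\rho\rangle\le 1$. Your worry about circularity is also resolved in your favor: the paper's proof of Lemma~\ref{lemma:selfduality} uses only the spectral theorem, Lemma~\ref{lemma:positivityofQradratic}, and the fact that idempotents are squares, not Lemma~\ref{lemma:CSOIisMeasurement}.
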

In this setting, the following Spectral theorem holds.
\begin{theorem}[Spectral theorem\cite{Faraut}{[Theorem 3.1.1]}]\label{theorem:Spectraldecomposition}
    For $x\in \cV$, there exist unique distinct real numbers $\lambda_1,\ldots,\lambda_d$ and a unique CSOI $c_1,\ldots ,c_d$ such that
    \begin{align}
        x=\lambda_1c_1+\cdots+\lambda_d c_d.
    \end{align}
    The numbers $\lambda_i$ are said to be the eigenvalues,
    and this decomposition of $x$ is called as spectral decomposition of $x$.
    Here, the number $d$ depends on the element of $\cV$.
\end{theorem}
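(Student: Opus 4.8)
The plan is to reduce the statement to the structure theory of the small commutative associative algebra generated by $x$. First I would observe that, by power-associativity of the Jordan product (which is exactly what (J2) guarantees, as remarked after Definition~\ref{definition:EuclideanJordanalgebra}), the powers $u=x^0, x, x^2, \ldots$ are unambiguously defined, so the real linear span $\mathbb{R}[x]:=\operatorname{span}\{x^k : k\ge 0\}$ is a finite-dimensional commutative \emph{associative} subalgebra of $(\cV,\circ)$ with unit $u$. Writing $m(t)$ for the monic minimal polynomial of $x$, we get an algebra isomorphism $\mathbb{R}[x]\cong \mathbb{R}[t]/(m(t))$.

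Next I would use the Euclidean hypothesis to pin down the structure of $\mathbb{R}[x]$. As a finite-dimensional commutative associative real algebra, $\mathbb{R}[t]/(m(t))$ is a product of local rings whose only possible residue fields are $\mathbb{R}$ or $\mathbb{C}$, together with possible nilpotent pieces. The formally real condition $x^2+y^2=0\Rightarrow x=y=0$ (equivalent to (J3)) forbids both: a nonzero element $n$ with $n^2=0$ contradicts it with $y=0$, and an element $i$ with $i^2=-u$ contradicts it via $i^2+u^2=-u+u=0$ together with $u\neq 0$. Hence $m(t)$ has only real roots and is square-free, $m(t)=\prod_{i=1}^{d}(t-\lambda_i)$ with the $\lambda_i$ pairwise distinct, and $\mathbb{R}[x]\cong \mathbb{R}^d$. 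Transporting the coordinate idempotents of $\mathbb{R}^d$ back through the isomorphism yields $c_1,\ldots,c_d\in\mathbb{R}[x]\subseteq\cV$; since the $\circ$-product on $\mathbb{R}[x]$ is the associative product, they satisfy $c_i\circ c_j=\delta_{ij}c_i$ and $\sum_i c_i=u$, i.e.\ they form a CSOI, and $x$, being the preimage of $(\lambda_1,\ldots,\lambda_d)$, equals $\sum_i\lambda_i c_i$. This gives existence.

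For uniqueness, suppose $x=\sum_{i=1}^{d}\lambda_i c_i=\sum_{j=1}^{d'}\mu_j c_j'$ are two spectral decompositions. Using orthogonality and idempotency one checks $p(x)=\sum_i p(\lambda_i)c_i$ for every real polynomial $p$; since nonzero orthogonal idempotents are linearly independent (apply $\circ c_j$ to a vanishing linear combination), $\prod_i(t-\lambda_i)$ is precisely the minimal polynomial of $x$, and likewise $\prod_j(t-\mu_j)$. Hence $d=d'$ and, after reordering, $\lambda_i=\mu_i$ for all $i$. Finally, letting $\ell_i$ be the Lagrange interpolation polynomial with $\ell_i(\lambda_j)=\delta_{ij}$, we obtain $c_i=\ell_i(x)=c_i'$, so the CSOI is the same.

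I expect the only genuinely non-routine step to be the middle one: extracting from ``formally real'' that $\mathbb{R}[x]$ has neither nonzero nilpotents nor a factor isomorphic to $\mathbb{C}$, so that it splits as $\mathbb{R}^d$. Once that structural fact is in hand, the eigenvalues, the idempotents, and their uniqueness all follow by elementary polynomial algebra.
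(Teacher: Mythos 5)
Your proof is correct, and it is essentially the standard argument behind the cited result \cite{Faraut}[Theorem~3.1.1], which the paper itself does not reprove: pass to the power-associative subalgebra $\mathbb{R}[x]\cong\mathbb{R}[t]/(m(t))$, use the formally real condition to rule out nilpotents and complex factors so that $\mathbb{R}[x]\cong\mathbb{R}^d$, and recover uniqueness via Lagrange interpolation. The only cosmetic slip is in excluding a complex factor: the relevant element satisfies $i\circ i=-e$ for the unit idempotent $e$ of that factor rather than $-u$, but the contradiction $i^2+e^2=0$ with $e\neq0$ goes through verbatim.
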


Similarly to Spectral theorem (\ref{theorem:Spectraldecomposition}), 
the following Spectral theorem holds for a Jordan frame.
\begin{theorem}[Spectral theorem for Jordan frame\cite{Faraut}{[Theorem 3.1.2]}]
For an element $x\in \cV$, there exists Jordan frame $\{c_i\}_{i=1}^r$ and real numbers $\{\lambda_i\}_{i=1}^r$ such that
\begin{align}
  x=\sum_{i=1}^r \lambda_i c_i.  
\end{align}
Moreover, the number $r$ is common for any $x\in\cV$.
\end{theorem}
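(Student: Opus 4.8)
The plan is to deduce this from the first spectral theorem (Theorem~\ref{theorem:Spectraldecomposition}): given $x\in\cV$, we already have $x=\sum_{i=1}^d\lambda_i c_i$ with the $\lambda_i$ pairwise distinct and $\{c_i\}_{i=1}^d$ a CSOI, so it suffices to refine this CSOI into a \emph{Jordan frame} by splitting each $c_i$ into pairwise orthogonal primitive idempotents. Concretely, I would first establish the refinement lemma: \emph{every nonzero idempotent $c\in\cV$ is a sum of finitely many pairwise orthogonal primitive idempotents.} If $c$ is already primitive there is nothing to do; otherwise, by Definition~\ref{definition:completesystem}(4), $c=a+b$ with $a,b$ nonzero orthogonal idempotents, and both satisfy $a,b\le c$ (indeed $a\circ c=a\circ a+a\circ b=a$, likewise for $b$). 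Working inside the Peirce subalgebra $\cV_1(c):=\{y\in\cV\mid c\circ y=y\}$, which has $c$ as its unit, the Peirce relations give $\cV_1(a)\subseteq\cV_1(c)$ with $b\in\cV_1(c)\setminus\cV_1(a)$, so $\dim\cV_1(a)<\dim\cV_1(c)$ and similarly for $b$; since this dimension is a positive integer, an induction on $\dim\cV_1(c)$ terminates and writes $c$ as the desired orthogonal sum.

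Next I would assemble the frame. For each $i$ write $c_i=\sum_k e_{i,k}$ as in the refinement lemma; applying the multi-idempotent Peirce decomposition with respect to the CSOI $\{c_i\}_{i=1}^d$ (Theorem~\ref{theorem:peircedecomposition}) shows that $e_{i,k}\circ e_{j,k'}=0$ whenever $i\neq j$, because $e_{i,k}$ and $e_{j,k'}$ lie in the Peirce blocks attached to $c_i$ and $c_j$ respectively and distinct diagonal Peirce blocks multiply to zero. Hence $\{e_{i,k}\}_{i,k}$ is a complete system of orthogonal primitive idempotents, and since $x=\sum_i\lambda_i c_i=\sum_{i,k}\lambda_i e_{i,k}$, relabelling $\{e_{i,k}\}$ as $c_1,\dots,c_r$ and attaching to $e_{i,k}$ the scalar $\lambda_i$ gives $x=\sum_{\ell=1}^r\lambda_\ell c_\ell$ with $\{c_\ell\}$ a Jordan frame (note the $\lambda_\ell$ need not be distinct now, which the statement permits).

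It remains to show the length $r$ is the same for every $x$. Since any such frame satisfies $\sum_\ell c_\ell=u$, this amounts to showing that all Jordan frames of $\cV$ have the same cardinality, i.e., that the rank is well defined. The quickest route uses the trace functional $\tr$ of Definition~\ref{definition:trace}: it is linear and every primitive idempotent has $\tr e=1$, so $r=\sum_\ell\tr c_\ell=\tr u$, independent of the frame and hence common to all $x$. The genuinely nontrivial point — and the step I expect to be the main obstacle — is the claim that \emph{all} primitive idempotents have trace $1$, equivalently that any two Jordan frames are conjugate under an algebra automorphism. I would reduce to the case $\cV$ simple via Lemma~\ref{lemma:decompositionofV} (the rank being additive over direct summands) and then invoke transitivity of the automorphism group of a simple EJA on its Jordan frames together with the normalization $\tr u=r$; alternatively one can characterize $r$ intrinsically as the degree of the generic minimal polynomial of $\cV$, which is manifestly a numerical invariant of $\cV$, and then check it equals the length of a frame decomposition of a regular element. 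Everything else in the argument is routine Peirce-calculus bookkeeping built on the decompositions already recorded in Section~\ref{subsection:ConceptsinEJA}.
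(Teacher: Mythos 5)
This theorem is one of the statements the paper imports from the literature (it is cited as \cite{Faraut}[Theorem~3.1.2]) and is nowhere proven in the text or the appendix, so there is no in-paper proof to compare against; judged on its own, your argument is the standard one and is sound. The refinement step is correct: starting from Theorem~\ref{theorem:Spectraldecomposition}, splitting a non-primitive idempotent $c=a+b$ and inducting on $\dim\cV(c,1)$ works because $\{a,b,u-c\}$ is a CSOI, whence $\cV(a,1)\subseteq\cV(c,1)$ while $b\in\cV(c,1)\setminus\cV(a,1)$; and cross-orthogonality of the refined idempotents $e_{i,k}$ for distinct $i$ is exactly Lemma~\ref{lemma:orthogonalPeirce}. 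You also correctly isolate the only genuinely nontrivial point, the well-definedness of $r$, and your proposed route is the one the paper's own toolkit supports: Lemma~\ref{lemma:transitivity} (Faraut's Proposition~4.2.4(ii)) gives $\langle e,u\rangle=\langle e',u\rangle$ for any two primitive idempotents of a simple EJA, so within each simple summand $r$ equals $\tr u$ divided by this common positive value, and rank is additive over the decomposition of Lemma~\ref{lemma:decompositionofV} because a primitive idempotent must live in a single summand. Two small caveats: you should not assert $\tr e=1$ outright, since the paper's normalization of the inner product is introduced only \emph{after} this theorem and itself relies on the equal-trace fact — the invariance argument needs only that the common trace is positive; and the transitivity lemma is itself only cited, so your proof, like the paper, ultimately leans on Faraut for that one structural input (your alternative via the generic minimal polynomial would avoid it but requires machinery the paper does not set up).
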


Due to Theorem~\ref{theorem:Spectraldecomposition},
we choose the number $r_\cV$ as the number $r$ in Theorem~\ref{theorem:Spectraldecomposition} for each EJA $\cV$.
The number is called \textit{rank} of $\cV$.

However, we basically don't use this spectral theorem for primitive one
because the elements have some ways to spectral decompositions for primitive ones,
not unique similarly to Theorem~\ref{theorem:Spectraldecomposition}.
We use spectral decomposition of primitive one in Appendix~\ref{appendix:conceptsinEJA} with
the characterization of a classical system (Lemma~\ref{lemma:characterizationofClassicalsystem})

By Theorem~\ref{theorem:Spectraldecomposition},
we introduce the following notations for the future convenience.
\begin{definition}\label{definition:CSOIofx}
    For a CSOI $\bm{C}=\{c_i\}$,
    we denote $|\bm{C}|$ as the numbers of the elements in $\bm{C}$.
    In particular,
	by Definition~\ref{theorem:Spectraldecomposition},
    there exists unique Spectral decomposition for $x\in \cV$ as $x=\sum_{i=1}^d\lambda_i c_i$.
    Then, the CSOI of $x$ is denoted as $\bm{C}_x=\{c_i\}_{i=1}^d$,
	and the numbers of the elements in $\bm{C}_x$ is $|\bm{C}_x|=d$.
\end{definition}

Next, we introduce two maps including a Quadratic form.
\begin{definition}[\cite{Faraut}{[Chapter2-1]}]\label{definition:linearmap}
    We define a linear map $L_x:\cV\to \cV$ for $x\in \cV$ if $L_x$ satisfies the relation $L_x(y)=x\circ y$ for $y\in \cV$.
\end{definition}
The following Lemma is important to show the Peirce decomposition of idempotents and self-duality of the positive cone of an Euclidean Jordan algebra.
\begin{lemma}[\cite{Faraut}{[Chapter2-1]}]\label{lemma:eigenvaluesofCompletesystem}
    For an element $c_i$ in a CSOI $\bm{C}$, $L_{c_i}$ takes an eigenvalue of $0$ ,$\frac{1}{2}$ or $1$.
\end{lemma}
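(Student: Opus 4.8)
The plan is to exhibit a cubic polynomial with three distinct real roots $0,\tfrac12,1$ that annihilates $L_{c_i}$; together with the fact that $L_{c_i}$ is diagonalizable, this forces every eigenvalue of $L_{c_i}$ into $\{0,\tfrac12,1\}$. Write $c:=c_i$. By condition~(2) of Definition~\ref{definition:completesystem} the element $c$ is idempotent, $c^2=c$, and since (J2) makes powers well defined we also have $c^3=c^2\circ c=c\circ c=c$. Moreover, by the Euclidean condition (J3), $\langle L_c x,y\rangle=\langle c\circ x,y\rangle=\langle x,c\circ y\rangle=\langle x,L_c y\rangle$, so $L_c$ is self-adjoint for the inner product on $\cV$; in particular it is diagonalizable with real eigenvalues, so it suffices to locate those eigenvalues.

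First I would invoke the operator identity valid in every Jordan algebra and obtained by polarizing (J2) together with power-associativity (see \cite{Faraut}{[Chapter2-1]}),
\begin{align}
  L_{x^3}=3L_{x^2}L_x-2L_x^3 .
\end{align}
Specializing to $x=c$ and using $c^3=c^2=c$ collapses the right-hand side and yields $L_c=3L_c^2-2L_c^3$, that is,
\begin{align}
  2L_c^3-3L_c^2+L_c=L_c\,(L_c-1)\,(2L_c-1)=0 .
\end{align}
Since $t(t-1)(2t-1)$ has the three simple real roots $0,1,\tfrac12$, the minimal polynomial of $L_c$ divides it, and hence every eigenvalue of $L_c$ lies in $\{0,\tfrac12,1\}$. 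A more self-contained variant avoids this cubic identity: the quadratic representation $P_c:=2L_c^2-L_c$ of an idempotent satisfies $P_c^2=P_c$ (for instance from the fundamental formula $P_{P_x y}=P_xP_yP_x$ with $y=u$, using $P_c u=c$), so $P_c$ is an orthogonal projection; as $P_c$ commutes with the self-adjoint $L_c$, on each $\lambda$-eigenspace of $L_c$ one reads off $2\lambda^2-\lambda\in\{0,1\}$, whence $\lambda\in\{-\tfrac12,0,\tfrac12,1\}$, and applying the same to the complementary idempotent $u-c$ (with $L_{u-c}=\id-L_c$) intersects this set with $\{0,\tfrac12,1,\tfrac32\}$, leaving exactly $\{0,\tfrac12,1\}$.

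The only step carrying genuine content is the derivation of a polynomial relation for $L_c$ from the axioms — the cubic operator identity, or equivalently $P_c^2=P_c$; everything afterward is elementary linear algebra. I would obtain it by substituting $x\mapsto x+ty$ into $L_{(x+ty)^2}L_{x+ty}=L_{x+ty}L_{(x+ty)^2}$, comparing coefficients of powers of $t$ and combining the resulting relations with power-associativity, or simply cite \cite{Faraut}; I expect this to be the main obstacle. Finally, note that the argument uses only that $c_i$ is idempotent, so the same statement and proof hold for an arbitrary idempotent of $\cV$.
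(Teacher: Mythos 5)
Your proof is correct and is essentially the standard argument that the paper itself defers to \cite{Faraut} rather than reproving: the cubic relation $L_{c}(L_{c}-1)(2L_{c}-1)=0$, obtained from $L_{x^{3}}=3L_{x^{2}}L_{x}-2L_{x}^{3}$ together with $c^{2}=c^{3}=c$, is exactly how the cited source locates the eigenvalues, and the linearized identity you propose to derive by polarizing (J2), namely $L_{x^{2}\circ y}-L_{x^{2}}L_{y}=2(L_{x\circ y}-L_{x}L_{y})L_{x}$, is the very identity this paper already invokes in its proof of Lemma~\ref{lemma:orthogonalPeirce}, so it is available. Note also that the self-adjointness/diagonalizability remark is dispensable: any eigenvalue of $L_{c}$ must be a root of an annihilating polynomial, so $t(t-1)(2t-1)$ already confines the spectrum to $\{0,\tfrac12,1\}$.
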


\begin{definition}[Quadratic form\cite{Faraut}{[Chapter2-3]}]\label{definition:quadraticform}
    The linear map $P_x(\cdot):\cV\to \cV$ for $x\in \cV$ is called as a Quadratic form if the map $P_x$ is defined as $P_x(\cdot):=2L_x(L_x(\cdot))-L_{x^2}(\cdot)$.
\end{definition}

\begin{lemma}[\cite{Olsen}{[Proposition3.3.6]}~\cite{Faraut}{[Proposition3.2.2]}]\label{lemma:positivityofQradratic}
    Let $\cQ\subset \cV$ be a positive cone.Then, for $x\in \cV,y\in \cQ$, $P_x(y)\in \cQ$ holds.
\end{lemma}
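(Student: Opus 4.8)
The plan is to reduce to the case of an invertible $x$ and then realize $P_x$ as a linear automorphism of $\cV$ that permutes the connected components of the open set $\cV^{\times}$ of invertible elements, $\cQ$ (more precisely its interior) being the component of the unit $u$. First I would record the reduction: by Definition~\ref{definition:quadraticform}, $P_x=2L_x^2-L_{x^2}$ depends continuously on $x$, since $z\mapsto L_z$ is linear (Definition~\ref{definition:linearmap}) and $z\mapsto z^2$ is continuous. By the spectral theorem (Theorem~\ref{theorem:Spectraldecomposition}), $x+\varepsilon u$ has all eigenvalues nonzero for all but finitely many $\varepsilon\in\mathbb{R}$, so the invertible elements are dense in $\cV$; and $\cQ=\{z^2:z\in\cV\}$ coincides with $\{y\in\cV:\text{all eigenvalues of }y\ge 0\}$ by spectral decomposition, hence is closed. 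Therefore, once we know $P_x(y)\in\cQ$ for all invertible $x$ and all $y\in\cQ$, choosing invertible $x_\varepsilon\to x$ gives $P_x(y)=\lim_{\varepsilon\to0}P_{x_\varepsilon}(y)\in\cQ$ for every $x$.

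For invertible $x$, I would invoke the standard properties of the quadratic representation --- derivable from the fundamental Jordan identity $P_{P_xz}=P_xP_zP_x$ or quotable from \cite{Faraut} --- namely that $P_x$ is a linear isomorphism of $\cV$ with $P_x^{-1}=P_{x^{-1}}$ and that $P_x$ sends invertible elements to invertible elements; consequently $P_x$ permutes the connected components of $\cV^{\times}$. Next, $\cQ^{\circ}:=\{y:\text{all eigenvalues of }y>0\}$ is open and star-shaped about $u$: if $y=\sum_i\mu_ic_i$ with all $\mu_i>0$, then $(1-t)u+ty=\sum_i\bigl((1-t)+t\mu_i\bigr)c_i$ still has strictly positive eigenvalues for $t\in[0,1]$. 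Hence $\cQ^{\circ}$ is connected, and it is a full connected component of $\cV^{\times}$ because any continuous path in $\cV^{\times}$ leaving $\cQ^{\circ}$ would have to make the smallest eigenvalue (a continuous function) vanish, contradicting invertibility along the path. Thus $P_x(\cQ^{\circ})$ is again a connected component of $\cV^{\times}$; since $P_x(u)=2L_x(L_x(u))-L_{x^2}(u)=2x^2-x^2=x^2$ has eigenvalues $\lambda_i^2>0$, the element $x^2$ lies in both $\cQ^{\circ}$ and $P_x(\cQ^{\circ})$, which forces $P_x(\cQ^{\circ})=\cQ^{\circ}$. Taking closures, using $\cQ=\overline{\cQ^{\circ}}$ (each $y\in\cQ$ is the limit of $y+\varepsilon u\in\cQ^{\circ}$) and continuity of $P_x$, we obtain $P_x(\cQ)\subset\cQ$, and combined with the first paragraph this proves the lemma.

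I expect the main obstacle to be the structural claim that $\cQ^{\circ}$ is exactly one connected component of $\cV^{\times}$: one must check that the number of positive eigenvalues of an invertible element is locally constant, so that $\cQ^{\circ}$ is both open and closed in $\cV^{\times}$, which rests on the continuity of the spectral data behind Theorem~\ref{theorem:Spectraldecomposition}. Everything else --- density of invertibles, continuity of $x\mapsto P_x$, the identities $P_x(u)=x^2$ and $P_x^{-1}=P_{x^{-1}}$, and the closure arguments --- is routine bookkeeping. Alternatively this entire step may simply be cited from \cite{Faraut}{[Proposition3.2.2]} and \cite{Olsen}{[Proposition3.3.6]}, which is presumably the route taken in the paper.
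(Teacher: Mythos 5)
Your proposal is correct and is essentially the paper's argument: the paper likewise first handles invertible $x$ and then passes to general $x$ by perturbing to $x+\epsilon u$ and using closedness of $\cQ_\cV$, and for invertible $x$ it runs the same connectedness mechanism, phrased as a contradiction along the explicit segment $y(t)=ty+(1-t)u$ pushed forward to $z(t)=P_x(y(t))$, where an eigenvalue of $z(t)$ would have to vanish even though $z(t)$ stays invertible. Your connected-component packaging of $\cQ^{\circ}$ inside $\cV^{\times}$ is just a cleaner global statement of that same path argument, so no substantive difference.
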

Here, we remark that $P_x$ does not equal to $L_{x^2}$ because Jordan product is non-associative.
For example, in the case of Quantum system,
the quadratic form $P_x(y)$ of $y$ is calculated as follows:
\begin{align}
	P_x(y)=&2L_x(L_x(y))-L_{x^2}(y)
	=2L_x\left(\frac{xy+yx}{2}\right)-\frac{x^2y+yx^2}{2}\\
	=&\frac{x^2y+2xyx+yx^2}{2}-\frac{x^2y+yx^2}{2}=xyx.
\end{align}

Now, we prepare some additional preparations,
which imply the decomposition of $\cV$ by a complete system of orthogonal primitive idempotents.
We use the following Theorem to prove a simultaneous spectrality
and the condition that $\cV$ is isomorphic to a classical system.

\begin{theorem}[Peirce decomposition\cite{Faraut}{[Theorem4.2.1]}]\label{theorem:peircedecomposition}
    Let $\bm{C}=\{c_i\}_{i=1}^d$ be a complete system of orthogonal idempotents.
    Then, The space $\cV$ is decomposed in the following direct sum.
    \begin{align}
        \cV=\oplus_{i=1}^d \cV(i,1)\oplus_{i<j}^d \cV(i,\frac{1}{2})\cap \cV(j,\frac{1}{2}).
    \end{align}
    Here, $\cV(i,1)$, $\cV(i,\frac{1}{2})$ are eigenspaces of eigenvalues $1$, $\frac{1}{2}$ of $c_i$ respectively.

    In addition, let $\{e_i\}_{i=1}^n$ be a complete system of orthogonal primitive idempotents.
    Then, $\cV$ is decomposed as
    \begin{align}
        \cV=\oplus_{i=1}^n \cV(i,1)\oplus_{i<j} \cV(i,\frac{1}{2})\cap \cV(j,\frac{1}{2}).
    \end{align}
    Here, $\cV(i,1)=\mathbb{R}e_i$.
\end{theorem}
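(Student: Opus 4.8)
The plan is to prove both decompositions at once by diagonalizing the family of commuting self-adjoint operators $\{L_{c_i}\}_{i=1}^d$ and then imposing the completeness relation $\sum_i c_i=u$. First, for a single idempotent $c$ the Euclidean axiom (J3) gives $\langle L_c y,z\rangle=\langle c\circ y,z\rangle=\langle y,c\circ z\rangle=\langle y,L_c z\rangle$, so $L_c$ is self-adjoint for $\langle\cdot,\cdot\rangle$, hence orthogonally diagonalizable with real eigenvalues; by Lemma~\ref{lemma:eigenvaluesofCompletesystem} these eigenvalues lie in $\{0,\frac{1}{2},1\}$. Thus $\cV=\cV(c,0)\oplus\cV(c,\frac{1}{2})\oplus\cV(c,1)$ as an orthogonal direct sum of eigenspaces of $L_c$, and I set $\cV(i,1):=\cV(c_i,1)$, $\cV(i,\frac{1}{2}):=\cV(c_i,\frac{1}{2})$.

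Next I would show the $L_{c_i}$ commute. Linearizing the Jordan identity (J2), which as an operator identity reads $L_{x^2}L_x=L_xL_{x^2}$, by the substitution $x\mapsto x+tz$ and collecting the coefficient of $t$ yields $[L_{x^2},L_z]+2[L_{x\circ z},L_x]=0$ for all $x,z\in\cV$. Taking $x=c_i$ (so $c_i^2=c_i$) and $z=c_j$ with $j\neq i$, orthogonality $c_i\circ c_j=0$ collapses this to $[L_{c_i},L_{c_j}]=0$. Hence $\{L_{c_i}\}$ is a commuting family of self-adjoint operators and $\cV$ splits orthogonally into joint eigenspaces $\cV_{\vec\lambda}$, $\vec\lambda\in\{0,\frac{1}{2},1\}^d$, on which $L_{c_i}$ acts as $\lambda_i\,\id$. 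Since $x\mapsto L_x$ is linear and $u$ is the unit, $\sum_i L_{c_i}=L_u=\id$, so any nonzero $\cV_{\vec\lambda}$ satisfies $\sum_i\lambda_i=1$. The only $\{0,\frac{1}{2},1\}$-vectors of sum $1$ are those with a single coordinate equal to $1$ (the rest then forced to be $0$), giving $\cV_{\vec\lambda}=\cV(c_i,1)=\cV(i,1)$, and those with exactly two coordinates $i<j$ equal to $\frac{1}{2}$, giving $\cV_{\vec\lambda}=\cV(c_i,\frac{1}{2})\cap\cV(c_j,\frac{1}{2})$. Collecting all of these produces precisely the claimed direct sum.

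It remains to treat the Jordan-frame case, where each $c_i=e_i$ is primitive and I must improve $\cV(i,1)$ to $\mathbb{R}e_i$. Here I would use the multiplicative part of the single-idempotent Peirce theorem (proved by one more linearization of (J2), and which I would cite from \cite{Faraut}): $\cV(e,1)$ is itself a Euclidean Jordan subalgebra with unit $e$. If $\dim\cV(e_i,1)\ge 2$, choose $x\in\cV(e_i,1)\setminus\mathbb{R}e_i$ and apply the Spectral Theorem~\ref{theorem:Spectraldecomposition} inside $\cV(e_i,1)$: $x=\sum_k\nu_k f_k$ with distinct $\nu_k$ and $\{f_k\}$ a CSOI of $\cV(e_i,1)$, so $\sum_k f_k=e_i$. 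Since $x\notin\mathbb{R}e_i$ there are at least two summands, whence $f_1$ and $e_i-f_1$ are nonzero, orthogonal, idempotent and sum to $e_i$ --- contradicting primitivity of $e_i$ (condition (4) of Definition~\ref{definition:completesystem}). Therefore $\dim\cV(e_i,1)=1$, i.e. $\cV(i,1)=\mathbb{R}e_i$, and the previous paragraph gives the second decomposition with this refinement.

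\textbf{Main obstacle.} The linear-algebraic core (self-adjointness, joint diagonalization, the arithmetic constraint $\sum_i\lambda_i=1$) is routine; the real work is the Jordan calculus, namely the linearization identity $[L_{x^2},L_z]+2[L_{x\circ z},L_x]=0$ together with its further consequence that $\cV(c,1)$ is $\circ$-closed. I would treat these as imported from \cite{Faraut}; modulo them, the decomposition is pure assembly of eigenspace data.
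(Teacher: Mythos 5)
Your proof is correct. The paper does not supply its own proof of this theorem --- it is imported directly from Faraut [Theorem 4.2.1] --- and your argument (self-adjointness of each $L_{c_i}$ from (J3), commutativity of the family $\{L_{c_i}\}$ from the linearized Jordan identity, joint diagonalization with eigenvalues restricted to $\{0,\tfrac12,1\}$ by Lemma~\ref{lemma:eigenvaluesofCompletesystem}, the constraint $\sum_i L_{c_i}=L_u=\id$ forcing the admissible eigenvalue patterns, and the primitivity argument giving $\cV(i,1)=\mathbb{R}e_i$) is essentially the standard proof appearing in that reference, so there is nothing further to compare.
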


\begin{theorem}[simultaneous spectral decomposition\cite{Gowda}{[Theorem3.1]}]\label{theorem:simultaneousspectral}
    For two elements $x,y\in \cV$, the following two conditions are equivalent.
    \begin{itemize}
        \item[(1)] The linear maps of $x,y$  defined by $\ref{definition:linearmap}$ are commute. i.e. the relation $L_xL_y=L_yL_x$ holds.
        \item[(2)] Two elements $x,y$ have a simultaneous spectral decomposition. i.e. for the spectral decomposition of $x$ as $\sum_i \lambda_i c_i$, there exists the spectral decomposition of $y$ as $y=\sum_i \mu_i d_i$ such that $d_i\in \oplus_j \cV(c_j,1)$.
    \end{itemize}
\end{theorem}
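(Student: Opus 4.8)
The plan is to prove the two implications separately, working throughout with the Peirce decomposition (Theorem~\ref{theorem:peircedecomposition}) relative to the spectral CSOI of $x$. Write the spectral decomposition $x=\sum_{i=1}^{d}\lambda_i c_i$ (Theorem~\ref{theorem:Spectraldecomposition}) with the $\lambda_i$ pairwise distinct, and set $\cW:=\bigoplus_i\cV(c_i,1)$. By the Peirce multiplication rules $\cW$ is a Euclidean Jordan subalgebra whose unit is $u=\sum_i c_i$. I will use freely that $L_{c_i}$ acts as the scalar $1$ on $\cV(c_i,1)$, as $\tfrac12$ on $\cV(c_i,\tfrac12)\cap\cV(c_j,\tfrac12)$ for $j\neq i$, and as $0$ on the remaining Peirce components (Lemma~\ref{lemma:eigenvaluesofCompletesystem} and Theorem~\ref{theorem:peircedecomposition}); consequently $L_x=\sum_i\lambda_i L_{c_i}$ acts as the scalar $\lambda_i$ on $\cV(c_i,1)$ and as $\tfrac{\lambda_i+\lambda_j}{2}$ on $\cV(c_i,\tfrac12)\cap\cV(c_j,\tfrac12)$, and $x\circ c_j=\lambda_j c_j$ by orthogonality of the $c_i$.

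For $(2)\Rightarrow(1)$: the hypothesis gives $y\in\cW$. Since $\cV(c_i,1)\circ\cV(c_i,1)\subseteq\cV(c_i,1)$, $\cV(c_i,1)\circ\cV(c_j,1)=\{0\}$ for $i\neq j$, and $\cV(c_i,1)\circ\big(\cV(c_k,\tfrac12)\cap\cV(c_l,\tfrac12)\big)$ is contained in $\cV(c_k,\tfrac12)\cap\cV(c_l,\tfrac12)$ when $i\in\{k,l\}$ and is $\{0\}$ otherwise, the operator $L_y$ leaves every Peirce subspace invariant. As $L_x$ acts as a scalar on each of those subspaces, $L_x$ and $L_y$ commute block by block, hence $L_xL_y=L_yL_x$.

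For $(1)\Rightarrow(2)$: decompose $y=\sum_i y_{ii}+\sum_{i<j}y_{ij}$ into Peirce components, and evaluate the operator identity $L_xL_y=L_yL_x$ on the idempotent $c_j$. Since $x\circ c_j=\lambda_j c_j$, this reads $x\circ(y\circ c_j)=\lambda_j\,(y\circ c_j)$. The eigenvalue action of $L_{c_j}$ gives $y\circ c_j=y_{jj}+\tfrac12\sum_{l\neq j}y_{jl}$, so $x\circ(y\circ c_j)=\lambda_j y_{jj}+\tfrac12\sum_{l\neq j}\tfrac{\lambda_j+\lambda_l}{2}y_{jl}$ while $\lambda_j(y\circ c_j)=\lambda_j y_{jj}+\tfrac{\lambda_j}{2}\sum_{l\neq j}y_{jl}$; subtracting yields $\sum_{l\neq j}\tfrac{\lambda_l-\lambda_j}{2}\,y_{jl}=0$. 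The terms $y_{jl}$ lie in the pairwise distinct summands $\cV(c_j,\tfrac12)\cap\cV(c_l,\tfrac12)$ of the Peirce decomposition, hence are linearly independent, and since $\lambda_l\neq\lambda_j$ each forces $y_{jl}=0$. Letting $j$ run over all indices, $y=\sum_i y_{ii}\in\cW$. Finally, applying Theorem~\ref{theorem:Spectraldecomposition} inside the subalgebra $\cW$ produces a decomposition $y=\sum_i\mu_i d_i$ with $d_i\in\cW=\bigoplus_j\cV(c_j,1)$, and by uniqueness of the spectral decomposition in $\cV$ this is exactly the decomposition asserted in $(2)$.

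The routine core is short; the point needing care is that the Peirce multiplication rules invoked above (invariance of the grading under $L_y$, and that $\cW$ is a subalgebra with unit $u$) go slightly beyond the direct-sum statement recorded in Theorem~\ref{theorem:peircedecomposition}, so these finer rules should be imported from \cite{Faraut} or isolated as a small auxiliary lemma. It is also worth emphasizing that $(1)\Rightarrow(2)$ uses the distinctness of the eigenvalues $\lambda_i$ — i.e.\ the uniqueness clause of the spectral theorem — in an essential way, since that is precisely what makes the coefficients $\lambda_l-\lambda_j$ nonzero.
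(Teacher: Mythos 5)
Your proof is correct. The paper itself attributes this theorem to \cite{Gowda} and its own proof is left commented out in the source, but comparing against that draft: your $(1)\Rightarrow(2)$ argument is essentially the same one — Peirce-decompose $y$ with respect to $\bm{C}_x$, evaluate the commutation relation on each idempotent $c_j$, and use the distinctness of the eigenvalues $\lambda_i$ to kill every off-diagonal component $y_{jl}$ (your coefficient should be $\tfrac{\lambda_l-\lambda_j}{4}$ rather than $\tfrac{\lambda_l-\lambda_j}{2}$, which is immaterial since it is nonzero either way; your version is also slightly cleaner in that it keeps all terms $l\neq j$ rather than only $l>j$ and does not need the side assumption $x\neq ru$). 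For $(2)\Rightarrow(1)$ you take a genuinely different and, I would say, safer route: the paper's draft rewrites $x$ as $\sum_i\lambda_i' d_i$ over the frame $\{d_i\}$ of $y$ and computes $L_xL_y$ explicitly, which tacitly assumes each $d_i$ sits inside a single $\cV(c_j,1)$ — a point that needs justification — whereas your block-invariance argument ($L_y$ preserves every Peirce component of $\bm{C}_x$ while $L_x$ is scalar on each) only uses $y\in\bigoplus_j\cV(c_j,1)$ and the standard Peirce multiplication rules, so it avoids that gap. As you note, those multiplication rules and the fact that $\bigoplus_j\cV(c_j,1)$ is a unital subalgebra go slightly beyond what Theorem~\ref{theorem:peircedecomposition} and Lemma~\ref{lemma:orthogonalPeirce} record, so they should be cited from \cite{Faraut} or stated as an auxiliary lemma.
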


By Theorem~\ref{theorem:simultaneousspectral}, we define the concept said to behave classically as follows.
\begin{definition}[Classically]\label{definition:classically}
    The elements $x,y\in \cV$ are said to behave classically
    if the relation $L_xL_y=L_yL_x$ holds.
\end{definition}

\begin{lemma}\label{lemma:projectionPeirce}
    Let $\{c_i\}$ be a CSOI in $\cV$.
    Also, $x\in \cV$ has a Peirce decomposition with $\{c_i\}$ $x=\sum_i x_i+\sum_{i<j}x_{i,j}$.
    Then, the quadratic form $P_{c_i}$ maps $x$ to $x_i\in \cV(c_i,1)$.
\end{lemma}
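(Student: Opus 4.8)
The plan is to compute $P_{c_i}$ blockwise against the Peirce decomposition of $x$, using that $P_{c_i}$ acts as a scalar on each $L_{c_i}$-eigenspace. First I would record that, since $c_i$ is idempotent, $c_i^2=c_i$, so by Definition~\ref{definition:quadraticform}
\[
P_{c_i}=2L_{c_i}L_{c_i}-L_{c_i^2}=2L_{c_i}^2-L_{c_i}.
\]
By Lemma~\ref{lemma:eigenvaluesofCompletesystem} the map $L_{c_i}$ is diagonalizable with eigenvalues in $\{0,\frac12,1\}$, giving the single-idempotent Peirce decomposition $\cV=\cV(c_i,1)\oplus\cV(c_i,\frac12)\oplus\cV(c_i,0)$. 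On the $\lambda$-eigenspace of $L_{c_i}$ the operator $P_{c_i}$ acts as the scalar $2\lambda^2-\lambda$, which is $1$ when $\lambda=1$ and $0$ when $\lambda\in\{0,\frac12\}$. Hence $P_{c_i}$ is precisely the projection onto $\cV(c_i,1)$ along $\cV(c_i,\frac12)\oplus\cV(c_i,0)$.

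Next I would locate each block of the joint Peirce decomposition of $x$ inside these three eigenspaces of $L_{c_i}$. Writing $x=\sum_j x_j+\sum_{j<k}x_{j,k}$ with $x_j\in\cV(c_j,1)$ and $x_{j,k}\in\cV(c_j,\frac12)\cap\cV(c_k,\frac12)$ as in Theorem~\ref{theorem:peircedecomposition}, the point is that the $c_j$'s are pairwise orthogonal idempotents, hence operator-commute, so by Theorem~\ref{theorem:simultaneousspectral} the Peirce blocks are simultaneous eigenspaces of every $L_{c_j}$. Concretely, $L_{c_i}$ acts as $1$ on $\cV(c_i,1)=\mathbb{R}$-span pieces $\cV_{ii}$, as $0$ on $\cV(c_j,1)$ for $j\ne i$, as $\frac12$ on $\cV(c_i,\frac12)\cap\cV(c_j,\frac12)$, and as $0$ on $\cV(c_j,\frac12)\cap\cV(c_k,\frac12)$ whenever $i\notin\{j,k\}$; this is exactly the joint Peirce decomposition, \cite{Faraut}{[Theorem4.2.1]}. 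Combining with the previous paragraph, $P_{c_i}(x_i)=x_i$, $P_{c_i}(x_j)=0$ for $j\ne i$, $P_{c_i}(x_{i,j})=0$ for all $j\ne i$, and $P_{c_i}(x_{j,k})=0$ when $i\notin\{j,k\}$. Summing over all blocks by linearity yields $P_{c_i}(x)=x_i\in\cV(c_i,1)$.

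The one genuinely non-routine step is the simultaneous-diagonalization claim, i.e.\ that the blocks $\cV(c_j,1)$ for $j\ne i$ and $\cV(c_j,\frac12)\cap\cV(c_k,\frac12)$ for $i\notin\{j,k\}$ really lie in $\cV(c_i,0)$. This is where one needs that orthogonal idempotents in an EJA operator-commute (so Theorem~\ref{theorem:simultaneousspectral} applies) together with the eigenvalue bound of Lemma~\ref{lemma:eigenvaluesofCompletesystem}; since the resulting refined statement is precisely the joint Peirce decomposition, I would simply invoke Theorem~\ref{theorem:peircedecomposition} / \cite{Faraut} rather than re-derive it. Everything else reduces to the scalar identity $2\lambda^2-\lambda$ evaluated at $\lambda\in\{0,\frac12,1\}$.
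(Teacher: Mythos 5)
Your proof is correct and follows essentially the same route as the paper's: both reduce $P_{c_i}$ to $2L_{c_i}^2-L_{c_i}$ via idempotency and evaluate it on the Peirce blocks, where the scalar $2\lambda^2-\lambda$ equals $1$ at $\lambda=1$ and $0$ at $\lambda\in\{0,\tfrac12\}$ (the paper just writes out the action of $L_{c_i}$ on the decomposition term by term rather than phrasing it via eigenvalues). The only cosmetic slip is your parenthetical ``$\cV(c_i,1)=\mathbb{R}$-span'' , which holds only for primitive idempotents, not a general CSOI; it plays no role in the argument.
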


Next, we introduce some lemmas for the further discussion.

\begin{lemma}[\cite{Faraut}4.1.1]\label{lemma:orthogonalPeirce}
    Let $\{c_i\}$ be a CSOI in $\cV$.
    Then, the relation $\cV(c_i,1)\circ \cV(c_j,1)=\{0\}$ holds for $i\neq j$, 
    where $\cV_1\circ \cV_2:=\{x\circ y\mid x\in \cV_1,y\in \cV_2\}$ for Jordan algebras $\cV_1,\cV_2$. 
\end{lemma}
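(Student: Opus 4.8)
The plan is to deduce this from the single-idempotent Peirce calculus, the only genuinely new input being that distinct members of a CSOI annihilate one another's $1$-eigenspaces: I claim $\cV(c_i,1)\subseteq\cV(c_j,0)$ whenever $i\ne j$. Granting this, commutativity (J1) together with the standard Peirce multiplication rule $\cV(c_j,1)\circ\cV(c_j,0)=\{0\}$ for the single idempotent $c_j$ gives $\cV(c_i,1)\circ\cV(c_j,1)\subseteq\cV(c_j,1)\circ\cV(c_j,0)=\{0\}$, which is exactly the assertion.

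To establish $\cV(c_i,1)\subseteq\cV(c_j,0)$, I would first show that $L_{c_i}$ and $L_{c_j}$ commute. Set $x:=c_i+t\,c_j$ for a real parameter $t$; since $c_i\circ c_j=0$ we get $x^2=c_i+t^2 c_j$, and the Jordan axiom (J2), which says precisely $L_{x^2}L_x=L_xL_{x^2}$, expands by bilinearity into $(t^2-t)[L_{c_i},L_{c_j}]=0$ for every $t\in\mathbb{R}$; choosing $t=2$ yields $[L_{c_i},L_{c_j}]=0$. Hence $L_{c_j}$ preserves the $1$-eigenspace $\cV(c_i,1)$ of $L_{c_i}$. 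Now take $0\ne x\in\cV(c_i,1)$ that is also an eigenvector of $L_{c_j}$, with eigenvalue $\mu$; by Lemma~\ref{lemma:eigenvaluesofCompletesystem} applied to $c_j$ we have $\mu\in\{0,\tfrac12,1\}$, while $L_{c_i+c_j}x=(1+\mu)x$ combined with Lemma~\ref{lemma:eigenvaluesofCompletesystem} applied to the idempotent $c_i+c_j$ (which, together with the remaining $c_k$, forms a CSOI) forces $1+\mu\in\{0,\tfrac12,1\}$. The two constraints are compatible only for $\mu=0$. Because (J3) makes $L_{c_j}$ self-adjoint, hence diagonalizable, its restriction to the invariant subspace $\cV(c_i,1)$ is diagonalizable with sole eigenvalue $0$, so that restriction vanishes; i.e.\ $\cV(c_i,1)\subseteq\ker L_{c_j}=\cV(c_j,0)$.

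The one step not among the results quoted above is the single-idempotent Peirce rule $\cV(c,1)\circ\cV(c,0)=\{0\}$; this is classical (\cite{Faraut}), and I would either cite it or reprove it by the same mechanism — the linearization of (J2) gives $[L_c,L_x]=0$ for $x\in\cV(c,1)\cup\cV(c,0)$, whence for $x\in\cV(c,1)$, $y\in\cV(c,0)$ one has $c\circ(x\circ y)=L_cL_xy=L_xL_cy=0$ and also $c\circ(x\circ y)=L_cL_yx=L_yL_cx=x\circ y$, so $x\circ y\in\cV(c,1)\cap\cV(c,0)=\{0\}$. I expect the only real obstacle to be this piece of polarized-Jordan-identity bookkeeping; the reduction and the commutation step $[L_{c_i},L_{c_j}]=0$ use nothing beyond (J1)--(J3) and Lemma~\ref{lemma:eigenvaluesofCompletesystem}.
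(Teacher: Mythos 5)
Your proof is correct, and at its core it follows the same reduction as the paper: both arguments boil the claim down to the single-idempotent Peirce multiplication rule $\cV(c,1)\circ\cV(c,0)=\{0\}$, established from a linearization of the Jordan identity (J2), and then conclude via the containment $\cV(c_i,1)\circ\cV(c_j,1)\subseteq\cV(c,1)\circ\cV(c,0)$ for a suitable idempotent $c$. The genuine difference is that the paper's proof simply asserts the inclusion $\cV(c_j,1)\subseteq\cV(c_i,0)$ (it writes $\cV(c_i,1)\circ\cV(c_j,1)\subset\cV(c_i,1)\circ\cV(c_i,0)$ with no justification of why the second factor lands in the $0$-eigenspace), whereas you actually prove it: the substitution $x=c_i+tc_j$ into (J2) to get $(t-t^2)[L_{c_i},L_{c_j}]=0$, the invariance of $\cV(c_i,1)$ under $L_{c_j}$, the eigenvalue constraint $1+\mu\in\{0,\tfrac12,1\}$ coming from Lemma~\ref{lemma:eigenvaluesofCompletesystem} applied to the idempotent $c_i+c_j$ (which is indeed part of the coarser CSOI $\{c_i+c_j\}\cup\{c_k\}_{k\neq i,j}$), and diagonalizability from self-adjointness (J3) all check out, including the degenerate case $c_i+c_j=u$. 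Your polarization argument for $[L_c,L_x]=0$ when $x\in\cV(c,1)\cup\cV(c,0)$, and the resulting two evaluations of $c\circ(x\circ y)$, is also a valid (and standard) derivation of the single-idempotent rule, equivalent in substance to the paper's direct use of the identity $L_{x^2\circ y}-L_{x^2}L_y=2(L_{x\circ y}-L_xL_y)L_x$. In short: same strategy, but your version closes a step the paper leaves implicit, at the cost of the extra eigenvalue bookkeeping.
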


\begin{lemma}\label{lemma:newCSOI}
    Let $\{c_i\}$ and $x$ be a CSOI and an element in $\cV$, respectively.
    Let $P_{c_i}x=\sum_j \lambda_{i,j}c_{i,j}$ be a spectral decomposition.
    Then, $\sum_jc_{i,j}=c_i$ holds.
\end{lemma}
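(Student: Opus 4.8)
The plan is to realise $P_{c_i}x$ as an element of a smaller Euclidean Jordan algebra in which $c_i$ plays the role of the identity, and then read off the completeness relation from the Spectral theorem applied inside that algebra. First I would invoke Lemma~\ref{lemma:projectionPeirce}: writing the Peirce decomposition of $x$ with respect to the CSOI $\{c_i\}$ as $x=\sum_i x_i+\sum_{i<j}x_{i,j}$, we get $P_{c_i}x=x_i\in\cV(c_i,1)$. Next I would recall the standard fact from Peirce theory (see \cite{Faraut}, Ch.~IV) that the Peirce-$1$ space $\cV(c_i,1)$ is itself a Euclidean Jordan algebra under the restricted Jordan product and inner product, with identity element $c_i$. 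The ingredients are: closure of $\cV(c_i,1)$ under $\circ$ (the one genuinely algebraic point, coming from the Peirce multiplication rules that also underlie Theorem~\ref{theorem:peircedecomposition} and Lemma~\ref{lemma:orthogonalPeirce}); that $c_i$ acts as the identity on $\cV(c_i,1)$, which is immediate since $\cV(c_i,1)$ is by definition the eigenvalue-$1$ eigenspace of $L_{c_i}$, so $c_i\circ y=y$ for all $y\in\cV(c_i,1)$; and that conditions (J1)--(J3) are inherited from $\cV$.

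With this in place, I would apply the Spectral theorem (Theorem~\ref{theorem:Spectraldecomposition}) inside the Euclidean Jordan algebra $\cV(c_i,1)$ to the element $P_{c_i}x$. This yields distinct real numbers $\lambda_{i,j}$ and a complete system of orthogonal idempotents $\{c_{i,j}\}_j$ of $\cV(c_i,1)$ with $P_{c_i}x=\sum_j\lambda_{i,j}c_{i,j}$; condition (3) of Definition~\ref{definition:completesystem}, applied in $\cV(c_i,1)$ whose unit is $c_i$, says precisely $\sum_j c_{i,j}=c_i$, which is the assertion. Finally I would reconcile this with the ambient algebra, to see that the decomposition named in the statement is unambiguous: idempotency of each $c_{i,j}$ and pairwise orthogonality $c_{i,j}\circ c_{i,k}=0$ are intrinsic properties of the Jordan product, so $\{c_{i,j}\}_j$ is a family of pairwise orthogonal idempotents of $\cV$ as well, and adjoining the Peirce-complementary idempotent $u-c_i=\sum_{k\neq i}c_k$ (which is orthogonal to every $c_{i,j}$ by Lemma~\ref{lemma:orthogonalPeirce} and lies in the zero-eigenvalue eigenspace of $P_{c_i}x$) recovers a genuine CSOI of $\cV$ and the spectral decomposition of $P_{c_i}x$ in $\cV$.

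The only step that is not a routine formality is the appeal to the sub-EJA structure of $\cV(c_i,1)$, in particular its closure under the Jordan product; everything else is bookkeeping with the definitions plus a direct application of the Spectral theorem. As a sanity check, when $c_i=u$ we have $\cV(c_i,1)=\cV$ and $P_{c_i}=\id$, and the statement collapses to the completeness clause of Theorem~\ref{theorem:Spectraldecomposition} itself.
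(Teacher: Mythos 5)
Your proposal is correct and follows essentially the same route as the paper: identify $P_{c_i}x$ with the Peirce component $x_i\in\cV(c_i,1)$, observe that $\cV(c_i,1)$ is a Jordan subalgebra with unit $c_i$, and apply the Spectral theorem (Theorem~\ref{theorem:Spectraldecomposition}) inside it so that the completeness condition reads $\sum_j c_{i,j}=c_i$. Your closing remark reconciling the subalgebra decomposition with the ambient one (via the complementary idempotent $u-c_i$) is a useful clarification that the paper's terser proof leaves implicit, but it does not change the argument.
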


Next, we define the trace as follows by using an inner product of $\cV$.
\begin{definition}[Trace\cite{Faraut}{[Chapter3-1]}]\label{definition:trace}
    We define a trace of $x\in \cV$ as
    \begin{align}
        \mathrm{tr}x:=\langle x,u\rangle.
    \end{align}
\end{definition}

However, in order to ensure that the trace $\mathrm{tr}$ is the generalization of matrix trace $\Tr$,
we need to normalize the trace and the inner product.
From the definition of quadratic form (Definition~\ref{definition:quadraticform}), we obtain following lemma.

\begin{lemma}[\cite{Faraut}{Proposition4.2.4(ii)}]\label{lemma:transitivity}
    Let $\cV$ and $x,y$ be a simple EJAs and primitive idempotents.
    Then, there exists the element $w$ satisfying $P_w(x)=y$ and $w^2=u$.
\end{lemma}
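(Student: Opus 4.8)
\emph{The plan is} to localize everything in a rank-$2$ Euclidean Jordan subalgebra, inside which the quadratic form of a suitable element is an ordinary reflection. (Sanity check on the statement: if $w^2=u$ then $P_w$ is an involution and, being self-adjoint by (J3), an isometry of $\langle\cdot,\cdot\rangle$, so $P_w(x)=y$ forces $\tr x=\tr y$ — precisely the point of the normalization remark preceding the lemma.) If $x=y$ take $w=u$; from now on assume $x\neq y$.

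\emph{Reduction to a spin factor.} Let $e$ be the support idempotent of $x+y\in\cQ_\cV$, i.e.\ the smallest idempotent with $e\circ(x+y)=x+y$; it exists by the spectral theorem (Theorem~\ref{theorem:Spectraldecomposition}). Since $x+y$ is a sum of two rank-one elements, $e$ has rank at most $2$, and rank $1$ is impossible since $\cV(e,1)=\mathbb{R}e$ would force $x=y=e$. Hence $\cB:=\cV(e,1)$ is an EJA of rank exactly $2$ in which $x$ and $y$ are primitive idempotents. I claim $\cB$ is simple, hence, by the classification of simple EJAs (Table~\ref{table:SimpleEJA}), a spin factor $\mathbb{R}\oplus\mathbb{R}^{k}$. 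Indeed, if $x$ and $y$ do not behave classically (Definition~\ref{definition:classically}) then $\cB\not\cong\mathbb{R}\oplus\mathbb{R}$, since in $\mathbb{R}\oplus\mathbb{R}$ every pair of elements behaves classically; and if $L_xL_y=L_yL_x$ then $x,y$ are simultaneously diagonalizable, hence orthogonal (distinct primitive idempotents sharing a Jordan frame), and then simplicity of $\cV$ provides a nonzero element of $\cV(x,\tfrac12)\cap\cV(y,\tfrac12)\subseteq\cV(e,1)$ linearly independent of $x$ and $y$, again excluding $\mathbb{R}\oplus\mathbb{R}$.

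\emph{Reflection and gluing.} In the spin factor $\cB$ with unit $e$, every primitive idempotent has the form $p_s:=\tfrac12(e+s)$ for a unit vector $s$ of the spin part, and if $r$ is a unit spin vector then $r$, regarded as an element of $\cB$, satisfies $r^2=e$ while $P_r$ fixes the scalar part and acts on the spin part by the reflection $s\mapsto 2\langle r,s\rangle r-s$; in particular $P_r(p_s)=p_{2\langle r,s\rangle r-s}$. Writing $x=p_{s_x}$, $y=p_{s_y}$, the choice $r:=(s_x+s_y)/\lVert s_x+s_y\rVert$ (or, if $s_y=-s_x$, any unit $r\perp s_x$) gives, by a short computation, $P_r(x)=y$ inside $\cB$. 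Now extend $e$ to a complete system of orthogonal primitive idempotents of $\cV$, writing $u-e=c_3+\cdots+c_m$, and put
\[
w:=r+c_3+\cdots+c_m .
\]
Since $r\in\cV(e,1)$ and each $c_k\in\cV(e,0)$, the Peirce orthogonality $\cV(e,1)\circ\cV(e,0)=\{0\}$ (Lemma~\ref{lemma:orthogonalPeirce}) kills all cross terms, so $w^2=r^2+\sum_k c_k=e+(u-e)=u$. Moreover $w\circ x=r\circ x$ because $c_k\circ x=0$, and $r\circ x\in\cV(e,1)$, so $w\circ(w\circ x)=r\circ(r\circ x)$; therefore $P_w(x)=2\,w\circ(w\circ x)-w^2\circ x=2\,r\circ(r\circ x)-x=P_r(x)=y$, using $w^2\circ x=u\circ x=x$ and $r^2\circ x=e\circ x=x$. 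This $w$ is the one sought.

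\emph{Main obstacle.} The Peirce manipulations and the explicit spin-factor reflection are routine; the real content sits in the reduction step — showing $\cB=\cV(e,1)$ is a spin factor — whose structural inputs are the classification of simple EJAs and, in the orthogonal sub-case, the non-vanishing of the Peirce $\tfrac12$-space of two primitive idempotents in a simple algebra. The delicate conceptual point is that a \emph{single} $w$ must realize the transformation: one cannot merely connect $x$ to $y$ through a chain of primitive idempotents and compose the resulting maps $P_{w_i}$, because a composite of quadratic representations is not itself a quadratic representation — which is exactly why the argument must be confined to one rank-$2$ block rather than run along a connecting chain.
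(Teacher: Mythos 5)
The paper does not prove this lemma at all: it is imported verbatim from Faraut--Kor\'anyi (Proposition IV.2.4(ii)), so there is no in-paper argument to compare against. Your reconstruction follows the standard textbook route --- reduce to the rank-two Peirce block $\cV(e,1)$ over the support $e$ of $x+y$, identify that block as a spin factor, realize the map as a reflection $P_r$ with $r^2=e$, and glue with $u-e$ --- and the computations I can check all check out: the formula $P_r(\alpha e+s)=\alpha e+2\langle r,s\rangle r-s$, the choice $r=(s_x+s_y)/\lVert s_x+s_y\rVert$, the Peirce orthogonality giving $w^2=u$, and the identity $P_w|_{\cV(e,1)}=P_r^{\cB}$ are all correct. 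Your closing remark about why a single $w$ (rather than a composed chain of $P_{w_i}$'s) is required is also well taken.

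Two load-bearing steps, however, are asserted rather than proved, and neither follows from anything established in the paper. First, the claim that $e=\mathrm{supp}(x+y)$ has rank at most $2$: this is true, but in a general EJA it needs an argument (e.g.\ Weyl-type eigenvalue inequalities $\lambda_3(x+y)\le\lambda_2(x)+\lambda_2(y)$, or submodularity of the rank function on the idempotent lattice); ``a sum of two rank-one elements has rank $\le 2$'' is not self-evident outside the matrix setting. Second, in the orthogonal sub-case you invoke the non-vanishing of $\cV(x,\tfrac12)\cap\cV(y,\tfrac12)$ for orthogonal primitive idempotents in a simple algebra; this ``connectedness'' fact is exactly the kind of structural result that sits next to the cited proposition in Faraut--Kor\'anyi and is not elementary. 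Since both facts are standard and true, I would call your proof correct in substance, but a self-contained version must supply proofs of these two inputs (or cite them explicitly), as they carry essentially all of the non-computational content of the lemma.
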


By applying Lemma~\ref{lemma:transitivity}, we obtain $\langle x,u\rangle=\langle y,u\rangle$ for a primitive idempotent $x,y$ on a simple EJAs $\cV$ by following way:
\begin{align}
    \langle u,x\rangle&=\langle u,P(w)y\rangle\\
    &\stackrel{(a)}{=}\langle P(w)u,y\rangle=\langle u,y\rangle,
\end{align}
where $P(w)$ maps $x$ to $y$.
The equality (a) is shown by Euclidean condition.

We normalize a norm $\mathrm{tr}x\circ y:=\langle x,y\rangle$ on an EJAs $\cV$ by following way:
Firstly, when an EJAs $\cV=\oplus_{i=1}^n \cV_i$ is decomposed to simple EJAs $\cV_i$, we set a norm $\langle x,y\rangle=\langle x_1,y_1\rangle_1+\cdots +\langle x_n,y_n\rangle_n$, 
where $\langle x_i,y_i\rangle_i$ is the inner product of element $x_i,y_i\in \cV_i$.
In these settings, We set a new inner product in $\cV$ as $\langle x,u\rangle=1$ for all simple EJAs.
Next, applying this normalization to an EJA decomposed by $\cV=\oplus_{i=1}^n\cV_i$,
We obtain $V=a_1\langle x_1,u\rangle_1+\cdots +a_n\langle x_n,u\rangle_n$, where $a_i(i=1,\ldots,n)$ are
constant in order to normalize to 1 for each elements.
Here in after, we only consider an EJA with the above normalized inner product.

Next, we introduce a composite system of an Euclidean Jordan algebra.
For general models of GPTs, we can not canonically define unique composite model of given models.
In contrast, we give a canonical definition of composite model for two models associated with two EJAs.

\begin{definition}[Composite system in an Euclidean Jordan algebra\cite{Barnum3}]\label{definition:CompositesysteminEJA}
    Let $\cV_1,\cV_2$ be Euclidean Jordan algebras.
    Let $\cV=\cV_1\otimes \cV_2$ be the tensor product vector space.
    Let the Jordan products in $\cV_1,\cV_2$ be $\circ_1,\circ_2$ ,respectively.
    We define the Jordan product $\circ$ of $\cV$ as $a_1\otimes b_1\circ a_2\otimes b_2=(a_1\circ_1 a_2)\otimes (b_1\circ_2 b_2)$.
    Moreover, we define the inner product $\langle\cdot,\cdot\rangle$ of $\cV$
    as $\langle a_1\otimes b_1,a_2\otimes b_2\rangle=\langle a_1,b_1\rangle_1\langle a_2,b_2\rangle_2$,
    where $\langle\cdot,\cdot\rangle_1,\langle\cdot,\cdot\rangle_2$ are inner products of $\cV_1,\cV_2$, respectively.
    Then $\cV$ become an Euclidean Jordan algebra.
    Here, by Definition~\ref{definition:positiveconeofEJA},
    we give the canonical positive cone $\cQ_\cV$ and we call $(\cV,\cQ_\cV,u_{1,2})$
    as the composite system of an Euclidean Jordan algebra,
    where $u_{1,2}=u_1\otimes u_2$ for the unit elements of $\cV_1,\cV_2$.
\end{definition}

\begin{lemma}
    The space $\cV$ defined by Definition~\ref{definition:CompositesysteminEJA} is an Euclidean Jordan algebra.
\end{lemma}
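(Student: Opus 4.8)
The plan is to verify that $\cV = \cV_1 \otimes \cV_2$, equipped with the product $\circ$ and the form $\langle\cdot,\cdot\rangle$ of Definition~\ref{definition:CompositesysteminEJA}, satisfies each clause of Definition~\ref{definition:EuclideanJordanalgebra}. Three items are routine preliminaries. First, $\cV$ is a finite-dimensional real vector space, and the prescriptions for $\circ$ and for $\langle\cdot,\cdot\rangle$ are given only on simple tensors but extend uniquely to (bi)linear maps by the universal property of the tensor product. Second, $\langle\cdot,\cdot\rangle$ is a genuine inner product: the tensor product of two real inner product spaces carries the inner product whose value on simple tensors is the product of the two factor inner products, which is exactly the prescribed form, and it is symmetric and positive definite. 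Third, $u_1 \otimes u_2$ is a unit, since $(u_1\otimes u_2)\circ(a\otimes b) = (u_1 \circ_1 a)\otimes(u_2 \circ_2 b) = a\otimes b$, and this extends bilinearly.

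For the Jordan algebra axioms, (J1) and (J3) reduce to the factorwise statements. Commutativity: $(a_1\otimes b_1)\circ(a_2\otimes b_2) = (a_1\circ_1 a_2)\otimes(b_1\circ_2 b_2) = (a_2\circ_1 a_1)\otimes(b_2\circ_2 b_1) = (a_2\otimes b_2)\circ(a_1\otimes b_1)$ by (J1) in $\cV_1$ and $\cV_2$, and bilinearity propagates it to all of $\cV$. The Euclidean condition: for simple tensors $x = a_1\otimes b_1$, $y = a_2\otimes b_2$, $z = a_3\otimes b_3$,
\begin{align}
\langle x\circ y, z\rangle = \langle a_1\circ_1 a_2, a_3\rangle_1\,\langle b_1\circ_2 b_2, b_3\rangle_2 = \langle a_1, a_2\circ_1 a_3\rangle_1\,\langle b_1, b_2\circ_2 b_3\rangle_2 = \langle x, y\circ z\rangle,
\end{align}
using (J3) in each factor, and trilinearity gives (J3) in general.

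The substantive clause is the Jordan identity (J2), which, with $L$ as in Definition~\ref{definition:linearmap}, says $L_{x^2}L_x = L_x L_{x^2}$ on $\cV$. The clean observation is that left multiplication by a simple tensor factorizes, $L_{a\otimes b} = L^{(1)}_a \otimes L^{(2)}_b$ as an operator on $\cV_1\otimes\cV_2$ (where $L^{(i)}$ denotes left multiplication in $\cV_i$), and $(a\otimes b)^2 = a^2\otimes b^2$, so for $x = a\otimes b$
\begin{align}
L_{x^2}L_x = (L^{(1)}_{a^2}L^{(1)}_a)\otimes(L^{(2)}_{b^2}L^{(2)}_b) = (L^{(1)}_a L^{(1)}_{a^2})\otimes(L^{(2)}_b L^{(2)}_{b^2}) = L_x L_{x^2},
\end{align}
the middle step being (J2) in $\cV_1$ and in $\cV_2$.

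The step I expect to be the main obstacle is extending (J2) from simple tensors to an arbitrary $x\in\cV$. Because (J2) is cubic in $x$, this does not follow from the simple-tensor case by a linear extension; the correct route is to verify the fully polarized identity, i.e.\ that $[L_{a\circ b}, L_c] + [L_{b\circ c}, L_a] + [L_{c\circ a}, L_b] = 0$ for all $a,b,c\in\cV$, and to check that it genuinely closes up rather than merely holding on the span of simple tensors. My plan here would be: first reduce to $\cV_1,\cV_2$ simple via Lemma~\ref{lemma:decompositionofV} (the componentwise product respects the ideal decompositions, so $\cV_1\otimes\cV_2$ splits as a direct sum of tensor products of simple factors, and a direct sum of Jordan algebras is Jordan); then, for each such simple pair, analyze the polarized identity through the Peirce decomposition (Theorem~\ref{theorem:peircedecomposition}) attached to a Jordan frame of $\cV_1$ and the induced decomposition of operators on $\cV_1\otimes\cV_2$, so that (J2) reduces to a finite collection of relations among the structure constants of the two factors --- alternatively, one appeals to the construction in \cite{Barnum3}. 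Controlling the cross terms produced by pairs of non-commuting idempotents in the two factors is where I would concentrate the effort.
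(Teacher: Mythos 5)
Your handling of the multilinear clauses is correct and coincides with what the paper does: (J1) is bilinear and (J3) trilinear in their arguments, so verifying them on simple tensors and extending by (bi/tri)linearity is legitimate, and the same applies to the inner product and the unit. You have also correctly isolated the one clause where this logic does not apply: (J2) is cubic in $x$, so checking $x^2\circ(x\circ y)=x\circ(x^2\circ y)$ only for simple tensors $x=a_1\otimes b_1$ proves nothing about general $x\in\cV_1\otimes\cV_2$. The paper's own proof stops exactly at the simple-tensor verification for all three axioms, so on this point your write-up is strictly more careful than the proof you were asked to reproduce.

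The gap you flagged, however, is not closable by the route you sketch (polarization plus Peirce-decomposition bookkeeping), because the polarized identity genuinely fails for the componentwise product when both factors are non-associative. Using $L_{u\otimes v}=L_u\otimes L_v$ and $AB\otimes CD-BA\otimes DC=\tfrac12\bigl([A,B]\otimes\{C,D\}+\{A,B\}\otimes[C,D]\bigr)$, the cyclic sum $\sum_{\mathrm{cyc}}[L_{a\circ b},L_c]$ on simple tensors becomes a sum of terms of the form $[L_{a_1\circ b_1},L_{c_1}]\otimes\{L_{a_2\circ b_2},L_{c_2}\}$ plus their mirrors; the factor identities kill neither sum, since each commutator is tensored with a \emph{different} anticommutator. (The cancellation does go through when one factor is associative, e.g.\ classical$\,\otimes\,$anything, since then all first-factor commutators vanish and $\{L_{a_1\circ b_1},L_{c_1}\}=2L_{a_1\circ b_1\circ c_1}$ is symmetric in $a_1,b_1,c_1$.) For a concrete failure, take $\cV_1=\cV_2=\mathrm{Herm}(2,\mathbb{C})$, two non-commuting projections $P,R$, and $a=\sigma_1\otimes P$, $b=\sigma_3\otimes R$. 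Then $a\circ b=(\sigma_1\circ\sigma_3)\otimes(P\circ R)=0$ and $b^2=I\otimes R$, so the coefficient of $\alpha\beta^2$ in $[L_{(\alpha a+\beta b)^2},L_{\alpha a+\beta b}]$ is
\begin{align}
2[L_{a\circ b},L_b]+[L_{b^2},L_a]=L_{\sigma_1}\otimes[L_R,L_P]=\tfrac14\,L_{\sigma_1}\otimes\mathrm{ad}_{[R,P]}\neq 0,
\end{align}
so (J2) fails for some $x=\alpha a+\beta b$. The appeal to \cite{Barnum3} does not rescue this either: the composites constructed there are precisely \emph{not} the vector-space tensor product with the componentwise product, for this very reason. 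So what you identified as "the main obstacle" is in fact where the statement, as formulated via Definition~\ref{definition:CompositesysteminEJA}, breaks for general pairs of EJAs; the verification the paper gives (and that your first two paragraphs reproduce) covers only the multilinear axioms.
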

\begin{proof}
    For $x=a_1\otimes b_1,y=a_2\otimes b_2$, the relations $x\circ y=y\circ x$ and $x^2\circ(x\circ y)=x\circ (x^2\circ y)$
    are shown by the definition of the Jordan algebra $V_1,V_2$.
    The Euclidean condition is from the Euclidean conditions of $V_1,V_2$, that is,
    \begin{align}
         & \langle x\circ y,z\rangle=\langle a_1\circ_1 b_1\otimes a_2\circ_2 b_2, c_1\otimes c_2\rangle =\langle a_1\circ_1 b_1,c_1\rangle_1\langle a_2\circ_2 b_2,c_2\rangle_2 \\
         & =\langle a_1,b_1\circ_1 c_1\rangle_1\langle a_2,b_2\circ_2 c_2\rangle_2=\langle a_1\otimes a_2,b_1\circ_1 c_1\otimes b_2\circ_2 c_2\rangle =\langle x,y\circ z\rangle
    \end{align}
    ,where $z=c_1\otimes c_2$.
\end{proof}

\section{Information theorical tools}\label{section:Information theorical tools}

In this section, we define the information quantities in an EJA and investigate their properties.
In addition, we introduce some useful lemmas for latter discussions.
Next, we introduce an information theoretical tool, pinching, and show so-called pinching inequality and a lemma which states corresponding to measurement with the pinching states.
We apply them in order to show the inequalities of the information quantities such as Petz Relative R\'{e}nyi (PRR) entropy and Sandwiched Relative R\'{e}nyi (SRR) entropy.
Finally, we define TPCP map over an EJA and we check some examples and its properties.
From now on, we consider over an EJA $\cV$ with its canonical positive cone $\cQ$ unless explicitly stated.

\subsection{Information quantities in Euclidean Jordan algebra}\label{subsection:InformationQuantities}

At first, we introduce $\log \rho$ or $\exp \rho$ for the state $\rho$ in $\cV$.
\begin{definition}\label{definition:fofEJAelement}
    If the state $\rho$ has spectral decomposition as $\rho=\sum_i\lambda_i c_i$, we define $f(\rho)$ by the function $f:\mathbb{R}\to\mathbb{R}$ as
    \begin{align}
        f(\rho):=\sum_i f(\lambda_i)c_i.
    \end{align}
    Here, all of $\lambda_i$ are in the domain of definition of the function $f$.
\end{definition}

\begin{definition}
    If the state $\rho$ has a spectral decomposition as $\rho=\sum_i\lambda_ic_i$, we define $\rho^s\quad s\in(-\infty,0),(0,\infty)$ and $\log \rho$ as
    \begin{align}
        \rho^s    & =\sum_i\lambda^s c_i\quad s\in(-\infty,0),(0,\infty). \\
        \log \rho & =\sum_i \log \lambda_i c_i.
    \end{align}
    Here, all of $\lambda_i$ are in the domain of definition of the function $x^s,\log x$.
\end{definition}
These are an extension of a quantum state $\rho^s,\log \rho$.
By these Definitions, we extend the quantum information quantities to that of an EJA as follows.

\begin{definition}[Information Quantities in Euclidean Jordan algebra]\label{definition:informationquantities}
    For the states $\rho,\sigma$, we define the information quantities as
    \begin{itemize}
        \item[(1)]von Neumann entropy: $H(\rho):=-\mathrm{tr}\rho\circ\log\rho$.
        \item[(2)]Relative entropy: $D(\rho||\sigma):=\mathrm{tr}\left(\rho\circ\log\rho-\rho\circ\log\sigma\right)$.
        \item[(3)]Petz Relative R\'{e}nyi (PRR) entropy: $D_{1+s}(\rho||\sigma):=\frac{\phi(-s|\rho||\sigma)}{s}=\frac{1}{s}\log \mathrm{tr}\rho^{1+s}\circ \sigma^{-s}$.
        \item[(4)]Sandwiched Relative R\'{e}nyi (SRR) entropy: $\underline{D}_{1+s}(\rho||\sigma):=\frac{\tilde{\phi}(-s|\rho||\sigma)}{s}=\frac{1}{s}\log\mathrm{tr}\left(P_{\sigma^{\frac{-s}{2(1+s)}}}(\rho)\right)^{1+s}$.
    \end{itemize}
\end{definition}

Now, we give some statements of information quantities for the latter discussions.
All of them are known in Quantum system as the same way.
In other words, we generalize such statements to the case of EJAs.
We prove them in Appendix~\ref{appendix:fundamentalproperties},
and the structure of proofs is based on \cite{Hayashi1}{[Chapter3.1]}.

\begin{lemma}\label{lemma:classicallyRenyies}
    If the states $\rho,\sigma$ are classically (Definition~\ref{definition:classically}), PRR entropy is corresponding to SRR entropy,
	that is,
    \begin{align}
        D_{1+s}(\rho||\sigma)=\underline{D}_{1+s}(\rho||\sigma).
    \end{align}
\end{lemma}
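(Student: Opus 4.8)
The plan is to reduce both quantities to the same classical expression by diagonalizing $\rho$ and $\sigma$ on a common complete system of orthogonal idempotents. Since $\rho$ and $\sigma$ behave classically (Definition~\ref{definition:classically}), Theorem~\ref{theorem:simultaneousspectral} gives a simultaneous spectral decomposition; refining the two CSOIs to a common one $\{e_k\}_k$ (using that the Peirce projections of the idempotents of $\sigma$ relative to the CSOI of $\rho$ are again orthogonal idempotents, by Lemma~\ref{lemma:projectionPeirce} and Lemma~\ref{lemma:newCSOI}), I would write $\rho=\sum_k\lambda_k e_k$ and $\sigma=\sum_k\mu_k e_k$, with eigenvalues possibly repeated. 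A preliminary point to check is that $f(\rho)$ as in Definition~\ref{definition:fofEJAelement}, although defined through the \emph{unique} spectral CSOI of $\rho$, is equally computed on any finer CSOI; granting this, every power appearing in Definition~\ref{definition:informationquantities} is ``diagonal'' on $\{e_k\}$, e.g.\ $\rho^{1+s}=\sum_k\lambda_k^{1+s}e_k$ and $\sigma^{-s}=\sum_k\mu_k^{-s}e_k$.

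Then the two sides are computed directly. For the Petz side, using $e_k\circ e_l=\delta_{kl}e_k$, one gets $\rho^{1+s}\circ\sigma^{-s}=\sum_k\lambda_k^{1+s}\mu_k^{-s}e_k$, hence $\mathrm{tr}\,\rho^{1+s}\circ\sigma^{-s}=\sum_k\lambda_k^{1+s}\mu_k^{-s}\,\mathrm{tr}\,e_k$. For the sandwiched side, set $w:=\sigma^{-s/(2(1+s))}=\sum_k\mu_k^{-s/(2(1+s))}e_k$. Expanding the quadratic form by Definition~\ref{definition:quadraticform}, $P_w(\rho)=2\,w\circ(w\circ\rho)-w^2\circ\rho$; since $w$ and $\rho$ are both diagonal on $\{e_k\}$, each Jordan product merely multiplies the coefficient attached to each $e_k$, so $w\circ\rho=\sum_k\mu_k^{-s/(2(1+s))}\lambda_k e_k$, $w\circ(w\circ\rho)=w^2\circ\rho=\sum_k\mu_k^{-s/(1+s)}\lambda_k e_k$, and therefore $P_w(\rho)=\sum_k\mu_k^{-s/(1+s)}\lambda_k e_k$. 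Raising to the $(1+s)$-th power (again via Definition~\ref{definition:fofEJAelement}) and taking the trace gives $\mathrm{tr}\,(P_w(\rho))^{1+s}=\sum_k\mu_k^{-s}\lambda_k^{1+s}\,\mathrm{tr}\,e_k$, which is literally the Petz expression. Dividing the logarithm of this common value by $s$ yields $D_{1+s}(\rho||\sigma)=\underline{D}_{1+s}(\rho||\sigma)$, as claimed.

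The Jordan-product bookkeeping is routine; the one genuinely delicate step is the first one, namely producing a single CSOI diagonalizing both states and checking that $f$ of an element is independent of which refinement of its spectral CSOI is used, so that $\rho^{1+s}$, $\sigma^{-s}$ and $(P_w(\rho))^{1+s}$ may all be read off on $\{e_k\}$. This is where the Peirce-decomposition machinery of Section~\ref{subsection:ConceptsinEJA} does the work; once it is in place, the two quantities become the same sum and the identity is immediate. As in the quantum case, the statement is understood on the common domain where the negative powers of $\sigma$ occurring in both quantities are defined.
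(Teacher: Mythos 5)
Your proposal is correct and follows essentially the same route as the paper: both diagonalize $\rho$ and $\sigma$ on a common CSOI obtained from Theorem~\ref{theorem:simultaneousspectral}, expand the quadratic form $P_{\sigma^{-s/(2(1+s))}}(\rho)$ term by term to get $\sum_k \lambda_k\mu_k^{-s/(1+s)}e_k$, and observe that its $(1+s)$-th power coincides with $\rho^{1+s}\circ\sigma^{-s}$. Your extra care in constructing the common refinement via Peirce projections and in checking that $f(\rho)$ may be evaluated on a finer CSOI is a welcome tightening of a step the paper's proof leaves implicit, but it does not change the argument.
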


\begin{lemma}[Additivity]\label{theorem:additivity}
    For the states $\rho_1,\rho_2,\sigma_1,\sigma_2$, the following relations hold.
    \begin{align}
        D(\rho_1\otimes\rho_2 ||\sigma_1\otimes \sigma_2)                   & =D(\rho_1||\sigma_1)+D(\rho_2||\sigma_2).                                                                        \\
        D_{1+s}(\rho_1\otimes \rho_2||\sigma_1\otimes \sigma_2)             & =D_{1+s}(\rho_1||\sigma_1)+D_{1+s}(\rho_2||\sigma_2).\label{equation:renyiadditivity}                            \\
        \underline{D}_{1+s}(\rho_1\otimes \rho_2||\sigma_1\otimes \sigma_2) & =\underline{D}_{1+s}(\rho_1||\sigma_1)+\underline{D}_{1+s}(\rho_2||\sigma_2).\label{equation:sandwichadditivity}
    \end{align}
\end{lemma}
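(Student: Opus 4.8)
The plan is to reduce all three identities to how spectral decompositions behave under the composite Jordan product of Definition~\ref{definition:CompositesysteminEJA}. Write $\rho_i=\sum_k\lambda_{i,k}c_{i,k}$ and $\sigma_i=\sum_k\mu_{i,k}d_{i,k}$ for the spectral decompositions (Theorem~\ref{theorem:Spectraldecomposition}). First I would note that tensoring CSOIs gives a CSOI: $\{c_{1,k}\otimes c_{2,l}\}_{k,l}$ is a family of orthogonal idempotents since $(c_{1,k}\otimes c_{2,l})\circ(c_{1,k'}\otimes c_{2,l'})=(c_{1,k}\circ_1 c_{1,k'})\otimes(c_{2,l}\circ_2 c_{2,l'})$, and it is complete since $\sum_{k,l}c_{1,k}\otimes c_{2,l}=(\sum_k c_{1,k})\otimes(\sum_l c_{2,l})=u_1\otimes u_2=u$. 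Hence $\rho_1\otimes\rho_2=\sum_{k,l}\lambda_{1,k}\lambda_{2,l}\,c_{1,k}\otimes c_{2,l}$; grouping equal eigenvalues and using uniqueness of the spectral decomposition shows that the functional calculus of Definition~\ref{definition:fofEJAelement} may be evaluated on any orthogonal-idempotent expansion, so that (with the standard conventions when $\sigma_i$ is not invertible)
\begin{align}
(\rho_1\otimes\rho_2)^{t}=\rho_1^{t}\otimes\rho_2^{t},\qquad \log(\rho_1\otimes\rho_2)=(\log\rho_1)\otimes u_2+u_1\otimes(\log\rho_2).
\end{align}
I would also record that the trace is multiplicative, $\mathrm{tr}(x\otimes y)=\langle x\otimes y,u_1\otimes u_2\rangle=\mathrm{tr}(x)\,\mathrm{tr}(y)$, that $\mathrm{tr}\,\rho_i=1$, and that $(x_1\otimes x_2)\circ(y_1\otimes y_2)=(x_1\circ_1 y_1)\otimes(x_2\circ_2 y_2)$ on simple tensors.

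Next I would treat relative entropy, which is then a direct substitution. Expanding the product with the formula for $\log(\rho_1\otimes\rho_2)$ and the Jordan-product factorization gives
\begin{align}
(\rho_1\otimes\rho_2)\circ\log(\rho_1\otimes\rho_2)=(\rho_1\circ_1\log\rho_1)\otimes\rho_2+\rho_1\otimes(\rho_2\circ_2\log\rho_2),
\end{align}
and likewise with $\log(\sigma_1\otimes\sigma_2)$; taking traces, using multiplicativity of $\mathrm{tr}$ and $\mathrm{tr}\,\rho_i=1$, and subtracting yields $D(\rho_1\otimes\rho_2||\sigma_1\otimes\sigma_2)=D(\rho_1||\sigma_1)+D(\rho_2||\sigma_2)$. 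For PRR entropy, $(\rho_1\otimes\rho_2)^{1+s}\circ(\sigma_1\otimes\sigma_2)^{-s}=(\rho_1^{1+s}\circ_1\sigma_1^{-s})\otimes(\rho_2^{1+s}\circ_2\sigma_2^{-s})$, so its trace is the product of the two single-system traces, and applying $\frac{1}{s}\log$ turns this product into the sum $D_{1+s}(\rho_1||\sigma_1)+D_{1+s}(\rho_2||\sigma_2)$.

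For SRR entropy the same scheme works once the quadratic form is known to respect the composite structure, i.e.
\begin{align}
P_{(\sigma_1\otimes\sigma_2)^{a}}(\rho_1\otimes\rho_2)=P_{\sigma_1^{a}}(\rho_1)\otimes P_{\sigma_2^{a}}(\rho_2),\qquad a=\frac{-s}{2(1+s)}.
\end{align}
Here the input $(\sigma_1\otimes\sigma_2)^{a}=\sigma_1^{a}\otimes\sigma_2^{a}$ is already available from the functional calculus, $P_{\sigma_i^{a}}(\rho_i)\in\cQ$ by Lemma~\ref{lemma:positivityofQradratic} so its $(1+s)$-th power is defined and $(P_{\sigma_1^{a}}(\rho_1)\otimes P_{\sigma_2^{a}}(\rho_2))^{1+s}=(P_{\sigma_1^{a}}(\rho_1))^{1+s}\otimes(P_{\sigma_2^{a}}(\rho_2))^{1+s}$, and the trace and $\frac{1}{s}\log$ step is identical to the PRR case.

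I expect the factorization of the quadratic form to be the main obstacle. Unlike $\log$ and the powers, $P_{\sigma^{a}}(\cdot)$ is not a linear function of the state, so its compatibility with $\otimes$ does not follow from the CSOI bookkeeping above; establishing it requires tracking how $P_x=2L_x^2-L_{x^2}$ interacts with the composite Jordan product — for instance through the action of $L_{x_1\otimes x_2}$ and of $L_{(x_1\otimes x_2)^2}$ on simple tensors — and is the one step where the detailed structure of the canonical EJA composite, rather than just its cone and trace, is genuinely used.
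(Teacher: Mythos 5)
Your overall route is the same as the paper's: factorize the functional calculus over the product CSOI $\{c_{1,k}\otimes c_{2,l}\}$, use multiplicativity of the trace, and for the SRR entropy reduce everything to the factorization $P_{(\sigma_1\otimes\sigma_2)^{a}}(\rho_1\otimes\rho_2)=P_{\sigma_1^{a}}(\rho_1)\otimes P_{\sigma_2^{a}}(\rho_2)$. Your treatment of $D$ and $D_{1+s}$ is complete and matches the paper's proof essentially verbatim (the paper uses $\log(\rho_1\otimes\rho_2)=\log(\rho_1\otimes u)+\log(u\otimes\rho_2)$ and $(\rho_1\otimes\rho_2)^{1+s}\circ(\sigma_1\otimes\sigma_2)^{-s}=(\rho_1^{1+s}\circ\sigma_1^{-s})\otimes(\rho_2^{1+s}\circ\sigma_2^{-s})$, exactly as you do).

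The gap is in the one step you explicitly defer, and the route you sketch for closing it does not close it. Writing $x_i=\sigma_i^{a}$, the composite product of Definition~\ref{definition:CompositesysteminEJA} gives $L_{x_1\otimes x_2}=L_{x_1}\otimes L_{x_2}$ and $(x_1\otimes x_2)^2=x_1^2\otimes x_2^2$, hence
\begin{align}
P_{x_1\otimes x_2}=2L_{x_1}^2\otimes L_{x_2}^2-L_{x_1^2}\otimes L_{x_2^2},
\end{align}
whereas $P_{x_1}\otimes P_{x_2}=(2L_{x_1}^2-L_{x_1^2})\otimes(2L_{x_2}^2-L_{x_2^2})$. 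Setting $A_i:=L_{x_i}^2(\rho_i)$ and $B_i:=L_{x_i^2}(\rho_i)$, one finds
\begin{align}
P_{x_1\otimes x_2}(\rho_1\otimes\rho_2)-P_{x_1}(\rho_1)\otimes P_{x_2}(\rho_2)=-2\,(A_1-B_1)\otimes(A_2-B_2),
\end{align}
and $A_i-B_i=L_{x_i}^2(\rho_i)-L_{x_i^2}(\rho_i)$ vanishes only when $x_i$ and $\rho_i$ behave classically; already for a single qubit with $x=\mathrm{diag}(\alpha,\beta)$, $\alpha\neq\beta$, and $\rho=\tfrac12(I+\sigma_1)$ it equals $-\tfrac{(\alpha-\beta)^2}{8}\sigma_1\neq 0$. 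So the factorization of the quadratic form on product states is not a consequence of the bookkeeping $L_{x_1\otimes x_2}=L_{x_1}\otimes L_{x_2}$, and as stated it fails for the bilinearly extended product of Definition~\ref{definition:CompositesysteminEJA}; it needs a genuinely different justification (e.g.\ realizing the composite inside an associative envelope where $P_x(y)=xyx$ and the factorization on simple tensors is manifest). To be fair, the paper's own proof in Appendix~\ref{appendix:fundamentalproperties} asserts $P_{\sigma_1^{a}\otimes\sigma_2^{a}}(\rho_1\otimes\rho_2)=P_{\sigma_1^{a}}(\rho_1)\otimes P_{\sigma_2^{a}}(\rho_2)$ with no argument at all, so you were right to single this out as the crux --- but your proposal leaves the SRR case unproven, and the computation you indicate would, if carried out, produce the obstruction above rather than the desired identity.
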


\begin{lemma}\label{lemma:limitofRenyies}
    For the states $\rho,\sigma$, PRR entropy and SRR entropy holds following relations.
    \begin{align}
        \lim_{s\to 0}D_{1+s}(\rho||\sigma)             & =\lim_{s\to 0}\frac{\phi(-s|\rho||\sigma)}{s}=D(\rho||\sigma).         \\
        \lim_{s\to 0}\underline{D}_{1+s}(\rho||\sigma) & =\lim_{s\to 0}\frac{\tilde{\phi}(-s|\rho||\sigma)}{s}=D(\rho||\sigma).
    \end{align}
\end{lemma}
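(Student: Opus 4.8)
The plan is to exploit that both R\'enyi-type quantities have the shape $\psi(s)/s$, so that the limit at $s=0$ is simply the derivative $\psi'(0)$, which I then evaluate by differentiating the spectral (functional‑calculus) expressions once. Put $\psi(s):=\phi(-s|\rho||\sigma)=\log\mathrm{tr}(\rho^{1+s}\circ\sigma^{-s})$ in the Petz case and $\psi(s):=\tilde{\phi}(-s|\rho||\sigma)=\log\mathrm{tr}\big(P_{\sigma^{-s/(2(1+s))}}(\rho)\big)^{1+s}$ in the sandwiched case. I would first observe that in each case $\psi$ extends to a $C^1$ (indeed real‑analytic) function on a neighbourhood of $s=0$: the eigenvalues of $\rho$ and $\sigma$ are fixed and only the scalar exponents vary analytically in $s$, and by Lemma~\ref{lemma:positivityofQradratic} the element $X(s):=P_{\sigma^{-s/(2(1+s))}}(\rho)$ lies in $\cQ$ and, being close to $X(0)=P_u(\rho)$ with $P_u(v)=2L_u(L_u(v))-L_{u^2}(v)=2v-v=v$ (so $P_u(\rho)=\rho$, using that $\rho$ is faithful, as is implicit in Definition~\ref{definition:informationquantities}), stays in the interior of $\cQ$ for small $s$, where $x\mapsto x^{1+s}$ and $x\mapsto\log x$ are smooth. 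Since $\sigma^{-s}\to u$ and $X(s)\to\rho$ as $s\to0$, we get $\psi(0)=\log\mathrm{tr}\rho=0$, whence $\lim_{s\to0}\psi(s)/s=\lim_{s\to0}(\psi(s)-\psi(0))/s=\psi'(0)$. It then remains to show $\psi'(0)=D(\rho||\sigma)$ in both cases.

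For the Petz case I would differentiate $g(s):=\langle\rho^{1+s},\sigma^{-s}\rangle$ directly. By the functional calculus $\tfrac{d}{ds}\rho^{1+s}=\rho^{1+s}\circ\log\rho$ and $\tfrac{d}{ds}\sigma^{-s}=-\sigma^{-s}\circ\log\sigma$, so bilinearity of $\langle\cdot,\cdot\rangle$ gives $g'(0)=\langle\rho\circ\log\rho,u\rangle-\langle\rho,\log\sigma\rangle$. Using $\mathrm{tr}x=\langle x,u\rangle$ and the Euclidean property (J3) in the form $\langle\rho,\log\sigma\rangle=\langle\rho\circ\log\sigma,u\rangle=\mathrm{tr}(\rho\circ\log\sigma)$, this equals $\mathrm{tr}(\rho\circ\log\rho)-\mathrm{tr}(\rho\circ\log\sigma)=D(\rho||\sigma)$; since $g(0)=\mathrm{tr}\rho=1$ we get $\psi'(0)=g'(0)/g(0)=D(\rho||\sigma)$.

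For the sandwiched case, write $h(s):=\mathrm{tr}\,X(s)^{1+s}$ and $F(a,s):=\mathrm{tr}\,X(s)^{a}$, so $h(s)=F(1+s,s)$ and, by the chain rule, $h'(0)=\partial_aF(1,0)+\partial_sF(1,0)$. The first term is $\partial_a\mathrm{tr}(\rho^{a})\big|_{a=1}=\mathrm{tr}(\rho\circ\log\rho)$; the second, with $a$ held fixed at $1$, is just $\tfrac{d}{ds}\mathrm{tr}\,X(s)\big|_{0}=\mathrm{tr}\,\dot X(0)$ by linearity of the trace — this is the point where no eigenvector‑perturbation argument is needed. To compute $\dot X(0)$ I use $P_y(\cdot)=2\,y\circ(y\circ\cdot)-y^2\circ(\cdot)$ with $y(s)=\sigma^{\theta(s)}$, $\theta(s):=-s/(2(1+s))$, $y(0)=u$, and $\dot y(0)=\theta'(0)\log\sigma=-\tfrac12\log\sigma$; differentiating at $s=0$ and using $u\circ z=z$ yields $\dot X(0)=4\,\dot y(0)\circ\rho-2\,\dot y(0)\circ\rho=2\,\dot y(0)\circ\rho=-(\log\sigma)\circ\rho$, so $\mathrm{tr}\,\dot X(0)=-\mathrm{tr}(\rho\circ\log\sigma)$. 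Hence $h'(0)=\mathrm{tr}(\rho\circ\log\rho)-\mathrm{tr}(\rho\circ\log\sigma)=D(\rho||\sigma)$, and since $h(0)=\mathrm{tr}\rho=1$ we conclude $\psi'(0)=h'(0)/h(0)=D(\rho||\sigma)$, which settles both limits.

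The routine algebra (the Leibniz rule for $\circ$, and the functional-calculus derivative $\tfrac{d}{dt}\rho^{t}=\rho^{t}\circ\log\rho$) is painless; the part that needs care is the regularity bookkeeping — that $x\mapsto x^{a}$ and $x\mapsto\log x$ are differentiable on the interior of $\cQ$, that $X(s)$ stays in that interior for $s$ near $0$, and the handling of the faithfulness/support hypothesis on $\rho,\sigma$ (the general case reducing to the support of $\sigma$). This is also exactly where the argument runs parallel to the quantum proof in~\cite{Hayashi1}, with the Jordan quadratic form $P_x$ taking the place of the conjugation $x(\cdot)x$.
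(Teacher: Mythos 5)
Your proposal is correct and follows essentially the same route as the paper's proof in Appendix~\ref{appendix:fundamentalproperties}: since $\psi(0)=0$, the limit of $\psi(s)/s$ is $\psi'(0)$, which is evaluated via the functional-calculus derivative $\tfrac{d}{ds}\rho^{s}=\rho^{s}\circ\log\rho$ and the expansion of the quadratic form $P_{\sigma^{-s/(2(1+s))}}$ around $s=0$. Your $F(a,s)$ splitting for the sandwiched case, which reduces the $\partial_s$ contribution to the linear quantity $\mathrm{tr}\,\dot X(0)$, is a marginally tidier way to organize the same computation the paper carries out at general $s$ before setting $s=0$.
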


\begin{lemma}\label{lemma:monotonicityins}
    Let $\rho,\sigma$ be states in $\cV$.
    Then, the functions $s\to \underline{D}_{1+s}(\rho||\sigma)$ and $s\to D_{1+s}(\rho||\sigma)$ are monotone increasing.
\end{lemma}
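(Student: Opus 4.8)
The plan is to reduce the monotonicity statement for both R\'enyi-type quantities to their known classical counterpart, exploiting the spectral‐decomposition machinery set up earlier. Concretely, for fixed states $\rho,\sigma$ with spectral decompositions, one first checks that the classical Relative R\'enyi entropy $D_{1+s}(p\|q)=\frac{1}{s}\log\sum_i p_i^{1+s}q_i^{-s}$ is monotone increasing in $s$; this is a standard fact whose proof I would recall from \cite{Hayashi1}. The derivative calculation amounts to showing that the function $s\mapsto \psi(s):=\log\sum_i p_i^{1+s}q_i^{-s}$ is convex and vanishes at $s=0$, so that $\psi(s)/s$ is increasing — equivalently, $s\psi'(s)-\psi(s)\ge 0$, which follows from convexity of $\psi$ together with $\psi(0)=0$. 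Convexity of $\psi$ itself is a Cauchy--Schwarz (or H\"older) argument on $\psi''$.

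Next I would lift this to the PRR entropy in $\cV$. Since $D_{1+s}(\rho\|\sigma)=\frac1s\log\tr\rho^{1+s}\circ\sigma^{-s}$ is defined purely through the functional calculus of Definition~\ref{definition:fofEJAelement}, the key is to express $\tr\rho^{1+s}\circ\sigma^{-s}$ in a form amenable to the classical estimate. Using the spectral decompositions $\rho=\sum_i\lambda_i c_i$, $\sigma=\sum_j\mu_j d_j$ and linearity of $\tr$ and $\circ$, one gets
\begin{align}
\tr\rho^{1+s}\circ\sigma^{-s}=\sum_{i,j}\lambda_i^{1+s}\mu_j^{-s}\,\tr(c_i\circ d_j).
\end{align}
The coefficients $w_{ij}:=\tr(c_i\circ d_j)=\langle c_i,d_j\rangle$ are nonnegative (by positivity of the inner product on the dual cone, using that $c_i,d_j$ lie in $\cQ=\cQ^\ast$), and they satisfy $\sum_{i,j}w_{ij}=\tr(u\circ u)=\tr u$. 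Setting $p_{ij}=\lambda_i w_{ij}$, $q_{ij}=\mu_j w_{ij}$ gives probability-like vectors (after noting $\sum p_{ij}=\tr\rho=1$), and $\tr\rho^{1+s}\circ\sigma^{-s}=\sum_{ij}p_{ij}^{1+s}q_{ij}^{-s}w_{ij}^{-1}\cdot$(something) — here one must be slightly careful with the weights, but the upshot is that $\log\tr\rho^{1+s}\circ\sigma^{-s}$ is still a convex function of $s$ vanishing at $s=0$, by the same H\"older-type argument applied termwise. Hence $D_{1+s}(\rho\|\sigma)$ is monotone increasing.

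For the SRR entropy I would proceed analogously but through a family that does depend on $s$ inside the quadratic form. Writing $\sigma^{-s/(2(1+s))}=\sum_j \mu_j^{-s/(2(1+s))} d_j$ and using Lemma~\ref{lemma:projectionPeirce}/Lemma~\ref{lemma:newCSOI} to track how $P_{\sigma^{-s/2(1+s)}}(\rho)$ decomposes, one reduces $\tr\big(P_{\sigma^{-s/2(1+s)}}(\rho)\big)^{1+s}$ to a sum of the form $\sum_k a_k^{1+s}b_k^{-s}$ for suitable nonnegative data, whose logarithm is again convex in $s$ and zero at $s=0$; alternatively, one may invoke Lemma~\ref{lemma:limofSandwich} (if available at this point) to write SRR as an infimum/limit of classical quantities and use that monotonicity passes to limits and infima of monotone families. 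I would prefer whichever route keeps the paper self-contained at this stage.

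The main obstacle I anticipate is the SRR case: the sandwiched quantity is \emph{not} a simple trace of a product of two functions of $\rho$ and $\sigma$ — the two operators are intertwined via the quadratic form $P$, which is non-associative, so one cannot directly diagonalize everything simultaneously unless $\rho$ and $\sigma$ behave classically (Definition~\ref{definition:classically}). The fix is to work inside the Peirce decomposition associated with the CSOI of $\sigma$ (equivalently $\sigma^{-s/2(1+s)}$), use Lemma~\ref{lemma:newCSOI} to refine to the CSOI of $P_{\sigma^{\cdots}}(\rho)$, and only then apply the classical convexity estimate to the resulting eigenvalue data; one must verify that the exponent $1+s$ on the positive element $P_{\sigma^{-s/2(1+s)}}(\rho)$ and the $s$-dependence of $\sigma^{-s/2(1+s)}$ combine so that the resulting scalar function of $s$ still has the ``convex, vanishing at $0$'' structure. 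Checking that bookkeeping carefully is the crux; the rest is routine once the classical Renyi monotonicity is in hand.
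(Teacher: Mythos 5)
Your treatment of the PRR entropy is correct and is in substance the paper's own argument. The paper's appendix proof writes $\phi(-s|\rho\|\sigma)=\log\sum_{i,j}\lambda_i^{1+s}\mu_j^{-s}\,\tr(c_i\circ e_j)$, shows $\phi''\ge 0$ by Cauchy--Schwarz on exactly the weights $\tr(c_i\circ e_j)$, and concludes monotonicity of $\phi(-s)/s$ from convexity plus $\phi(0)=0$. Your repackaging via the joint distribution $p_{ij}=\lambda_i w_{ij}$, $q_{ij}=\mu_j w_{ij}$ with $w_{ij}=\langle c_i,e_j\rangle\ge 0$ (self-duality) is the same computation; note that the bookkeeping you worry about actually closes cleanly, since $\sum_{i,j}p_{ij}^{1+s}q_{ij}^{-s}=\sum_{i,j}\lambda_i^{1+s}\mu_j^{-s}w_{ij}=\tr\rho^{1+s}\circ\sigma^{-s}$ exactly, so the PRR entropy \emph{is} a classical relative R\'enyi entropy of two genuine distributions and monotonicity is inherited with no extra work.

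The SRR case is where your proposal has a genuine gap, and it is precisely the one you flag as the crux without resolving. Your first route --- rewriting $\tr\bigl(P_{\sigma^{-s/2(1+s)}}(\rho)\bigr)^{1+s}$ as $\sum_k a_k^{1+s}b_k^{-s}$ --- does not go through: although the CSOI of $\sigma^{-s/2(1+s)}$ is $s$-independent, the element $P_{\sigma^{-s/2(1+s)}}(\rho)$ and hence its CSOI and eigenvalues depend on $s$ in a non-multiplicative way (unless $\rho$ and $\sigma$ behave classically, in which case Lemma~\ref{lemma:classicallyRenyies} already reduces SRR to PRR). So the ``convex, vanishing at $0$'' structure of $\tilde\phi$ is not obtained by a termwise H\"older argument on fixed data, and you have not supplied a replacement. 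Your second route, via Lemma~\ref{lemma:limofSandwich}, would indeed give monotonicity of $\underline{D}_{1+s}$ for $s>0$ (a pointwise limit of maxima of classical R\'enyi entropies, each monotone in $s$, is monotone), and since the only downstream use of SRR monotonicity (Lemma~\ref{lemma:converse2}) is on the range $s>0$, that suffices for the paper's purposes --- but it is a forward reference to a result proved two sections later and does not cover $s<0$. To be fair, the paper's own appendix proof of this lemma also only carries out the PRR computation and is silent on SRR, so you have reproduced the part that is actually proved and honestly identified, without closing, the part that is not.
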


\begin{lemma}[Jensen's inequality in Euclidean Jordan algebra]\label{lemma:Jenseninequality}
    Let $\rho$ be a state in $\cV$, f be a convex function. Then, the following inequality holds for $x\in \cV$.
    \begin{align}
        \mathrm{tr}\rho\circ f(x)\ge f(\mathrm{tr}\rho\circ x).
    \end{align}
\end{lemma}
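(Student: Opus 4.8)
The plan is to reduce the statement to the ordinary finite, discrete Jensen's inequality by passing through the spectral decomposition of $x$. First I would invoke Theorem~\ref{theorem:Spectraldecomposition} to write $x=\sum_i\lambda_i c_i$ with $\{c_i\}=\bm{C}_x$ a CSOI and $\lambda_i$ the distinct eigenvalues. By Definition~\ref{definition:fofEJAelement} we then have $f(x)=\sum_i f(\lambda_i)c_i$, so, using linearity of $\mathrm{tr}$ (Definition~\ref{definition:trace}) and of the map $L_\rho$ (Definition~\ref{definition:linearmap}), both sides of the desired inequality become expressions in the real numbers $p_i:=\mathrm{tr}\,\rho\circ c_i$: namely $\mathrm{tr}\,\rho\circ f(x)=\sum_i p_i f(\lambda_i)$ and $\mathrm{tr}\,\rho\circ x=\sum_i p_i\lambda_i$.

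Next I would check that $\{p_i\}_i$ is a probability distribution. Using the Euclidean condition (J3) together with the unit property $c_i\circ u=c_i$, one gets $\mathrm{tr}\,\rho\circ c_i=\langle\rho\circ c_i,u\rangle=\langle\rho,c_i\circ u\rangle=\langle\rho,c_i\rangle$. Since $\rho\in\cQ$ and, by Lemma~\ref{lemma:CSOIisMeasurement} (or equivalently by self-duality, Lemma~\ref{lemma:selfduality}), each $c_i$ is an effect lying in $\cQ^\ast=\cQ$, the defining property of the dual cone gives $p_i=\langle\rho,c_i\rangle\ge 0$. Completeness $\sum_i c_i=u$ together with $\langle\rho,u\rangle=1$ yields $\sum_i p_i=\langle\rho,u\rangle=1$. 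Hence $\{p_i\}$ is a genuine probability distribution supported on the points $\{\lambda_i\}$.

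Finally, the inequality to be proven reads $\sum_i p_i f(\lambda_i)\ge f\!\left(\sum_i p_i\lambda_i\right)$, which is precisely the classical Jensen inequality for the convex function $f$ applied to the finite distribution $\{p_i\}$ and the values $\{\lambda_i\}$; I would simply cite this. (If one wants a self-contained argument, it follows by an easy induction on $|\bm{C}_x|$ from the two-point case, i.e. the definition of convexity, noting that the $\lambda_i$ all lie in an interval on which $f$ is defined, so $\sum_i p_i\lambda_i$ does too.)

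There is no real obstacle here: the content is entirely the translation of the Jordan-algebraic left- and right-hand sides into a classical expectation via spectral decomposition. The only points requiring a line of justification are the bookkeeping identity $\mathrm{tr}\,\rho\circ c_i=\langle\rho,c_i\rangle$ and the positivity $p_i\ge0$, both of which I indicated above; after that the claim is just classical Jensen.
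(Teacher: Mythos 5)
Your proposal is correct and follows essentially the same route as the paper's own proof: spectral-decompose $x$, observe via Lemma~\ref{lemma:CSOIisMeasurement} that the CSOI $\{c_i\}$ is a measurement so that $p_i=\mathrm{tr}\,\rho\circ c_i$ is a probability distribution, and then reduce to the classical Jensen inequality. You merely spell out the positivity and normalization of $\{p_i\}$ in more detail than the paper does, which is a harmless (indeed welcome) addition.
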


\begin{lemma}\label{lemma:numberofeigenvalues}
    Let $x=\sum_{i=1}^d \lambda_i c_i$ be a spectral decomposition of $x\in \cV$.Then, $x^{\otimes n}\in \cV^{\otimes n}$ has at most $(n+1)^{d-1}$ all distinct eigenvalues.Then, $|\bm{C}_{x^{\otimes n}}|\le (n+1)^{d-1}$ holds.
\end{lemma}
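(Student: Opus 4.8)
The plan is to build an explicit CSOI for $x^{\otimes n}$ out of the CSOI $\{c_i\}_{i=1}^d$ of $x$, read off its eigenvalues, and then count how many of them can be distinct. First I would note that, by the definition of the Jordan product on the composite system (Definition~\ref{definition:CompositesysteminEJA}, iterated $n$ times), the family $\{c_{i_1}\otimes\cdots\otimes c_{i_n}\}_{(i_1,\dots,i_n)\in\{1,\dots,d\}^n}$ is a CSOI in $\cV^{\otimes n}$: idempotency follows factor-by-factor from $c_{i_k}^2=c_{i_k}$; two distinct tuples $(i_1,\dots,i_n)\neq(j_1,\dots,j_n)$ differ in some coordinate $k$ with $c_{i_k}\circ c_{j_k}=0$, which annihilates the entire tensor product under $\circ$; and $\sum_{i_1,\dots,i_n}c_{i_1}\otimes\cdots\otimes c_{i_n}=(\sum_i c_i)^{\otimes n}=u^{\otimes n}$. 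Expanding $x^{\otimes n}=(\sum_i\lambda_i c_i)^{\otimes n}$ in this family yields
\begin{align}
	x^{\otimes n}=\sum_{(i_1,\dots,i_n)}\Big(\prod_{k=1}^n\lambda_{i_k}\Big)\,c_{i_1}\otimes\cdots\otimes c_{i_n}.
\end{align}

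Next I would group together equal eigenvalues. Let $\Lambda$ be the set of distinct values of $\prod_{k=1}^n\lambda_{i_k}$ over all tuples, and for $\mu\in\Lambda$ put $P_\mu:=\sum_{(i_1,\dots,i_n):\,\prod_k\lambda_{i_k}=\mu}c_{i_1}\otimes\cdots\otimes c_{i_n}$. Then $\{P_\mu\}_{\mu\in\Lambda}$ is again a CSOI: a sum of mutually orthogonal idempotents is an idempotent, the $P_\mu$ are mutually orthogonal since summands belonging to different $\mu$ come from distinct tuples, each $P_\mu\neq 0$ because $\mathrm{tr}\,P_\mu>0$, and $\sum_{\mu\in\Lambda}P_\mu=u^{\otimes n}$. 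Since $x^{\otimes n}=\sum_{\mu\in\Lambda}\mu P_\mu$ with the $\mu$ pairwise distinct, the uniqueness part of the spectral theorem (Theorem~\ref{theorem:Spectraldecomposition}) identifies this with the spectral decomposition of $x^{\otimes n}$, so $\bm{C}_{x^{\otimes n}}=\{P_\mu\}_{\mu\in\Lambda}$ and $|\bm{C}_{x^{\otimes n}}|=|\Lambda|$.

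Finally I would bound $|\Lambda|$. The value $\prod_{k=1}^n\lambda_{i_k}$ depends on $(i_1,\dots,i_n)$ only through the multiplicities $m_j:=\#\{k:i_k=j\}$, since $\prod_{k}\lambda_{i_k}=\prod_{j=1}^d\lambda_j^{m_j}$; hence $|\Lambda|$ is at most the number of nonnegative integer vectors $(m_1,\dots,m_d)$ with $\sum_{j=1}^d m_j=n$. As $(m_1,\dots,m_{d-1})$ already determines $m_d$ and each $m_j\in\{0,1,\dots,n\}$, there are at most $(n+1)^{d-1}$ such vectors, which gives $|\bm{C}_{x^{\otimes n}}|=|\Lambda|\le(n+1)^{d-1}$, as claimed.

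I expect the only delicate points to be the verification that the tensorized idempotents form a CSOI on the composite EJA and that the grouped family $\{P_\mu\}$ is genuinely the spectral CSOI of $x^{\otimes n}$ (rather than merely some idempotent decomposition); both are handled by the composite-product definition together with the uniqueness clause of the spectral theorem, while the counting step itself is elementary.
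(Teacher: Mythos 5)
Your proposal is correct and follows essentially the same route as the paper's proof in Appendix~\ref{appendix:fundamentalproperties}: expand $x^{\otimes n}$ over the tensorized idempotents, observe that each eigenvalue is $\prod_j\lambda_j^{m_j}$ with $\sum_j m_j=n$, and count the multiplicity vectors to get the bound $(n+1)^{d-1}$. You spell out more carefully than the paper that the grouped idempotents form a CSOI and that uniqueness in Theorem~\ref{theorem:Spectraldecomposition} identifies it with $\bm{C}_{x^{\otimes n}}$, but the underlying argument is the same.
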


\subsection{Pinching map and pinching inequality}\label{subsection:pinching}
In this part, we define an important information theoretical tool, pinching map.
Moreover, there are two important lemmas related to pinching map,
Lemma~\ref{lemma:RepresentEntropies} and Lemma~\ref{lemma:PinchingInequality}.
We use both of two statements in order to evaluate the information quantities in Section~\ref{section:Informationquantities}.

Now, we define the two kind of the pinching maps.
At first, the pinching of a state by CSOI is defined as follows.
\begin{definition}[Pinching by CSOI]\label{definition:PinchingCSOI}
    Let $\rho$ be a state over $\cV$.
    Also, let $\bm{C}=\{c_i\}$ be a CSOI in $\cV$.
    Then, we define the pinching of the state $\rho$ by CSOI $\bm{C}$ as follows.
    \begin{align}
        \kappa_{\bm{C}}(\rho):=\sum_iP_{c_i}(\rho).
    \end{align}
\end{definition}

Next, we prepare the pinching of a state by a state as follows.
\begin{definition}[Pinching by State]\label{definition:PinchingState}
    Let $\rho,\sigma$ be states in $\cV$.
    Also, we decompose $\sigma$ to $\sigma=\sum_i\mu_ic_i$ by Spectral theorem (Theorem~\ref{theorem:Spectraldecomposition}).
    Then, we define the pinching of the state $\rho$ by the state $\sigma$ as follows.
    \begin{align}
        \kappa_\sigma(\rho):=\sum_iP_{c_i}(\rho).
    \end{align}
\end{definition}

We check the elementary properties of pinchings as following lemmas.
\begin{lemma}[Pinching of State is State]\label{lemma:PinchingStateisState}
    Let $\rho$ be a state in $\cV$.
    Also, let $\bm{C}=\{c_i\}$ be CSOI in $\cV$.
    Then, the pinching of the state $\rho$ by CSOI $C$ is also a state.
\end{lemma}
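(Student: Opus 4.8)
The plan is to verify the three defining conditions of a state (Definition~\ref{definition:StateEffectMeasurement}) for $\kappa_{\bm{C}}(\rho)=\sum_i P_{c_i}(\rho)$: namely that it lies in the positive cone $\cQ$, and that $\langle \kappa_{\bm{C}}(\rho),u\rangle=1$. Positivity is immediate from Lemma~\ref{lemma:positivityofQradratic}: each $c_i\in\cQ$ since idempotents are squares (indeed $c_i=c_i^2=c_i\circ c_i$), so for the state $\rho\in\cQ$ we get $P_{c_i}(\rho)\in\cQ$, and $\cQ$ is closed under finite sums because it is a convex cone. Hence $\kappa_{\bm{C}}(\rho)\in\cQ$.

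The remaining work is to check the normalization $\langle \kappa_{\bm{C}}(\rho),u\rangle=1$. First I would reduce this to a Peirce-decomposition computation. Writing the Peirce decomposition of $\rho$ with respect to $\bm{C}=\{c_i\}$ as $\rho=\sum_i \rho_i+\sum_{i<j}\rho_{i,j}$ (Theorem~\ref{theorem:peircedecomposition}), Lemma~\ref{lemma:projectionPeirce} gives $P_{c_i}(\rho)=\rho_i\in\cV(c_i,1)$, so $\kappa_{\bm{C}}(\rho)=\sum_i\rho_i$. Then
\begin{align}
\langle \kappa_{\bm{C}}(\rho),u\rangle=\sum_i\langle \rho_i,u\rangle.
\end{align}
Using $u=\sum_j c_j$ and $\langle \rho_i,c_j\rangle$: since $\rho_i\in\cV(c_j,0)$ for $j\neq i$ (the Peirce spaces $\cV(c_i,1)$ are mutually "orthogonal" via $L_{c_j}$), and $c_i\circ\rho_i=\rho_i$ for $j=i$, the Euclidean property (J3) gives $\langle\rho_i,c_j\rangle=\langle\rho_i,c_j\circ u\rangle=\langle c_j\circ\rho_i,u\rangle$, which equals $\langle\rho_i,u\rangle$ if $j=i$ and $0$ otherwise. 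Hence $\langle\rho_i,u\rangle=\langle\rho_i,c_i\rangle$ and more importantly $\sum_i\langle\rho_i,u\rangle=\sum_i\langle\rho_i,c_i\rangle$. A cleaner route: $\langle\kappa_{\bm{C}}(\rho),u\rangle=\sum_i\langle P_{c_i}(\rho),u\rangle$, and one shows $\sum_i P_{c_i}(\rho)$ together with the off-diagonal parts recovers $\rho$ when one also adds the "interference" projections, but since $\langle\rho_{i,j},u\rangle=0$ for the off-diagonal Peirce components (as $\rho_{i,j}\in\cV(c_k,1/2)$-type spaces are orthogonal to $u=\sum_k c_k$ under the trace form), we conclude $\langle\rho,u\rangle=\sum_i\langle\rho_i,u\rangle=\langle\kappa_{\bm{C}}(\rho),u\rangle=1$.

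The main obstacle is the bookkeeping of the Peirce decomposition and confirming that the off-diagonal components $\rho_{i,j}$ are trace-orthogonal to $u$; this is where I would lean on Lemma~\ref{lemma:orthogonalPeirce} and the Euclidean condition (J3), perhaps packaged as the auxiliary fact that $\mathrm{tr}\,P_{c_i}(x)=\mathrm{tr}(c_i\circ x)$ for all $x$ (which follows from $\langle P_{c_i}(x),u\rangle = \langle x, P_{c_i}(u)\rangle$ by self-adjointness of $P_{c_i}$, and $P_{c_i}(u)=c_i$ since $c_i$ is idempotent). Granting that identity, $\langle\kappa_{\bm{C}}(\rho),u\rangle=\sum_i\langle c_i\circ\rho,u\rangle=\langle(\sum_i c_i)\circ\rho,u\rangle=\langle u\circ\rho,u\rangle=\langle\rho,u\rangle=1$, which is the slickest version and the one I would write up. Everything else is routine.
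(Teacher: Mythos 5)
Your proposal is correct and its final ``slickest version'' is essentially identical to the paper's proof: positivity follows from Lemma~\ref{lemma:positivityofQradratic} plus convexity of the cone, and normalization follows from the self-adjointness of $P_{c_i}$ (a consequence of (J3)) together with $P_{c_i}(u)=c_i$ and $\sum_i c_i=u$. The Peirce-decomposition detour you sketch first is also valid but unnecessary, and you rightly discard it in favor of the direct computation.
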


\begin{proof}
    Now, we check the condition of a state (Definition~\ref{definition:StateEffectMeasurement}).
    At first, for a CSOI $\bm{C}=\{c_i\}$ and a state $\rho$, we obtain $P_{c_i}(\rho)\ge0$ for all $c_i\in \bm{C}$ by Lemma~\ref{lemma:positivityofQradratic}.
    Therefore, $\kappa_{\bm{C}}(\rho)=\sum_i P_{c_i}(\rho)\ge 0$ from the property of a convex cone
    of the positive cone $\cQ_\cV \subset \cV$.

    Next, we check the condition of the normalization as follows.
    \begin{align}
        \langle\kappa_{\bm{C}}(\rho),u\rangle=\langle\sum_iP_{c_i}(\rho),u\rangle=\sum_i\langle P_{c_i}(\rho),u\rangle\stackrel{(a)}{=}\sum_i\langle\rho,P_{c_i}(u)\rangle=\sum_i\langle\rho,c_i\rangle=\langle\rho,\sum_i c_i\rangle=\langle\rho,u\rangle=1.
    \end{align}
    The equality (a) is given by the Euclidean condition (J3) of Definition~\ref{definition:EuclideanJordanalgebra}.
    The conditions of the positivity $\kappa_{\bm{C}}(\rho)\ge 0$ and the normalization $\langle \kappa_{\bm{C}}(\rho),u\rangle=1$ imply that $\kappa_{\bm{C}}(\rho)$ is a state.
\end{proof}

\begin{lemma}\label{lemma:classicallyofpinching}
    Let $\rho,\sigma$ be states over $\cV$.
    Then, the pinching of $\rho$ by $\sigma$ and $\sigma$ are classically(Definition~\ref{definition:classically}).
\end{lemma}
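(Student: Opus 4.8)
The plan is to show $L_{\kappa_\sigma(\rho)} L_\sigma = L_\sigma L_{\kappa_\sigma(\rho)}$, which by Definition~\ref{definition:classically} is exactly the statement that $\kappa_\sigma(\rho)$ and $\sigma$ behave classically; equivalently, by Theorem~\ref{theorem:simultaneousspectral}, it suffices to exhibit a simultaneous spectral decomposition of the two states. I would pursue the latter, more constructive route. Write the spectral decomposition $\sigma = \sum_i \mu_i c_i$ with CSOI $\bm{C}=\{c_i\}$ as in Definition~\ref{definition:PinchingState}, so that $\kappa_\sigma(\rho) = \sum_i P_{c_i}(\rho)$. The key structural input is Lemma~\ref{lemma:projectionPeirce}: each summand $P_{c_i}(\rho)$ lies in the Peirce eigenspace $\cV(c_i,1)$.

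First I would take, for each $i$, a spectral decomposition of the element $P_{c_i}(\rho)$ inside $\cV$, say $P_{c_i}(\rho) = \sum_j \lambda_{i,j} c_{i,j}$. By Lemma~\ref{lemma:newCSOI} we have $\sum_j c_{i,j} = c_i$, and since each $c_{i,j}$ arises from the spectral decomposition of an element of $\cV(c_i,1)$, each $c_{i,j}$ lies in $\cV(c_i,1)$ as well (idempotents appearing in the spectral resolution of $x$ are polynomials in $x$, hence remain in the Peirce subspace $\cV(c_i,1)$, which is itself a Jordan subalgebra with unit $c_i$). Then the family $\{c_{i,j}\}_{i,j}$ is a CSOI refining $\bm{C}$: completeness follows from $\sum_{i,j} c_{i,j} = \sum_i c_i = u$, and orthogonality across different $i$ follows from Lemma~\ref{lemma:orthogonalPeirce}, which gives $\cV(c_i,1)\circ\cV(c_j,1)=\{0\}$ for $i\neq j$, while orthogonality within a fixed $i$ is immediate from the spectral decomposition of $P_{c_i}(\rho)$. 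With respect to this refined CSOI we then have simultaneously
\begin{align}
\kappa_\sigma(\rho) = \sum_{i,j}\lambda_{i,j} c_{i,j}, \qquad \sigma = \sum_{i}\mu_i c_i = \sum_{i,j}\mu_i c_{i,j},
\end{align}
so both states are diagonal in the common CSOI $\{c_{i,j}\}$. Moreover each $c_{i,j}\in\cV(c_i,1)=\oplus_{k}\delta_{ik}\cV(c_k,1)$, so the compatibility condition with the spectral decomposition of $\sigma$ demanded in Theorem~\ref{theorem:simultaneousspectral}(2) is met; hence $L_{\kappa_\sigma(\rho)}$ and $L_\sigma$ commute and the two states behave classically.

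The main obstacle I anticipate is the clean justification that the idempotents $c_{i,j}$ appearing in the spectral resolution of $P_{c_i}(\rho)$ genuinely stay inside $\cV(c_i,1)$ rather than merely summing back to $c_i$. This needs the fact that $\cV(c_i,1)$ is a unital Jordan subalgebra of $\cV$ (with unit $c_i$) and that spectral decomposition is ``internal'' to such a subalgebra — i.e.\ each spectral idempotent of an element $x$ is obtained as a real polynomial in $x$ with no constant term relative to the subalgebra's unit. If invoking the subalgebra structure of $\cV(c_i,1)$ is considered heavy machinery here, an alternative is to argue directly that $L_{P_{c_i}(\rho)}$ commutes with $L_{c_i}$ (both are built from operators supported on the Peirce block), and then assemble $L_{\kappa_\sigma(\rho)} = \sum_i L_{P_{c_i}(\rho)}$ and $L_\sigma = \sum_i \mu_i L_{c_i}$, using the block-orthogonality from Lemma~\ref{lemma:orthogonalPeirce} to see that cross terms vanish and each diagonal pair commutes; this yields $L_{\kappa_\sigma(\rho)}L_\sigma = L_\sigma L_{\kappa_\sigma(\rho)}$ directly, which is Definition~\ref{definition:classically}.
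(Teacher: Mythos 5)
Your proposal is correct, and its setup coincides with the paper's: both start from $\sigma=\sum_i\mu_i c_i$, use Lemma~\ref{lemma:newCSOI} to refine each $c_i$ into idempotents $c_{i,j}\in\cV(c_i,1)$ with $\sum_j c_{i,j}=c_i$, and note $\kappa_\sigma(\rho)=\sum_{i,j}\lambda_{i,j}c_{i,j}$. Where you diverge is the final step. You invoke the cited equivalence Theorem~\ref{theorem:simultaneousspectral} in the direction $(2)\Rightarrow(1)$: having exhibited a decomposition of $\kappa_\sigma(\rho)$ whose idempotents lie in $\oplus_k\cV(c_k,1)$, commutativity of $L_{\kappa_\sigma(\rho)}$ and $L_\sigma$ follows immediately. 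The paper instead reproves this implication by hand in the case at issue: by linearity it reduces to $L_{c_{i,j}}L_{c_k}=L_{c_k}L_{c_{i,j}}$, which it verifies by applying both operators to the Peirce decomposition $z=\sum_i z_i+\sum_{i<j}z_{i,j}$ and checking the cases $k<i$, $k>i$, $k=i$ — essentially your own fallback route. Your primary route is shorter and avoids the case analysis, at the cost of leaning on the external Gowda theorem; the paper's computation is more self-contained and only uses the Peirce eigenvalue structure of Lemma~\ref{lemma:eigenvaluesofCompletesystem}. Two minor points, neither a gap: (i) your worry about $c_{i,j}\in\cV(c_i,1)$ is already discharged by Lemma~\ref{lemma:newCSOI}, whose proof performs the spectral decomposition inside the unital subalgebra $\cV(c_i,1)$ exactly as you describe; (ii) to match the letter of Theorem~\ref{theorem:simultaneousspectral}(2) you should merge any coinciding eigenvalues $\lambda_{i,j}$ across blocks so as to obtain \emph{the} spectral decomposition of $\kappa_\sigma(\rho)$ — the merged idempotents are sums of the $c_{i,j}$ and so remain in $\oplus_k\cV(c_k,1)$, so nothing breaks.
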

\begin{proof}
    Now, we show
    \begin{align}
        L_{\kappa_\sigma(\rho)}L_{\sigma}=L_{\sigma}L_{\kappa_\sigma(\rho)}.
    \end{align}
    Let the spectral decomposition of $\sigma$ be $\sigma=\sum_i \mu_i c_i$.
    Also, let the spectral decomposition (Lemma~\ref{lemma:newCSOI}) of $\kappa_\sigma(\rho)$ be
    $\kappa_\sigma(\rho)=\sum_iP_{c_i}\rho=\sum_{i,j}\lambda_{i,j}c_{i,j}$.
    Here, the relation $c_{i,j}\in \cV(c_i,1)$ holds.
    Then, by using this relation of $c_{i,j}$,
    we show $L_{c_{i,j}}L_{c_k}=L_{c_k}L_{c_{i,j}}$ because of the linearity of $L$.

    First, we consider Pierce decomposition of $z\in \cV$ by the CSOI $\{c_i\}$ as follows.
    \begin{align}
        z=\sum_iz_i+\sum_{i<j}z_{i,j}.\label{equation:classicallyofpinching}
    \end{align}
    Here, $z_i$ belongs to $\cV(c_i,1)$ and $z_{i,j}$ belongs to $\cV(c_i,\frac{1}{2})\cap \cV(c_j,\frac{1}{2})$.
    Next, we apply \eqref{equation:classicallyofpinching} to $L_{c_{i,j}}L_{c_k}$.
    If $k<i$ holds, we obtain
    \begin{align}
        L_{c_{i,j}}L_{c_k}(z)=L_{c_{i,j}}\left(z_k+\frac{1}{2}\sum_{l:k<l}z_{k,l}+\frac{1}{2}\sum_{l:l<k}z_{l,k}\right)=\frac{1}{4}z_{k,i}.\label{equation:classicallyofpinching2}
    \end{align}
    Then, we calculate the following relations.
    \begin{align}
        L(c_k)L(c_{i,j})z=L(c_k)\left(z_i+\frac{1}{2}\sum_{l:i<l}z_{i,l}+\frac{1}{2}\sum_{l:l<i}z_{l,i}\right)=\frac{1}{4}z_{k,i}.\label{equation:classicallyofpinching3}
    \end{align}
    For all $z\in \cV$, the equations \eqref{equation:classicallyofpinching2},\eqref{equation:classicallyofpinching3} hold.
    On the other hand, if $k>i$ holds, we obtain $L_{c_{i,j}}L_{c_k}(z)=L_{c_k}L_{c_{i,j}}(z)$ similarly to \eqref{equation:classicallyofpinching2} and \eqref{equation:classicallyofpinching3}.
    Moreover, if $k=i$ holds, we obtain
    \begin{align}
        L_{c_{i,j}}L_{c_i}(z)=L_{c_{i,j}}\left(z_i+\frac{1}{2}\sum_{l:i<l}z_{i,l}+\frac{1}{2}\sum_{l:i>l}z_{l,i}\right)=z_i.
    \end{align}
    \begin{align}
        L_{c_i}L_{c_{i,j}}z=L_{c_i}\left(z_i+\frac{1}{2}\sum_{l:i<l}z_{i,l}+\frac{1}{2}\sum_{l:l<i}z_{l,i}\right)=z_i.
    \end{align}
    Combining the case of $k>i$, $k<i$ and $k=i$, we obtain $L_{c_{i,j}}L_{c_k}=L_{c_k}L_{c_{i,j}}$ for all $i,j,k$.
    Therefore, we obtain the conclusion.
\end{proof}

First, we define Pinched Measurement, which plays an important role 
in the proof of the main results.
\begin{definition}\label{definition:pinchedmeasurement}
    Let $\rho,\sigma$ be states in $\cV$.
    Also, let $\bm{M}=\{M_k\}_k$ be a measurement in $\cV$.
    Then, we define the following family:
    \begin{align}
        \bm{M}^\rho_\sigma:=\{P_{c_{i,j}}(M_k)\}_{i,j,k}.
    \end{align}
    Here, $\{c_{i,j}\}$ is given the spectral decomposition (Lemma~\ref{lemma:newCSOI}) $\kappa_\sigma(\rho)=\sum_{i,j} \lambda'_{i,j}c_{i,j}$ i.e.,
    Especially, in the case of the obvious measurement $\bm{M}=\{u\}$, 
    we denote $I^\rho_\sigma:=\bm{M}^\rho_\sigma=\{c_{i,j}\}_{i,j}$.
\end{definition}

\begin{lemma}
    The family defined in Definition~\ref{definition:pinchedmeasurement}
    is a measurement.
\end{lemma}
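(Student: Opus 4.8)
The plan is to verify directly the two defining conditions of a measurement from Definition~\ref{definition:StateEffectMeasurement}: that every member of the family lies in the dual cone $\cQ^\ast$, and that the members sum to the unit $u$. Throughout, $\{c_{i,j}\}_{i,j}$ denotes the CSOI attached to $\kappa_\sigma(\rho)$ via Lemma~\ref{lemma:newCSOI}, where $P_{c_i}(\rho)=\sum_j\lambda_{i,j}c_{i,j}$ for each $i$, and $\bm{M}=\{M_k\}_k$ is the given measurement.

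First I would confirm that $\{c_{i,j}\}_{i,j}$ really is a CSOI, so that the pinching machinery applies: each $c_{i,j}$ is idempotent and $c_{i,j}\circ c_{i,j'}=0$ for $j\neq j'$ because it comes from a spectral decomposition of $P_{c_i}(\rho)$; by Lemma~\ref{lemma:newCSOI} we have $c_{i,j}\in\cV(c_i,1)$, so Lemma~\ref{lemma:orthogonalPeirce} gives $c_{i,j}\circ c_{i',j'}=0$ whenever $i\neq i'$; and $\sum_{i,j}c_{i,j}=\sum_i c_i=u$ by Lemma~\ref{lemma:newCSOI} together with completeness of $\{c_i\}$. For the positivity condition, since $\bm{M}$ is a measurement we have $M_k\in\cQ^\ast=\cQ$ by self-duality (Lemma~\ref{lemma:selfduality}), and then Lemma~\ref{lemma:positivityofQradratic} yields $P_{c_{i,j}}(M_k)\in\cQ=\cQ^\ast$ for all $i,j,k$. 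For the completeness condition I would use linearity of the quadratic forms and $\sum_k M_k=u$ to reduce to $\sum_{i,j}P_{c_{i,j}}(u)=u$; a direct computation from Definition~\ref{definition:quadraticform} gives, for any idempotent $c$, $P_c(u)=2\,c\circ(c\circ u)-c^2\circ u=2c^2-c^2=c$, so $\sum_{i,j}P_{c_{i,j}}(u)=\sum_{i,j}c_{i,j}=u$. Combining, $\sum_{i,j,k}P_{c_{i,j}}(M_k)=\sum_{i,j}P_{c_{i,j}}(u)=u$, which is exactly what is needed.

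There is no genuine obstacle here. The only point requiring a little care is checking that the doubly-indexed family $\{c_{i,j}\}$ satisfies the orthogonality and completeness axioms of a CSOI, which is precisely what Lemma~\ref{lemma:newCSOI} and Lemma~\ref{lemma:orthogonalPeirce} are designed to supply; and the identity $P_c(u)=c$ for idempotents $c$ has already appeared implicitly in the proof of Lemma~\ref{lemma:PinchingStateisState}, so this argument is essentially that same unit-pinching computation carried out at the level of effects rather than states.
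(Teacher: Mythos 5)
Your proof is correct and complete: positivity of each $P_{c_{i,j}}(M_k)$ follows from self-duality (Lemma~\ref{lemma:selfduality}) together with Lemma~\ref{lemma:positivityofQradratic}, and completeness follows from linearity of the quadratic form, the identity $P_c(u)=c$ for idempotents, and $\sum_j c_{i,j}=e_i$, $\sum_i e_i=u$ from Lemma~\ref{lemma:newCSOI}. This is evidently the argument the authors intended; note that the paper's own proof of this lemma is truncated --- it only fixes notation and announces the claim without carrying out either verification --- so your write-up actually supplies the missing details rather than merely matching them.
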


\begin{proof}
    Let the spectral decomposition of $\sigma$ be $\sigma=\sum_i \mu_i e_i$.
    Also, by Lemma~\ref{lemma:newCSOI}, let the spectral decomposition of $P_{e_i}(\rho)$ be $P_{e_i}(\rho)=\sum_{j}\lambda_{i,j}c_{i,j}$.
    From Pierce decomposition, the state $\kappa_\sigma(\rho)$ belongs to $\oplus_i \cV(e_i,1)$ and the element $\{c_{i,j}\}_{i}$ belongs to the space $\cV(e_i,1)$.
    At first, we show that $\bm{M}^\rho_\sigma=\{P_{c_{i,j}}(M_k)\}_{i,j,k}$ is a measurement. 
\end{proof}

Then, the first main lemma gives the relation between entropies with pinching.

\begin{lemma}[Represent Entropies with pinching state by Classical Entropies with Measurement]\label{lemma:RepresentEntropies}
    Let $\rho,\sigma$ be states in $\cV$.
    Also, let $M=\{M_k\}_k$ be a measurement in $\cV$.
    Then, the following relations hold.
    \begin{align}
        D_{1+s}(\kappa_\sigma(\rho)||\sigma) & =D_{1+s}(P_\rho^{M^\rho_\sigma}||P_\sigma^{M^\rho_\sigma})\quad(s\neq0). \label{equation:pinchingRelativerenyi1} \\
        D(\kappa_\sigma(\rho)||\sigma)       & =D(P_\rho^{M^\rho_\sigma}||P_\sigma^{M^\rho_\sigma}).\label{equation:pinchingRelative}
    \end{align}
    Moreover, $\kappa_\sigma(\rho)$ and $\sigma$ are classically by Lemma~\ref{lemma:classicallyofpinching}.
    Therefore, the following relation holds from Lemma~\ref{lemma:classicallyRenyies} and \eqref{equation:pinchingRelativerenyi1}.
    \begin{align}
        \underline{D}_{1+s}(\kappa_\sigma(\rho)||\sigma)=D_{1+s}(\kappa_\sigma(\rho)||\sigma)=D_{1+s}(P_\rho^{M^\rho_\sigma}||P_\sigma^{M^\rho_\sigma})\label{equation:pinchingsandwiched}.
    \end{align}
\end{lemma}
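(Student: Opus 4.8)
The plan is to reduce the Euclidean Jordan algebraic entropies $D_{1+s}(\kappa_\sigma(\rho)\|\sigma)$ and $D(\kappa_\sigma(\rho)\|\sigma)$ to ordinary classical entropies by showing that, after pinching by $\sigma$, the state $\kappa_\sigma(\rho)$ and the reference $\sigma$ are jointly diagonalized by a single CSOI, so that measuring with the CSOI $I^\rho_\sigma = \{c_{i,j}\}_{i,j}$ (a coarsening of $\bm{M}^\rho_\sigma$) reproduces exactly the spectrum. Concretely, I would first fix the spectral decomposition $\sigma = \sum_i \mu_i e_i$, and by Lemma~\ref{lemma:newCSOI} write $P_{e_i}(\rho) = \sum_j \lambda_{i,j} c_{i,j}$ with $\sum_j c_{i,j} = e_i$, so that $\kappa_\sigma(\rho) = \sum_{i,j} \lambda_{i,j} c_{i,j}$ and $\{c_{i,j}\}_{i,j}$ is a CSOI refining $\{e_i\}_i$. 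Then $\sigma = \sum_i \mu_i e_i = \sum_{i,j} \mu_i c_{i,j}$ is spectral in the \emph{same} CSOI; hence so are $\sigma^{-s}$, $\kappa_\sigma(\rho)^{1+s}$, and all products formed from them.

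**Next I would compute** $\operatorname{tr} \kappa_\sigma(\rho)^{1+s} \circ \sigma^{-s}$ directly. Since $\kappa_\sigma(\rho)^{1+s} = \sum_{i,j} \lambda_{i,j}^{1+s} c_{i,j}$ and $\sigma^{-s} = \sum_{i,j} \mu_i^{-s} c_{i,j}$, orthogonality $c_{i,j} \circ c_{k,l} = \delta_{(i,j),(k,l)} c_{i,j}$ gives $\kappa_\sigma(\rho)^{1+s} \circ \sigma^{-s} = \sum_{i,j} \lambda_{i,j}^{1+s} \mu_i^{-s} c_{i,j}$, and taking the trace against $u$ yields $\sum_{i,j} \lambda_{i,j}^{1+s} \mu_i^{-s} \operatorname{tr} c_{i,j}$. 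On the classical side, I must identify the distributions $P^{\bm{M}^\rho_\sigma}_{\kappa_\sigma(\rho)}$ — but here I would use that $D_{1+s}(P^{M^\rho_\sigma}_\rho\|P^{M^\rho_\sigma}_\sigma) = D_{1+s}(P^{M^\rho_\sigma}_{\kappa_\sigma(\rho)}\|P^{M^\rho_\sigma}_\sigma)$ because $\langle P_{c_{i,j}}(M_k), \rho\rangle = \langle M_k, P_{c_{i,j}}(\rho)\rangle$ by the Euclidean condition (J3), and $\sum_j P_{c_{i,j}}(\rho) = P_{e_i}(\rho)$ summed appropriately recovers the pinched state's weight; likewise for $\sigma$, using Lemma~\ref{lemma:projectionPeirce} to see $P_{c_{i,j}}(\sigma) = \mu_i c_{i,j}$ (the Peirce-$(c_{i,j},1)$ component of $\sigma$ being $\mu_i c_{i,j}$ since $\sigma$ commutes with the CSOI). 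Matching $\langle P_{c_{i,j}}(M_k), \kappa_\sigma(\rho)\rangle = \sum_{k} \lambda_{i,j} \langle M_k, c_{i,j}\rangle = \lambda_{i,j} \operatorname{tr} c_{i,j}$ against the term above — using $\sum_k M_k = u$ — shows the EJA expression and the classical R\'enyi divergence of $P^{M^\rho_\sigma}$ coincide term by term, proving \eqref{equation:pinchingRelativerenyi1}. The relative-entropy identity \eqref{equation:pinchingRelative} follows by the same bookkeeping with $\lambda_{i,j}\log(\lambda_{i,j}/\mu_i)$, or by taking $s\to 0$ via Lemma~\ref{lemma:limitofRenyies} and its classical analogue.

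**For the final displayed chain** \eqref{equation:pinchingsandwiched}, once \eqref{equation:pinchingRelativerenyi1} is established, I invoke Lemma~\ref{lemma:classicallyofpinching} to conclude $\kappa_\sigma(\rho)$ and $\sigma$ behave classically, then Lemma~\ref{lemma:classicallyRenyies} gives $\underline{D}_{1+s}(\kappa_\sigma(\rho)\|\sigma) = D_{1+s}(\kappa_\sigma(\rho)\|\sigma)$, and concatenating with \eqref{equation:pinchingRelativerenyi1} finishes the proof.

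**The main obstacle** I anticipate is the measurement-side bookkeeping: verifying that $\bm{M}^\rho_\sigma = \{P_{c_{i,j}}(M_k)\}_{i,j,k}$ genuinely gives $P^{M^\rho_\sigma}_\rho(i,j,k) = \langle M_k, P_{c_{i,j}}(\rho)\rangle$ and that summing over $k$ collapses this to the eigenvalue structure of $\kappa_\sigma(\rho)$, while summing over $j$ within a fixed $i$ respects the Peirce block $\cV(e_i,1)$. The technical heart is that $P_{c_{i,j}}$ restricted to $\rho$ picks out precisely the $c_{i,j}$-component of $P_{e_i}(\rho)$ (Lemma~\ref{lemma:projectionPeirce}), and that the non-diagonal Peirce components $z_{i,j}$ of $\rho$ contribute nothing to $\langle M_k, P_{c_{i,j}}(\rho)\rangle$ after the full pinching — this needs Lemma~\ref{lemma:newCSOI} and the orthogonality of distinct Peirce blocks (Lemma~\ref{lemma:orthogonalPeirce}). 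Everything else is routine once the joint spectral form is in hand.
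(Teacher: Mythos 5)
Your proposal is correct and follows essentially the same route as the paper's proof: both fix the spectral decomposition $\sigma=\sum_i\mu_i e_i$, refine it via Lemma~\ref{lemma:newCSOI} to the CSOI $\{c_{i,j}\}$ diagonalizing $\kappa_\sigma(\rho)$, use the Euclidean condition (J3) together with $\cV(c_{i,j},1)\subset\cV(e_i,1)$ to get $\langle P_{c_{i,j}}(M_k),\rho\rangle=\lambda_{i,j}\,\mathrm{tr}\,P_{c_{i,j}}(M_k)$ and $\langle P_{c_{i,j}}(M_k),\sigma\rangle=\mu_i\,\mathrm{tr}\,P_{c_{i,j}}(M_k)$, and then match term by term after inserting $\sum_k M_k=u$. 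The concluding steps for \eqref{equation:pinchingRelative} (via the $s\to0$ limit, Lemma~\ref{lemma:limitofRenyies}) and for \eqref{equation:pinchingsandwiched} (via Lemmas~\ref{lemma:classicallyofpinching} and~\ref{lemma:classicallyRenyies}) coincide with the paper's.
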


\begin{proof}
    The spectral decomposition of $\sigma$ is given as $\sigma=\sum_i \mu_i e_i$.
    Also, the spectral decomposition of $\kappa_\sigma(\rho)$ is given by $\{c_{i,j}\} $similarly to Lemma~\ref{lemma:newCSOI}:
    Then, we have the following relations:
    \begin{align}
        P_{e_i}(\rho)&=\sum_j\lambda_{i,j}c_{i,j}.\label{equation:pinchingRelativerenyispectral2}\\
        \kappa_\sigma(\rho)&=\sum_{i}P_{e_i}(\rho)=\sum_{i,j}\lambda_{i,j}c_{i,j}.\label{equation:pinchingRelativerenyispectral}\\
        \sum_{j}c_{i,j}&=e_i.\label{equation:pinchingRelativerenyispectral1}
    \end{align}
    Here we remark that the equation \eqref{equation:pinchingRelativerenyispectral} is a spectral decomposition of $\kappa_\sigma(\rho)$.

    First we show \eqref{equation:pinchingRelativerenyi1}.
    In order to show this equality, we calculate LHS of \eqref{equation:pinchingRelativerenyi1} as follows:
    \begin{align}
        D_{1+s}(\kappa_\sigma(\rho)||\sigma) & =\frac{1}{s}\log\mathrm{tr}\kappa_\sigma(\rho)^{1+s}\circ\sigma^{-s}                  \\
                                             & \stackrel{(a)}{=}\frac{1}{s}\log\mathrm{tr}\sum_{i,j}\lambda_{i,j}^{1+s}\mu_i^{-s}c_{i,j}             \\
                                             & =\frac{1}{s}\log \mathrm{tr}\sum_{i,j,k}\lambda_{i,j}^{1+s}\mu_i^{-s}c_{i,j}\circ M_k.\label{equation:pinchingRelativerenyi}
                                            \end{align}
The equality (a) is given by applying 
the relation \eqref{equation:pinchingRelativerenyispectral1} to $\sigma$ and orthogonality of $\{c_{i,j}\}$.

                                            Next, we will show the following relations:
    \begin{align}
        \mathrm{tr}c_{i,j}\circ M_k          & =\mathrm{tr}P_{c_{i,j}}(M_k).\label{equation:pinchingMeasurement0}              \\
        \mathrm{tr}\rho\circ P_{c_{i,j}}(M_k)   & =\lambda_{i,j}\mathrm{tr}P_{c_{i,j}}(M_k).\label{equation:pinchingMeasurement1}\\
        \mathrm{tr}\sigma\circ P_{c_{i,j}}(M_k) & =\mu_i\mathrm{tr}P_{c_{i,j}}(M_k).\label{equation:pinchingMeasurement2}
    \end{align}
    The equation \eqref{equation:pinchingMeasurement0} is shown by the relation $\langle P_{c_{i,j}}(M_k),u\rangle=\langle M_k,P_{c_{i,j}}(u)\rangle$,
    which is derived from Euclidean condition (J3) of Definition~\ref{definition:EuclideanJordanalgebra}.
    The equation \eqref{equation:pinchingMeasurement1} is shown as follows: 
    \begin{align}
        \mathrm{tr}\rho\circ P_{c_{i,j}}(M_k)&=\langle\rho\circ P_{c_{i,j}}(M_k),u\rangle\\
        &=\langle P_{c_{i,j}}(\rho),M_k\rangle\\
        &\stackrel{(a)}{=}\langle P_{c_{i,j}}P_{e_i}(\rho),M_k \rangle\\
        &\stackrel{(b)}{=}\langle \lambda_{i,j}c_{i,j},M_k\rangle\\
        &=\lambda_{i,j}\mathrm{tr}P_{c_{i,j}}(M_k).
    \end{align}
    The equality (a) is shown by the relation $\cV(c_{i,j},1)\subset \cV(e_i,1)$, which is derived from \eqref{equation:pinchingRelativerenyispectral1}. 
    The equality (b) is shown by the relation \eqref{equation:pinchingRelativerenyispectral2}.
    The equation \eqref{equation:pinchingMeasurement2} is shown by \eqref{equation:pinchingRelativerenyispectral1}, 
    similarly to \eqref{equation:pinchingMeasurement1}.
    
    

    Combining the equation \eqref{equation:pinchingMeasurement1} and \eqref{equation:pinchingMeasurement2},
    we organize the relation \eqref{equation:pinchingRelativerenyi} as follows:
    \begin{align}
        \frac{1}{s}\log\mathrm{tr}\sum_{i,j,k}\lambda_{i,j}^{1+s}\mu_i^{-s}c_{i,j}\circ M_k & =\frac{1}{s}\log\left(\mathrm{tr}\rho\circ P_{c_{i,j}}(M_k)\right)^{1+s}\left(\mathrm{tr}\sigma\circ P_{c_{i,j}}(M_k)\right)^{-s} \\
                                                                                            & =D_{1+s}(P^{M^\rho_\sigma}_\rho||P^{M^\rho_\sigma}_\sigma).\label{equation:MeasurementRelativerenyi}
    \end{align}
    As a result, we obtain \eqref{equation:pinchingRelativerenyi1}.

    Besides, the equation \eqref{equation:pinchingRelative} is given by the fact that the parameter $s$ of $D_{1+s}$ in \eqref{equation:pinchingRelativerenyi1}
    does not depend on the choice of a measurement $\bm{M}$.
    Therefore, we apply Lemma~\ref{lemma:limitofRenyies} to \eqref{equation:pinchingRelativerenyi1}.
    Then, we obtain the equality \eqref{equation:pinchingRelative}.
    
    Finally, the equation \eqref{equation:pinchingsandwiched} is given by Lemma~\ref{lemma:classicallyofpinching} and Lemma~\ref{lemma:classicallyRenyies} as follows:
    \begin{align}
     \underline{D}_{1+s}(\kappa_\sigma(\rho)||\sigma) & =D_{1+s}(\kappa_\sigma(\rho)||\sigma)=D_{1+s}(P^{\bm{M}^\rho_\sigma}_\rho||P^{\bm{M}^\rho_\sigma}_\sigma) .
    \end{align}
\end{proof}

Finally, we give the following lemma, which is called pinching inequality in Quantum system.
\begin{lemma}[Pinching inequality]\label{lemma:PinchingInequality}
    Let $\bm{C}=\{c_i\}$ be COSI in $\cV$.
    Also, let $\rho$ be a state in $\cV$.
    Then, the following relation holds:
    \begin{align}
        |\bm{C}|\kappa_{\bm{C}}(\rho)\ge\rho.
    \end{align}
\end{lemma}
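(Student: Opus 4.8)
The plan is to make the difference $|\bm{C}|\kappa_{\bm{C}}(\rho)-\rho$ explicit through the Peirce decomposition of $\rho$ and to recognise it as a sum of positive quadratic forms. Write $\bm{C}=\{c_i\}_{i=1}^{|\bm{C}|}$ and, by the Peirce decomposition (Theorem~\ref{theorem:peircedecomposition}), decompose $\rho=\sum_i\rho_i+\sum_{i<j}\rho_{ij}$ with $\rho_i\in\cV(c_i,1)$ and $\rho_{ij}\in\cV(c_i,\tfrac{1}{2})\cap\cV(c_j,\tfrac{1}{2})$. By Lemma~\ref{lemma:projectionPeirce} we have $\rho_i=P_{c_i}(\rho)$, so that $\kappa_{\bm{C}}(\rho)=\sum_i P_{c_i}(\rho)=\sum_i\rho_i$ by Definition~\ref{definition:PinchingCSOI}. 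Since each index $i$ appears in exactly $|\bm{C}|-1$ of the pairs, this gives
\begin{align}
|\bm{C}|\,\kappa_{\bm{C}}(\rho)-\rho=(|\bm{C}|-1)\sum_i\rho_i-\sum_{i<j}\rho_{ij}=\sum_{i<j}\bigl(\rho_i+\rho_j-\rho_{ij}\bigr).
\end{align}

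The next step is to identify $\rho_i+\rho_j-\rho_{ij}$ with $P_{c_i-c_j}(\rho)$. Because $\bm{C}$ is a CSOI, $c_i\circ c_j=0$ and $c_i^2=c_i$, hence $(c_i-c_j)^2=c_i+c_j$, and by Definition~\ref{definition:quadraticform} $P_{c_i-c_j}=2L_{c_i-c_j}^2-L_{c_i+c_j}$. Reading off the eigenvalues of $L_{c_i}$ and $L_{c_j}$ on each Peirce summand (they are $0,\tfrac{1}{2}$ or $1$ by Lemma~\ref{lemma:eigenvaluesofCompletesystem} together with the Peirce relations of Theorem~\ref{theorem:peircedecomposition}) shows that $P_{c_i-c_j}$ acts as the identity on $\cV(c_i,1)\oplus\cV(c_j,1)$, as $-\mathrm{id}$ on $\cV(c_i,\tfrac{1}{2})\cap\cV(c_j,\tfrac{1}{2})$, and as $0$ on every other Peirce space; therefore $P_{c_i-c_j}(\rho)=\rho_i+\rho_j-\rho_{ij}$. (Alternatively one can avoid the eigenvalue bookkeeping by using the polarisation identity $P_{a+b}+P_{a-b}=2P_a+2P_b$, which follows immediately from Definition~\ref{definition:quadraticform} and bilinearity of $L$, combined with the fact that $P_{c_i+c_j}$ is the Peirce projection onto $\cV(c_i+c_j,1)=\cV(c_i,1)\oplus\cV(c_j,1)\oplus(\cV(c_i,\tfrac{1}{2})\cap\cV(c_j,\tfrac{1}{2}))$.)

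To conclude, since $\rho$ is a state we have $\rho\in\cQ$, and Lemma~\ref{lemma:positivityofQradratic} gives $P_{c_i-c_j}(\rho)\in\cQ$ for every pair $i<j$. As $\cQ$ is a convex cone it is closed under finite sums, so the displayed identity yields $|\bm{C}|\,\kappa_{\bm{C}}(\rho)-\rho=\sum_{i<j}P_{c_i-c_j}(\rho)\in\cQ$, i.e. $|\bm{C}|\,\kappa_{\bm{C}}(\rho)\ge\rho$, which is the claim.

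I expect the only mildly delicate point to be the Peirce-component bookkeeping for $P_{c_i-c_j}$, namely verifying $(c_i-c_j)^2=c_i+c_j$ and computing how $P_{c_i-c_j}$ acts on each Peirce summand — this is precisely where the orthogonality and idempotency of $\bm{C}$ and the eigenvalue structure of Lemma~\ref{lemma:eigenvaluesofCompletesystem} enter; everything else is linear bookkeeping plus the single application of the positivity Lemma~\ref{lemma:positivityofQradratic}. It is worth remarking that this argument uses no complex or associative structure, so it applies verbatim to every EJA, and the constant $|\bm{C}|$ is the optimal one (already in the quantum case, equality occurs for a pure state all of whose Peirce components with respect to $\bm{C}$ are ``equal'').
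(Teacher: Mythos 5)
Your proof is correct and rests on exactly the same key identity as the paper's, namely $|\bm{C}|\kappa_{\bm{C}}(\rho)-\rho=\sum_{i<j}P_{c_i-c_j}(\rho)$ together with the positivity of quadratic forms on the cone (Lemma~\ref{lemma:positivityofQradratic}); the only difference is that you verify the identity componentwise via the Peirce decomposition of $\rho$, whereas the paper expands the operators $P_{c_i-c_j}$ and $P_{c_1+\cdots+c_n}$ directly in terms of the $L_{c_i}$. Both verifications are routine, so this is essentially the paper's argument.
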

\begin{proof}
   Denote $|\bm{C}|=n$, and we obtain the conclusion as follows:
    \begin{align}
          & P_{c_1+\cdots +c_n}(\rho)+\sum_{1\le i<j\le n}P_{c_i-c_j}(\rho)                                                                                                                           \\
        \stackrel{(a)}{=} & 2\left(L_{c_1+\cdots +c_n}^2+\sum_{1\le i<j\le n}L_{c_i-c_j}^2\right)(\rho)-\left(L_{(c_1+\cdots+c_n)^2}+\sum_{1\le i<j\le n}L_{(c_i-c_j)^2}\right)(\rho)                                                         \\
        \stackrel{(b)}{=}& 2(\sum_{i=1}^n n L_{c_i^2})(\rho)-(\sum_{i=1}^n nL_{c_i})(\rho)                                                                    =n\sum_{i=1}^n(2L_{c_i}^2-L_{c_i})(\rho)=|\bm{C}|\kappa_{\bm{C}}(\rho).
    \end{align}
    The equality (a) is given by the definition of a quadratic form and organization of the equation.
    The equality (b) is implied as follows.
    The first term is reduced by the linearity of $L$ and simple calculation.
    The second term is reduced by orthogonality and idempotency of $\{c_i\}$.
    On the other hand,
    \begin{align}
        P_{c_1+\cdots+c_n}(\rho) & =P_u(\rho)=\rho\ge 0.              \label{equation:conditionofidentity3} \\
        P_{c_i-c_j}(\rho)        & \ge 0\quad(i\neq j,i,j=1,\ldots,n).\label{equation:conditionofpositive3}
    \end{align}
    Combining \eqref{equation:conditionofidentity3} and \eqref{equation:conditionofpositive3}, we obtain
    \begin{align}
        |\bm{C}|\kappa_{\bm{C}}(\rho)\ge \rho.
    \end{align}
\end{proof}
\begin{remark}
Definition~\ref{definition:PinchingCSOI} and Definition~\ref{definition:PinchingState} are generalizations
of standard definitions in quantum theory with PVM\cite{Hayashi1}[Chapter3.8].
Also, Lemma~\ref{lemma:RepresentEntropies} and Lemma~\ref{lemma:PinchingInequality}
are the corresponding important properties by the generalization.
However, due to the structure of EJA, 
we need to define $M^\rho_\sigma$ for Lemma~\ref{lemma:RepresentEntropies}.
Also, We need to prove Lemma~\ref{lemma:PinchingInequality} 
by an indirect generalization of the proof in quantum
theory with the properties of quadratic form 
as an analogy from \cite{Hayashi1}[Lemma3.10] and \cite{Hayashi3}[Chapter3 Lemma5].
\end{remark}

\subsection{TPCP map in Euclidean Jordan algebra}\label{subsection:tpcp}
In this part, we define the TPCP map in EJAs similarly to quantum theory.
Moreover, we check the properties of a TPCP map.
Finally, we prepare a concrete example of TPCP maps applied in Section~\ref{section:Informationquantities}.
Only in this part, we denote $\cV_1,\cV_2,\cV'$ as EJAs.
In addition, we denote $\cQ_1,\cQ_2$ as positive cones associated with $\cV_1,\cV_2$, respectively.

At first we define the TPCP map as follows.

\begin{definition}[Trace Preserving]\label{definition:tracepreserving}
    We call the linear map $\kappa:\cV_1\to \cV_2$ a Trace Preserving (TP) map 
    if the map $\kappa$ satisfies $\mathrm{tr}x=\mathrm{tr}\kappa(x)$ 
    for any element $x\in \cV_1$.
\end{definition}

\begin{definition}[Positive map]
    We call the linear map $\kappa:\cV_1\to \cV_2$ a Positive map 
    if the map $\kappa$ satisfies $\kappa(x)\in \cQ_2$ for any $x\in \cQ_1$.
\end{definition}

\begin{definition}[Completely Positivity]
    We call the linear map $\kappa:\cV_1\to \cV_2$ a Completely Positive (CP) map
    if the map $\kappa$ satisfies the following condition:
    For any space $V'$, the map $\kappa\otimes \iota:\cV_1\otimes \cV'\to \cV_2\otimes \cV'$ 
    is a positive map, where $\iota:\cV'\to \cV'$ is an identity map.
\end{definition}

\begin{definition}[TPCP map]\label{definition:tpcp}
    We call the linear map $\kappa:\cV_1\to \cV_2$ a TPCP map
    if the map $\kappa$ is trace preserving and complete positive.
\end{definition}


\begin{lemma}\label{lemma:TPCPtensor}
    Let $\kappa:\cV_1\to \cV_2$ be a TPCP map.
    Then, the map $\kappa^{\otimes n}:\cV_1^{\otimes n}\to \cV_2^{\otimes n}$ is a TPCP map.
\end{lemma}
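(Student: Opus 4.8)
The plan is to deduce the claim from two building blocks: that the composition of TPCP maps is again TPCP, and that for every EJA $\cV'$ the "padded" map $\kappa\otimes\iota_{\cV'}$ is itself TPCP. Granting these, I would write $\kappa^{\otimes n}$ as a composition of $n$ maps, the $k$-th of which applies $\kappa$ to the $k$-th tensor slot and $\iota$ to the others,
\begin{align}
\kappa^{\otimes n}=\bigl(\kappa\otimes\iota_{\cV_2^{\otimes(n-1)}}\bigr)\circ\bigl(\iota_{\cV_1}\otimes\kappa\otimes\iota_{\cV_2^{\otimes(n-2)}}\bigr)\circ\cdots\circ\bigl(\iota_{\cV_1^{\otimes(n-1)}}\otimes\kappa\bigr).
\end{align}
Each factor has the shape $\iota_{\cA}\otimes\kappa\otimes\iota_{\cB}$, which after permuting tensor slots (an EJA isomorphism, since the composite product of Definition~\ref{definition:CompositesysteminEJA} is symmetric and associative) is of the form $\kappa\otimes\iota_{\cA\otimes\cB}$ with $\cA\otimes\cB$ an EJA; hence each factor is TPCP, and so is the whole composition.

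For the composition fact: if $\kappa',\kappa''$ are trace preserving then $\mathrm{tr}\,\kappa''(\kappa'(x))=\mathrm{tr}\,\kappa'(x)=\mathrm{tr}\,x$; and if $\kappa',\kappa''$ are CP then for any EJA $\cV'$ one has $(\kappa''\circ\kappa')\otimes\iota_{\cV'}=(\kappa''\otimes\iota_{\cV'})\circ(\kappa'\otimes\iota_{\cV'})$, a composition of positive maps, hence positive, so $\kappa''\circ\kappa'$ is CP.

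For the padding fact: trace preservation of $\kappa\otimes\iota_{\cV'}$ holds on product elements because $\mathrm{tr}\,(\kappa(x)\otimes y)=\langle\kappa(x),u_2\rangle\langle y,u'\rangle=\mathrm{tr}\,\kappa(x)\cdot\mathrm{tr}\,y=\mathrm{tr}\,x\cdot\mathrm{tr}\,y=\mathrm{tr}\,(x\otimes y)$, and this extends by linearity; complete positivity holds because for any EJA $\cV''$ the space $\cV'\otimes\cV''$ is again an EJA (Definition~\ref{definition:CompositesysteminEJA}), so $(\kappa\otimes\iota_{\cV'})\otimes\iota_{\cV''}=\kappa\otimes\iota_{\cV'\otimes\cV''}$ is positive exactly because $\kappa$ is CP with auxiliary system $\cV'\otimes\cV''$.

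I expect the only genuinely delicate point — and the one worth spelling out — to be the tensor-slot bookkeeping: one must justify that permuting and regrouping factors of an $n$-fold composite EJA is an isomorphism respecting the Jordan product, the inner product, and the positive cone, so that ``$\kappa$ acting on the $k$-th slot'' legitimately falls under the padding fact. This is immediate from the symmetry and associativity of Definition~\ref{definition:CompositesysteminEJA}, but it is the step that needs care; everything else is a routine unwinding of the definitions of TP, CP, and TPCP.
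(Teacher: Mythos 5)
Your proposal is correct and follows essentially the same route as the paper, which factors $\kappa\otimes\kappa=(\kappa\otimes\iota)\circ(\iota\otimes\kappa)$ and proceeds by induction on $n$; your version merely writes out the full $n$-fold composition and makes explicit the two supporting facts (closure of TPCP under composition, and TPCP of the padded maps $\kappa\otimes\iota$) that the paper's very terse proof leaves implicit.
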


\begin{proof}
From the Trace Preservity and Completely Positivity of $\kappa$, we obtain $\kappa\otimes \kappa=(\kappa\otimes \iota)(\iota\otimes \kappa):\cV_1\otimes \cV_1\to \cV_2\otimes \cV_2$
is the TPCP map.
In addition, the Trace Preservity and Completely Positivity of $\kappa\otimes \kappa$, $\kappa\otimes \kappa\otimes \kappa$ is the TPCP map.
Inductively, the map $\kappa^{\otimes n}$ is the TPCP map for an arbitrary number $n$.
\end{proof}
\color{black}

Now, we give the following Lemma in order to prove 
the monotonicity of SRR entropy by a TPCP map
in Section~\ref{subsection:sandwichedrelative}.

\begin{lemma}[Identity preservation of adjoint map]\label{lemma:identitypreserving}
    Let $\kappa:\cV\to \cV$ and $\kappa^*:\cV\to \cV$ be a TPCP map 
    and the adjoint map of $\kappa$, respectively.
    Then, the following relation holds:
    \begin{align}
        \kappa^*(u)=u.
    \end{align}
\end{lemma}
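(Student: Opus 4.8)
The plan is to unwind the definition of the trace and then use non-degeneracy of the inner product. Recall from Definition~\ref{definition:trace} that $\mathrm{tr}\,x=\langle x,u\rangle$ for every $x\in\cV$. Since $\kappa$ is trace preserving in the sense of Definition~\ref{definition:tracepreserving}, we have
\begin{align}
\langle x,u\rangle=\mathrm{tr}\,x=\mathrm{tr}\,\kappa(x)=\langle \kappa(x),u\rangle
\end{align}
for all $x\in\cV$.

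Next I would invoke the defining property of the adjoint map, namely $\langle \kappa(x),y\rangle=\langle x,\kappa^\ast(y)\rangle$ for all $x,y\in\cV$, specialised to $y=u$. Combining this with the previous display yields $\langle x,u\rangle=\langle x,\kappa^\ast(u)\rangle$, i.e. $\langle x,\,u-\kappa^\ast(u)\rangle=0$ for every $x\in\cV$. Since $\cV$ carries a genuine (positive-definite, hence non-degenerate) inner product, an element orthogonal to all of $\cV$ must vanish, so $u-\kappa^\ast(u)=0$, that is, $\kappa^\ast(u)=u$.

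There is essentially no serious obstacle in this argument: the only two points to keep track of are that trace preservation is assumed for \emph{all} elements of $\cV$ (not merely for states or elements of the positive cone), which is precisely what Definition~\ref{definition:tracepreserving} supplies, and that the inner product is non-degenerate so that being orthogonal to everything forces the element to be zero. In particular, complete positivity of $\kappa$ is not used here; only the trace-preserving property enters.
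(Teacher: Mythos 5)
Your proof is correct and follows essentially the same route as the paper's: both arguments combine trace preservation, written as $\langle x,u\rangle=\langle\kappa(x),u\rangle$, with the adjoint identity $\langle\kappa(x),u\rangle=\langle x,\kappa^\ast(u)\rangle$, and note that complete positivity plays no role. The only difference is the finishing step: the paper substitutes the idempotents from the spectral decomposition $\kappa^\ast(u)=\sum_i\lambda_i c_i$ and checks that each eigenvalue equals $1$, whereas you conclude directly from non-degeneracy of the inner product that $u-\kappa^\ast(u)=0$; your version is marginally cleaner since it avoids the spectral theorem and the normalization $\langle c_i,u\rangle=\mathrm{tr}\,c_i$ altogether.
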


\begin{proof}
    From the definition of adjoint map, 
    the following relation holds for any $x\in \cV$.
    \begin{align}
        \langle \kappa(x),u\rangle=\langle x,\kappa^*(u)\rangle.
    \end{align}
    From the condition of trace preserving, we obtain
    \begin{align}
        1=\left\langle\frac{x}{\mathrm{tr}x},\kappa^*(u)\right\rangle.\label{equation:identitypreserving1}
    \end{align}
    Now, we consider the spectral decomposition of $\kappa^*(u)$ as $\kappa^*(u)=\sum_i \lambda_ic_i$.
    We substitute $c_i$ for $x$ in equation \eqref{equation:identitypreserving1}.
    Then, we obtain the following equation for any $i$:
    \begin{align}
        1=\left\langle\frac{x}{\mathrm{tr}x},\kappa^*(u)\right\rangle=\lambda_i\left\langle\frac{c_i}{\mathrm{tr}c_i},u\right\rangle=\lambda_i.
    \end{align}
    As a result, we obtain
    \begin{align}
        \kappa^*(u)=\sum_i c_i=u.
    \end{align}
\end{proof}

Finally, we investigate two concrete examples.
We will apply these two TPCP maps to the proof of the information processing inequality in Section~\ref{section:Informationquantities}.
\begin{definition}[Partial trace]\label{definition:partialtrace}
    We call the linear map $\mathrm{tr}_{\cV_1}:\cV_1\otimes \cV_2\to \cV_2$ a partial trace for $V_1$
    if the map $\mathrm{tr}_{\cV_1}$ satisfies the following condition:
    For the element $z=\sum_{i,j}\lambda_{i,j}x_i\otimes y_j \in \cV_1\otimes \cV_2$, the map $\mathrm{tr}_{\cV_1}$ satisfies
    \begin{align}
        \mathrm{tr}_{\cV_1}z=\sum_{i,j}\lambda_{i,j}(\mathrm{tr}x_i)y_j=\sum_{i,j}\lambda_{i,j}\langle x_i,u_1\rangle_1 y_j,
    \end{align}
    where $\langle\cdot,\cdot\rangle_1$ is an inner product in $\cV_1$ and $u_1$ is an unit effect in $\cV_1$.
\end{definition}
\begin{lemma}[Partial trace is TPCP map]\label{lemma:TPCPofPartialtrace}
    The partial trace $\mathrm{tr}_{\cV_1}:\cV_1\otimes \cV_2\to \cV_2$ is a TPCP map.
\end{lemma}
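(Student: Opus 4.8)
The plan is to verify the two defining properties of a TPCP map in turn. Trace preservation is immediate: writing $z=\sum_{i,j}\lambda_{i,j}x_i\otimes y_j$ and using $\mathrm{tr}(\cdot)=\langle\cdot,u\rangle$ together with the unit factorization $u=u_1\otimes u_2$ and the factorization $\langle x_i\otimes y_j,u_1\otimes u_2\rangle=\langle x_i,u_1\rangle_1\langle y_j,u_2\rangle_2$ from Definition~\ref{definition:CompositesysteminEJA}, one gets
\begin{align}
\mathrm{tr}\,\mathrm{tr}_{\cV_1}z=\sum_{i,j}\lambda_{i,j}\langle x_i,u_1\rangle_1\langle y_j,u_2\rangle_2=\langle z,u_1\otimes u_2\rangle=\mathrm{tr}\,z .
\end{align}
(One should also note that $\mathrm{tr}_{\cV_1}$ is well defined, which follows since $x\otimes y\mapsto\langle x,u_1\rangle_1\,y$ is bilinear.) So the substance is complete positivity, which I would deduce from a single positivity statement applied to arbitrary composite EJAs.

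The first step toward positivity is the adjoint identity $\langle\mathrm{tr}_{\cV_1}z,\,w\rangle_2=\langle z,\,u_1\otimes w\rangle$ for all $z\in\cV_1\otimes\cV_2$ and $w\in\cV_2$: it holds on product elements $z=x\otimes y$ because $\langle\mathrm{tr}_{\cV_1}(x\otimes y),w\rangle_2=\langle x,u_1\rangle_1\langle y,w\rangle_2$, and then extends by bilinearity. Next, given $w\in\cQ_{\cV_2}$, write $w=v^2$; since the Jordan product of a composite EJA is componentwise, $u_1\otimes w=(u_1\otimes v)\circ(u_1\otimes v)\in\cQ_{\cV_1\otimes\cV_2}$. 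Combining this with self-duality of the canonical cone (Lemma~\ref{lemma:selfduality}): for $z\in\cQ_{\cV_1\otimes\cV_2}$ we get $\langle z,u_1\otimes w\rangle\ge 0$, hence $\langle\mathrm{tr}_{\cV_1}z,w\rangle_2\ge 0$ for every $w\in\cQ_{\cV_2}$, and a second use of self-duality, now in $\cV_2$, gives $\mathrm{tr}_{\cV_1}z\in\cQ_{\cV_2}$. This proves that $\mathrm{tr}_{\cV_1}$ is a positive map.

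For complete positivity I would fix an arbitrary EJA $\cV'$ and use that, because the Jordan and inner products of a composite are defined componentwise, $(\cV_1\otimes\cV_2)\otimes\cV'$ is canonically isomorphic as an EJA to $\cV_1\otimes(\cV_2\otimes\cV')$, and under this identification $\mathrm{tr}_{\cV_1}\otimes\iota_{\cV'}$ is exactly the partial trace over $\cV_1$ in the composite $\cV_1\otimes(\cV_2\otimes\cV')$. The positivity argument of the previous paragraph then applies verbatim with $\cV_2$ replaced by $\cV_2\otimes\cV'$, so $\mathrm{tr}_{\cV_1}\otimes\iota_{\cV'}$ is positive for every $\cV'$; together with trace preservation this yields the lemma.

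The main obstacle is positivity itself: unlike in quantum theory, an EJA has no Kraus/operator form of the partial trace, so one cannot directly imitate the usual argument. The resolution is the self-duality route above, whose only structure-specific inputs are that $u_1$ is a square (it is idempotent) and that the composite Jordan product is componentwise, which together give $u_1\otimes v^2=(u_1\otimes v)^2$. A secondary point to handle carefully is checking that $(\cV_1\otimes\cV_2)\otimes\cV'$ and $\cV_1\otimes(\cV_2\otimes\cV')$ genuinely coincide as EJAs, so that reducing complete positivity to positivity over composites is legitimate.
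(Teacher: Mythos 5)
Your proof is correct, but it takes a genuinely different route from the paper's. The paper proves complete positivity by taking an arbitrary element $x$ of the positive cone of $\cV_1\otimes\cV_2\otimes\cV'$ and writing it as $x=\sum_{i,j,k}\lambda_{i,j,k}\,c_i\otimes d_j\otimes e_k$ with $\lambda_{i,j,k}\ge 0$ and $\{c_i\},\{d_j\},\{e_k\}$ CSOIs of the individual factors; applying $\mathrm{tr}_{\cV_1}\otimes\iota$ termwise then keeps all coefficients nonnegative. You instead prove positivity via the adjoint identity $\langle\mathrm{tr}_{\cV_1}z,w\rangle_2=\langle z,u_1\otimes w\rangle$, the observation $u_1\otimes v^2=(u_1\otimes v)^2$, and two applications of self-duality (Lemma~\ref{lemma:selfduality}), then reduce complete positivity to positivity by the associativity of the composite construction. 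Your route is arguably the more robust one: the paper's decomposition step implicitly assumes that every element of the canonical positive cone of a composite is a nonnegative combination of product idempotents, which is not true in general (already in the quantum case an entangled state is positive in the composite cone but admits no such product spectral decomposition), whereas your duality argument makes no structural assumption about elements of the composite cone beyond self-duality itself. What the paper's approach buys, when it applies, is explicitness; what yours buys is that it works verbatim for any element of the composite cone and makes transparent exactly which inputs are used ($u_1$ is a square, the Jordan product is componentwise, and the canonical cone is self-dual). The one point you flag as needing care — that $(\cV_1\otimes\cV_2)\otimes\cV'$ and $\cV_1\otimes(\cV_2\otimes\cV')$ agree as EJAs — is indeed immediate from Definition~\ref{definition:CompositesysteminEJA} since both products and both inner products are defined factorwise, so your reduction of complete positivity to positivity over composites is legitimate.
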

\begin{proof}
    At first, trace preservation of $\mathrm{tr}_{\cV_1}$ is shown as follows:
    \begin{align}
        \mathrm{tr}z & =\mathrm{tr}\sum_{i,j}\lambda_{i,j}x_i\otimes y_j=\sum_{i,j}\lambda_{i,j}(\mathrm{tr}_{\cV_1}x_i)(\mathrm{tr}y_j) \\
                     & =\mathrm{tr}\sum_{i,j}\lambda_{i,j}(\mathrm{tr}_{\cV_1}x_i)y_j=\mathrm{tr}(\mathrm{tr}_{\cV_1}z),
    \end{align}
   where $z=\sum_{i,j}\lambda_{i,j}x_i\otimes y_j\in \cV_1\otimes \cV_2$.

    Next, we will show the completely positivity of $\mathrm{tr}_{\cV_1}$.
    For any space $\cV'$, we consider the space $\cV_1\otimes \cV_2\otimes \cV'$.
    Now, we take the element $x$ of the positive cone associated with $\cV_1\otimes \cV_2\otimes \cV'$.
    Here, we consider the spectral decomposition $x=\sum_{i,j,k}\lambda_{i,j,k}c_i\otimes d_j\otimes e_k$,
    where $\{c_i\},\{d_k\},\{e_j\}$ are the COSI of $\cV_1,\cV_2,\cV'$, respectively.
    Moreover, the coefficiences satisfy $\lambda_{i,j,k}\ge 0$.
    We apply the map $\mathrm{tr}_{\cV_1}\otimes \iota:\cV_1\otimes \cV_2\otimes \cV'\to \cV_2\otimes \cV'$ to the element $x$, and we obtain
    \begin{align}
        \mathrm{tr}_{\cV_1}\otimes \iota(x) & =\mathrm{tr}_{\cV_1}\otimes \iota(\sum_{i,j,k}\lambda_{i,j,k}c_i\otimes d_j\otimes e_k)=\sum_{i,j,k}\lambda_{i,j,k}(\mathrm{tr}c_i)d_j\otimes c_k.
    \end{align}
    Here, the coefficiences $\lambda_{i,j,k}\ge 0$ and $\mathrm{tr}c_i\ge 0$ by idempotency of $c_i$.
    Hence, the element $\mathrm{tr}_{\cV_1}\otimes \iota (x)$ is the element of positive cone associated with $\cV_2\otimes \cV'$.
    Therefore, we complete the proof of a completely positivity of a partial trace.
\end{proof}

Next, we introduce the following new TPCP map.
We will apply this TPCP map in order to show that the observing is one of the TPCP map.
\begin{definition}[TPCP map of Observation]\label{definition:TPCPobservation}
    Let $\bm{M}=\{M_i\}_{i=1}^d$ be a measurement in $\cV$.
    Let $\mathbb{R}^d$ be a classical system (Example~\ref{example:classicalsystem}).
    Also, let $u_i\in\mathbb{R}^d(i=1,\ldots,d)$ be the element which takes $1$ in $i$th element and $0$ in others.
    Now, we define a linear map $\kappa_{\bm{M}}:\cV\to\mathbb{R}^d$ as
    \begin{align}
        \kappa_{\b{M}}(x):=\sum_{i=1}^d\mathrm{tr}(M_i\circ x)u_i,\forall x\in \cV.\label{equation:TPCPobservation}
    \end{align}
\end{definition}
\begin{lemma}\label{lemma:TPCPofObservation}
    The map Definition~\ref{definition:TPCPobservation} is a TPCP map.
\end{lemma}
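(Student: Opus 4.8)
I would verify the two requirements of Definition~\ref{definition:tpcp} in turn; trace preservation is routine, while complete positivity is the substantive point. For trace preservation, the trace on the classical algebra $\mathbb{R}^d$ is $\langle\,\cdot\,,\sum_j u_j\rangle$, so $\mathrm{tr}\,u_i=1$ and $\mathrm{tr}\,\kappa_{\bm{M}}(x)=\sum_{i=1}^d\mathrm{tr}(M_i\circ x)$. The Euclidean condition (J3) together with $x\circ u=x$ gives $\mathrm{tr}(M_i\circ x)=\langle M_i\circ x,u\rangle=\langle M_i,x\rangle$, hence $\mathrm{tr}\,\kappa_{\bm{M}}(x)=\langle\sum_i M_i,x\rangle=\langle u,x\rangle=\mathrm{tr}\,x$, using $\sum_i M_i=u$.

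For complete positivity I would factor $\kappa_{\bm{M}}$ through maps already known to be TPCP. Each effect $M_i\in\cQ^*_\cV=\cQ_\cV$ has a square root $\sqrt{M_i}\in\cQ_\cV$ with $(\sqrt{M_i})^2=M_i$, obtained by applying $f(t)=\sqrt t$ to the spectral decomposition of $M_i$ (Definition~\ref{definition:fofEJAelement}). Define a linear map $\Phi:\cV\to\cV\otimes\mathbb{R}^d$ by $\Phi(x):=\sum_{i=1}^d P_{\sqrt{M_i}}(x)\otimes u_i$. From $L_y(u)=y$ one has $P_y(u)=y^2$, and $P_y$ is self-adjoint for the inner product because $L_y$ is, by (J3); therefore $\mathrm{tr}\,P_{\sqrt{M_i}}(x)=\langle x,P_{\sqrt{M_i}}(u)\rangle=\langle x,M_i\rangle=\mathrm{tr}(M_i\circ x)$. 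This shows simultaneously that $\Phi$ is trace preserving and that $\mathrm{tr}_\cV\circ\Phi=\kappa_{\bm{M}}$, where $\mathrm{tr}_\cV:\cV\otimes\mathbb{R}^d\to\mathbb{R}^d$ is the partial trace (Definition~\ref{definition:partialtrace}). Since the partial trace is TPCP (Lemma~\ref{lemma:TPCPofPartialtrace}) and a composition of trace preserving (resp. completely positive) maps is again such, it remains only to prove that $\Phi$ is completely positive.

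The step I expect to be the main obstacle is the identity $P_{\sqrt{M_i}}\otimes\iota_{\cV'}=P_{\sqrt{M_i}\otimes u'}$ as maps on $\cV\otimes\cV'$, for an arbitrary auxiliary EJA $\cV'$ with unit $u'$. It holds because $L_{\sqrt{M_i}\otimes u'}=L_{\sqrt{M_i}}\otimes\iota$ and $L_{(\sqrt{M_i}\otimes u')^2}=L_{M_i}\otimes\iota$, so $2L_{\sqrt{M_i}\otimes u'}^2-L_{(\sqrt{M_i}\otimes u')^2}=(2L_{\sqrt{M_i}}^2-L_{M_i})\otimes\iota=P_{\sqrt{M_i}}\otimes\iota$. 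Granting it, for $z\in\cQ_{\cV\otimes\cV'}$ each $(P_{\sqrt{M_i}}\otimes\iota_{\cV'})(z)=P_{\sqrt{M_i}\otimes u'}(z)$ lies in $\cQ_{\cV\otimes\cV'}$ by Lemma~\ref{lemma:positivityofQradratic}; and since $\mathbb{R}^d$ is classical with $u_i\circ u_j=\delta_{ij}u_i$, the algebra $\cV\otimes\mathbb{R}^d\otimes\cV'$ decomposes as a direct sum of $d$ copies of $\cV\otimes\cV'$ whose positive cone is the direct sum of the cones $\cQ_{\cV\otimes\cV'}$, and the $i$-th component of $(\Phi\otimes\iota_{\cV'})(z)$ in this decomposition is exactly $(P_{\sqrt{M_i}}\otimes\iota_{\cV'})(z)$. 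Hence $(\Phi\otimes\iota_{\cV'})(z)$ is positive, so $\Phi$ --- and therefore $\kappa_{\bm{M}}=\mathrm{tr}_\cV\circ\Phi$ --- is TPCP. Alternatively one can avoid $\Phi$ and argue directly: writing $z=\sum_a x_a\otimes y_a'\in\cQ_{\cV\otimes\cV'}$, the $i$-th classical component of $(\kappa_{\bm{M}}\otimes\iota_{\cV'})(z)$ is $w_i=\sum_a\langle M_i,x_a\rangle y_a'$, and $\langle w_i,b\rangle=\langle z,M_i\otimes b\rangle\ge 0$ for every $b\in\cQ_{\cV'}$ because $\cQ_{\cV\otimes\cV'}\subseteq(\cQ^*_\cV\otimes\cQ^*_{\cV'})^*$ --- a consequence of $\cQ_\cV\otimes\cQ_{\cV'}\subseteq\cQ_{\cV\otimes\cV'}$ (using $x^2\otimes y^2=(x\otimes y)^2$) and self-duality --- so $w_i\in\cQ^*_{\cV'}=\cQ_{\cV'}$.
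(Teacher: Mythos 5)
Your proof is correct, but it takes a genuinely different route from the paper's. The paper's proof of complete positivity picks an arbitrary element $x$ of the positive cone of $\cV\otimes\cV'$ and writes it as $x=\sum_{j,k}\lambda_{j,k}c_j\otimes d_k$ with $\lambda_{j,k}\ge 0$ and $\{c_j\},\{d_k\}$ CSOIs of the individual factors, then applies $\kappa_{\bm{M}}\otimes\iota$ termwise and observes $\mathrm{tr}(M_i\circ c_j)=\mathrm{tr}P_{c_j}(M_i)\ge 0$, so the image is a nonnegative combination of the CSOI $\{u_i\otimes d_k\}$. Your two arguments avoid any such product-form spectral decomposition: the first factors $\kappa_{\bm{M}}=\mathrm{tr}_\cV\circ\Phi$ with $\Phi(x)=\sum_i P_{\sqrt{M_i}}(x)\otimes u_i$ and reduces complete positivity to the identity $P_{\sqrt{M_i}}\otimes\iota=P_{\sqrt{M_i}\otimes u'}$ together with Lemma~\ref{lemma:positivityofQradratic} and Lemma~\ref{lemma:TPCPofPartialtrace}; the second works by cone duality, using $\cQ_\cV\otimes\cQ_{\cV'}\subseteq\cQ_{\cV\otimes\cV'}$ and self-duality to conclude $\langle z,M_i\otimes b\rangle\ge 0$. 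What your route buys is that it applies verbatim to every element of $\cQ_{\cV\otimes\cV'}$, whereas the paper's decomposition step silently assumes that a positive element of the composite admits a spectral decomposition over a \emph{product} CSOI with nonnegative coefficients --- which already fails for entangled positive matrices in the quantum case and would need a separate justification (e.g.\ reducing to the generating set of squares, or exactly the duality argument you give). What the paper's approach buys is brevity and uniformity with its proof of Lemma~\ref{lemma:TPCPofPartialtrace}, which uses the same decomposition. Your quadratic-form dilation $\Phi$ is also a nice Stinespring-flavored construction that the paper does not introduce; both of your variants are sound, and the verification of trace preservation matches the paper's.
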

\begin{proof}
    At first, we obtain the trace preservation of the map $\kappa_{\bm{M}}$ as follows:
    \begin{align}
        \mathrm{tr}\kappa_{\bm{M}}(x)=\mathrm{tr}\sum_{i=1}^d \mathrm{tr}(M_i\circ x)u_i=\sum_{i=1}^d\mathrm{tr}(M_i\circ x)=\mathrm{tr}x \quad\forall x\in \cV.
    \end{align}

    Next, we examine the completely positivity of $\kappa_{\bm{M}}$.
    For any space $\cV'$, we take an arbitrary element $x$ in the positive cone associated with $\cV\otimes \cV'$.
    Then, we consider the spectral decomposition $x=\sum_{j,k}\lambda_{j,k}c_j\otimes d_k$,
    where the coefficiences $\lambda_{j,k}\ge 0$ hold for all $j,k$ and $\{c_j\},\{d_k\}$ are COSI in $\cV,\cV'$, respectively.
    Now, we apply the map $\kappa_{\bm{M}}\otimes \iota$ to the element $x$, and we obtain
    \begin{align}
        \kappa_{\bm{M}}\otimes \iota(x)=\kappa_{\bm{M}}\otimes \iota (\sum_{j,k}\lambda_{j,k}c_j\otimes d_k)=\sum_{i=1}^d\sum_{j,k}\lambda_{j,k}\mathrm{tr}(M_i\circ c_j)u_i\otimes d_k,
    \end{align}
    where $\mathrm{tr}M_i\circ c_j=\langle M_i, P_{c_j}(u)\rangle=\langle P_{c_j}(M_i),u\rangle=\mathrm{tr}P_{c_j}(M_i)\ge 0$
    from Lemma~\ref{lemma:positivityofQradratic}.
    Moreover, $\{u_i\otimes d_k\}$ is a CSOI in $\mathbb{R}^d\otimes V'$.
    Therefore, we complete to prove the completely positivity of $\kappa_{\bm{M}}$.
\end{proof}

Here we remark that the above map \eqref{equation:TPCPobservation} corresponds to 
the observation for a state $x$ with $\bm{M}$.

\section{The relation of Information quantities}\label{section:Informationquantities}

In  this section, we investigate the three information quantities, PRR entropy,
SRR entropy and Relative entropy in order to prove Stein's lemma 
with EJAs in Section~\ref{section:hypothesistesting}.
At first, we examine a property of PRR entropy, 
monotonicity of an observation.
Secondly, we investigate the property of SRR entropy, 
monotonicity of a TPCP map.
Finally, conbining these monotonicities of PRR entropy and SRR entropy,
we investigate the property of Relative entropy with the monotonicity under a TPCP map and
show some theorems.

We note that all lemmas and theorems in Section~\ref{section:Informationquantities} 
are directly generalized from known results in quantum information theory.
Some statements are derived by the same way 
as that of quantum theory through the properties in Section~\ref{section:Information theorical tools}.
However, due to the lack of operator monotonicity in EJAs,
we need to prove other statements by indirect generalizations of the proofs in quantum information theory.

\subsection{Petz Relative R\'{e}nyi entropy}\label{subsection:relativerenyi}
In this part, we give a relation among $D_{1+s}(\rho||\sigma)$, $D_{1+s}(P^{\bm{M}}_\rho||P^{\bm{M}}_\sigma)$, and $\frac{1}{n}D_{1+s}(\kappa_{\sigma^{\otimes n}}(\rho^{\otimes n})||\sigma^{\otimes n})$
as the following theorem.
The convergency of $\frac{1}{n}D_{1+s}(\kappa_{\sigma^{\otimes n}}(\rho^{\otimes n})||\sigma^{\otimes n})$ is discussed in Appendix~\ref{appendix:proofofRelativeRenyi}.

\begin{theorem}[Monotonicity of PRR entropy by an observation]\label{theorem:MonotonicityofRelativerenyi}
    Let $\rho,\sigma$ be states in $\cV$.
    Also, let $\bm{M}=\{M_i\}$ be a measurement in $\cV$.
    Then, the following inequality holds:
    \begin{align}
        D_{1+s}(\rho||\sigma)\ge \lim_{n\to \infty}\frac{1}{n}D_{1+s}(\kappa_{\sigma^{\otimes n}}(\rho^{\otimes n})||\sigma^{\otimes n})\ge D_{1+s}(P^{\bm{M}}_\rho||P^{\bm{M}}_\sigma)\quad(s>0).
    \end{align}
\end{theorem}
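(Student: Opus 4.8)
The plan is to prove the two inequalities in the chain separately, in each case first passing to the $n$-th tensor power, where the relevant quantities are additive, and then normalizing by $n$. Write $v_n:=|\bm{C}_{\sigma^{\otimes n}}|$; by Lemma~\ref{lemma:numberofeigenvalues} one has $v_n\le (n+1)^{|\bm{C}_\sigma|-1}$, so $\tfrac1n\log v_n\to 0$. By Lemma~\ref{theorem:additivity}, $D_{1+s}(\rho^{\otimes n}||\sigma^{\otimes n})=nD_{1+s}(\rho||\sigma)$, and since the product measurement $\bm{M}^{\otimes n}$ on $\rho^{\otimes n}$ produces the product distribution $(P^{\bm{M}}_\rho)^{\otimes n}$, additivity of the classical relative R\'enyi entropy gives $D_{1+s}(P^{\bm{M}^{\otimes n}}_{\rho^{\otimes n}}||P^{\bm{M}^{\otimes n}}_{\sigma^{\otimes n}})=nD_{1+s}(P^{\bm{M}}_\rho||P^{\bm{M}}_\sigma)$. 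Thus it suffices to control the $n$-fold versions of both inequalities up to an error that is $o(n)$.

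For the second ($\ge$) inequality I would proceed as follows. Apply Lemma~\ref{lemma:RepresentEntropies} with the measurement $\bm{M}^{\otimes n}$ to identify $D_{1+s}(\kappa_{\sigma^{\otimes n}}(\rho^{\otimes n})||\sigma^{\otimes n})$ with the classical $D_{1+s}$ of $\rho^{\otimes n}$ and $\sigma^{\otimes n}$ measured by the pinched measurement $(\bm{M}^{\otimes n})^{\rho^{\otimes n}}_{\sigma^{\otimes n}}$. Forgetting the pinching index is a deterministic coarse-graining that, using Lemma~\ref{lemma:RepresentEntropies} again (the pinched and unpinched states give the same outcome statistics) together with $\kappa_{\sigma^{\otimes n}}(\rho^{\otimes n})$ and $\sigma^{\otimes n}$ being invariant under pinching by their common CSOI, sends the pair to $(P^{\bm{M}^{\otimes n}}_{\kappa_{\sigma^{\otimes n}}(\rho^{\otimes n})},P^{\bm{M}^{\otimes n}}_{\sigma^{\otimes n}})$; hence by data processing of the classical $D_{1+s}$,
\begin{align}
D_{1+s}(\kappa_{\sigma^{\otimes n}}(\rho^{\otimes n})||\sigma^{\otimes n})\ge D_{1+s}(P^{\bm{M}^{\otimes n}}_{\kappa_{\sigma^{\otimes n}}(\rho^{\otimes n})}||P^{\bm{M}^{\otimes n}}_{\sigma^{\otimes n}}).
\end{align}
The pinching inequality (Lemma~\ref{lemma:PinchingInequality}) gives $\rho^{\otimes n}\le v_n\,\kappa_{\sigma^{\otimes n}}(\rho^{\otimes n})$, so pairing with the product effects yields $P^{\bm{M}^{\otimes n}}_{\rho^{\otimes n}}\le v_n\,P^{\bm{M}^{\otimes n}}_{\kappa_{\sigma^{\otimes n}}(\rho^{\otimes n})}$ pointwise; and for classical distributions $p\le v p'$ and $s>0$ a one-line estimate gives $D_{1+s}(p||q)\le \tfrac{1+s}{s}\log v+D_{1+s}(p'||q)$. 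Chaining these bounds with the two additivity identities and dividing by $n$, the error term $\tfrac{1+s}{ns}\log v_n$ vanishes, which yields the second inequality (existence of the limit being handled in Appendix~\ref{appendix:proofofRelativeRenyi}).

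For the first ($\le$) inequality, additivity reduces the task to the monotonicity of the PRR entropy under the pinching $\kappa_{\sigma^{\otimes n}}$, i.e., $D_{1+s}(\kappa_{\sigma^{\otimes n}}(\rho^{\otimes n})||\sigma^{\otimes n})\le D_{1+s}(\rho^{\otimes n}||\sigma^{\otimes n})$ (an $o(n)$ slack is harmless). Using that $\kappa_{\sigma^{\otimes n}}(\rho^{\otimes n})$ and $\sigma^{\otimes n}$ behave classically (Lemma~\ref{lemma:classicallyofpinching}) together with Lemmas~\ref{lemma:newCSOI} and \ref{lemma:projectionPeirce} and the Euclidean property, one can rewrite $\mathrm{tr}\,\kappa_{\sigma^{\otimes n}}(\rho^{\otimes n})^{1+s}\circ(\sigma^{\otimes n})^{-s}$ as $\mathrm{tr}\,\rho^{\otimes n}\circ\kappa_{\sigma^{\otimes n}}(\rho^{\otimes n})^{s}\circ(\sigma^{\otimes n})^{-s}$, so the target becomes a comparison of two EJA traces built from $\rho^{\otimes n}$. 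In quantum theory this comparison follows immediately from operator convexity of $t\mapsto t^{1+s}$ (for $0<s\le 1$) or from an asymptotic pinching argument (for $s>1$), but in a general EJA operator convexity is not available. This is the main obstacle, and the plan is to supply this monotonicity indirectly, as the content of Lemmas~\ref{lemma:Monotonicity of Relativerenyi1} and \ref{lemma:Monotonicity of Relativerenyi2}, whose proofs combine the pinching inequality of Lemma~\ref{lemma:PinchingInequality} applied to the tensor powers, the polynomial bound $v_n\le(n+1)^{|\bm{C}_\sigma|-1}$, and the Peirce-decomposition identities of Section~\ref{subsection:ConceptsinEJA}; the commutativity of $\kappa_{\sigma^{\otimes n}}(\rho^{\otimes n})$ with $\sigma^{\otimes n}$ is what makes a purely scalar (classical) estimate possible once the unavoidable loss has been absorbed into $v_n$ and killed by the factor $\tfrac1n$. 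Granting these lemmas, dividing by $n$ and letting $n\to\infty$ completes the proof.
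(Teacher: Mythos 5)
Your overall architecture is the paper's: prove the two inequalities at the level of $\rho^{\otimes n},\sigma^{\otimes n}$, use additivity (Lemma~\ref{theorem:additivity}) and the bound $|\bm{C}_{\sigma^{\otimes n}}|\le (n+1)^{|\bm{C}_\sigma|-1}$ (Lemma~\ref{lemma:numberofeigenvalues}), and divide by $n$. Your treatment of the second inequality is essentially the paper's proof of Lemma~\ref{lemma:Monotonicity of Relativerenyi2}: identify $D_{1+s}(\kappa_{\sigma^{\otimes n}}(\rho^{\otimes n})||\sigma^{\otimes n})$ with a classical quantity via the pinched measurement, coarse-grain away the pinching index by classical data processing, and pay the $\tfrac{1+s}{s}\log|\bm{C}_{\sigma^{\otimes n}}|$ penalty via the pinching inequality. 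That part is correct.

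The gap is in the first inequality. You correctly reduce it to $D_{1+s}(\rho||\sigma)\ge D_{1+s}(\kappa_\sigma(\rho)||\sigma)$ (Lemma~\ref{lemma:Monotonicity of Relativerenyi1}) and correctly observe that operator convexity is unavailable in a general EJA, but the mechanism you then propose --- ``the pinching inequality of Lemma~\ref{lemma:PinchingInequality} applied to the tensor powers'' plus the polynomial bound, with an $o(n)$ slack absorbed --- cannot deliver this inequality. The pinching inequality bounds $\rho$ \emph{from above} by $|\bm{C}_\sigma|\,\kappa_\sigma(\rho)$; combined with the trace--power monotonicity available here it yields estimates of the form $\mathrm{tr}\,\rho^{1+s}\circ\sigma^{-s}\lesssim |\bm{C}_\sigma|^{1+s}\,\mathrm{tr}\,\kappa_\sigma(\rho)^{1+s}\circ\sigma^{-s}$, i.e.\ the \emph{reverse} direction of what Lemma~\ref{lemma:Monotonicity of Relativerenyi1} asserts (this reverse bound is exactly what is used for the SRR entropy in Lemma~\ref{lemma:inequalityofSandwich}). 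The actual proof of Lemma~\ref{lemma:Monotonicity of Relativerenyi1} is a clean single-shot computation with no asymptotic slack: writing $\kappa_\sigma(\rho)=\sum_{i,j}\lambda_{i,j}c_{i,j}$ with $\lambda_{i,j}=\mathrm{tr}\,\rho\circ c_{i,j}/\mathrm{tr}\,c_{i,j}$, one applies the EJA Jensen inequality (Lemma~\ref{lemma:Jenseninequality}) to the convex function $t\mapsto t^{1+s}$ and the states $c_{i,j}/\mathrm{tr}\,c_{i,j}$ to get
\begin{align}
\mathrm{tr}\,\sigma^{-s}\circ\kappa_\sigma(\rho)^{1+s}
=\sum_{i,j}\mu_i^{-s}\,(\mathrm{tr}\,c_{i,j})\Bigl(\mathrm{tr}\,\rho\circ\tfrac{c_{i,j}}{\mathrm{tr}\,c_{i,j}}\Bigr)^{1+s}
\le \sum_{i,j}\mu_i^{-s}\,\mathrm{tr}\,\rho^{1+s}\circ c_{i,j}
=\mathrm{tr}\,\rho^{1+s}\circ\sigma^{-s},
\end{align}
which is the exact inequality you need. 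Without invoking Lemma~\ref{lemma:Jenseninequality} (or an equivalent convexity input at the level of the spectral weights), your plan for the first inequality does not close.
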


This Theorem~\ref{theorem:MonotonicityofRelativerenyi} is proven by the following two Lemmas.
\begin{lemma}\label{lemma:Monotonicity of Relativerenyi1}
    Let $\rho,\sigma$ be states in $\cV$.
    Then, the following inequality holds:
    \begin{align}
        D_{1+s}(\rho||\sigma)\ge D_{1+s}(\kappa_\sigma(\rho)||\sigma),\quad s>0.
    \end{align}
\end{lemma}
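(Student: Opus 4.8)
The plan is to peel the statement down to a scalar convexity estimate. Since $s>0$, the map $t\mapsto\frac1s\log t$ is strictly increasing, and $D_{1+s}(\rho\|\sigma)=\frac1s\log\Tr\rho^{1+s}\circ\sigma^{-s}$ (similarly for $\kappa_\sigma(\rho)$), so it suffices to prove the trace inequality $\Tr\rho^{1+s}\circ\sigma^{-s}\ \ge\ \Tr\kappa_\sigma(\rho)^{1+s}\circ\sigma^{-s}$. (We may assume $\sigma$ has full support, i.e.\ all its eigenvalues are positive, so that $\sigma^{-s}$ is defined by Definition~\ref{definition:fofEJAelement}; the general case follows by restricting to the support of $\sigma$.)

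Next I would expand both sides along the spectral decomposition $\sigma=\sum_i\mu_ic_i$, so $\sigma^{-s}=\sum_i\mu_i^{-s}c_i$. Using $P_{c_i}(u)=c_i$ and self-adjointness of $P_{c_i}$ (the Euclidean property (J3)), one has $\Tr(x\circ c_i)=\langle x,c_i\rangle=\Tr P_{c_i}(x)$, hence the left-hand side equals $\sum_i\mu_i^{-s}\Tr P_{c_i}(\rho^{1+s})$. For the right-hand side, Lemma~\ref{lemma:projectionPeirce} gives $P_{c_i}(\rho)\in\cV(c_i,1)$ and Lemma~\ref{lemma:orthogonalPeirce} gives $\cV(c_i,1)\circ\cV(c_j,1)=\{0\}$ for $i\neq j$; combined with Lemma~\ref{lemma:newCSOI} this shows that the idempotents occurring in the spectral decompositions of the $P_{c_i}(\rho)$ jointly form a single CSOI summing to $\sum_ic_i=u$. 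Consequently $\kappa_\sigma(\rho)^{1+s}=\sum_i\bigl(P_{c_i}(\rho)\bigr)^{1+s}$ (a block-wise Jordan power), and the right-hand side equals $\sum_i\mu_i^{-s}\Tr\bigl(P_{c_i}(\rho)\bigr)^{1+s}$. Since $\mu_i^{-s}>0$, the claim reduces to the per-idempotent inequality
\begin{align}
\Tr P_c(\rho^{p})\ \ge\ \Tr\bigl(P_c(\rho)\bigr)^{p}\qquad(p=1+s\ge1).
\end{align}

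To prove this I would write spectral decompositions $\rho=\sum_k\lambda_kd_k$ (with $\{d_k\}$ a CSOI in $\cV$) and $P_c(\rho)=\sum_l\nu_le_l$, where $\{e_l\}$ is a CSOI with $\sum_le_l=c$ by Lemma~\ref{lemma:newCSOI}. Using self-adjointness of $P_c$ and $P_c(e_l)=e_l$ (since $e_l\in\cV(c,1)$), one gets $\nu_l=\langle P_c(\rho),e_l\rangle/\Tr e_l=\langle\rho,e_l\rangle/\Tr e_l=\sum_k\lambda_kw_{l,k}$ with $w_{l,k}:=\langle d_k,e_l\rangle/\Tr e_l$; by self-duality of the positive cone $w_{l,k}\ge0$, and $\sum_kw_{l,k}=\langle u,e_l\rangle/\Tr e_l=1$. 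Convexity of $t\mapsto t^{p}$ on $[0,\infty)$ for $p\ge1$ then gives $\nu_l^{p}\le\sum_k\lambda_k^{p}w_{l,k}$, whence $\Tr\bigl(P_c(\rho)\bigr)^{p}=\sum_l\nu_l^{p}\Tr e_l\le\sum_k\lambda_k^{p}\langle d_k,\textstyle\sum_le_l\rangle=\sum_k\lambda_k^{p}\langle d_k,c\rangle=\Tr P_c(\rho^{p})$, the last equality by linearity and $\Tr P_c(d_k)=\langle d_k,P_c(u)\rangle=\langle d_k,c\rangle$. This is exactly the EJA analogue of the operator Jensen trace inequality $\Tr E\rho^{p}E\ge\Tr(E\rho E)^{p}$, with scalar convexity together with the Peirce and spectral structure replacing operator convexity of $t\mapsto t^{p}$ (which fails for $s>1$ and is unavailable in EJAs anyway).

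I expect the main obstacle to be the bookkeeping in the second step: one must check carefully that the block structure really lets the Jordan power of $\kappa_\sigma(\rho)$ split as $\sum_i(P_{c_i}\rho)^{1+s}$ with each summand supported in $\cV(c_i,1)$ — this is where Lemma~\ref{lemma:orthogonalPeirce}, Lemma~\ref{lemma:newCSOI}, and the case of coinciding eigenvalues across blocks all have to be handled — and, once that is in place, to recognize that what remains is a genuine scalar convexity estimate rather than anything requiring operator monotonicity or operator convexity. Everything after that is elementary.
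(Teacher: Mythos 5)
Your proposal is correct and follows essentially the same route as the paper: the paper also expands along the joint CSOI $\{c_{i,j}\}$ obtained from Lemma~\ref{lemma:newCSOI} (so that $\lambda_{i,j}=\mathrm{tr}\,\rho\circ\frac{c_{i,j}}{\mathrm{tr}\,c_{i,j}}$) and then applies its Jensen inequality (Lemma~\ref{lemma:Jenseninequality}) to the normalized idempotents, which is exactly your scalar-convexity step with the weights $w_{l,k}=\langle d_k,e_l\rangle/\mathrm{tr}\,e_l$. The only difference is organizational — you isolate the per-idempotent inequality $\mathrm{tr}\,P_c(\rho^{p})\ge\mathrm{tr}\,(P_c(\rho))^{p}$ as an intermediate statement, whereas the paper runs the whole computation in one chain.
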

\begin{lemma}\label{lemma:Monotonicity of Relativerenyi2}
    Let $\rho,\sigma$ be states in $\cV$.
    Also, let $\bm{M}=\{M_i\}$ be a measurement in $\cV$.
    Then, the following inequality holds:
    \begin{align}
        D_{1+s}(\kappa_\sigma(\rho)||\sigma)\ge D_{1+s}(P^{\bm{M}}_\rho||P^{\bm{M}}_\sigma)-\frac{1+s}{s}\log|\bm{C}_\sigma|\quad(s>0).
    \end{align}
\end{lemma}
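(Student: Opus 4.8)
The plan is to first convert the left-hand side into a classical quantity via Lemma~\ref{lemma:RepresentEntropies}, and then to pay for the mismatch between the pinched measurement $\bm{M}^\rho_\sigma$ and the original measurement $\bm{M}$ using the pinching inequality (Lemma~\ref{lemma:PinchingInequality}) together with a scalar convexity estimate. Fix the spectral decompositions $\sigma=\sum_i\mu_i e_i$ and $P_{e_i}(\rho)=\sum_j\lambda_{i,j}c_{i,j}$, so that $\{c_{i,j}\}$ is a CSOI with $\sum_j c_{i,j}=e_i$ (Lemma~\ref{lemma:newCSOI}), $\kappa_\sigma(\rho)=\sum_{i,j}\lambda_{i,j}c_{i,j}$, and $\bm{M}^\rho_\sigma=\{P_{c_{i,j}}(M_k)\}_{i,j,k}$ by Definition~\ref{definition:pinchedmeasurement}. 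Writing $m_{i,j,k}:=\mathrm{tr}\,c_{i,j}\circ M_k=\mathrm{tr}\,P_{c_{i,j}}(M_k)\ge 0$, the computation inside the proof of Lemma~\ref{lemma:RepresentEntropies} (eqs.~\eqref{equation:pinchingMeasurement1}--\eqref{equation:pinchingMeasurement2}) gives
\begin{align}
D_{1+s}(\kappa_\sigma(\rho)||\sigma)=\frac{1}{s}\log\sum_{i,j,k}\lambda_{i,j}^{1+s}\mu_i^{-s}\,m_{i,j,k}=\frac{1}{s}\log\sum_k\sum_{i,j}\bigl(\lambda_{i,j}m_{i,j,k}\bigr)^{1+s}\bigl(\mu_i m_{i,j,k}\bigr)^{-s}.
\end{align}

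The first step is two aggregation identities, one per outcome $k$. Using $\sum_j c_{i,j}=e_i$, $\sum_i\mu_i e_i=\sigma$, the Euclidean property in the form $\langle x\circ y,u\rangle=\langle x,y\rangle$, and self-duality $\cQ^\ast=\cQ$ (Lemma~\ref{lemma:selfduality}),
\begin{align}
\sum_{i,j}\mu_i\,m_{i,j,k}&=\Bigl\langle\sum_i\mu_i e_i,M_k\Bigr\rangle=\langle\sigma,M_k\rangle=P^{\bm{M}}_\sigma(k),\\
\sum_{i,j}\lambda_{i,j}\,m_{i,j,k}&=\Bigl\langle\sum_{i,j}\lambda_{i,j}c_{i,j},M_k\Bigr\rangle=\langle\kappa_\sigma(\rho),M_k\rangle\ \ge\ \frac{1}{|\bm{C}_\sigma|}\,\langle\rho,M_k\rangle=\frac{1}{|\bm{C}_\sigma|}\,P^{\bm{M}}_\rho(k),
\end{align}
where the inequality is the pinching inequality $|\bm{C}_\sigma|\,\kappa_\sigma(\rho)\ge\rho$ (Lemma~\ref{lemma:PinchingInequality} applied to the CSOI $\bm{C}_\sigma=\{e_i\}$, whose cardinality is the number of distinct eigenvalues of $\sigma$) paired with $M_k\in\cQ^\ast=\cQ$.

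The second step is purely scalar: for nonnegative reals $a_\alpha,b_\alpha$ and $s>0$, convexity of $t\mapsto t^{1+s}$ (a special case of Lemma~\ref{lemma:Jenseninequality} in the classical system, or a one-line Jensen argument writing $\sum_\alpha a_\alpha^{1+s}b_\alpha^{-s}=\bigl(\sum_\beta b_\beta\bigr)\sum_\alpha\frac{b_\alpha}{\sum_\beta b_\beta}(a_\alpha/b_\alpha)^{1+s}$) yields $\sum_\alpha a_\alpha^{1+s}b_\alpha^{-s}\ge\bigl(\sum_\alpha a_\alpha\bigr)^{1+s}\bigl(\sum_\alpha b_\alpha\bigr)^{-s}$. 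Applying this with $a_{i,j}=\lambda_{i,j}m_{i,j,k}$, $b_{i,j}=\mu_i m_{i,j,k}$, then inserting the two aggregation identities and using monotonicity of $t\mapsto t^{1+s}$ on $[0,\infty)$,
\begin{align}
\sum_{i,j}\bigl(\lambda_{i,j}m_{i,j,k}\bigr)^{1+s}\bigl(\mu_i m_{i,j,k}\bigr)^{-s}\ \ge\ \Bigl(\sum_{i,j}\lambda_{i,j}m_{i,j,k}\Bigr)^{1+s}\Bigl(\sum_{i,j}\mu_i m_{i,j,k}\Bigr)^{-s}\ \ge\ |\bm{C}_\sigma|^{-(1+s)}\,P^{\bm{M}}_\rho(k)^{1+s}\,P^{\bm{M}}_\sigma(k)^{-s}.
\end{align}
Summing over $k$ and applying $\frac{1}{s}\log(\cdot)$, which is monotone increasing for $s>0$, gives
\begin{align}
D_{1+s}(\kappa_\sigma(\rho)||\sigma)\ \ge\ \frac{1}{s}\log\sum_k P^{\bm{M}}_\rho(k)^{1+s}P^{\bm{M}}_\sigma(k)^{-s}-\frac{1+s}{s}\log|\bm{C}_\sigma|=D_{1+s}(P^{\bm{M}}_\rho||P^{\bm{M}}_\sigma)-\frac{1+s}{s}\log|\bm{C}_\sigma|,
\end{align}
which is the assertion.

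Most of the work is bookkeeping already carried out for Lemma~\ref{lemma:RepresentEntropies}; I would still double-check the degenerate terms (if $\mu_i>0$, which is needed for $\sigma^{-s}$ to make sense, then $\mu_i m_{i,j,k}=0$ forces $m_{i,j,k}=0$ and hence $\lambda_{i,j}m_{i,j,k}=0$, so such terms drop harmlessly on both sides with the usual conventions) and the element-level identities $P_{c_{i,j}}(\rho)=\lambda_{i,j}c_{i,j}$ and $\sum_{i,j}\lambda_{i,j}c_{i,j}=\kappa_\sigma(\rho)$, which rest on the Peirce relations $\cV(c_{i,j},1)\subset\cV(e_i,1)$, Lemma~\ref{lemma:orthogonalPeirce}, and Lemma~\ref{lemma:projectionPeirce}. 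The one genuinely new point — and the step I expect to be the crux — is the second aggregation identity: the pinching inequality supplies exactly the constant $|\bm{C}_\sigma|$ needed to control $\langle\rho,M_k\rangle$ from above by $|\bm{C}_\sigma|\langle\kappa_\sigma(\rho),M_k\rangle$, which is precisely the loss recorded in the statement, and it is this loss that becomes negligible after the $\frac1n\log$ normalization thanks to the polynomial bound $|\bm{C}_{\sigma^{\otimes n}}|\le(n+1)^{d-1}$ (Lemma~\ref{lemma:numberofeigenvalues}) when the lemma is fed into Theorem~\ref{theorem:MonotonicityofRelativerenyi}.
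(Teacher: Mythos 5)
Your proof is correct and follows essentially the same route as the paper's: both expand $D_{1+s}(\kappa_\sigma(\rho)\|\sigma)$ over the pinched CSOI $\{c_{i,j}\}$ with $\sum_k M_k=u$ inserted, identify the terms with the classical quantities $\mathrm{tr}\,\rho\circ P_{c_{i,j}}(M_k)=\lambda_{i,j}m_{i,j,k}$ and $\mathrm{tr}\,\sigma\circ P_{c_{i,j}}(M_k)=\mu_i m_{i,j,k}$, coarse-grain over $(i,j)$ via the classical Jensen/data-processing inequality $\sum_\alpha a_\alpha^{1+s}b_\alpha^{-s}\ge(\sum_\alpha a_\alpha)^{1+s}(\sum_\alpha b_\alpha)^{-s}$, and finally pay the factor $|\bm{C}_\sigma|^{-(1+s)}$ with the pinching inequality $|\bm{C}_\sigma|\kappa_\sigma(\rho)\ge\rho$ tested against $M_k$. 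The only cosmetic difference is that you invoke self-duality to conclude $\langle|\bm{C}_\sigma|\kappa_\sigma(\rho)-\rho,M_k\rangle\ge0$ directly, whereas the paper routes this through positivity of $P_{\sqrt{M_k}}$; these are equivalent.
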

These two Lemmas are proven in Appendix~\ref{appendix:proofofRelativeRenyi}.
Here we prove Theorem~\ref{theorem:MonotonicityofRelativerenyi} by assuming Lemma~\ref{lemma:Monotonicity of Relativerenyi1} and Lemma~\ref{lemma:Monotonicity of Relativerenyi2}.

\begin{proof}[proof of Theorem~\ref{theorem:MonotonicityofRelativerenyi}]
    Combining Lemma~\ref{lemma:Monotonicity of Relativerenyi1} and Lemma~\ref{lemma:Monotonicity of Relativerenyi2},
    we obtain
    \begin{align}
        D_{1+s}(\rho||\sigma)\ge D_{1+s}(\kappa_\sigma(\rho)||\sigma)\ge D_{1+s}(P^{\bm{M}}_\rho||P^{\bm{M}}_\sigma)-\frac{1+s}{s}\log|\bm{C}_\sigma|\quad (s>0).\label{equation:MonotonicityofRelativerenyi1}
    \end{align}
    Now, we apply the inequality \eqref{equation:MonotonicityofRelativerenyi1} to the states $\rho^{\otimes n},\sigma^{\otimes n}$ 
    and an arbitrary measurement $\bm{M}^n=\{M^n_i\}_i$ in $\cV^{\otimes n}$.
    Then, we obtain
    \begin{align}
        D_{1+s}(\rho^{\otimes n}||\sigma^{\otimes n})\ge D_{1+s}(\kappa_{\sigma^{\otimes n}}(\rho^{\otimes n})||\sigma^{\otimes n}) & \ge D_{1+s}(P^{\bm{M}^n}_{\rho^{\otimes n}}||P^{\bm{M}^n}_{\sigma^{\otimes n}})-\frac{1+s}{s}\log|\bm{C}_{\sigma^{\otimes n}}|     \\
                                                                                                                                    & \stackrel{(a)}{\ge} D_{1+s}(P^{\bm{M}^n}_{\rho^{\otimes n}}||P^{\bm{M}^n}_{\sigma^{\otimes n}})-\frac{1+s}{s}\log(n+1)^{d-1},\label{equation:MonotonicityofRelativerenyi_1}
    \end{align}
    where $d:=|\bm{C}_\sigma|$.
    The equation (a) is given by Lemma~\ref{lemma:numberofeigenvalues}.
    Therefore, the measurement $\{M_{i_1}\otimes \cdots \otimes M_{i_n}\}$ of $n$-shot use of $\bm{M}=\{M_i\}$ 
    satisfies \eqref{equation:MonotonicityofRelativerenyi_1} instead of $\bm{M}^n$.
    Then, we obtain
    \begin{align}
        nD_{1+s}(\rho||\sigma)\stackrel{(a)}{=}D_{1+s}(\rho^{\otimes n}||\sigma^{\otimes n})\ge D_{1+s}(\kappa_{\sigma^{\otimes n}}(\rho^{\otimes n})||\sigma^{\otimes n}) & \ge D_{1+s}(P^{\bm{M}^n}_{\rho^{\otimes n}}||P^{\bm{M}^n}_{\sigma^{\otimes n}})-\frac{1+s}{s}\log(n+1)^{d-1} \\
                                                                                                                                                                           & \stackrel{(b)}{\ge} nD_{1+s}(P^{\bm{M}}_\rho||P^{\bm{M}}_\sigma)-\frac{1+s}{s}\log(n+1)^{d-1}.\label{equation:MonotonicityofRelativerenyi_2}
    \end{align}
    The equation (a) is given by the additivity Lemma~\ref{theorem:additivity}.
    Also, the equation (b) is given by the additivity of $D_{1+s}(P^{\bm{M}^n}_{\rho^{\otimes n}}||P^{\bm{M}^n}_{\sigma^{\otimes n}})$.

    By deviding the inequality \eqref{equation:MonotonicityofRelativerenyi_2} by $n$, we obtain
    \begin{align}
        D_{1+s}(\rho||\sigma)\ge \frac{1}{n}D_{1+s}(\kappa_{\sigma^{\otimes n}}(\rho^{\otimes n})||\sigma^{\otimes n})\ge D_{1+s}(P^{\bm{M}}_\rho||P^{\bm{M}}_\sigma)-\frac{1+s}{ns}\log(n+1)^{d-1}\label{equation:MonotonicityofRelativerenyi_3}.
    \end{align}
    The final term $\frac{1}{n}\log(n+1)^{d-1}$ converges to $0$ by taking $n\to \infty$.
   As a result, the desired inequality is proven.
\end{proof}

\subsection{Sandwiched Relative R\'{e}nyi entropy}\label{subsection:sandwichedrelative}
In this part, we mention about information inequalities of SRR entropy.
In particular, we give the monotonicity of SRR entropy in TPCP map as follows:
\begin{theorem}[Monotonicity of SRR entropy by TPCP map]\label{theorem:MonotonicityofSandwich}
    Let $\rho,\sigma$ be the states in $\cV$.
    Also, let $\kappa:\cV\to \cV$ be the TPCP map.
    Then, the following inequality holds.
    \begin{align}
        \underline{D}_{1+s}(\rho||\sigma)\ge \underline{D}_{1+s}(\kappa(\rho)||\kappa(\sigma)),\quad s>0.\label{equation:monotonicityinTPCPmap}
    \end{align}
\end{theorem}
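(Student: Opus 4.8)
The plan is to deduce the monotonicity entirely from the asymptotic ``measured'' representation of SRR entropy, i.e. from Lemma~\ref{lemma:limofSandwich},
\begin{align}
\underline{D}_{1+s}(\rho\|\sigma)=\lim_{n\to\infty}\frac{1}{n}\max_{\bm{M}^n}D_{1+s}(P^{\bm{M}^n}_{\rho^{\otimes n}}\|P^{\bm{M}^n}_{\sigma^{\otimes n}}),\quad s>0,
\end{align}
combined with the elementary fact that restricting the family of measurements one optimizes over can only decrease the classical relative R\'enyi entropy. The core of the argument is thus to turn a measurement on the \emph{output} system of $\kappa$ into a measurement on the \emph{input} system that yields exactly the same pair of distributions; everything else is bookkeeping.

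First I would record the properties of $\kappa$ that are needed. By Lemma~\ref{lemma:TPCPtensor}, $\kappa^{\otimes n}\colon\cV^{\otimes n}\to\cV^{\otimes n}$ is again TPCP, and with respect to the product inner product its adjoint is $(\kappa^{\otimes n})^{*}=(\kappa^{*})^{\otimes n}$. Two facts about $(\kappa^{\otimes n})^{*}$ matter. (i) It is a positive map: using self-duality of the canonical cone (Lemma~\ref{lemma:selfduality}), for $y\in\cQ^{*}=\cQ$ and $x\in\cQ$ we have $\langle(\kappa^{\otimes n})^{*}(y),x\rangle=\langle y,\kappa^{\otimes n}(x)\rangle\ge0$, hence $(\kappa^{\otimes n})^{*}(y)\in\cQ^{**}=\cQ$. (ii) It preserves the unit, $(\kappa^{\otimes n})^{*}(u)=u$, by Lemma~\ref{lemma:identitypreserving} applied to the TPCP map $\kappa^{\otimes n}$. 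Consequently, for any measurement $\bm{M}^n=\{M^n_i\}_i$ on $\cV^{\otimes n}$, the pulled-back family $\widetilde{\bm{M}}^n:=\{(\kappa^{\otimes n})^{*}(M^n_i)\}_i$ is again a measurement on $\cV^{\otimes n}$: each element lies in $\cQ^{*}$ by (i), and $\sum_i(\kappa^{\otimes n})^{*}(M^n_i)=(\kappa^{\otimes n})^{*}(u)=u$ by (ii).

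Next I would match the distributions. Since $\kappa(\rho)^{\otimes n}=\kappa^{\otimes n}(\rho^{\otimes n})$, and likewise for $\sigma$, for every index $i$ one has
\begin{align}
P^{\bm{M}^n}_{\kappa(\rho)^{\otimes n}}(i)=\langle M^n_i,\kappa^{\otimes n}(\rho^{\otimes n})\rangle=\langle(\kappa^{\otimes n})^{*}(M^n_i),\rho^{\otimes n}\rangle=P^{\widetilde{\bm{M}}^n}_{\rho^{\otimes n}}(i),
\end{align}
and the identical relation with $\sigma$ replacing $\rho$. Therefore $D_{1+s}(P^{\bm{M}^n}_{\kappa(\rho)^{\otimes n}}\|P^{\bm{M}^n}_{\kappa(\sigma)^{\otimes n}})=D_{1+s}(P^{\widetilde{\bm{M}}^n}_{\rho^{\otimes n}}\|P^{\widetilde{\bm{M}}^n}_{\sigma^{\otimes n}})\le\max_{\bm{N}^n}D_{1+s}(P^{\bm{N}^n}_{\rho^{\otimes n}}\|P^{\bm{N}^n}_{\sigma^{\otimes n}})$. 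Taking the maximum over $\bm{M}^n$, dividing by $n$, and letting $n\to\infty$, the left side tends to $\underline{D}_{1+s}(\kappa(\rho)\|\kappa(\sigma))$ and the right side to $\underline{D}_{1+s}(\rho\|\sigma)$ by Lemma~\ref{lemma:limofSandwich}; this is exactly \eqref{equation:monotonicityinTPCPmap}.

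The genuine difficulty is not located in this argument but upstream, in Lemma~\ref{lemma:limofSandwich}: the asymptotic representation of SRR entropy is where the EJA structure actually does the work (via additivity of SRR entropy, the pinching inequality, and the polynomial eigenvalue bound $|\bm{C}_{x^{\otimes n}}|\le(n+1)^{d-1}$). Given that lemma, the only point in the present proof requiring a small amount of care is the positivity of the adjoint of a completely positive map in the EJA setting, which as indicated above is immediate from self-duality of the canonical cone; the rest is a direct transcription of the classical data-processing inequality for R\'enyi divergences.
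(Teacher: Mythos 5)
Your proposal is correct and follows essentially the same route as the paper: pull back each measurement on the output via the adjoint $(\kappa^{\otimes n})^{*}$ (a valid measurement by positivity from self-duality and unit preservation from Lemma~\ref{lemma:identitypreserving}), note the induced distributions coincide, and pass to the limit using Lemma~\ref{lemma:limofSandwich}. Your justification of the adjoint's positivity is in fact spelled out a bit more carefully than in the paper's own proof.
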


In order to prove Theorem~\ref{theorem:MonotonicityofSandwich}, we organize the following three lemmas.
The first and second lemmas show the third lemma.
The third lemma shows Theorem~\ref{theorem:MonotonicityofSandwich}.
\begin{lemma}\label{lemma:inequalityofSandwich}
    Let $\rho,\sigma$ be the states in $\cV$.
    Then, the following inequality holds.
    \begin{align}
        D_{1+s}(\kappa_\sigma(\rho)||\sigma)+\frac{1+s}{s}\log|\bm{C}_\sigma|\ge \underline{D}_{1+s}(\rho||\sigma)\ge D_{1+s}(\kappa_\sigma(\rho)||\sigma),\quad s>0.
    \end{align}
\end{lemma}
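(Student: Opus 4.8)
The plan is to establish the two one-sided inequalities separately. Writing $\beta:=\frac{-s}{2(1+s)}$, the definition of SRR entropy (Definition~\ref{definition:informationquantities}) gives $\underline{D}_{1+s}(\rho||\sigma)=\frac1s\log\mathrm{tr}\big(P_{\sigma^{\beta}}(\rho)\big)^{1+s}$ and $\underline{D}_{1+s}(\kappa_\sigma(\rho)||\sigma)=\frac1s\log\mathrm{tr}\big(P_{\sigma^{\beta}}(\kappa_\sigma(\rho))\big)^{1+s}$, and since $\frac1s\log(\cdot)$ is increasing for $s>0$ both inequalities reduce to comparisons of $\mathrm{tr}\big(P_{\sigma^{\beta}}(\rho)\big)^{1+s}$ with $\mathrm{tr}\big(P_{\sigma^{\beta}}(\kappa_\sigma(\rho))\big)^{1+s}$. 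Throughout I will use that $\kappa_\sigma(\rho)$ and $\sigma$ behave classically (Lemma~\ref{lemma:classicallyofpinching}), so that $\underline{D}_{1+s}(\kappa_\sigma(\rho)||\sigma)=D_{1+s}(\kappa_\sigma(\rho)||\sigma)$ by Lemma~\ref{lemma:classicallyRenyies}; hence the target lemma is exactly the pair of bounds $D_{1+s}(\kappa_\sigma(\rho)||\sigma)\le\underline{D}_{1+s}(\rho||\sigma)\le D_{1+s}(\kappa_\sigma(\rho)||\sigma)+\frac{1+s}{s}\log|\bm{C}_\sigma|$.

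For the upper bound $\underline{D}_{1+s}(\rho||\sigma)\le D_{1+s}(\kappa_\sigma(\rho)||\sigma)+\frac{1+s}{s}\log|\bm{C}_\sigma|$ I would start from the pinching inequality $\rho\le|\bm{C}_\sigma|\,\kappa_\sigma(\rho)$ (Lemma~\ref{lemma:PinchingInequality}), apply the positive, hence order-preserving, linear map $P_{\sigma^{\beta}}$ (Lemma~\ref{lemma:positivityofQradratic}) to obtain $P_{\sigma^{\beta}}(\rho)\le|\bm{C}_\sigma|\,P_{\sigma^{\beta}}(\kappa_\sigma(\rho))$, and then invoke monotonicity of $Y\mapsto\mathrm{tr}(Y^{1+s})$ on $\cQ$ (i.e.\ $0\le A\le B\Rightarrow\mathrm{tr}(A^{1+s})\le\mathrm{tr}(B^{1+s})$) together with the scaling $\mathrm{tr}\big((cY)^{1+s}\big)=c^{1+s}\mathrm{tr}(Y^{1+s})$ to get $\mathrm{tr}\big(P_{\sigma^{\beta}}(\rho)\big)^{1+s}\le|\bm{C}_\sigma|^{1+s}\,\mathrm{tr}\big(P_{\sigma^{\beta}}(\kappa_\sigma(\rho))\big)^{1+s}$. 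Taking $\frac1s\log(\cdot)$ and using $\underline{D}_{1+s}(\kappa_\sigma(\rho)||\sigma)=D_{1+s}(\kappa_\sigma(\rho)||\sigma)$ finishes this direction.

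For the lower bound $\underline{D}_{1+s}(\rho||\sigma)\ge D_{1+s}(\kappa_\sigma(\rho)||\sigma)$ I would first record the commutation $P_{\sigma^{\beta}}\kappa_\sigma=\kappa_\sigma P_{\sigma^{\beta}}$: since $\sigma^{\beta}$ and each idempotent $c_i$ of the CSOI $\bm{C}_\sigma$ behave classically (Definition~\ref{definition:classically}, Theorem~\ref{theorem:simultaneousspectral}), the maps $L_{\sigma^{\beta}}$ and $L_{c_i}$ commute, and therefore so do the quadratic forms $P_{\sigma^{\beta}}=2L_{\sigma^{\beta}}^2-L_{\sigma^{2\beta}}$ and $P_{c_i}=2L_{c_i}^2-L_{c_i}$, which after summing over $i$ gives the claim. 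Hence, setting $X:=P_{\sigma^{\beta}}(\rho)\in\cQ$, we have $P_{\sigma^{\beta}}(\kappa_\sigma(\rho))=\kappa_\sigma(X)$, and the bound becomes $\mathrm{tr}(X^{1+s})\ge\mathrm{tr}\big(\kappa_\sigma(X)\big)^{1+s}$. To prove this last inequality I would exhibit the pinching as an average of quadratic-form symmetries: for $\epsilon\in\{\pm1\}^{|\bm{C}_\sigma|}$ put $w_\epsilon:=\sum_i\epsilon_i c_i$, so that $w_\epsilon^2=u$ by orthogonality and idempotency, and check (expanding $P_{w_\epsilon}=2L_{w_\epsilon}^2-L_u$ and averaging, using $\mathbb{E}[\epsilon_i\epsilon_j]=\delta_{ij}$) that $\kappa_\sigma=\frac{1}{2^{|\bm{C}_\sigma|}}\sum_\epsilon P_{w_\epsilon}$. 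Each $P_{w_\epsilon}$ is an invertible, trace-preserving, cone-preserving linear map with $P_{w_\epsilon}(u)=w_\epsilon^2=u$, hence a Jordan-algebra automorphism; in particular it carries spectral decompositions to spectral decompositions, so $\mathrm{tr}\big((P_{w_\epsilon}(X))^{1+s}\big)=\mathrm{tr}(X^{1+s})$. Combining this with convexity of $Y\mapsto\mathrm{tr}(Y^{1+s})$ on $\cQ$ for $s\ge0$ yields $\mathrm{tr}\big(\kappa_\sigma(X)\big)^{1+s}\le\frac{1}{2^{|\bm{C}_\sigma|}}\sum_\epsilon\mathrm{tr}\big((P_{w_\epsilon}(X))^{1+s}\big)=\mathrm{tr}(X^{1+s})$, which is what we wanted.

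The main obstacle is the absence of the operator-monotone and operator-convex toolbox that renders these estimates routine in quantum theory: in a general EJA one cannot move functional calculus freely through an order inequality. This bites in two places. First, the trace-power facts used above — monotonicity $0\le A\le B\Rightarrow\mathrm{tr}(A^{1+s})\le\mathrm{tr}(B^{1+s})$ and convexity of $Y\mapsto\mathrm{tr}(Y^{1+s})$ — must be derived from the spectral decomposition (Theorem~\ref{theorem:Spectraldecomposition}) and the eigenvalue (majorization-type) properties of the spectrum developed in Section~\ref{section:Information theorical tools}, not from operator inequalities. Second, and more structurally, the identification of the pinching $\kappa_\sigma$ with an average of Jordan automorphisms $P_{w_\epsilon}$ replaces the familiar ``pinching = average of unitary conjugations'', and verifying that $P_{w}$ with $w^2=u$ is genuinely a Jordan automorphism (a cone automorphism fixing $u$) is the key input; this rests on the properties of quadratic forms collected in Section~\ref{subsection:ConceptsinEJA}. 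Once these are in place, convexity does the remaining work.
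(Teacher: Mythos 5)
Your upper bound is essentially the paper's proof: the paper also starts from the pinching inequality $\rho\le|\bm{C}_\sigma|\kappa_\sigma(\rho)$, pushes it through $P_{\sigma^{-s/2(1+s)}}$ via Lemma~\ref{lemma:positivityofQradratic}, and then invokes exactly the trace-power monotonicity you state ($0\le x\le y\Rightarrow\mathrm{tr}x^{1+s}\le\mathrm{tr}y^{1+s}$), which it isolates as Lemma~\ref{lemma:traceineq} and proves by Jensen's inequality on the spectral decomposition. Your lower bound, however, takes a genuinely different route. The paper expands $\mathrm{tr}\,\kappa_\sigma(\rho)^{1+s}\circ\sigma^{-s}$ on the CSOI $\{c_{i,j}\}$ of $\kappa_\sigma(\rho)$, rewrites each eigenvalue as an expectation $\mathrm{tr}\,P_{\sigma^{\beta}}(\rho)\circ\frac{c_{i,j}}{\mathrm{tr}c_{i,j}}$, and applies Jensen's inequality (Lemma~\ref{lemma:Jenseninequality}) with the states $c_{i,j}/\mathrm{tr}\,c_{i,j}$ to bound this by $\mathrm{tr}\bigl(P_{\sigma^{\beta}}(\rho)\bigr)^{1+s}$. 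You instead commute $P_{\sigma^{\beta}}$ past $\kappa_\sigma$, represent the pinching as the sign-average $\kappa_\sigma=2^{-|\bm{C}_\sigma|}\sum_\epsilon P_{w_\epsilon}$ with $w_\epsilon^2=u$ (the identity checks out and is a close cousin of the one the paper uses to prove Lemma~\ref{lemma:PinchingInequality}), and conclude by convexity of $Y\mapsto\mathrm{tr}(Y^{1+s})$ together with invariance of the trace power under each $P_{w_\epsilon}$. This is the honest EJA analogue of ``pinching is an average of unitary conjugations'' and is arguably more conceptual; the paper's computation is more elementary in that it only ever calls on the scalar Jensen inequality already proved in Section~\ref{subsection:InformationQuantities}.

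Two inputs of your lower-bound argument are true but are not available in the paper as written, so you would have to supply them. First, that $P_w$ with $w^2=u$ is a Jordan automorphism (hence satisfies $P_w(X^{1+s})=(P_w(X))^{1+s}$ and preserves $\mathrm{tr}(\cdot)^{1+s}$) requires the fundamental formula $P_{P_w(x)}=P_wP_xP_w$, which the paper never states; trace preservation itself is easy from self-adjointness of $P_w$ and $P_w(u)=w^2=u$, but the homomorphism property is the substantive step. Second, convexity of $Y\mapsto\mathrm{tr}(Y^{1+s})$ on $\cQ$ must be proved; it does follow from the variational formula $\mathrm{tr}(Y^{1+s})=\sup_{\{c_j\}}\sum_j\langle c_j,Y\rangle^{1+s}$ over Jordan frames, which is implicit in the paper's proof of Lemma~\ref{lemma:traceineq}, but it is not recorded anywhere. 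A minor point: commutativity of $L_{\sigma^{\beta}}$ with $L_{c_i}$ alone does not formally give commutativity of $P_{\sigma^{\beta}}$ with $P_{c_i}$, since $P_{\sigma^{\beta}}$ involves $L_{\sigma^{2\beta}}$; you should instead note that all of $L_{\sigma^{\beta}},L_{\sigma^{2\beta}},L_{c_i}$ lie in the commutative algebra generated by $\{L_{c_j}\}_j$ because all these elements are diagonal in the CSOI of $\sigma$.
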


\begin{lemma}\label{lemma:observationinequalityofSandwich}
    Let $\rho,\sigma$ be the states in $\cV$.
    Also, let $\bm{M}$ be the measurement in $\cV$.
    Then, the following inequality holds.
    \begin{align}
        \underline{D}_{1+s}(\rho||\sigma)\ge D_{1+s}(P^{\bm{M}}_\rho||P^{\bm{M}}_\sigma),\quad s>0.
    \end{align}
\end{lemma}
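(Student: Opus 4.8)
The plan is to bootstrap the desired comparison from a single‑copy estimate up to the asymptotic regime, where Theorem~\ref{theorem:MonotonicityofRelativerenyi} already delivers the lower bound in terms of the measurement $\bm{M}$. No new structural input about EJAs is needed beyond additivity and the lemmas on pinching.

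First I would apply Lemma~\ref{lemma:inequalityofSandwich} to the tensor powers $\rho^{\otimes n}$ and $\sigma^{\otimes n}$, which gives
\begin{align}
\underline{D}_{1+s}(\rho^{\otimes n}||\sigma^{\otimes n})\ge D_{1+s}(\kappa_{\sigma^{\otimes n}}(\rho^{\otimes n})||\sigma^{\otimes n}),\qquad s>0.
\end{align}
By the additivity of the SRR entropy (Lemma~\ref{theorem:additivity}) the left-hand side equals $n\,\underline{D}_{1+s}(\rho||\sigma)$, so dividing by $n$ yields, for every $n$,
\begin{align}
\underline{D}_{1+s}(\rho||\sigma)\ge \frac{1}{n}D_{1+s}(\kappa_{\sigma^{\otimes n}}(\rho^{\otimes n})||\sigma^{\otimes n}).
\end{align}

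Next I would let $n\to\infty$ and use the second inequality of Theorem~\ref{theorem:MonotonicityofRelativerenyi}, i.e.\ $\lim_{n\to\infty}\tfrac1n D_{1+s}(\kappa_{\sigma^{\otimes n}}(\rho^{\otimes n})||\sigma^{\otimes n})\ge D_{1+s}(P^{\bm{M}}_\rho||P^{\bm{M}}_\sigma)$, which at once gives $\underline{D}_{1+s}(\rho||\sigma)\ge D_{1+s}(P^{\bm{M}}_\rho||P^{\bm{M}}_\sigma)$. To avoid invoking the existence of that limit one can instead substitute Lemma~\ref{lemma:Monotonicity of Relativerenyi2} applied to $\rho^{\otimes n},\sigma^{\otimes n}$ with the $n$-shot product measurement $\bm{M}^{\otimes n}$, bound $|\bm{C}_{\sigma^{\otimes n}}|\le (n+1)^{d-1}$ via Lemma~\ref{lemma:numberofeigenvalues} (with $d=|\bm{C}_\sigma|$), and use additivity of the classical relative R\'{e}nyi entropy, obtaining
\begin{align}
\underline{D}_{1+s}(\rho||\sigma)\ge D_{1+s}(P^{\bm{M}}_\rho||P^{\bm{M}}_\sigma)-\frac{(1+s)(d-1)}{ns}\log(n+1),
\end{align}
after which $\varliminf_{n\to\infty}$ removes the correction term.

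The substantive work lies in the inputs, not in this assembly: Lemma~\ref{lemma:inequalityofSandwich} ties the SRR and PRR entropies together through the pinching map and the quadratic forms $P_x$, while Theorem~\ref{theorem:MonotonicityofRelativerenyi} (through Lemma~\ref{lemma:Monotonicity of Relativerenyi2}) rests on the pinching inequality (Lemma~\ref{lemma:PinchingInequality}) and the measurement representation of pinched entropies (Lemma~\ref{lemma:RepresentEntropies}). Granting those, the only delicate point is sign bookkeeping that forces $s>0$: it makes the $\tfrac{1+s}{s}$-weighted logarithmic corrections point in the favourable direction and vanish after normalization by $n$, and it is also why the product measurement $\bm{M}^{\otimes n}$ is chosen so that the classical term tensorizes exactly. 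I would expect the write-up itself to be short; the genuine obstacle was already cleared in proving Lemma~\ref{lemma:inequalityofSandwich} and Lemma~\ref{lemma:Monotonicity of Relativerenyi2}.
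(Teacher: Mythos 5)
Your proposal is correct and follows essentially the same route as the paper: the paper likewise applies Lemma~\ref{lemma:inequalityofSandwich} to $\rho^{\otimes n},\sigma^{\otimes n}$, uses additivity and the eigenvalue-count bound to identify $\underline{D}_{1+s}(\rho||\sigma)$ with $\lim_{n\to\infty}\frac1n D_{1+s}(\kappa_{\sigma^{\otimes n}}(\rho^{\otimes n})||\sigma^{\otimes n})$, and then invokes the second inequality of Theorem~\ref{theorem:MonotonicityofRelativerenyi} to conclude. Your alternative finite-$n$ bookkeeping via Lemma~\ref{lemma:Monotonicity of Relativerenyi2} with the product measurement $\bm{M}^{\otimes n}$ is exactly the computation hidden inside the paper's proof of Theorem~\ref{theorem:MonotonicityofRelativerenyi}, so there is no substantive difference.
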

The proof of Lemma~\ref{lemma:inequalityofSandwich} and Lemma~\ref{lemma:observationinequalityofSandwich}
are provided in Appendix~\ref{appendix:proofofSandwich}.
From these two lemmas, we obtain the following lemma.
\begin{lemma}\label{lemma:limofSandwich}
    Let $\rho^{\otimes n},\sigma^{\otimes n}$ be states in $\cV^{\otimes n}$.
    Also, let $\bm{M}^n$ be a measurement in $\cV^{\otimes n}$.
    Then, the following equality holds.
    \begin{align}
        \underline{D}_{1+s}(\rho||\sigma)=\lim_{n\to\infty}\frac{1}{n}\max_{\bm{M}^n}D_{1+s}(P^{\bm{M}^n}_{\rho^{\otimes n}}||P^{\bm{M}^n}_{\sigma^{\otimes n}}),\quad s>0.\label{equation:limofSandwich}
    \end{align}
    In addition, the following equality holds.
    \begin{align}
        \underline{D}_{1+s}(\rho||\sigma)&=\lim_{n\to\infty}\frac{1}{n}D_{1+s}(\kappa_{\sigma^{\otimes n}}(\rho^{\otimes n})||\sigma^{\otimes n})\label{equation:limofSandwich2}\\
        &=\lim_{n\to\infty}\frac{1}{n}D_{1+s}(P^{I^{\rho^{\otimes n}}_{\sigma^{\otimes n}}}_{\rho^{\otimes n}}||P^{I^{\rho^{\otimes n}}_{\sigma^{\otimes n}}}_{\sigma^{\otimes n}}).\label{equation:limofSandwich3}
    \end{align}
    Therefore, the family $\{I^{\rho^{\otimes n}}_{\sigma^{\otimes n}}\}$
    can be selected as a measurement $\bm{M}^n$ in \eqref{equation:limofSandwich}.
\end{lemma}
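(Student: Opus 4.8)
The plan is to prove \eqref{equation:limofSandwich2} first, deduce \eqref{equation:limofSandwich3} from it by a direct rewriting, and then obtain \eqref{equation:limofSandwich} together with the final assertion by combining everything with Lemma~\ref{lemma:observationinequalityofSandwich}; throughout I fix $s>0$ and set $d := |\bm{C}_\sigma|$. For \eqref{equation:limofSandwich2}, I would apply Lemma~\ref{lemma:inequalityofSandwich} to the pair $\rho^{\otimes n},\sigma^{\otimes n}$ and use additivity of SRR entropy (Lemma~\ref{theorem:additivity}) to rewrite $\underline{D}_{1+s}(\rho^{\otimes n}||\sigma^{\otimes n}) = n\underline{D}_{1+s}(\rho||\sigma)$, obtaining $D_{1+s}(\kappa_{\sigma^{\otimes n}}(\rho^{\otimes n})||\sigma^{\otimes n}) + \frac{1+s}{s}\log|\bm{C}_{\sigma^{\otimes n}}| \ge n\underline{D}_{1+s}(\rho||\sigma) \ge D_{1+s}(\kappa_{\sigma^{\otimes n}}(\rho^{\otimes n})||\sigma^{\otimes n})$. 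Bounding $|\bm{C}_{\sigma^{\otimes n}}| \le (n+1)^{d-1}$ by Lemma~\ref{lemma:numberofeigenvalues} and dividing by $n$ gives $\frac{1}{n}D_{1+s}(\kappa_{\sigma^{\otimes n}}(\rho^{\otimes n})||\sigma^{\otimes n}) \le \underline{D}_{1+s}(\rho||\sigma) \le \frac{1}{n}D_{1+s}(\kappa_{\sigma^{\otimes n}}(\rho^{\otimes n})||\sigma^{\otimes n}) + \frac{(1+s)(d-1)}{sn}\log(n+1)$; letting $n\to\infty$ makes the correction term vanish, which yields \eqref{equation:limofSandwich2} and, in particular, the existence of that limit.

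Then \eqref{equation:limofSandwich3} follows at once: applying Lemma~\ref{lemma:RepresentEntropies} at the $n$-copy level with the trivial measurement $\bm{M}=\{u\}$, for which $\bm{M}^{\rho^{\otimes n}}_{\sigma^{\otimes n}} = I^{\rho^{\otimes n}}_{\sigma^{\otimes n}}$ by Definition~\ref{definition:pinchedmeasurement}, rewrites $D_{1+s}(\kappa_{\sigma^{\otimes n}}(\rho^{\otimes n})||\sigma^{\otimes n})$ as $D_{1+s}(P^{I^{\rho^{\otimes n}}_{\sigma^{\otimes n}}}_{\rho^{\otimes n}}||P^{I^{\rho^{\otimes n}}_{\sigma^{\otimes n}}}_{\sigma^{\otimes n}})$. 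For \eqref{equation:limofSandwich} and the final claim, I would use Lemma~\ref{lemma:observationinequalityofSandwich} applied to $\rho^{\otimes n},\sigma^{\otimes n}$ with an arbitrary measurement $\bm{M}^n$, which (with additivity again) gives $\underline{D}_{1+s}(\rho||\sigma) \ge \frac{1}{n}D_{1+s}(P^{\bm{M}^n}_{\rho^{\otimes n}}||P^{\bm{M}^n}_{\sigma^{\otimes n}})$ for every $\bm{M}^n$, hence $\underline{D}_{1+s}(\rho||\sigma) \ge \frac{1}{n}\max_{\bm{M}^n}D_{1+s}(P^{\bm{M}^n}_{\rho^{\otimes n}}||P^{\bm{M}^n}_{\sigma^{\otimes n}})$ for all $n$, and so $\underline{D}_{1+s}(\rho||\sigma) \ge \limsup_{n\to\infty}(\cdots)$. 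Conversely, since $I^{\rho^{\otimes n}}_{\sigma^{\otimes n}}$ is one admissible choice of $\bm{M}^n$, the maximized quantity is at least $\frac{1}{n}D_{1+s}(P^{I^{\rho^{\otimes n}}_{\sigma^{\otimes n}}}_{\rho^{\otimes n}}||P^{I^{\rho^{\otimes n}}_{\sigma^{\otimes n}}}_{\sigma^{\otimes n}})$, whose limit equals $\underline{D}_{1+s}(\rho||\sigma)$ by \eqref{equation:limofSandwich3}, forcing $\liminf_{n\to\infty}(\cdots) \ge \underline{D}_{1+s}(\rho||\sigma)$; combining the two bounds gives \eqref{equation:limofSandwich} with the limit existing and attained by $\{I^{\rho^{\otimes n}}_{\sigma^{\otimes n}}\}$.

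Once Lemmas~\ref{lemma:inequalityofSandwich}, \ref{lemma:observationinequalityofSandwich} and \ref{lemma:RepresentEntropies} are in hand, the argument is essentially bookkeeping. The only point requiring care is the control of the entropy-gap term $\frac{1+s}{s}\log|\bm{C}_{\sigma^{\otimes n}}|$: the scheme works precisely because, by Lemma~\ref{lemma:numberofeigenvalues}, this term grows only polynomially in $n$, hence sublinearly after dividing by $n$, so it disappears in the Cesàro-type limit. I would also note that it is irrelevant whether the maximum over $\bm{M}^n$ in \eqref{equation:limofSandwich} is literally attained, since both the upper bound $\underline{D}_{1+s}(\rho||\sigma) \ge \frac{1}{n}\sup_{\bm{M}^n}(\cdots)$ and the lower bound via $I^{\rho^{\otimes n}}_{\sigma^{\otimes n}}$ hold verbatim for the supremum; in fact this remark is what justifies the last sentence of the statement.
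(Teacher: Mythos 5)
Your proposal is correct and follows essentially the same route as the paper: it rests on the same ingredients (Lemma~\ref{lemma:inequalityofSandwich}, Lemma~\ref{lemma:observationinequalityofSandwich}, additivity from Lemma~\ref{theorem:additivity}, the eigenvalue count $|\bm{C}_{\sigma^{\otimes n}}|\le (n+1)^{d-1}$ from Lemma~\ref{lemma:numberofeigenvalues}, and Lemma~\ref{lemma:RepresentEntropies} to identify the pinched quantity with the classical R\'{e}nyi entropy for $I^{\rho^{\otimes n}}_{\sigma^{\otimes n}}$), and the same divide-by-$n$ squeeze in which the $\frac{1+s}{ns}\log(n+1)^{d-1}$ gap vanishes. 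The only difference is cosmetic ordering — you establish \eqref{equation:limofSandwich2} first and then sandwich the maximum, whereas the paper sandwiches $\max_{\bm{M}^n}D_{1+s}$ directly — so no substantive comparison is needed.
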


\begin{proof}[Proof of Lemma~\ref{lemma:limofSandwich} by assuming Lemma~\ref{lemma:inequalityofSandwich} and Lemma~\ref{lemma:observationinequalityofSandwich}]
    We apply Lemma~\ref{lemma:observationinequalityofSandwich} to the states $\rho^{\otimes n},\sigma^{\otimes n}$ and the measurement $\bm{M}^n$ .
    Then, we obtain
    \begin{align}
        n\underline{D}_{1+s}(\rho||\sigma)\stackrel{(a)}{=}\underline{D}_{1+s}(\rho^{\otimes n}||\sigma^{\otimes n})\ge D_{1+s}(P^{\bm{M}^n}_{\rho^{\otimes n}}||P^{\bm{M}^n}_{\sigma^{\otimes n}}).\label{equation:inequalityofSandwich1}
    \end{align}
    The equation (a) is shown by additivity of SRR entropy (Lemma~\ref{theorem:additivity}).
    On the other hand, we apply Lemma~\ref{lemma:inequalityofSandwich} to the states $\rho^{\otimes n},\sigma^{\otimes n}$.
    Then, we obtain
    \begin{align}
        \max_{\bm{M}^n}D_{1+s}(P^{\bm{M}^n}_{\rho^{\otimes n}}||P^{\bm{M}^n}_{\sigma^{\otimes n}})+\frac{1+s}{s}\log|\bm{C}_{\sigma^{\otimes n}}|&\stackrel{(a)}{\ge} D_{1+s}(\kappa_{\sigma^{\otimes n}}(\rho^{\otimes n})||\sigma^{\otimes n})+\frac{1+s}{s}\log|\bm{C}
_{\sigma^{\otimes n}}|\\
        &\ge n\underline{D}_{1+s}(\rho||\sigma) \stackrel{(b)}{=} \underline{D}_{1+s}(\rho^{\otimes n}||\sigma^{\otimes n})\ge D_{1+s}(\kappa_{\sigma^{\otimes n}}(\rho^{\otimes n})||\sigma^{\otimes n}).\label{equation:inequalityofSandwich3}
    \end{align}
    The inequality (a) is shown by compering $D_{1+s}(\kappa_{\sigma^{\otimes n}}(\rho^{\otimes n})||\sigma^{\otimes n})$ to the maximam value \\
    $\max_{\bm{M}^n}D_{1+s}(P^{\bm{M}^n}_{\rho^{\otimes n}}||P^{\bm{M}^n}_{\sigma^{\otimes n}})$
    from Lemma~\ref{lemma:RepresentEntropies}.
    The equality (b) is shown by additivity Lemma~\ref{theorem:additivity}.
    Now, we divide \eqref{equation:inequalityofSandwich3} by $n$, then we obtain
    \begin{align}
        \frac{1}{n}\max_{\bm{M}^n}D_{1+s}(P^{\bm{M}^n}_{\rho^{\otimes n}}||P^{\bm{M}^n}_{\sigma^{\otimes n}})+\frac{1+s}{ns}\log(1+n)^{d-1}
        &\stackrel{(a)}{\ge}\frac{1}{n}\max_{\bm{M}^n}D_{1+s}(P^{\bm{M}^n}_{\rho^{\otimes n}}||P^{\bm{M}^n}_{\sigma^{\otimes n}})+\frac{1+s}{ns}\log|\bm{C}_{\sigma^{\otimes n}}|\\
        &\ge \underline{D}_{1+s}(\rho||\sigma)
        \stackrel{(b)}{\ge} \frac{1}{n}\max_{\bm{M}^n}D_{1+s}(P^{\bm{M}^n}_{\rho^{\otimes n}}||P^{\bm{M}^n}_{\sigma^{\otimes n}}),\label{equation:inequalityofSandwich4}
    \end{align}
    where $d:=|\bm{C}_\sigma|$.
    The inequality (a) is shown by Lemma~\ref{lemma:numberofeigenvalues}, i.e., $|\bm{C}_{\sigma^{\otimes n}}|\le (1+n)^{d-1}$
    The inequality (b) is shown by \eqref{equation:inequalityofSandwich1} with taking the maximum of $\bm{M}^n$.
    The term $\frac{1+s}{ns}\log(1+n)^{d-1}$ converges to $0$ by taking $n\to\infty$.
    As a result, we obtain the conclusion. 
\end{proof}

Now, we prove Theorem~\ref{theorem:MonotonicityofSandwich} under Lemma~\ref{lemma:limofSandwich}.
\begin{proof}[Proof of Theorem~\ref{theorem:MonotonicityofSandwich} by assuming Lemma~\ref{lemma:limofSandwich}]
    For a measurement $\bm{M}^n=\{M^n_i\}$ and a TPCP map $\kappa:\cV\to \cV$, 
    we consider the family $\{\kappa^{\otimes n *}(M^n_i)\}$, 
    where $\kappa^{\otimes n*}$ denotes the adjoint map of $\kappa^{\otimes n}$. 
    By applying Lemma~\ref{lemma:TPCPtensor} and Lemma~\ref{lemma:identitypreserving}, 
    the family $\{\kappa^{\otimes n *}(M^n_i)\}$ is a measurement.
    Now we apply Lemma~\ref{lemma:limofSandwich}, we obtain the desired inequality as follows:
    \begin{align}
        \underline{D}_{1+s}(\rho||\sigma)&=\lim_{n\to\infty}\frac{1}{n}\max_{\bm{M}^n}D_{1+s}(P^{\bm{M}^n}_{\rho^{\otimes n}}||P^{\bm{M}^n}_{\sigma^{\otimes n}})\\
        &\ge \lim_{n\to \infty}\frac{1}{n}\max_{\bm{M}^n}D_{1+s}(P^{\kappa^{\otimes n *}(\bm{M}^n)}_{\rho^{\otimes n}}||P^{\kappa^{\otimes n *}(\bm{M}^n)}_{\sigma^{\otimes n}})\\
        &=\lim_{n\to \infty}\frac{1}{n}\max_{\bm{M}^n}D_{1+s}(P^{\bm{M}^n}_{\kappa(\rho)^{\otimes n}}||P^{\bm{M}^n}_{\kappa(\sigma)^{\otimes n}})=D_{1+s}(\kappa(\rho)||\kappa(\sigma)),\quad s>0.
    \end{align}
\end{proof}

As a corollary of Theorem~\ref{theorem:MonotonicityofRelativerenyi} and the equation \ref{equation:limofSandwich2} in the proof of Lemma~\ref{lemma:limofSandwich},
we obtain the following relation between PRR entropy and SRR entropy,
but the corollary is not directly related to the main topic.
\begin{corollary}\label{theorem:RelationbetweenRelativeandSandwich}
    Let $\rho,\sigma$ be the states in $\cV$.
    Then, the following inequality holds.
    \begin{align}
        D_{1+s}(\rho||\sigma)\ge \underline{D}_{1+s}(\rho||\sigma),\quad s>0.
    \end{align}    
\end{corollary}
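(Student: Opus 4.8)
The plan is to obtain the inequality for free by combining two results already in hand: the left-hand inequality of Theorem~\ref{theorem:MonotonicityofRelativerenyi} and the representation \eqref{equation:limofSandwich2} of SRR entropy established inside Lemma~\ref{lemma:limofSandwich}. First I would invoke Theorem~\ref{theorem:MonotonicityofRelativerenyi} and keep only its first inequality,
\begin{align}
  D_{1+s}(\rho||\sigma)\ge \lim_{n\to\infty}\frac{1}{n}D_{1+s}(\kappa_{\sigma^{\otimes n}}(\rho^{\otimes n})||\sigma^{\otimes n}),\quad s>0 .
\end{align}
Then I would substitute the identity \eqref{equation:limofSandwich2}, namely $\underline{D}_{1+s}(\rho||\sigma)=\lim_{n\to\infty}\frac{1}{n}D_{1+s}(\kappa_{\sigma^{\otimes n}}(\rho^{\otimes n})||\sigma^{\otimes n})$, into the right-hand side, which immediately yields $D_{1+s}(\rho||\sigma)\ge \underline{D}_{1+s}(\rho||\sigma)$ for all $s>0$, as claimed.

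It is worth noting why the proof must route through the $n$-copy limit rather than work at the level of a single system. The one-shot bounds available are Lemma~\ref{lemma:Monotonicity of Relativerenyi1}, $D_{1+s}(\rho||\sigma)\ge D_{1+s}(\kappa_\sigma(\rho)||\sigma)$, and the lower bound $\underline{D}_{1+s}(\rho||\sigma)\ge D_{1+s}(\kappa_\sigma(\rho)||\sigma)$ from Lemma~\ref{lemma:inequalityofSandwich}; these both sit above the same pinched quantity but do not compare $D_{1+s}$ with $\underline{D}_{1+s}$ directly. The reverse bound on $\underline{D}_{1+s}$ in Lemma~\ref{lemma:inequalityofSandwich} carries an extra additive term $\frac{1+s}{s}\log|\bm{C}_\sigma|$ that is harmless only after dividing by $n$ and using $|\bm{C}_{\sigma^{\otimes n}}|\le(n+1)^{d-1}$ (Lemma~\ref{lemma:numberofeigenvalues}). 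So the asymptotic averaging is exactly what washes out the eigenvalue-count penalty.

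Consequently there is essentially no genuine obstacle left in this corollary: all the real work has been done in proving Theorem~\ref{theorem:MonotonicityofRelativerenyi} (which rests on Lemmas~\ref{lemma:Monotonicity of Relativerenyi1}, \ref{lemma:Monotonicity of Relativerenyi2}, the additivity Lemma~\ref{theorem:additivity}, and the pinching machinery) and in Lemma~\ref{lemma:limofSandwich}. The only points I would flag for care are that the entire argument is restricted to $s>0$, matching the hypotheses of the monotonicity statements and of the sandwich representation, and that the convergence of $\frac{1}{n}D_{1+s}(\kappa_{\sigma^{\otimes n}}(\rho^{\otimes n})||\sigma^{\otimes n})$ is itself part of what Lemma~\ref{lemma:limofSandwich} guarantees, so the limit in the displayed chain is legitimate.
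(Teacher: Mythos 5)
Your proposal is correct and follows exactly the route the paper intends: it cites Theorem~\ref{theorem:MonotonicityofRelativerenyi} together with equation \eqref{equation:limofSandwich2} as the source of the corollary, which is precisely your combination of the first inequality of that theorem with the asymptotic pinching representation of the SRR entropy. Your additional remarks on why the one-shot bounds do not suffice and on the role of the $(n+1)^{d-1}$ eigenvalue count are accurate but not needed for the proof itself.
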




\subsection{Relative entropy}\label{subsection:relative}
In this part, we investigate some relations of Relative entropy
from the relations given in Section~\ref{subsection:relativerenyi} and Section~\ref{subsection:sandwichedrelative}.
At first, we give monotonicity of Relative entropy with a TPCP map
from monotonicity of SRR entropy with a TPCP map(Theorem~\ref{theorem:MonotonicityofSandwich}).
Secondly, we give joint convexity of Relative entropy 
and monotonicity of Relative entropy with an observation. 
Finally, we show the asymptotic equivalence between single shot Relative entropy
and $n$-shot Relative entropy with an observation.

The monotonicity of Relative entropy with a TPCP map is given as follows.
\begin{theorem}[Monotonicity of relative entropy by TPCP map]\label{theorem:MonotonicityofRelative}
    Let $\rho,\sigma$ be states in $\cV$.
    Also, let $\kappa:\cV\to \cV$ be a TPCP map.
    Then, the following inequality holds:
    \begin{align}
        D(\rho||\sigma)\ge D(\kappa(\rho)||\kappa(\sigma)).\label{equation:MonotonicityofRelative1}
    \end{align}
\end{theorem}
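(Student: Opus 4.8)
The plan is to obtain this as the limit $s\to 0^{+}$ of the already-established monotonicity of the Sandwiched Relative R\'enyi entropy under TPCP maps (Theorem~\ref{theorem:MonotonicityofSandwich}). First I would note that since $\kappa$ is a TPCP map, in particular trace preserving and positive, $\kappa(\rho)$ and $\kappa(\sigma)$ are again states in $\cV$, so every quantity appearing below is well-defined. For each fixed $s>0$, Theorem~\ref{theorem:MonotonicityofSandwich} gives
\begin{align}
	\underline{D}_{1+s}(\rho||\sigma)\ge \underline{D}_{1+s}(\kappa(\rho)||\kappa(\sigma)),\quad s>0.
\end{align}

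Next I would take the limit $s\to 0^{+}$ on both sides. By Lemma~\ref{lemma:limitofRenyies} applied to the pair $(\rho,\sigma)$ we have $\lim_{s\to 0}\underline{D}_{1+s}(\rho||\sigma)=D(\rho||\sigma)$, and applying the same lemma to the pair of states $(\kappa(\rho),\kappa(\sigma))$ we have $\lim_{s\to 0}\underline{D}_{1+s}(\kappa(\rho)||\kappa(\sigma))=D(\kappa(\rho)||\kappa(\sigma))$. Since a non-strict inequality between two functions that both converge is inherited by the limits, the displayed inequality yields
\begin{align}
	D(\rho||\sigma)\ge D(\kappa(\rho)||\kappa(\sigma)),
\end{align}
which is exactly \eqref{equation:MonotonicityofRelative1}.

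The only points needing (minor) attention are that the limit in Lemma~\ref{lemma:limitofRenyies} may legitimately be approached through $s>0$ only (immediate from the statement, and in any case consistent with restricting the domain of validity of Theorem~\ref{theorem:MonotonicityofSandwich}), and that Theorem~\ref{theorem:MonotonicityofSandwich} holds for \emph{all} $s>0$ rather than a single value, which it does. As an optional alternative route for the existence of the limit and the passage of the inequality, one may invoke Lemma~\ref{lemma:monotonicityins}: since $s\mapsto\underline{D}_{1+s}$ is monotone increasing, $\lim_{s\to 0^{+}}\underline{D}_{1+s}=\inf_{s>0}\underline{D}_{1+s}$, so the infimum of the left-hand family dominates the infimum of the right-hand family. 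I do not expect any genuine obstacle here: all of the mathematical substance is already contained in Theorem~\ref{theorem:MonotonicityofSandwich}, and the present theorem is merely its $s\to 0$ specialization.
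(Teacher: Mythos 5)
Your proof is correct and is essentially the same as the paper's: the paper also obtains Theorem~\ref{theorem:MonotonicityofRelative} by taking the limit $s\to 0$ in \eqref{equation:monotonicityinTPCPmap} of Theorem~\ref{theorem:MonotonicityofSandwich} and invoking Lemma~\ref{lemma:limitofRenyies}. Your write-up is merely more explicit about applying the convergence lemma to both pairs of states and about the inequality being preserved in the limit.
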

\begin{proof}[Proof of Theorem~\ref{theorem:MonotonicityofRelative}]
    In Theorem~\ref{theorem:MonotonicityofSandwich}, we take the limit $s$ to $0$ in \eqref{equation:monotonicityinTPCPmap}.
    Then, we obtain the desired inequality \eqref{equation:MonotonicityofRelative1} from Theorem~\ref{lemma:limitofRenyies}.
\end{proof}
From Theorem~\ref{theorem:MonotonicityofRelative} and Lemma~\ref{lemma:TPCPofPartialtrace},
we obtain joint convexity of Relative entropy as follows:
\begin{theorem}[Joint convexity of Relative entropy]\label{theorem:jointconvexity}
    Let $\rho_i,\sigma_i,\quad(i=1,\ldots,k)$ be states in $\cV$.
    Also, let $\{p_i\}_{i=1}^k$ be a probability distribution.
    Then, the following inequality holds:
    \begin{align}
        \sum_{i=1}^k p_i D(\rho_i||\sigma_i)\ge D(\sum_{i=1}^k p_i \rho_i||\sum_{i=1}^k p_i\sigma_i).
    \end{align} 
\end{theorem}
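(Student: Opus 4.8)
The plan is to deduce joint convexity from the TPCP-monotonicity of relative entropy (Theorem~\ref{theorem:MonotonicityofRelative}) by the standard ``classical register'' construction, using the classical system $\mathbb{R}^k$ of Example~\ref{example:classicalsystem} (with CSOI $\{u_i\}_{i=1}^k$ and unit $u_{\mathrm{cl}}:=\sum_{j=1}^k u_j$) and the partial trace of Lemma~\ref{lemma:TPCPofPartialtrace}. First I would form, in the canonical composite system $\mathbb{R}^k\otimes\cV$ of Definition~\ref{definition:CompositesysteminEJA}, the elements
\begin{align}
    \hat\rho:=\sum_{i=1}^k p_i\, u_i\otimes\rho_i,\qquad
    \hat\sigma:=\sum_{i=1}^k p_i\, u_i\otimes\sigma_i .
\end{align}

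Next I would verify that $\hat\rho,\hat\sigma$ are states and exhibit their spectral structure. Writing spectral decompositions $\rho_i=\sum_a\lambda_{i,a}c_{i,a}$ and $\sigma_i=\sum_b\mu_{i,b}d_{i,b}$, the families $\{u_i\otimes c_{i,a}\}_{i,a}$ and $\{u_i\otimes d_{i,b}\}_{i,b}$ are CSOIs of $\mathbb{R}^k\otimes\cV$: orthogonality and idempotency follow from $u_i\circ u_j=\delta_{i,j}u_i$ together with the CSOI relations in $\cV$, and completeness from $\sum_{i}u_i\otimes u=u_{\mathrm{cl}}\otimes u$. Hence each summand of $\hat\rho,\hat\sigma$ is a nonnegative multiple of an effect, so $\hat\rho,\hat\sigma\in\cQ_{\mathbb{R}^k\otimes\cV}$, and the normalization $\langle\hat\rho,u_{\mathrm{cl}}\otimes u\rangle=\sum_i p_i\langle u_i,u_{\mathrm{cl}}\rangle\,\mathrm{tr}\rho_i=1$ uses $\langle u_i,u_j\rangle=\delta_{i,j}$; thus $\hat\rho,\hat\sigma$ are indeed states, and $\hat\rho=\sum_{i,a}(p_i\lambda_{i,a})(u_i\otimes c_{i,a})$ is a spectral decomposition.

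Then I would evaluate $D(\hat\rho\|\hat\sigma)$ directly from the spectral calculus of Definition~\ref{definition:fofEJAelement}. From the spectral decomposition above, $\log\hat\rho=\sum_{i,a}(\log p_i+\log\lambda_{i,a})(u_i\otimes c_{i,a})$ and similarly $\log\hat\sigma=\sum_{i,b}(\log p_i+\log\mu_{i,b})(u_i\otimes d_{i,b})$; the relation $u_i\circ u_j=\delta_{i,j}u_i$ makes both $\hat\rho\circ\log\hat\rho$ and $\hat\rho\circ\log\hat\sigma$ block-diagonal in the register index $i$. Using $\mathrm{tr}(u_i\otimes x)=\mathrm{tr}(u_i)\mathrm{tr}(x)=\mathrm{tr}(x)$, $\sum_a\lambda_{i,a}\mathrm{tr}(c_{i,a})=\mathrm{tr}\rho_i=1$, and $\sum_b d_{i,b}=u$, a short computation shows that the $\sum_i p_i\log p_i$ contributions to $\mathrm{tr}(\hat\rho\circ\log\hat\rho)$ and to $\mathrm{tr}(\hat\rho\circ\log\hat\sigma)$ are identical and cancel, leaving
\begin{align}
    D(\hat\rho\|\hat\sigma)=\sum_{i=1}^k p_i\Big(\mathrm{tr}(\rho_i\circ\log\rho_i)-\mathrm{tr}(\rho_i\circ\log\sigma_i)\Big)=\sum_{i=1}^k p_i\,D(\rho_i\|\sigma_i).
\end{align}
(If some $p_i=0$, drop that term; if $\mathrm{supp}\,\rho_i\not\subseteq\mathrm{supp}\,\sigma_i$ for some $i$ with $p_i>0$, both sides of the claimed inequality are $+\infty$ under the usual conventions, so the inequality is immediate.)

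Finally I would apply the partial trace $\mathrm{tr}_{\mathbb{R}^k}:\mathbb{R}^k\otimes\cV\to\cV$, which is TPCP by Lemma~\ref{lemma:TPCPofPartialtrace} and satisfies $\mathrm{tr}_{\mathbb{R}^k}(\hat\rho)=\sum_i p_i\,\mathrm{tr}(u_i)\,\rho_i=\sum_i p_i\rho_i$ since $\mathrm{tr}(u_i)=\langle u_i,u_{\mathrm{cl}}\rangle=1$, and likewise $\mathrm{tr}_{\mathbb{R}^k}(\hat\sigma)=\sum_i p_i\sigma_i$. Theorem~\ref{theorem:MonotonicityofRelative} then yields
\begin{align}
    \sum_{i=1}^k p_i\,D(\rho_i\|\sigma_i)=D(\hat\rho\|\hat\sigma)\ge D\!\left(\mathrm{tr}_{\mathbb{R}^k}(\hat\rho)\,\big\|\,\mathrm{tr}_{\mathbb{R}^k}(\hat\sigma)\right)=D\!\left(\textstyle\sum_i p_i\rho_i\,\big\|\,\textstyle\sum_i p_i\sigma_i\right),
\end{align}
which is the assertion. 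The only non-routine step is the block-diagonal identity $D(\hat\rho\|\hat\sigma)=\sum_i p_i D(\rho_i\|\sigma_i)$: unlike the tensor-product additivity of Lemma~\ref{theorem:additivity}, this is a direct-sum-type additivity, so it must be checked by hand from Definition~\ref{definition:fofEJAelement} and the orthogonality of the classical CSOI, with mild care about the support conventions for $\log\rho$ and $\log\sigma$.
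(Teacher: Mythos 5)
Your proof is correct and follows essentially the same route as the paper: forming $\hat\rho=\sum_i p_i\,u_i\otimes\rho_i$ and $\hat\sigma=\sum_i p_i\,u_i\otimes\sigma_i$ in $\mathbb{R}^k\otimes\cV$, applying Theorem~\ref{theorem:MonotonicityofRelative} to the partial trace $\mathrm{tr}_{\mathbb{R}^k}$, and identifying $D(\hat\rho\|\hat\sigma)=\sum_i p_iD(\rho_i\|\sigma_i)$. The only difference is that you spell out the block-diagonal spectral computation behind that last identity, which the paper asserts ``by definition'' without detail.
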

\begin{proof}[Proof of Theorem~\ref{theorem:jointconvexity}]
    Let $x,y$ be the states $x=\sum_{i=1}^k p_i u_i\otimes \rho_i$, $y=\sum_{i=1}^k p_i u_i\otimes \sigma_i$
    in $\mathbb{R}^k\otimes \cV$, 
    where $u_i\in\mathbb{R}^d(i=1,\ldots,d)$ are the element which takes $1$ in $i$th element and $0$ in others.
    Then, from Theorem~\ref{theorem:MonotonicityofRelative}, we obtain
    \begin{align}
        D(x||y)\ge D(\mathrm{tr}_{\mathbb{R}^k}x||\mathrm{tr}_{\mathbb{R}^k}y),\label{equation:jointconvexity}
    \end{align}
where $\mathrm{tr}_{\mathbb{R}^k}$ is the partial trace onto $\mathbb{R}^k$(Defininition~\ref{definition:partialtrace}).
Here, we calculate LHS of \eqref{equation:jointconvexity} by definition, and we obtain
\begin{align}
    D(x||y)=\sum_{i=1}^k p_iD(\rho_i||\sigma_i).\label{equation:jointconvexity1}
\end{align}
On the other hand, we calculate 
$\mathrm{tr}_{\mathbb{R}^k}x$ and $\mathrm{tr}_{\mathbb{R}^k}y$ in RHS of \eqref{equation:jointconvexity}, and we obtain
\begin{align}
    \mathrm{tr}_{\mathbb{R}^k}x=\sum_{i=1}^kp_i\rho_i,\mathrm{tr}_{\mathbb{R}^k}y=\sum_{i=1}^kp_i\sigma_i.\label{equation:jointconvexity2}
\end{align}
As a result, we obtain the conclusion.
\end{proof}

Next, monotonicity of Relative entropy with an observation also holds as a corollary of Theorem~\ref{theorem:MonotonicityofRelative}.
\begin{theorem}\label{theorem:observationinequalityofRelative}
    Let $\rho,\sigma$ be states in $\cV$.
    Also, let $\bm{M}=\{M_i\}_{i=1}^k$ be a measurement in $\cV$.
    Then, the following inequality holds:
    \begin{align}
        D(\rho||\sigma)\ge D(P^{\bm{M}}_\rho||P^{\bm{M}}_\sigma).
    \end{align}
\end{theorem}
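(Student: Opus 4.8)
The plan is to realize $D(P^{\bm{M}}_\rho||P^{\bm{M}}_\sigma)$ as the relative entropy of the images of $\rho,\sigma$ under the observation map and then invoke the monotonicity under TPCP maps established in Theorem~\ref{theorem:MonotonicityofRelative}. Concretely, I would use the map $\kappa_{\bm{M}}:\cV\to\mathbb{R}^k$ of Definition~\ref{definition:TPCPobservation}, which is a TPCP map by Lemma~\ref{lemma:TPCPofObservation}. The first step is the elementary computation $\mathrm{tr}(M_i\circ\rho)=\langle M_i\circ\rho,u\rangle=\langle M_i,\rho\circ u\rangle=\langle M_i,\rho\rangle=P^{\bm{M}}_\rho(i)$, using the Euclidean condition (J3) and $\rho\circ u=\rho$; hence $\kappa_{\bm{M}}(\rho)=\sum_{i=1}^k P^{\bm{M}}_\rho(i)\,u_i$ and likewise $\kappa_{\bm{M}}(\sigma)=\sum_{i=1}^k P^{\bm{M}}_\sigma(i)\,u_i$, i.e.\ these are exactly the states of the classical system $\mathbb{R}^k$ (Example~\ref{example:classicalsystem}) encoding the two distributions.

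The second step is to check that, inside a classical system, the EJA relative entropy of Definition~\ref{definition:informationquantities} coincides with the classical relative entropy of Definition~\ref{definition:classicalrelative}. Since $\{u_i\}$ is a CSOI and both $\kappa_{\bm{M}}(\rho)$ and $\kappa_{\bm{M}}(\sigma)$ are already written on it, Definition~\ref{definition:fofEJAelement} gives $\log\kappa_{\bm{M}}(\rho)=\sum_i(\log P^{\bm{M}}_\rho(i))\,u_i$ and similarly for $\sigma$, while $u_i\circ u_j=\delta_{i,j}u_i$ and $\mathrm{tr}\,u_i=1$; substituting into $D(\rho'||\sigma')=\mathrm{tr}(\rho'\circ\log\rho'-\rho'\circ\log\sigma')$ yields $D(\kappa_{\bm{M}}(\rho)||\kappa_{\bm{M}}(\sigma))=\sum_i P^{\bm{M}}_\rho(i)\log\frac{P^{\bm{M}}_\rho(i)}{P^{\bm{M}}_\sigma(i)}=D(P^{\bm{M}}_\rho||P^{\bm{M}}_\sigma)$. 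Then Theorem~\ref{theorem:MonotonicityofRelative} applied to $\kappa_{\bm{M}}$ gives $D(\rho||\sigma)\ge D(\kappa_{\bm{M}}(\rho)||\kappa_{\bm{M}}(\sigma))=D(P^{\bm{M}}_\rho||P^{\bm{M}}_\sigma)$, which is the claim.

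The only genuinely delicate point is that Theorem~\ref{theorem:MonotonicityofRelative} (and Theorem~\ref{theorem:MonotonicityofSandwich} behind it) is phrased for a TPCP map $\kappa:\cV\to\cV$ from an EJA to itself, whereas $\kappa_{\bm{M}}$ lands in the possibly different EJA $\mathbb{R}^k$. I would resolve this by observing that the proof of those monotonicity statements---reducing SRR monotonicity to Lemma~\ref{lemma:limofSandwich} and then letting $s\to0$ via Lemma~\ref{lemma:limitofRenyies}---never uses equality of domain and codomain, so it applies verbatim to $\kappa_{\bm{M}}:\cV\to\mathbb{R}^k$; alternatively one may post-compose $\kappa_{\bm{M}}$ with a trace-preserving unital embedding of $\mathbb{R}^k$ into $\cV$ when $k$ does not exceed the rank of $\cV$, or simply run the whole argument inside the ambient EJA $\cV\oplus\mathbb{R}^k$. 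This is bookkeeping; the substantive content is entirely contained in Theorem~\ref{theorem:MonotonicityofRelative} together with Lemma~\ref{lemma:TPCPofObservation}.
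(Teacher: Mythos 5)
Your proposal is correct and follows exactly the paper's own argument: the paper proves this theorem in one line by applying Theorem~\ref{theorem:MonotonicityofRelative} to the observation map $\kappa_{\bm{M}}$ of Definition~\ref{definition:TPCPobservation}. Your additional verifications (that $\kappa_{\bm{M}}(\rho)$ encodes $P^{\bm{M}}_\rho$ on the CSOI $\{u_i\}$, that the EJA relative entropy reduces to the classical one there, and the remark on the $\cV\to\mathbb{R}^k$ versus $\cV\to\cV$ mismatch) are sensible details the paper leaves implicit, but the substance is the same.
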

\begin{proof}
    We apply Theorem~\ref{theorem:MonotonicityofRelative} 
    to the TPCP map $\kappa_{\bm{M}}$ defined in Definition~\ref{definition:TPCPobservation}
    for the measurement $M$.
\end{proof}

From Theorem~\ref{theorem:jointconvexity}, we prove the following theorem, 
which is essential to show direct part of Stein's theorem with EJAs.
\begin{theorem}\label{theorem:DirectpartofRelative}
    Let $\rho,\sigma$ be states in $\cV$.
    Then, for the measurement $I^{\rho^{\otimes n}}_{\sigma^{\otimes n}}$ in Definition~\ref{definition:pinchedmeasurement},
    the following relation holds:
    \begin{align}
        \lim_{n\to\infty}\frac{1}{n}D^{I^{\rho^{\otimes n}}_{\sigma^{\otimes n}}}(\rho^{\otimes n}||\sigma^{\otimes n})=D(\rho||\sigma)\label{equation:DirectpartofRelative6}.
    \end{align} 
\end{theorem}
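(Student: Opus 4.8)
The plan is to sandwich the quantity $\frac{1}{n}D^{I^{\rho^{\otimes n}}_{\sigma^{\otimes n}}}(\rho^{\otimes n}||\sigma^{\otimes n})$ between two expressions that both converge to $D(\rho||\sigma)$, using exactly the three ingredients recorded in the overview: monotonicity of relative entropy under an observation (Theorem~\ref{theorem:observationinequalityofRelative}), the chain rule for the pinching decomposition (Lemma~\ref{lemma:DirectpartofRelative1}), and the logarithmic bound on the pinching defect (Lemma~\ref{lemma:DirectpartofRelative2}).

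First I would rewrite the left-hand side in a form better suited to the pinching estimates. Applying Lemma~\ref{lemma:RepresentEntropies} with the trivial measurement $\bm{M}=\{u\}$ --- for which $\bm{M}^{\rho^{\otimes n}}_{\sigma^{\otimes n}}$ is precisely $I^{\rho^{\otimes n}}_{\sigma^{\otimes n}}$ --- the relation \eqref{equation:pinchingRelative} gives
\begin{align}
	D^{I^{\rho^{\otimes n}}_{\sigma^{\otimes n}}}(\rho^{\otimes n}||\sigma^{\otimes n})
	= D(\kappa_{\sigma^{\otimes n}}(\rho^{\otimes n})||\sigma^{\otimes n}),
\end{align}
so it is enough to control $\frac{1}{n}D(\kappa_{\sigma^{\otimes n}}(\rho^{\otimes n})||\sigma^{\otimes n})$.

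For the upper bound, I would note that $I^{\rho^{\otimes n}}_{\sigma^{\otimes n}}$ is a measurement on $\cV^{\otimes n}$, so Theorem~\ref{theorem:observationinequalityofRelative} yields $D^{I^{\rho^{\otimes n}}_{\sigma^{\otimes n}}}(\rho^{\otimes n}||\sigma^{\otimes n}) \le D(\rho^{\otimes n}||\sigma^{\otimes n})$, which equals $nD(\rho||\sigma)$ by additivity (Lemma~\ref{theorem:additivity}); dividing by $n$ gives $\frac{1}{n}D^{I^{\rho^{\otimes n}}_{\sigma^{\otimes n}}}(\rho^{\otimes n}||\sigma^{\otimes n}) \le D(\rho||\sigma)$. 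For the lower bound, I would apply Lemma~\ref{lemma:DirectpartofRelative1} to the pair $\rho^{\otimes n},\sigma^{\otimes n}$ and combine it with additivity to obtain
\begin{align}
	nD(\rho||\sigma)
	= D(\rho^{\otimes n}||\sigma^{\otimes n})
	= D(\rho^{\otimes n}||\kappa_{\sigma^{\otimes n}}(\rho^{\otimes n})) + D(\kappa_{\sigma^{\otimes n}}(\rho^{\otimes n})||\sigma^{\otimes n}).
\end{align}
By Lemma~\ref{lemma:DirectpartofRelative2}, the first summand is at most $\log|\bm{C}_{\sigma^{\otimes n}}|$, and by Lemma~\ref{lemma:numberofeigenvalues} we have $|\bm{C}_{\sigma^{\otimes n}}| \le (n+1)^{d-1}$ with $d := |\bm{C}_\sigma|$. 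Hence
\begin{align}
	D(\kappa_{\sigma^{\otimes n}}(\rho^{\otimes n})||\sigma^{\otimes n}) \ge nD(\rho||\sigma) - (d-1)\log(n+1).
\end{align}
Combining the two bounds, dividing by $n$, and letting $n\to\infty$ (so that $\frac{(d-1)\log(n+1)}{n}\to 0$) yields \eqref{equation:DirectpartofRelative6}.

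Since all three ingredients are proven beforehand, the argument here is mostly bookkeeping. The one point that needs genuine care is the identification in the first step: one must check that classicalizing with the pinched measurement $I^{\rho^{\otimes n}}_{\sigma^{\otimes n}}$ reproduces exactly the EJA-level pinching $\kappa_{\sigma^{\otimes n}}(\rho^{\otimes n})$, which is the content of Lemma~\ref{lemma:RepresentEntropies} and relies on the non-associativity-aware properties of the quadratic forms $P_{c_{i,j}}$. The other place where one must be attentive is that the slack between the single-shot and $n$-shot quantities is only logarithmic, i.e.\ that it is governed solely by the number of distinct eigenvalues of $\sigma^{\otimes n}$ through Lemma~\ref{lemma:numberofeigenvalues}; the genuinely difficult work --- operator-monotonicity-free proofs of monotonicity and of the chain rule over EJAs --- has already been absorbed into the earlier lemmas, so it does not resurface here.
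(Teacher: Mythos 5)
Your proposal is correct and follows essentially the same route as the paper's proof: the same chain-rule decomposition via Lemma~\ref{lemma:DirectpartofRelative1}, the same logarithmic bound on the pinching defect via Lemma~\ref{lemma:DirectpartofRelative2} and Lemma~\ref{lemma:numberofeigenvalues}, the same identification of the pinched classical entropy via Lemma~\ref{lemma:RepresentEntropies}, and the same upper bound from Theorem~\ref{theorem:observationinequalityofRelative} with additivity. The only difference is the order of presentation, which is immaterial.
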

Theorem~\ref{theorem:DirectpartofRelative} is shown from the following two lemmas.
\begin{lemma}\label{lemma:DirectpartofRelative1}
    Let $\rho,\sigma$ be states in $\cV$.
    Then, the following relation holds:
    \begin{align}
        D(\rho||\sigma)=D(\rho||\kappa_\sigma(\rho))+D(\kappa_\sigma(\rho)||\sigma).
    \end{align}
\end{lemma}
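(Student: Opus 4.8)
The plan is to expand both sides directly from Definition~\ref{definition:informationquantities} and to collapse the identity onto a single structural fact about the pinching map $\kappa_\sigma$: namely that, with respect to the CSOI of $\sigma$, the map $\kappa_\sigma$ changes only the off-diagonal Peirce components of $\rho$ and preserves pairings against diagonal elements.

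Fix the spectral decomposition $\sigma=\sum_i\mu_i e_i$, so that $\kappa_\sigma(\rho)=\sum_i P_{e_i}(\rho)$ by Definition~\ref{definition:PinchingState}, and extend $\kappa_\sigma$ linearly to all of $\cV$. First I would record that $\kappa_\sigma$ is the orthogonal projection of $\cV$ onto the Peirce subspace $\cW:=\oplus_i \cV(e_i,1)$ of Theorem~\ref{theorem:peircedecomposition}: each $P_{e_i}$ is self-adjoint for $\langle\cdot,\cdot\rangle$ — since it is a polynomial in the self-adjoint operators $L_{e_i}$ and $L_{e_i^2}$ (Definition~\ref{definition:quadraticform}), self-adjointness coming from the Euclidean condition (J3) of Definition~\ref{definition:EuclideanJordanalgebra} — while Lemma~\ref{lemma:projectionPeirce} identifies $P_{e_i}$ with the Peirce projection onto $\cV(e_i,1)$. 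Hence, writing $X=\sum_i X_i$ with $X_i\in\cV(e_i,1)$ for any $X\in\cW$,
\begin{align*}
\mathrm{tr}(\kappa_\sigma(\rho)\circ X)=\langle\kappa_\sigma(\rho),X\rangle=\sum_i\langle\rho,P_{e_i}(X)\rangle=\sum_i\langle\rho,X_i\rangle=\langle\rho,X\rangle=\mathrm{tr}(\rho\circ X).
\end{align*}

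Next I would check that $\log\sigma$ and $\log\kappa_\sigma(\rho)$ both lie in $\cW$. For $\log\sigma=\sum_i(\log\mu_i)\,e_i$ this is clear since $e_i\in\cV(e_i,1)$. For $\log\kappa_\sigma(\rho)$, apply Lemma~\ref{lemma:newCSOI} to each $P_{e_i}(\rho)=\sum_j\lambda_{i,j}c_{i,j}$: then $\sum_j c_{i,j}=e_i$, which forces $c_{i,j}\circ e_i=c_{i,j}$, i.e. $c_{i,j}\in\cV(e_i,1)$; so the spectral decomposition $\kappa_\sigma(\rho)=\sum_{i,j}\lambda_{i,j}c_{i,j}$, and with it $\log\kappa_\sigma(\rho)=\sum_{i,j}(\log\lambda_{i,j})\,c_{i,j}$ (Definition~\ref{definition:fofEJAelement}), lies in $\cW$. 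Applying the previous display with $X=\log\sigma$ and with $X=\log\kappa_\sigma(\rho)$ gives
\begin{align*}
\mathrm{tr}(\rho\circ\log\sigma)=\mathrm{tr}(\kappa_\sigma(\rho)\circ\log\sigma),\qquad \mathrm{tr}(\rho\circ\log\kappa_\sigma(\rho))=\mathrm{tr}(\kappa_\sigma(\rho)\circ\log\kappa_\sigma(\rho)).
\end{align*}

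Finally the remaining step is bookkeeping: expanding via Definition~\ref{definition:informationquantities},
\begin{align*}
D(\rho||\kappa_\sigma(\rho))+D(\kappa_\sigma(\rho)||\sigma)
&=\mathrm{tr}(\rho\circ\log\rho)-\mathrm{tr}(\rho\circ\log\kappa_\sigma(\rho))\\
&\quad+\mathrm{tr}(\kappa_\sigma(\rho)\circ\log\kappa_\sigma(\rho))-\mathrm{tr}(\kappa_\sigma(\rho)\circ\log\sigma),
\end{align*}
where the two middle terms cancel by the second identity above and $\mathrm{tr}(\kappa_\sigma(\rho)\circ\log\sigma)=\mathrm{tr}(\rho\circ\log\sigma)$ by the first, leaving $\mathrm{tr}(\rho\circ\log\rho)-\mathrm{tr}(\rho\circ\log\sigma)=D(\rho||\sigma)$. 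The algebra is routine; the only substantive point is the structural identity $\mathrm{tr}(\kappa_\sigma(\rho)\circ X)=\mathrm{tr}(\rho\circ X)$ for $X\in\cW$, which in the matrix case is transparent but in the Jordan setting must be drawn from self-adjointness and the Peirce-projection property of the quadratic forms $P_{e_i}$. I expect the one genuinely fiddly verification to be that $\log\kappa_\sigma(\rho)$ really is supported on $\cW$, which hinges on Lemma~\ref{lemma:newCSOI}.
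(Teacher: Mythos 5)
Your proposal is correct and follows essentially the same route as the paper's proof: both reduce the chain identity to the two trace equalities $\mathrm{tr}(\kappa_\sigma(\rho)\circ\log\sigma)=\mathrm{tr}(\rho\circ\log\sigma)$ and $\mathrm{tr}(\kappa_\sigma(\rho)\circ\log\kappa_\sigma(\rho))=\mathrm{tr}(\rho\circ\log\kappa_\sigma(\rho))$, which the paper attributes tersely to the Euclidean condition and which you justify in fuller detail via the self-adjointness of the $P_{e_i}$ and the fact (from Lemma~\ref{lemma:newCSOI}) that both logarithms lie in $\oplus_i\cV(e_i,1)$. Your write-up is, if anything, more complete than the paper's.
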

\begin{lemma}\label{lemma:DirectpartofRelative2}
    Let $\bm{C}=\{e_i\}$ be a CSOI.
    Also, let $\rho$ be a state in $\cV$.
    Then, the following relation holds:
    \begin{align}
        D(\rho||\kappa_{\bm{C}}(\rho))=H(\kappa_{\bm{C}}(\rho))-H(\rho)\le \log|\bm{C}|.
    \end{align}
\end{lemma}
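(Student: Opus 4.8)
The plan is to split the statement into the equality $D(\rho\|\kappa_{\bm C}(\rho)) = H(\kappa_{\bm C}(\rho)) - H(\rho)$ and the bound $H(\kappa_{\bm C}(\rho)) - H(\rho)\le\log|\bm C|$, and to handle them separately. For the \emph{equality}, I would write $\kappa := \kappa_{\bm C}(\rho)$ and use Lemma~\ref{lemma:newCSOI} to fix its spectral decomposition $\kappa = \sum_{i,j}\lambda_{i,j}c_{i,j}$ with $c_{i,j}\in\cV(e_i,1)$ and $\sum_j c_{i,j} = e_i$. Since by Definition~\ref{definition:informationquantities} both $H$ and $D$ are built from $\mathrm{tr}\,x\circ(\cdot)$, and $\langle x\circ y,u\rangle = \langle x,y\rangle$ by (J3), the two sides of the claimed equality differ exactly by $\mathrm{tr}\,\rho\circ\log\kappa$ versus $\mathrm{tr}\,\kappa\circ\log\kappa$, so it suffices to prove $\langle\rho,\log\kappa\rangle = \langle\kappa,\log\kappa\rangle$. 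The two facts I would use are that (i) $\kappa_{\bm C}$ is self-adjoint, because each $P_{e_i} = 2L_{e_i}^2 - L_{e_i^2}$ is self-adjoint by (J3), and (ii) $\kappa_{\bm C}$ fixes $\log\kappa = \sum_{i,j}(\log\lambda_{i,j})c_{i,j}$, because each $c_{i,j}\in\cV(e_i,1)$ and $P_{e_k}$ projects onto $\cV(e_k,1)$ by Lemma~\ref{lemma:projectionPeirce}, so $\kappa_{\bm C}(c_{i,j}) = c_{i,j}$. Combining, $\langle\rho,\log\kappa\rangle = \langle\rho,\kappa_{\bm C}(\log\kappa)\rangle = \langle\kappa_{\bm C}(\rho),\log\kappa\rangle = \langle\kappa,\log\kappa\rangle$, which yields the equality.

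For the \emph{inequality}, I would first establish the entropy splitting $H(\kappa) = H(\{p_i\}) + \sum_i p_i H(\rho_i)$, where $p_i := \langle\rho,e_i\rangle$ and $\rho_i := P_{e_i}(\rho)/p_i$; this is a direct computation from $\kappa = \sum_i P_{e_i}(\rho)$ and the orthogonality of the Peirce-$1$ spaces (Lemma~\ref{lemma:orthogonalPeirce}, Theorem~\ref{theorem:peircedecomposition}), since the blocks $P_{e_i}(\rho)$ contribute disjoint eigenvalues. As $H(\{p_i\})\le\log|\bm C|$, it remains to show $\sum_i p_i H(\rho_i)\le H(\rho)$. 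For this I would spectrally decompose $\rho = \sum_m\mu_m c_m$ into primitive idempotents (Theorem~\ref{theorem:Spectraldecomposition}), set $r_{i,m} := \langle e_i,c_m\rangle\mu_m$, and check $\sum_i r_{i,m} = \mu_m$ and $\sum_m r_{i,m} = p_i$, so $\{r_{i,m}\}$ is a classical joint law with marginals $\{\mu_m\}$ and $\{p_i\}$. Using $\mathrm{tr}\,P_{e_i}(c_m) = \langle e_i,c_m\rangle$ one writes $\rho_i$ as the convex mixture $\sum_m (r_{i,m}/p_i)\,\bigl(P_{e_i}(c_m)/\langle e_i,c_m\rangle\bigr)$ of states; since the quadratic form $P_{e_i}$ does not increase rank, each $P_{e_i}(c_m)$ is a nonnegative multiple of a primitive idempotent, so these components are pure and a mixing bound gives $H(\rho_i)\le H(\{r_{i,m}/p_i\}_m)$. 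Averaging, $\sum_i p_i H(\rho_i)\le\sum_i p_i H(\{r_{i,m}/p_i\}_m) = H(M\mid I)\le H(M) = H(\rho)$, the last step being nonnegativity of classical mutual information, a case of Theorem~\ref{theorem:jointconvexity}.

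The equality part is routine once self-adjointness of $\kappa_{\bm C}$ and invariance of $\log\kappa$ are in hand. The real difficulty is the sharp constant $\log|\bm C|$ in the inequality: in quantum theory one would simply combine the pinching inequality $|\bm C|\kappa_{\bm C}(\rho)\ge\rho$ (Lemma~\ref{lemma:PinchingInequality}) with operator monotonicity of $\log$, but this is precisely what is unavailable in a general EJA, so the plan routes the bound entirely through entropy. Consequently the crux becomes the two EJA inputs used above — that $P_{e_i}$ sends primitive idempotents to rank-$\le1$ elements, and the pure-state mixing bound $H(\sum_a p_a d_a)\le H(\{p_a\})$, which is an EJA majorization (Schur--Horn-type) statement. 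Verifying these from the Peirce/CSOI structure, instead of assuming operator monotonicity, is where I expect the main effort to lie.
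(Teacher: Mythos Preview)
Your equality argument is correct and makes explicit what the paper leaves implicit. For the inequality you take a genuinely different, longer route. The paper does \emph{not} try to prove $\sum_i p_iH(\rho_i)\le H(\rho)$; it instead reduces immediately to pure $\rho$ via joint convexity (Theorem~\ref{theorem:jointconvexity}), which is already in hand: writing $\rho=\sum_m\mu_m c_m$ with primitive $c_m$, one gets $D(\rho\|\kappa_{\bm C}(\rho))\le\sum_m\mu_m\,D(c_m\|\kappa_{\bm C}(c_m))=\sum_m\mu_m\,H(\kappa_{\bm C}(c_m))$, and for a single primitive $c_m$ the pieces $P_{e_i}(c_m)$ live in mutually orthogonal Peirce-1 spaces $\cV(e_i,1)$, so once each has rank $\le1$ this is already a spectral decomposition with at most $|\bm C|$ terms and $H(\kappa_{\bm C}(c_m))\le\log|\bm C|$ follows. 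Pinching a \emph{pure} state first makes the rank-one pieces automatically orthogonal, and that is precisely what lets the paper avoid any Schur--Horn argument. Your route, by keeping $\rho$ mixed and decomposing each block $\rho_i$ as a \emph{non-orthogonal} pure-state mixture, is forced through the mixing bound $H(\sum_a q_a d_a)\le H(\{q_a\})$ --- an extra EJA majorization fact nowhere developed in the paper. Both approaches ultimately rest on the same rank-$\le1$ preservation of $P_{e_i}$ on primitives (the paper's written proof in fact treats only the Jordan-frame case, where $\cV(e_i,1)=\mathbb{R}e_i$ makes this trivial); yours carries the additional majorization burden on top. The payoff of your route is the sharper bound $D(\rho\|\kappa_{\bm C}(\rho))\le H(\{\langle\rho,e_i\rangle\})$, but only $\log|\bm C|$ is used downstream in Theorem~\ref{theorem:DirectpartofRelative}.
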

Lemma~\ref{lemma:DirectpartofRelative1} and Lemma~\ref{lemma:DirectpartofRelative2} are provided in Appendix~\ref{appendix:proofofRelative}.
Here, we prove Theorem~\ref{theorem:DirectpartofRelative} under Lemma~\ref{lemma:DirectpartofRelative1} and Lemma~\ref{lemma:DirectpartofRelative2}.
\begin{proof}[Proof of Theorem~\ref{theorem:DirectpartofRelative} assuming Lemma~\ref{lemma:DirectpartofRelative1} and Lemma~\ref{lemma:DirectpartofRelative2}]
    Applying Lemma~\ref{lemma:DirectpartofRelative1} to the states $\rho^{\otimes n},\sigma^{\otimes n}$ in $\cV^{\otimes n}$.
    We obtain the following equation:
    \begin{align}
        D(\rho^{\otimes n}||\sigma^{\otimes n})=D(\rho^{\otimes n}||\kappa_{\sigma^{\otimes n}}(\rho^{\otimes n}))+D(\kappa_{\sigma^{\otimes n}}(\rho^{\otimes n})||\sigma^{\otimes n}).\label{equation:DirectpartofRelative1}
    \end{align}
    First, we estimate the first term of RHS \eqref{equation:DirectpartofRelative1} as follows:
    \begin{align}
        D(\rho^{\otimes n}||\kappa_{\sigma^{\otimes n}}(\rho^{\otimes n}))\stackrel{(a)}{\le} \log|\bm{C}_{\sigma^{\otimes n}}|\stackrel{(b)}{\le} \log(n+1)^{d-1},\label{equation:DirectpartofRelative2}
    \end{align}
    where $d:=|\bm{C}_\sigma|$.
    The equation (a) is shown by Lemma~\ref{lemma:DirectpartofRelative2}.
    The equation (b) is shown by Lemma~\ref{lemma:numberofeigenvalues}.
    Second, from Lemma~\ref{lemma:RepresentEntropies}, we rewrite the second term of RHS \eqref{equation:DirectpartofRelative1} as follows:
    \begin{align}
        D(\kappa_{\sigma^{\otimes n}}(\rho^{\otimes n})||\sigma^{\otimes n})=D^{I^{\rho^{\otimes n}}_{\sigma^{\otimes n}}}(\rho^{\otimes n}||\sigma^{\otimes n}).\label{equation:DirectpartofRelative3}
    \end{align}
    Applying \eqref{equation:DirectpartofRelative2} and \eqref{equation:DirectpartofRelative3} to \eqref{equation:DirectpartofRelative1}, we obtain the following upper bound of $nD(\rho||\sigma)$:
    \begin{align}
        nD(\rho||\sigma)\stackrel{(a)}{=}D(\rho^{\otimes n}||\sigma^{\otimes n})\le \log(n+1)^{d-1}+D^{I^{\rho^{\otimes n}}_{\sigma^{\otimes n}}}(\rho^{\otimes n}||\sigma^{\otimes n}).\label{equation:DirectpartofRelative4}
    \end{align}
    The equation (a) is shown by additivity of Relative entropy Lemma~\ref{theorem:additivity}.
    On the other hand, by Theorem~\ref{theorem:observationinequalityofRelative}, the following lower bound of $nD(\rho||\sigma)$ holds:
    \begin{align}
        D^{I^{\rho^{\otimes n}}_{\sigma^{\otimes n}}}(\rho^{\otimes n}||\sigma^{\otimes n})\le D(\rho^{\otimes n}||\sigma^{\otimes n})=nD(\rho||\sigma).\label{equation:DirectpartofRelative5}
    \end{align}
    Finally, combining \eqref{equation:DirectpartofRelative4} and \eqref{equation:DirectpartofRelative5}, we obtain
    \begin{align}
        D(\rho||\sigma)-\frac{\log(n+1)^{d-1}}{n}\le \frac{1}{n}D^{I^{\rho^{\otimes n}}_{\sigma^{\otimes n}}}(\rho^{\otimes n}||\sigma^{\otimes n})\le D(\rho||\sigma).
    \end{align}
    The term $\frac{\log(n+1)^{d-1}}{n}$ converges to $0$ when $n\to\infty$.
    Therefore, we obtain the conclusion.
\end{proof}

\begin{remark}
    We can not directly show Theorem~\ref{theorem:DirectpartofRelative} 
    by taking limit $s$ to $0$ in \eqref{equation:limofSandwich3}
    because $\frac{1+s}{ns}\log|\bm{C}_{\sigma^{\otimes n}}|$ diverges to infinity 
    as $s$ approaches $0$ in \eqref{equation:inequalityofSandwich4}.
\end{remark}

\section{Hypothesis testing and Stein's Lemma in Euclidean Jordan algebra}\label{section:hypothesistesting}

In this section, we prepare the setting of hypothesis testing and
prove a generalization of Stein's Lemma with EJAs.
In order to prove Stein's Lemma with EJAs, we separate the problem into two parts,
the direct part(Section~\ref{subsection:directpart}) and the converse part(Section~\ref{subsection:conversepart}).

\subsection{Settings and Stein's Lemma}\label{subsection:setting}

Hypothesis testing is an information task, which determines whether
we support alternative hypothesis with rejecting null hypothesis or
we support null hypothesis with rejecting alternative hypothesis.
Similarly to the setting of quantum Stein's Lemma, 
we have an i.i.d. source of an unknown state.
Now, we consider Null hypothesis: the unknown state is given as $\rho$ and 
Alternative hypothesis: the unknown state is given as $\sigma$.
By applying the i.i.d. source $n$-times and a global measurement $\{T,u-T\}$ one time,
we determine the hypothesis as the measurement outcome.
In this case, there are two types of errors.
The type I error, where we support the alternative hypothesis 
but the null hypothesis is correct, occurs with probability $\langle \rho^{\otimes n}, u-T\rangle$.
The type II error, where we support the null hypothesis 
but the alternative hypothesis is correct, occurs with probability $\langle\sigma^{\otimes n},T\rangle$.
We aim to minimize the two types of error probabilities, 
but they are related to each other as trade-off.
Then, we consider the case that we minimize the type II error 
under a bound of the type I error, and we introduce the following quantity.
\begin{definition}\label{definition:beta}
    For states $\rho,\sigma\in \cV$, we define the following quantity:
    \begin{align}
        \beta^n_\epsilon(\rho||\sigma):=\min_{0\le T\le u}\{\langle\sigma^{\otimes n},T\rangle|\langle\rho^{\otimes n},u-T\rangle\le \epsilon\},\quad 0<\epsilon<1,
    \end{align} 
    where the condition $0\le T\le u$ in the minimization is considered in the space $V^{\otimes n}$.
\end{definition}
In quantum theory, the references \cite{HiaiPetz} and \cite{OgawaNagaoka} have proved that
the exponent of $\beta^n_\epsilon$ is asymptotically equivalent to the relative entropy.
In this paper, we prove the statement even in EJAs, i.e., we prove the following theorem:
\begin{theorem}\label{theorem:Stein}
    For states $\rho,\sigma$ and any $0<\epsilon<1$, the following relation holds:
    \begin{align}
        \lim_{n\to \infty}-\frac{1}{n}\log\beta^n_\epsilon(\rho||\sigma)=D(\rho||\sigma).
    \end{align}
\end{theorem}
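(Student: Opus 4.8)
The plan is to prove the two inequalities $B(\rho||\sigma)\ge D(\rho||\sigma)$ (the direct part) and $D(\rho||\sigma)\ge B^\dag(\rho||\sigma)$ (the converse part) for the exponents of Definition~\ref{definition:BandBdag}, and to bridge them to $\beta^n_\epsilon$ using only the hypothesis $0<\epsilon<1$. On the one hand, if a test sequence $\{T_n\}$ satisfies $\langle\rho^{\otimes n},u-T_n\rangle\to 0$, then $\langle\rho^{\otimes n},u-T_n\rangle\le\epsilon$ for all large $n$, so $\langle\sigma^{\otimes n},T_n\rangle\ge\beta^n_\epsilon(\rho||\sigma)$ there; taking $\varliminf$ and then the supremum over such sequences gives $\varliminf_n-\tfrac1n\log\beta^n_\epsilon(\rho||\sigma)\ge B(\rho||\sigma)$. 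On the other hand, the optimal tests realising $\beta^n_\epsilon(\rho||\sigma)$ have type~I error $\le\epsilon<1$, so they (suitably padded) form an admissible family for $B^\dag$, and combined with a per-$n$ converse estimate this will give $\varlimsup_n-\tfrac1n\log\beta^n_\epsilon(\rho||\sigma)\le D(\rho||\sigma)$. Hence it suffices to establish the direct and converse bounds.

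For the converse I would fix $s>0$ and apply Lemma~\ref{lemma:observationinequalityofSandwich} to the pair $\rho^{\otimes n},\sigma^{\otimes n}$ with the binary measurement $\{T_n,u-T_n\}$, then use additivity of the SRR entropy (Lemma~\ref{theorem:additivity}). Writing $\alpha_n:=\langle\rho^{\otimes n},u-T_n\rangle\le\epsilon$ and $\beta_n:=\langle\sigma^{\otimes n},T_n\rangle$ and keeping only one of the two terms in the classical binary Rényi divergence, this yields
\[ n\,\underline{D}_{1+s}(\rho||\sigma)\ \ge\ \frac{1+s}{s}\log(1-\alpha_n)-\log\beta_n\ \ge\ \frac{1+s}{s}\log(1-\epsilon)-\log\beta_n, \]
so $-\tfrac1n\log\beta_n\le\underline{D}_{1+s}(\rho||\sigma)-\tfrac{1+s}{ns}\log(1-\epsilon)$ for every admissible $T_n$, hence also for the optimiser $\beta^n_\epsilon(\rho||\sigma)$. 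Letting $n\to\infty$ kills the last term, and then letting $s\downarrow 0$ and using Lemma~\ref{lemma:monotonicityins} together with Lemma~\ref{lemma:limitofRenyies} (so that $\inf_{s>0}\underline{D}_{1+s}(\rho||\sigma)=D(\rho||\sigma)$) gives $\varlimsup_n-\tfrac1n\log\beta^n_\epsilon(\rho||\sigma)\le D(\rho||\sigma)$, equivalently $B^\dag(\rho||\sigma)\le D(\rho||\sigma)$.

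For the direct part I would run a blocking argument built on the pinched measurement $I^{\rho^{\otimes m}}_{\sigma^{\otimes m}}$ of Definition~\ref{definition:pinchedmeasurement}. Fix a block length $m$; measuring $\rho^{\otimes m}$ and $\sigma^{\otimes m}$ with $I^{\rho^{\otimes m}}_{\sigma^{\otimes m}}$ yields \emph{fixed} classical distributions $p_m,q_m$ with $D(p_m||q_m)=D^{I^{\rho^{\otimes m}}_{\sigma^{\otimes m}}}(\rho^{\otimes m}||\sigma^{\otimes m})$. Applying the product measurement $(I^{\rho^{\otimes m}}_{\sigma^{\otimes m}})^{\otimes k}$ to $\rho^{\otimes mk}$ and $\sigma^{\otimes mk}$ turns the $mk$-shot problem into the i.i.d.\ classical test of $p_m^{\otimes k}$ against $q_m^{\otimes k}$, to which classical Stein's lemma applies: there are classical tests on the $k$-fold alphabet with vanishing type~I error and type~II exponent (per symbol) $\tfrac1m D(p_m||q_m)$. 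Pulling these back to effects on $\cV^{\otimes mk}$ through the CSOI underlying $I^{\rho^{\otimes m}}_{\sigma^{\otimes m}}$, and padding the remaining $n\bmod m$ factors trivially so as to cover every $n$, we obtain a test sequence $T_n$ with $\langle\rho^{\otimes n},u-T_n\rangle\to 0$ and $\varliminf_n-\tfrac1n\log\langle\sigma^{\otimes n},T_n\rangle\ge\tfrac1m D^{I^{\rho^{\otimes m}}_{\sigma^{\otimes m}}}(\rho^{\otimes m}||\sigma^{\otimes m})$, hence $B(\rho||\sigma)\ge\tfrac1m D^{I^{\rho^{\otimes m}}_{\sigma^{\otimes m}}}(\rho^{\otimes m}||\sigma^{\otimes m})$. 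Letting $m\to\infty$ and invoking Theorem~\ref{theorem:DirectpartofRelative}, which states precisely that $\tfrac1m D^{I^{\rho^{\otimes m}}_{\sigma^{\otimes m}}}(\rho^{\otimes m}||\sigma^{\otimes m})\to D(\rho||\sigma)$, gives $B(\rho||\sigma)\ge D(\rho||\sigma)$. Combining the two parts yields $D(\rho||\sigma)\le B(\rho||\sigma)\le\varliminf_n-\tfrac1n\log\beta^n_\epsilon(\rho||\sigma)\le\varlimsup_n-\tfrac1n\log\beta^n_\epsilon(\rho||\sigma)\le D(\rho||\sigma)$, so the limit exists and equals $D(\rho||\sigma)$.

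I expect the main obstacle to be the direct part: one must coordinate two limits — first $k\to\infty$ inside a fixed block, then $m\to\infty$ — and check that the classical distributions produced by the pinched measurement at block size $m$ are exactly the ones whose relative-entropy rate is controlled by Theorem~\ref{theorem:DirectpartofRelative}, so that the blocking reduction is asymptotically tight and no entropy is lost in transporting the classical tests back to effects on $\cV^{\otimes mk}$. The converse, by contrast, is essentially an immediate consequence of the SRR monotonicity developed in Section~\ref{section:Informationquantities} together with the $s\to 0$ limit of the Rényi quantities.
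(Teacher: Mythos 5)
Your proposal is correct and follows essentially the same route as the paper: the direct part is the identical blocking argument (pinched measurement $I^{\rho^{\otimes m}}_{\sigma^{\otimes m}}$, classical Stein's lemma on the induced distributions, pull-back of classical tests through the CSOI, then Theorem~\ref{theorem:DirectpartofRelative} as $m\to\infty$), and the converse is the same use of Lemma~\ref{lemma:observationinequalityofSandwich} on the binary test $\{T_n,u-T_n\}$ with additivity, monotonicity in $s$, and the $s\to 0$ limit. The only cosmetic difference is that you bound $\varlimsup_n-\tfrac1n\log\beta^n_\epsilon$ directly with $s>0$, whereas the paper parametrizes with $s<0$ and argues via $B^\dagger$ and a contradiction in Lemmas~\ref{lemma:converse1}--\ref{lemma:converse2}; these are equivalent.
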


Similarly to quantum Stein's Lemma, for simplicity of the proof,
we introduce the following two quantities.
\begin{definition}\label{definition:BandBdag}
    For states $\rho,\sigma\in \cV$, we define the following quantities:
    \begin{align}
        B(\rho||\sigma)&:=\sup_{\{0\le T_n\le u\}}\left\{\varliminf_{n\to\infty}-\frac{1}{n}\log\langle\sigma^{\otimes n},T_n\rangle\mid \lim_{n\to\infty}\langle\rho^{\otimes n},u-T_n\rangle=0\right\},\\
        B^\dag(\rho||\sigma)&:=\sup_{\{0\le T_n\le u\}}\left\{\varliminf_{n\to\infty}-\frac{1}{n}\log\langle\sigma^{\otimes n},T_n\rangle\mid \varliminf_{n\to\infty}\langle\rho^{\otimes n},u-T_n\rangle<1\right\},
    \end{align}
    where the condition of supremum $\{0\le T_n\le u\}$ is the family of the inequalities 
    and each inequality $0\le T_n\le u$ is considered in the space $\cV^{\otimes n}$.
\end{definition}

Similarly to quantum Stein's Lemma, we prove the following theorem with $B,B^\dag$ at first,
and we prove Theorem~\ref{theorem:Stein} by applying the following theorem.
\begin{theorem}\label{theorem:Stein2}
    For states $\rho,\sigma\in \cV$, the following relations hold.
    \begin{align}
        B^\dag(\rho||\sigma)=B(\rho||\sigma)=D(\rho||\sigma).
    \end{align}
\end{theorem}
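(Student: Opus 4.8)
The plan is to prove Theorem~\ref{theorem:Stein2} through the chain of inequalities
\begin{align}
  B^\dag(\rho||\sigma)\ge B(\rho||\sigma)\ge D(\rho||\sigma)\ge B^\dag(\rho||\sigma),
\end{align}
and then to deduce Theorem~\ref{theorem:Stein} from it. The leftmost inequality is immediate from Definition~\ref{definition:BandBdag}, since any sequence $\{T_n\}$ with $\lim_n\langle\rho^{\otimes n},u-T_n\rangle=0$ also satisfies $\varliminf_n\langle\rho^{\otimes n},u-T_n\rangle<1$, so $B^\dag$ is a supremum over a larger family than $B$. Hence it remains to prove the \emph{direct part} $B(\rho||\sigma)\ge D(\rho||\sigma)$ and the \emph{converse part} $D(\rho||\sigma)\ge B^\dag(\rho||\sigma)$; these correspond to Lemma~\ref{lemma:direct} and to Lemmas~\ref{lemma:converse1}, \ref{lemma:converse2} in Figure~\ref{figure:structure}.

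For the direct part I would use a blocking argument built on Theorem~\ref{theorem:DirectpartofRelative}. Fix a block length $\ell$ and the pinched measurement $I^{\rho^{\otimes\ell}}_{\sigma^{\otimes\ell}}$ on $\cV^{\otimes\ell}$; applying it independently to each of $m$ blocks converts the i.i.d.\ states $\rho^{\otimes m\ell},\sigma^{\otimes m\ell}$ into genuinely i.i.d.\ classical data on a finite alphabet, with per-block distributions $P^{I^{\rho^{\otimes\ell}}_{\sigma^{\otimes\ell}}}_{\rho^{\otimes\ell}}$ and $P^{I^{\rho^{\otimes\ell}}_{\sigma^{\otimes\ell}}}_{\sigma^{\otimes\ell}}$. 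Classical Stein's lemma applied to these two distributions with sample size $m$ provides, for each $\delta>0$, tests whose type I error tends to $0$ and whose type II error exponent per copy exceeds $\frac{1}{\ell}\big(D^{I^{\rho^{\otimes\ell}}_{\sigma^{\otimes\ell}}}(\rho^{\otimes\ell}||\sigma^{\otimes\ell})-\delta\big)$; padding to all $n$ changes exponents by $O(\ell/n)\to0$. Hence $B(\rho||\sigma)\ge\frac{1}{\ell}D^{I^{\rho^{\otimes\ell}}_{\sigma^{\otimes\ell}}}(\rho^{\otimes\ell}||\sigma^{\otimes\ell})$ for every $\ell$, and letting $\ell\to\infty$ while invoking Theorem~\ref{theorem:DirectpartofRelative} gives $B(\rho||\sigma)\ge D(\rho||\sigma)$.

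For the converse part, let $\{T_n\}$ be any sequence feasible for $B^\dag$ and pass to a subsequence along which $a_n:=\langle\rho^{\otimes n},T_n\rangle\ge\eta>0$; put $b_n:=\langle\sigma^{\otimes n},T_n\rangle$. Applying the binary measurement $\bm{M}_n=\{T_n,u-T_n\}$, Lemma~\ref{lemma:observationinequalityofSandwich} together with additivity of SRR entropy (Lemma~\ref{theorem:additivity}) gives, for $s>0$,
\begin{align}
  n\underline{D}_{1+s}(\rho||\sigma)=\underline{D}_{1+s}(\rho^{\otimes n}||\sigma^{\otimes n})\ge D_{1+s}(P^{\bm{M}_n}_{\rho^{\otimes n}}||P^{\bm{M}_n}_{\sigma^{\otimes n}})\ge\frac{1+s}{s}\log a_n-\log b_n,
\end{align}
where the last step drops the nonnegative second summand of the classical binary R\'enyi sum. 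Rearranging yields $-\frac{1}{n}\log b_n\le\underline{D}_{1+s}(\rho||\sigma)-\frac{1+s}{ns}\log a_n$, and since $\eta\le a_n\le1$ the last term tends to $0$; hence $\varliminf_n-\frac{1}{n}\log\langle\sigma^{\otimes n},T_n\rangle\le\underline{D}_{1+s}(\rho||\sigma)$ for every $s>0$. By monotonicity of $s\mapsto\underline{D}_{1+s}(\rho||\sigma)$ (Lemma~\ref{lemma:monotonicityins}) and its limit $\lim_{s\to0^+}\underline{D}_{1+s}(\rho||\sigma)=D(\rho||\sigma)$ (Lemma~\ref{lemma:limitofRenyies}), the infimum over $s>0$ of the right-hand side is $D(\rho||\sigma)$; taking the supremum over feasible $\{T_n\}$ gives $B^\dag(\rho||\sigma)\le D(\rho||\sigma)$, closing the chain and proving Theorem~\ref{theorem:Stein2}. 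Finally, Theorem~\ref{theorem:Stein} follows quickly: the near-optimal sequence from the direct part is eventually feasible for the $\epsilon$-constraint, so $\varliminf_n-\frac{1}{n}\log\beta^n_\epsilon(\rho||\sigma)\ge B(\rho||\sigma)=D(\rho||\sigma)$, while the optimal tests $T^*_n$ for $\beta^n_\epsilon$ satisfy $\langle\rho^{\otimes n},u-T^*_n\rangle\le\epsilon<1$, so any subsequence realizing $\limsup_n-\frac{1}{n}\log\beta^n_\epsilon$ is feasible for $B^\dag$, whence $\limsup_n-\frac{1}{n}\log\beta^n_\epsilon(\rho||\sigma)\le B^\dag(\rho||\sigma)=D(\rho||\sigma)$. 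I expect the main obstacle to be the direct part: even with Theorem~\ref{theorem:DirectpartofRelative} in hand, one must argue that the $n$-dependence of the near-optimal pinched measurement is harmless, which is precisely what the reduction to classical i.i.d.\ Stein's lemma at the block level accomplishes; the converse part, by contrast, is a routine Chernoff-type estimate once Lemma~\ref{lemma:observationinequalityofSandwich} is available.
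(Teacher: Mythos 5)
Your proposal is correct and follows essentially the same route as the paper: the direct part via the pinched measurement, Theorem~\ref{theorem:DirectpartofRelative}, and a block application of classical Stein's lemma, and the converse via the measured-R\'enyi bound of Lemma~\ref{lemma:observationinequalityofSandwich} combined with additivity, Lemma~\ref{lemma:monotonicityins}, and Lemma~\ref{lemma:limitofRenyies}. The only difference is cosmetic: the paper phrases the converse contrapositively (if the type II exponent exceeds $D(\rho||\sigma)$ then the type I error tends to $1$, Lemma~\ref{lemma:converse2}), while you bound the exponent of feasible sequences directly along a subsequence on which $\langle\rho^{\otimes n},T_n\rangle\ge\eta>0$.
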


In the following sections, we prove Theorem~\ref{theorem:Stein2}.
The implication of Theorem~\ref{theorem:Stein2} to Theorem~\ref{theorem:Stein} is shown in Appendix~\ref{appendix:proofofStein}.
Because of the relation $B^\dag\ge B$ by Definition~\ref{definition:BandBdag}, 
we divide Theorem~\ref{theorem:Stein2} into two parts, the direct part and the converse part.
\begin{lemma}[Direct part]\label{lemma:direct}
    For states $\rho,\sigma\in \cV$, the following inequality holds:
    \begin{align}
        B(\rho||\sigma)\ge D(\rho||\sigma).\label{equation:direct3}
    \end{align}
\end{lemma}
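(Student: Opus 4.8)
The plan is to derive Lemma~\ref{lemma:direct} from the classical Stein's lemma together with Theorem~\ref{theorem:DirectpartofRelative}, using a blocking argument to accommodate the fact that the pinched measurement $I^{\rho^{\otimes m}}_{\sigma^{\otimes m}}$ depends on the number of copies, so that the classicalized data for the full $n$-shot problem is not i.i.d.

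First I would fix a block length $m\in\mathbb{N}$ and set $\bm{M}_m:=I^{\rho^{\otimes m}}_{\sigma^{\otimes m}}$, which is a measurement on $\cV^{\otimes m}$ by Definition~\ref{definition:pinchedmeasurement}. Writing $p_m:=P^{\bm{M}_m}_{\rho^{\otimes m}}$ and $q_m:=P^{\bm{M}_m}_{\sigma^{\otimes m}}$, one has $D(p_m||q_m)=D^{\bm{M}_m}(\rho^{\otimes m}||\sigma^{\otimes m})$. For $n$ blocks, the $n$-fold product family $\bm{M}_m^{\otimes n}$ is again a measurement on $(\cV^{\otimes m})^{\otimes n}=\cV^{\otimes mn}$ — each outcome operator lies in the dual cone by the composite-system structure of Section~\ref{subsection:FrameworkGPTs} together with self-duality of $\cQ_\cV$, and the outcome operators sum to $u^{\otimes mn}$ — and it sends $\rho^{\otimes mn},\sigma^{\otimes mn}$ to the i.i.d. distributions $p_m^{\otimes n},q_m^{\otimes n}$.

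Next I would apply the classical Stein's lemma to the i.i.d. sources $p_m,q_m$, obtaining acceptance sets $A_n$ with $p_m^{\otimes n}(A_n^c)\to 0$ and $-\tfrac1n\log q_m^{\otimes n}(A_n)\to D(p_m||q_m)$. Pulling these back, for each $N$ I would write $N=mn+r$ with $0\le r<m$ and take $T_N:=\bigl(\sum_{i\in A_n}(\bm{M}_m^{\otimes n})_i\bigr)\otimes u^{\otimes r}$, which satisfies $0\le T_N\le u$ in $\cV^{\otimes N}$. By multiplicativity of the trace on tensor products (Definition~\ref{definition:CompositesysteminEJA}), $\langle\rho^{\otimes N},u-T_N\rangle=p_m^{\otimes n}(A_n^c)\to 0$ and $-\tfrac1N\log\langle\sigma^{\otimes N},T_N\rangle=\tfrac{mn}{N}\cdot\tfrac1m\bigl(-\tfrac1n\log q_m^{\otimes n}(A_n)\bigr)\to\tfrac1m D(p_m||q_m)$ as $N\to\infty$. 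Hence $\{T_N\}$ is admissible for the supremum defining $B(\rho||\sigma)$ and yields $B(\rho||\sigma)\ge\tfrac1m D^{\bm{M}_m}(\rho^{\otimes m}||\sigma^{\otimes m})$.

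Since $m$ was arbitrary, the last step is to let $m\to\infty$ and invoke Theorem~\ref{theorem:DirectpartofRelative}, giving $B(\rho||\sigma)\ge\lim_{m\to\infty}\tfrac1m D^{I^{\rho^{\otimes m}}_{\sigma^{\otimes m}}}(\rho^{\otimes m}||\sigma^{\otimes m})=D(\rho||\sigma)$. The main obstacle is the conceptual one just flagged: because $I^{\rho^{\otimes m}}_{\sigma^{\otimes m}}$ genuinely changes with $m$, one cannot classicalize the whole $n$-shot problem and apply the i.i.d. classical Stein's lemma directly; the blocking — fix $m$, use the $n$-fold product of the block measurement (which does produce genuinely i.i.d. classical data over blocks), then send $m\to\infty$ — is exactly what reconciles the $m$-dependence with Theorem~\ref{theorem:DirectpartofRelative}. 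The remaining items are routine: checking that $\bm{M}_m^{\otimes n}$ is a bona fide EJA measurement and that the sub-block remainder $r$ contributes only the vanishing factor $mn/N\to1$ to the exponent.
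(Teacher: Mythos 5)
Your proposal is correct and follows essentially the same route as the paper: block the $N$-shot system into blocks of size $m$, classicalize each block with the pinched measurement $I^{\rho^{\otimes m}}_{\sigma^{\otimes m}}$, apply the classical Stein's lemma to the resulting i.i.d.\ distributions to get $B(\rho||\sigma)\ge\tfrac1m D^{I^{\rho^{\otimes m}}_{\sigma^{\otimes m}}}(\rho^{\otimes m}||\sigma^{\otimes m})$, and then let $m\to\infty$ via Theorem~\ref{theorem:DirectpartofRelative}. Your write-up is in fact slightly more careful than the paper's, since you explicitly verify that the product block measurement is a valid measurement and handle the remainder $r$ when $N$ is not a multiple of $m$.
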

\begin{lemma}[Converse part]\label{lemma:converse}
    For states $\rho,\sigma\in \cV$, the following inequality holds:
    \begin{align}
        D(\rho||\sigma)\ge B^\dag(\rho||\sigma).
    \end{align}
\end{lemma}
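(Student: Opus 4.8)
The plan is to reduce an arbitrary test to a binary classical hypothesis test, bound its classical PRR entropy from above by the SRR entropy of the original states $\rho,\sigma$, and then send the Rényi parameter to $0$. Since the inequality $D(\rho||\sigma)\ge B^\dag(\rho||\sigma)$ is trivial when $D(\rho||\sigma)=\infty$, assume it is finite. Fix $s>0$ and an arbitrary family $\{0\le T_n\le u\}$ of tests in $\cV^{\otimes n}$ with $\varliminf_{n\to\infty}\langle\rho^{\otimes n},u-T_n\rangle<1$; write $a_n:=\langle\rho^{\otimes n},T_n\rangle$ and $b_n:=\langle\sigma^{\otimes n},T_n\rangle$. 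The family $\bm M_n:=\{T_n,u-T_n\}$ is a measurement, so combining the measured bound for SRR entropy (Lemma~\ref{lemma:observationinequalityofSandwich}, or its $n$-shot strengthening behind Lemma~\ref{lemma:limofSandwich}) with additivity (Lemma~\ref{theorem:additivity}) gives
\begin{align}
n\,\underline{D}_{1+s}(\rho||\sigma)=\underline{D}_{1+s}(\rho^{\otimes n}||\sigma^{\otimes n})\ge D_{1+s}\big(P^{\bm M_n}_{\rho^{\otimes n}}\|P^{\bm M_n}_{\sigma^{\otimes n}}\big).
\end{align}
Because $s>0$, discarding the nonnegative second summand in the classical PRR entropy yields $D_{1+s}\big(P^{\bm M_n}_{\rho^{\otimes n}}\|P^{\bm M_n}_{\sigma^{\otimes n}}\big)\ge \tfrac1s\log\!\big(a_n^{1+s}b_n^{-s}\big)=\tfrac{1+s}{s}\log a_n-\log b_n$; this is the core of Lemma~\ref{lemma:converse1}. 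Rearranging and dividing by $n$,
\begin{align}
-\frac1n\log\langle\sigma^{\otimes n},T_n\rangle\ \le\ \underline{D}_{1+s}(\rho||\sigma)+\frac{1+s}{ns}\log\frac{1}{\langle\rho^{\otimes n},T_n\rangle}.
\end{align}

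Next I would use the admissibility condition to kill the correction term. Since $\varliminf_{n}\langle\rho^{\otimes n},u-T_n\rangle<1$, pick $\delta\in(0,1)$ and a subsequence $(n_k)$ along which $\langle\rho^{\otimes n_k},u-T_{n_k}\rangle\le 1-\delta$, i.e.\ $a_{n_k}\ge\delta$. Along $(n_k)$ one has $\frac{1+s}{n_k s}\log\frac{1}{a_{n_k}}\le \frac{1+s}{n_k s}\log\frac1\delta\to 0$, so the previous display forces $\varliminf_{n}\big({-}\tfrac1n\log\langle\sigma^{\otimes n},T_n\rangle\big)\le \underline{D}_{1+s}(\rho||\sigma)$. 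Taking the supremum over all admissible test families gives $B^\dag(\rho||\sigma)\le \underline{D}_{1+s}(\rho||\sigma)$ for every $s>0$. Finally, letting $s\to 0^+$: by monotonicity in $s$ (Lemma~\ref{lemma:monotonicityins}) the limit may be replaced by an infimum, $\inf_{s>0}\underline{D}_{1+s}(\rho||\sigma)=\lim_{s\to 0^+}\underline{D}_{1+s}(\rho||\sigma)$, and by Lemma~\ref{lemma:limitofRenyies} this equals $D(\rho||\sigma)$ --- this is the content of Lemma~\ref{lemma:converse2}. Hence $B^\dag(\rho||\sigma)\le D(\rho||\sigma)$.

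Most of the work is bookkeeping once the earlier machinery is granted; the genuine subtleties are (i) extracting the right subsequence from the hypothesis $\varliminf<1$ so that the $\frac{1+s}{ns}\log\frac1{a_n}$ term vanishes \emph{before} $s$ is sent to $0$ --- the order of the two limiting operations, supremum over tests at fixed $s$ and then $s\to0^+$, cannot be exchanged naively --- and (ii) checking that $\underline{D}_{1+s}(\rho||\sigma)$ is finite for $s>0$ in the non-trivial case $D(\rho||\sigma)<\infty$, which is what makes the displayed inequalities meaningful and in turn forces $b_n$ to stay bounded away from $0$ along the chosen subsequence. The heavy inputs --- additivity, the measured characterization of $\underline{D}_{1+s}$ with its asymptotic attainability, and its convergence and monotonicity in $s$ --- have all been established in Sections~\ref{section:Information theorical tools} and~\ref{section:Informationquantities}, so this converse is essentially their assembly.
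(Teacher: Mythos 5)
Your proposal is correct and follows essentially the same route as the paper: the core estimate is the same bound of the measured binary-test R\'enyi divergence by $n\underline{D}_{1+s}(\rho\|\sigma)$ via Lemma~\ref{lemma:observationinequalityofSandwich} and additivity, followed by dropping the second summand and sending $s$ to $0$ using Lemmas~\ref{lemma:monotonicityins} and~\ref{lemma:limitofRenyies} (the paper writes this with $s<0$ in the $\tilde\phi$ parametrization). The only difference is organizational: you bound $\varliminf_n -\tfrac1n\log\langle\sigma^{\otimes n},T_n\rangle$ directly along the admissible subsequence, whereas the paper's Lemmas~\ref{lemma:converse1}--\ref{lemma:converse2} argue the contrapositive (a rate exceeding $D(\rho\|\sigma)$ forces the type~I error to $1$), which is logically equivalent.
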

We prove Lemma~\ref{lemma:direct} in Section~\ref{subsection:directpart}
and Lemma~\ref{lemma:converse} in Section~\ref{subsection:conversepart}.

\subsection{Direct part}\label{subsection:directpart}

In this subsection, we prove Direct part (Lemma~\ref{lemma:direct}).
By applying Theorem~\ref{theorem:DirectpartofRelative}, we prove Direct part as follows:
\begin{proof}[Proof of Lemma~\ref{lemma:direct}]
At first, we take the family of measurement $\{I^{\rho^{\otimes n}}_{\sigma^{\otimes n}}\}_n$ in Definition~\ref{definition:pinchedmeasurement}.
From Theorem~\ref{theorem:DirectpartofRelative}, for each $\epsilon>0$, there exists $N_\epsilon$ such that 
\begin{align}
    \frac{1}{n}D^{I^{\rho^{\otimes n}}_{\sigma^{\otimes n}}}(\rho^{\otimes n}||\sigma^{\otimes n})\ge D(\rho||\sigma)-\epsilon,\quad n\ge N_\epsilon.\label{equation:direct1}
\end{align}
Here, We take $k\ge N_\epsilon$ and a set $A_l\subset \{1,\ldots,m\}^l$ for arbitrary $l\in \mathbb{N}$,
where $m$ is the number of element in the measurement $I^{\rho^{\otimes k}}_{\sigma^{\otimes k}}$.

Then, we obtain
\begin{align}
    B(\rho||\sigma)\stackrel{(a)}{\ge} \frac{1}{k}B(P^{I^{\rho^{\otimes k}}_{\sigma^{\otimes k}}}_{\rho^{\otimes k}}||P^{I^{\rho^{\otimes k}}_{\sigma^{\otimes k}}}_{\sigma^{\otimes k}})\stackrel{(b)}{=}\frac{1}{k}D^{I^{\rho^{\otimes k}}_{\sigma^{\otimes k}}}(\rho^{\otimes k}||\sigma^{\otimes k}).\label{equation:direct2}
\end{align}
The inequality (a) is shown by the definition of $B(\rho||\sigma)$.
The equation (b) is shown by applying the set $A_l$ to definition of $\beta_\epsilon^n(p||q)$ and classical Stein's Lemma(Theorem~\ref{theorem:classicalstein}).
Combining \eqref{equation:direct1} and \eqref{equation:direct2}, we obtain
\begin{align}
    B(\rho||\sigma)\ge D(\rho||\sigma)-\epsilon.
\end{align}
The parameter $\epsilon>0$ is chosen arbitrary, and therefore, we conclude the inequality \eqref{equation:direct3}.
\end{proof}

\subsection{Converse part}\label{subsection:conversepart}

In this subsection, we show Converse part (Lemma~\ref{lemma:converse}).
At first, we estimate the type I error by SRR entropy as follows.
\begin{lemma}\label{lemma:converse1}
    Let $\rho,\sigma$ be states in $\cV$.
    An effect $T_n$ in $V^{\otimes n}$ satisfies the following inequality holds for arbitrary $n\in \mathbb{N}$ and $s<0$:
    \begin{align}
        -\frac{1}{n}\log\langle\rho^{\otimes n},T_n\rangle\ge\frac{-\tilde{\phi}(s|\rho||\sigma)-s(-\frac{1}{n}\log\langle\sigma^{\otimes n},T_n\rangle)}{1-s},\label{equation:converse1}
    \end{align} 
    where $\tilde{\phi}(s|\rho||\sigma)$ is defined in Definition~\ref{definition:informationquantities}.
\end{lemma}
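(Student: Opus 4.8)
The plan is to reduce \eqref{equation:converse1} to the data‑processing inequality for SRR entropy under a two‑outcome measurement, i.e.\ Lemma~\ref{lemma:observationinequalityofSandwich}, together with additivity (Lemma~\ref{theorem:additivity}) and a sign‑careful rearrangement. The key observation is that since $s<0$ the SRR order $1-s$ exceeds $1$, so Lemma~\ref{lemma:observationinequalityofSandwich} is applicable with its parameter taken to be $-s>0$. Unwinding Definition~\ref{definition:informationquantities}, one has $\tilde{\phi}(s|\rho||\sigma)=-s\,\underline{D}_{1-s}(\rho||\sigma)$, and likewise $-s\,D_{1-s}(p||q)=\log\sum_i p_i^{1-s}q_i^{s}$ for the classical Rényi entropy; this is the dictionary I would record first.

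Concretely: fix $n$ and the effect $T_n$, and take the binary measurement $\bm{M}=\{T_n,\,u-T_n\}$ on $\cV^{\otimes n}$, which yields $P^{\bm{M}}_{\rho^{\otimes n}}=(\langle\rho^{\otimes n},T_n\rangle,\langle\rho^{\otimes n},u-T_n\rangle)$ and the analogous $P^{\bm{M}}_{\sigma^{\otimes n}}$. Applying Lemma~\ref{lemma:observationinequalityofSandwich} with parameter $-s>0$ to the states $\rho^{\otimes n},\sigma^{\otimes n}$ gives $\underline{D}_{1-s}(\rho^{\otimes n}||\sigma^{\otimes n})\ge D_{1-s}(P^{\bm{M}}_{\rho^{\otimes n}}||P^{\bm{M}}_{\sigma^{\otimes n}})$; multiplying through by $-s>0$ (which preserves the inequality) and using the dictionary above yields
\begin{align}
\tilde{\phi}(s|\rho^{\otimes n}||\sigma^{\otimes n})\ \ge\ \log\!\Big(\langle\rho^{\otimes n},T_n\rangle^{1-s}\langle\sigma^{\otimes n},T_n\rangle^{s}+\langle\rho^{\otimes n},u-T_n\rangle^{1-s}\langle\sigma^{\otimes n},u-T_n\rangle^{s}\Big).
\end{align}
Then I drop the nonnegative second summand (monotonicity of $\log$) and invoke additivity of SRR entropy (Lemma~\ref{theorem:additivity}), which translates to $\tilde{\phi}(s|\rho^{\otimes n}||\sigma^{\otimes n})=n\,\tilde{\phi}(s|\rho||\sigma)$, to get $n\,\tilde{\phi}(s|\rho||\sigma)\ge(1-s)\log\langle\rho^{\otimes n},T_n\rangle+s\log\langle\sigma^{\otimes n},T_n\rangle$. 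Since $1-s>0$, dividing by $n(1-s)$ and multiplying by $-1$ produces exactly \eqref{equation:converse1}.

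All the steps are routine; the points that require attention are the bookkeeping of signs (from $s<0$ one has $1-s>0$ and $-s>0$, and one must track which multiplications flip the inequality and which do not) and the translation between the $\tilde{\phi}$‑notation and the SRR/classical‑Rényi notation of Definition~\ref{definition:informationquantities}, in particular the fact that the hypothesis ``$s>0$'' of Lemma~\ref{lemma:observationinequalityofSandwich} corresponds here to SRR order $1-s>1$. The only genuine loose end is the degenerate case in which $\langle\rho^{\otimes n},T_n\rangle=0$ or the relevant entropies are infinite, which I would dispose of in one line using the usual conventions (the inequality being trivial when the left‑hand side is $+\infty$) or by restricting to faithful $\sigma$.
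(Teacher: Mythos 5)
Your proposal is correct and follows essentially the same route as the paper's proof: both use the binary measurement $\{T_n,u-T_n\}$, bound the resulting two-term classical quantity by dropping the nonnegative $(u-T_n)$-summand, invoke Lemma~\ref{lemma:observationinequalityofSandwich} (with additivity of $\tilde{\phi}$ over tensor powers, which the paper leaves implicit in its step (a)) to pass to $\tilde{\phi}(s|\rho||\sigma)$, and finish with the same sign-careful division by $1-s>0$. The only difference is presentational — you apply the data-processing step before discarding the second summand, and you make the additivity and the $\tilde{\phi}\leftrightarrow\underline{D}_{1-s}$ dictionary explicit — which does not change the argument.
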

\begin{proof}
    At first, the following relation holds for $s<0$:
\begin{align}
    &(\langle \rho^{\otimes n},T_n\rangle)^{1-s}(\langle\sigma^{\otimes n},T_n\rangle)^s \nonumber\\
\le &(\langle \rho^{\otimes n},T_n\rangle)^{1-s}(\langle\sigma^{\otimes n},T_n\rangle)^s+(\langle \rho^{\otimes n},u-T_n\rangle)^{1-s}(\langle\sigma^{\otimes n},u-T_n\rangle)^s\\
=&e^{\phi(s|P^{T_n}_{\rho^{\otimes n}}||P^{T_n}_{\sigma^{\otimes n}})}.\label{equation:converse2}
\end{align}
In \eqref{equation:converse2}, we take logarithm and divide the equation by $n$, we organize
the equation as follows:
\begin{align}
    (1-s)\frac{1}{n}\log\langle\rho^{\otimes n},T_n\rangle+s\frac{1}{n}\log\langle\sigma^{\otimes n},T_n\rangle\le \frac{1}{n}\phi(s|P^{T_n}_{\rho^{\otimes n}}||P^{T_n}_{\sigma^{\otimes n}})\stackrel{(a)}{\le} \tilde{\phi}(s|\rho||\sigma).\label{equation:converse3}
\end{align}
The inequality (a) is shown as Lemma~\ref{lemma:observationinequalityofSandwich}.
Finally, we divide \eqref{equation:converse3} by $1-s>0$ and organize the inequality, we obtain \eqref{equation:converse1}.
\end{proof}

Next, under the condition about type II error, the limitation of type I error is bounded with $r$ as follows.
\begin{lemma}\label{lemma:converse2}
Let $\rho,\sigma$ be states in $\cV$.
We take an arbitrary effect $T_n$ in $V^{\otimes n}$ and a number $r:=\varliminf_{n\to\infty}-\frac{1}{n}\log \langle\sigma^{\otimes n},T_n\rangle$.
If $r>D(\rho||\sigma)$, there exist $s_0<0$ such that
\begin{align}
    \varliminf_{n\to\infty}-\frac{1}{n}\log\langle\rho^{\otimes n},T_n\rangle\ge \frac{-\tilde{\phi}(s_0|\rho||\sigma)-s_0r}{1-s_0}>0.\label{equation:converse4}
\end{align}    

\end{lemma}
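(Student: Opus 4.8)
The plan is to pass to $\varliminf_{n\to\infty}$ in the single-shot estimate of Lemma~\ref{lemma:converse1} and then optimize the resulting bound over the free parameter $s<0$. First I would fix $s<0$ and rewrite the bound of Lemma~\ref{lemma:converse1} in the equivalent affine form
\begin{align}
-\frac{1}{n}\log\langle\rho^{\otimes n},T_n\rangle\ge\frac{-\tilde{\phi}(s|\rho||\sigma)}{1-s}+\frac{-s}{1-s}\left(-\frac{1}{n}\log\langle\sigma^{\otimes n},T_n\rangle\right).
\end{align}
Since $s<0$ we have $\frac{-s}{1-s}>0$, so taking $\varliminf_{n\to\infty}$ of both sides and using that $\varliminf(a+c\,b_n)=a+c\,\varliminf b_n$ for a constant $a$ and a positive constant $c$, together with the definition $r=\varliminf_{n\to\infty}-\frac{1}{n}\log\langle\sigma^{\otimes n},T_n\rangle$, I obtain
\begin{align}
\varliminf_{n\to\infty}-\frac{1}{n}\log\langle\rho^{\otimes n},T_n\rangle\ge\frac{-\tilde{\phi}(s|\rho||\sigma)-sr}{1-s}\qquad\text{for all }s<0.
\end{align}

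Next I would rephrase the right-hand side in terms of the SRR entropy and then choose $s_0$. Writing $t:=-s>0$, Definition~\ref{definition:informationquantities} gives $\tilde{\phi}(s|\rho||\sigma)=\tilde{\phi}(-t|\rho||\sigma)=t\,\underline{D}_{1+t}(\rho||\sigma)$, so the right-hand side above equals $\frac{t\,(r-\underline{D}_{1+t}(\rho||\sigma))}{1+t}$. By Lemma~\ref{lemma:limitofRenyies}, $\underline{D}_{1+t}(\rho||\sigma)\to D(\rho||\sigma)$ as $t\to 0^{+}$; since $r>D(\rho||\sigma)$ by hypothesis, there is some $t_0>0$ with $\underline{D}_{1+t_0}(\rho||\sigma)<r$. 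Setting $s_0:=-t_0<0$, the denominator $1+t_0$ is positive and the numerator $t_0\,(r-\underline{D}_{1+t_0}(\rho||\sigma))$ is strictly positive, whence
\begin{align}
\varliminf_{n\to\infty}-\frac{1}{n}\log\langle\rho^{\otimes n},T_n\rangle\ge\frac{-\tilde{\phi}(s_0|\rho||\sigma)-s_0r}{1-s_0}=\frac{t_0\,(r-\underline{D}_{1+t_0}(\rho||\sigma))}{1+t_0}>0,
\end{align}
which is exactly \eqref{equation:converse4}.

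I do not expect a serious obstacle. The only steps needing care are the interchange of $\varliminf$ with the affine, positively-scaled expression in $-\frac1n\log\langle\sigma^{\otimes n},T_n\rangle$ (legitimate because $-s/(1-s)>0$ and the additive term is constant in $n$), and the appeal to the limit in Lemma~\ref{lemma:limitofRenyies} to guarantee that $\underline{D}_{1+t}(\rho||\sigma)$ drops strictly below $r$ for all small $t>0$; the monotonicity of $t\mapsto\underline{D}_{1+t}(\rho||\sigma)$ from Lemma~\ref{lemma:monotonicityins} can be invoked for definiteness but is not essential. The degenerate case $r=+\infty$ is immediate, since then both sides of \eqref{equation:converse4} are $+\infty$.
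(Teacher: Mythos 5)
Your proposal is correct and follows essentially the same route as the paper's proof: take $\varliminf_{n\to\infty}$ in the bound of Lemma~\ref{lemma:converse1} (the affine rearrangement with $\frac{-s}{1-s}>0$ is exactly what justifies this), and then use Lemma~\ref{lemma:limitofRenyies} to pick $s_0<0$ with $\frac{\tilde{\phi}(s_0|\rho||\sigma)}{-s_0}=\underline{D}_{1-s_0}(\rho||\sigma)<r$, which makes the right-hand side strictly positive. The paper additionally invokes the monotonicity from Lemma~\ref{lemma:monotonicityins} at this last step, as you note is possible but not essential.
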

\begin{proof}
    The First inequality of \eqref{equation:converse4} is shown by taking the inferior limit in \eqref{equation:converse1}.
    The equation $\lim_{s\to 0}\frac{\tilde{\phi}(s|\rho||\sigma)}{-s}=D(\rho||\sigma)$ holds by Lemma~\ref{lemma:limitofRenyies}.
     The monotonicity of $\underline{D}_{1+s}$(Lemma~\ref{lemma:monotonicityins}) ensures the existence of $s_0<0$ satisfying $r>\frac{\tilde{\phi}(s_0|\rho||\sigma)}{-s_0}=\frac{\tilde{\phi}(s_0|\rho||\sigma)-\tilde{\phi}(0|\rho||\sigma)}{-s_0}$.
    Therefore, the following relation holds:
    \begin{align}
        \frac{\tilde{\phi}(s_0|\rho||\sigma)-s_0r}{1-s_0}=\frac{s_0}{1-s_0}(\frac{\tilde{\phi}(s_0|\rho||\sigma)}{-s_0}-r)>0.
    \end{align}
	As a result, the inequality \eqref{equation:converse4} holds.
\end{proof}

Applying Lemma~\ref{lemma:converse1} and Lemma~\ref{lemma:converse2}, we obtain Converse part (Lemma~\ref{lemma:converse}).
\begin{proof}[Proof of Lemma~\ref{lemma:converse}]
    We take a number $r$ as
    \begin{align}
        r:=\varliminf_{n\to \infty}-\frac{1}{n}\log\langle\sigma^{\otimes n},T_n\rangle.
    \end{align}
    In the case of $r>D(\rho||\sigma)$, from Lemma~\ref{lemma:converse2}, we obtain
    \begin{align}
        \varlimsup_{n\to\infty}\langle\rho^{\otimes n},T_n\rangle=0
    \end{align}
    In this case, 
    \begin{align}
        \varliminf_{n\to\infty}\langle\rho^{\otimes n},u-T_n\rangle=1.\label{equation:converse5}
    \end{align}
    holds.
    Therefore, in this case, the family of effects $\{T_n\}$ does not satisfy the condition of $B^\dag(\rho||\sigma)$.
    If we take effects which don't satisfy the condition \eqref{equation:converse5}, the relation
    \begin{align}
        D(\rho||\sigma)\ge \varliminf_{n\to\infty}-\frac{1}{n}\log\langle\sigma^{\otimes n},T_n\rangle.\label{equation:converse6}
    \end{align}
    holds.
    We take supremum in \eqref{equation:converse6} with effects which don't satisfy \eqref{equation:converse5}, we obtain Converse part.
\end{proof}

\section{Quantum Realization of EJAs}\label{section:quantumrealization}

In this section, we discuss how we realize the model associated with EJAs in quantum theory.
First, we define canonical Jordan subalgebras and canonical embedding map.
Then, we show that canonical embedding map preserve SRR entropy for any $s>0$, and as a result, we give another proof of Stein's Lemma if there exists a canonical embedding map into quantum theory.
Finally, we see that Lorentz type and Quaternion type, which are the remaining type of simple EJA except for Octonion type, are canonically embedded into quantum theory.
In other words, we conclude another proof of Stein's Lemma if the single system does not contain any Octonion part.

\subsection{Canonical Jordan subalgebra}
\label{section:CanonicalJordanSub}

First, we define the canonical Jordan subalgebras and see their properties.

We consider a Jordan algebra $\cV$.
A strictly positive definite inner product $\langle ~,~\rangle$
is called canonical when $\cQ_\cV=\cQ_\cV^\ast$, i.e., 
\begin{align}
\{X\circ X\mid X\in\cV\}=\{X \in \cV \mid \langle X,Y\rangle\ge 0 ,\forall Y \in \cQ_\cV\}.
\end{align}
A subspace $\cV_1$ of a Jordan algebra $\cV$ with the unit $u$
is called a Jordan subalgebra of $\cV$
when $\cV_1$ contains $u$ and is closed for the Jordan product
of $\cV$.

A Jordan subalgebra $\cV_1$ of 
${\cal V}$ with a canonical inner product $\langle ~,~\rangle$
is called a canonical Jordan subalgebra of 
${\cal V}$ with a canonical inner product $\langle ~,~\rangle$
when the inner product $\langle ~,~\rangle$ is 
canonical even for the Jordan subalgebra ${\cal V}_1$.

Now, we choose a canonical Jordan subalgebra ${\cal V}_1$ of 
${\cal V}$ with a canonical inner product $\langle ~,~\rangle$.
We choose 
two cones $\cQ_\cV$ and $\cQ_{\cV_1}$.
Also, we consider their state spaces $\cS(\cQ_\cV,u)$, $\cS(\cQ_{\cV_1},u)$ and their measurement spaces $\cM(\cQ_\cV,u)$, $\cM(\cQ_{\cV_1},u)$.

For two states $\rho,\rho'\in \cS(\cQ_\cV,u)$, 
we denote $\rho\sim_{\cM(\cQ_{\cV_1},u)} \rho' $ when
\begin{align}
\langle \rho, M_j\rangle =
\langle \rho', M_j\rangle , \quad \forall j ,
\quad \forall \{ M_j\}_j \in \cM(\cQ_{\cV_1},u).
\end{align}

For two measurements $\{ M_j\}_j,\{ M_j'\}_j \in \cM(\cQ_\cV,u)$, 
we denote $\{ M_j\}_j\sim_{\cS(\cQ_{\cV_1},u)} \{ M_j'\}_j$ when
\begin{align}
\langle \rho, M_j\rangle =
\langle \rho, M_j'\rangle , \quad \forall j ,
\quad \forall \rho \in {\cS(\cQ_{\cV_1},u)}_1.
\end{align}

When our state is limited into $\cS(\cQ_{\cV_1},u)$,
any measurement can be written as an element of 
$\cM(\cQ_{\cV_1},u)$.
Hence, we have the following theorem.

\begin{theorem}\label{theorem:embed-measurement}
For any measurement $\{ M_j\}_j \in \cM(\cQ_\cV,u)$,
there exists a measurement $\{ M_j'\}_j \in \cM(\cQ_{\cV_1},u)$
such that
$\{ M_j\}_j\sim_{\cS(\cQ_{\cV_1},u)} \{ M_j'\}_j$.
\end{theorem}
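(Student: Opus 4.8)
The plan is to realise the desired measurement as the orthogonal projection of the original one onto the subalgebra. Concretely, let $\Pi\colon\cV\to\cV_1$ denote the orthogonal projection with respect to the canonical inner product $\langle~,~\rangle$, and put $M_j':=\Pi(M_j)$ for each $j$. First I would record the elementary facts to be used: $\Pi$ is linear, self-adjoint (i.e.\ $\langle \Pi(x),y\rangle=\langle x,\Pi(y)\rangle$ for all $x,y\in\cV$), idempotent, and restricts to the identity on $\cV_1$; in particular $\Pi(u)=u$ because $u\in\cV_1$ by the definition of a Jordan subalgebra.

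Next I would check that $\{M_j'\}_j\in\cM(\cQ_{\cV_1},u)$. Normalisation is immediate: $\sum_j M_j'=\Pi\!\left(\sum_j M_j\right)=\Pi(u)=u$. For the positivity condition I would first observe the inclusion $\cQ_{\cV_1}\subseteq\cQ_\cV$, which holds since $\{x\circ x\mid x\in\cV_1\}\subseteq\{x\circ x\mid x\in\cV\}$. Then, for any $y\in\cQ_{\cV_1}$,
\begin{align}
\langle M_j',y\rangle=\langle \Pi(M_j),y\rangle=\langle M_j,\Pi(y)\rangle=\langle M_j,y\rangle\ge 0,
\end{align}
where the third equality uses $y\in\cV_1$ and the inequality uses $M_j\in\cQ_\cV^\ast$ together with $y\in\cQ_\cV$. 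Hence $M_j'\in\cQ_{\cV_1}^\ast$; since $\cV_1$ is a \emph{canonical} Jordan subalgebra we have $\cQ_{\cV_1}^\ast=\cQ_{\cV_1}$, so $\{M_j'\}_j$ is indeed a legitimate measurement of the model $(\cV_1,\cQ_{\cV_1},u)$.

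Finally I would verify $\{M_j\}_j\sim_{\cS(\cQ_{\cV_1},u)}\{M_j'\}_j$. Any $\rho\in\cS(\cQ_{\cV_1},u)$ lies in $\cV_1$, hence $\Pi(\rho)=\rho$, and therefore for every $j$
\begin{align}
\langle\rho,M_j'\rangle=\langle\rho,\Pi(M_j)\rangle=\langle\Pi(\rho),M_j\rangle=\langle\rho,M_j\rangle,
\end{align}
which is exactly the required equivalence.

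The argument is essentially formal once the map $\Pi$ is chosen correctly; the only step that genuinely uses the hypotheses is the membership $M_j'\in\cQ_{\cV_1}^\ast$, which rests on the inclusion $\cQ_{\cV_1}\subseteq\cQ_\cV$ and on the self-adjointness of $\Pi$, with the canonicality of $\cV_1$ then converting this dual-cone membership into membership in $\cQ_{\cV_1}$ itself. I therefore do not expect a serious obstacle: the content of the statement is precisely the observation that ``discarding'' the components of the effects orthogonal to $\cV_1$ is harmless as long as the states are confined to $\cV_1$.
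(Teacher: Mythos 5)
Your proof is correct, and it supplies via the orthogonal projection $\Pi$ onto $\cV_1$ exactly the argument that the paper leaves implicit (the paper merely asserts the theorem after the remark that measurements on states confined to $\cS(\cQ_{\cV_1},u)$ can be rewritten inside $\cV_1$). All the steps check out: self-adjointness of $\Pi$, $\Pi(u)=u$, the inclusion $\cQ_{\cV_1}\subseteq\cQ_\cV$, and the resulting membership $M_j'\in\cQ_{\cV_1}^\ast$ are precisely what is needed, so no gap remains.
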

Therefore, 
when our states are limited into $\cM(\cQ_{\cV_1},u)$,
we can restrict our measurements into elements of $\cM(\cQ_{\cV_1},u)$.

When our measurement is limited into $\cM(\cQ_{\cV_1},u)$,
any state can be written as an element of 
$\cS(\cQ_{\cV_1},u)$.
Hence, we have the following theorem.

\begin{theorem}\label{theorem:embed-state}
For any state $\rho \in \cS(\cQ_{\cV},u)$,
there exists a state $\rho' \in \cS(\cQ_{\cV_1},u)$
such that $\rho\sim_{\cM(\cQ_{\cV_1},u)} \rho'$.
\end{theorem}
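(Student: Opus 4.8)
The plan is to use the orthogonal projection onto the subalgebra, mirroring the argument behind Theorem~\ref{theorem:embed-measurement}. Let $\Pi\colon\cV\to\cV_1$ be the orthogonal projection with respect to the canonical inner product $\langle~,~\rangle$, and set $\rho':=\Pi(\rho)$. I will check that $\rho'$ is a state of $\cS(\cQ_{\cV_1},u)$ and that it agrees with $\rho$ on every measurement in $\cM(\cQ_{\cV_1},u)$.

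First I would verify $\rho'\in\cS(\cQ_{\cV_1},u)$. Since $\Pi$ is self-adjoint and fixes every element of $\cV_1$, and $u\in\cV_1$, we get $\langle\rho',u\rangle=\langle\rho,\Pi(u)\rangle=\langle\rho,u\rangle=1$. For positivity, observe that $\cV_1$ being a Jordan subalgebra gives $\cQ_{\cV_1}=\{x\circ x\mid x\in\cV_1\}\subseteq\{y\circ y\mid y\in\cV\}=\cQ_{\cV}$; hence, using self-duality of $\cQ_\cV$ (Lemma~\ref{lemma:selfduality}) together with $\rho\in\cQ_\cV=\cQ_\cV^\ast$, we have $\langle\rho,y\rangle\ge0$ for every $y\in\cQ_{\cV_1}$. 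For such $y$ we also have $\Pi(y)=y$, so $\langle\rho',y\rangle=\langle\Pi(\rho),y\rangle=\langle\rho,\Pi(y)\rangle=\langle\rho,y\rangle\ge0$. Because the inner product is canonical for $\cV_1$, i.e. $\cQ_{\cV_1}=\cQ_{\cV_1}^\ast$ computed inside $\cV_1$, this nonnegativity of pairings shows $\rho'\in\cQ_{\cV_1}$, and combined with the normalization we conclude $\rho'\in\cS(\cQ_{\cV_1},u)$.

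The indistinguishability is then immediate: for any $\{M_j\}_j\in\cM(\cQ_{\cV_1},u)$ each $M_j$ lies in $\cQ_{\cV_1}^\ast\subseteq\cV_1$, so $\Pi(M_j)=M_j$ and $\langle\rho',M_j\rangle=\langle\Pi(\rho),M_j\rangle=\langle\rho,\Pi(M_j)\rangle=\langle\rho,M_j\rangle$ for all $j$, which is exactly $\rho\sim_{\cM(\cQ_{\cV_1},u)}\rho'$.

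There is no serious obstacle here; the only points needing care are (i) the inclusion $\cQ_{\cV_1}\subseteq\cQ_\cV$, which follows purely from $\cV_1$ being closed under the Jordan product, and (ii) that the canonical-inner-product hypothesis gets used twice — once through self-duality of $\cQ_\cV$ to obtain $\langle\rho,y\rangle\ge0$ on $\cQ_{\cV_1}$, and once through self-duality of $\cQ_{\cV_1}$ inside $\cV_1$ to promote the nonnegativity of the pairings of $\rho'$ into genuine membership $\rho'\in\cQ_{\cV_1}$. The whole argument is the state-side counterpart of the proof of Theorem~\ref{theorem:embed-measurement}.
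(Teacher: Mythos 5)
Your proof is correct and follows exactly the route the paper intends (the paper only asserts this theorem with a one-sentence justification, so your orthogonal-projection construction supplies the missing details). Both points you flag — the inclusion $\cQ_{\cV_1}\subseteq\cQ_\cV$ from closure of $\cV_1$ under the Jordan product, and the two uses of the canonical-inner-product hypothesis (self-duality of $\cQ_\cV$ for the nonnegative pairings, self-duality of $\cQ_{\cV_1}$ inside $\cV_1$ to conclude $\Pi(\rho)\in\cQ_{\cV_1}$) — are exactly the right places where care is needed, and they check out.
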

Therefore, 
when our measurements are limited into $\cM(\cQ_{\cV_1},u)$,
we can restrict our states into elements of $\cS(\cQ_{\cV_1},u)$.

\subsection{Canonical embedding map}
\label{section:CanonicalEmbedding}

Next, we define the canonical embedding map
and see that the SRR entropy is preserved by canonical embedding maps.
As a result, we give another proof of Stein's Lemma if there exists a canonical embedding map into quantum theory (Theorem~\ref{TT3}).

We say that a linear mapt $\phi$ from an Jordan algebra $\cV_1$ to another Jordan algebra $\cV_2$ is a Jordan homomorphsm
when $\phi(x)\circ_2\phi(y)=\phi(x\circ_1 y)$ holds for any $x,y\in\cV_1$ with Jordan products $\circ_1$ on $\cV_1$ and $\circ_2$ on $\cV_2$.

Given a finite-dimensional Hilbert space ${\cal H}$,
we denote the set of Hermitian matrices by
${\cal B}_H({\cal H})$.
Given a Jordan algebra ${\cal V}$ with a canonical inner product
$\langle ~,~\rangle_{{\cal V}}$,
an embedding map $\phi$ from
${\cal V}$ to ${\cal B}_H({\cal H})$ is called 
a canonical embedding map
when $\phi$ is a Jordan homomorphsm
and
the Jordan subalgebra $\phi({\cal V})$ 
is a canonical Jordan subalgebra of 
${\cal B}_H({\cal H})$ with the inner product defined by the trace.
We define the dual map $\phi^*:
{\cal B}_H({\cal H})\to {\cal V}$ as
\begin{align}
\langle \phi^*(X),Y\rangle_{{\cal V}}
=\Tr X \phi(Y)
\end{align}
for $X \in {\cal B}_H({\cal H})$ and $Y\in {\cal V}$.
We consider
the following sets of states $\cS(\cQ_\cV,u)$ and $\cS(\cQ_{\phi(\cV)},I)$ of Jordan subalgebras ${\cal V}$ and
$\phi({\cal V})$ with the inner product defined by the trace.

Then, we obtain the following theorem about the equivalence of SRR entropy and relative entropy by cannonical embedding
by applying Lemma~\ref{lemma:limofSandwich}

\begin{theorem}\label{TT3}
Given a canonical embedding map $\phi$ from
${\cal V}$ to ${\cal B}_H({\cal H})$,
two states $\rho,\sigma \in \cS(\cQ_{\phi(\cV)},I)$ satisfy
\begin{align}
D(\rho\|\sigma)&=D(\phi^*(\rho)\|\phi^*(\sigma)) \label{eq:TT3-1}\\
\underline{D}_{1+s}(\rho\|\sigma)
&=\underline{D}_{1+s}(\phi^*(\rho)\|\phi^*(\sigma)) \label{eq:TT3-2}
\end{align}
for $s>0$.
Also, the map $\phi^*$ gives one-to-one relation between
$\cS(\cQ_{\phi(\cV)},I)$ and $\cS(\cQ_\cV,u)$.
That is, there is a map $\psi:\cS(\cQ_\cV,u) \to \cS(\cQ_{\phi(\cV)},I)$ such that
$ \phi^* \circ \psi $ is the identity map.
Hence, two states $\rho',\sigma' \in \cS(\cQ_\cV,u)$ satisfy
\begin{align}
D(\rho'\|\sigma')
&=D(\psi(\rho')\|\psi(\sigma')) \label{eq:TT3-3}\\
\underline{D}_{1+s}(\rho'\|\sigma')
&=\underline{D}_{1+s}(\psi(\rho')\|\psi(\sigma')) \label{eq:TT3-4}
\end{align}
for $s>0$.
\end{theorem}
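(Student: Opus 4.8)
The plan is to reduce both entropic identities to the operational characterization of the Sandwiched R\'enyi divergence given by Lemma~\ref{lemma:limofSandwich}, and then to transport the relevant operational data across the embedding. First I would record the elementary consequences of $\phi$ being a unital Jordan homomorphism onto a canonical Jordan subalgebra $\phi(\cV)$ of $\cB_H(\cH)$: one has $\phi(u)=I$ (the Jordan unit of $\phi(\cV)$ must coincide with $I$, which already acts as a unit on $\phi(\cV)\ni I$), and $\phi$ carries $\cQ_\cV=\{x\circ x\}$ onto $\cQ_{\phi(\cV)}=\{Y^2:Y\in\phi(\cV)\}$ since $\phi(x)^2=\phi(x\circ x)$; hence $\phi$ restricts to a cone- and unit-preserving linear bijection, carrying $\cS(\cQ_\cV,u)$ onto $\cS(\cQ_{\phi(\cV)},I)$ and $\cM(\cQ_\cV,u)$ onto $\cM(\cQ_{\phi(\cV)},I)$. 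From the defining relation $\langle\phi^*(X),Y\rangle_\cV=\Tr X\phi(Y)$, the dual map $\phi^*$ sends states of $\phi(\cV)$ to states of $\cV$ (positivity since $\phi$ is positive, normalization since $\phi(u)=I$), and for a measurement $\bm{M}=\{M_j\}_j\in\cM(\cQ_\cV,u)$ the identity $P^{\phi(\bm{M})}_{X}(j)=\Tr X\phi(M_j)=\langle\phi^*(X),M_j\rangle_\cV=P^{\bm{M}}_{\phi^*(X)}(j)$ shows that the quantum POVM $\phi(\bm{M})$ and the $\cV$-measurement $\bm{M}$ produce exactly the same outcome distributions on the corresponding states.

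Next I would pass to composites. The tensor power $\phi^{\otimes n}\colon\cV^{\otimes n}\to\cB_H(\cH)^{\otimes n}=\cB_H(\cH^{\otimes n})$ is again a unital Jordan homomorphism, its image is the canonical composite $\phi(\cV)^{\otimes n}$, which is itself an EJA and hence a canonical Jordan subalgebra of $\cB_H(\cH^{\otimes n})$, and $(\phi^{\otimes n})^*=(\phi^*)^{\otimes n}$, so $(\phi^{\otimes n})^*(\rho^{\otimes n})=\phi^*(\rho)^{\otimes n}$. Combining the first-paragraph correspondence with Theorem~\ref{theorem:embed-measurement} applied to the pair $\phi(\cV)^{\otimes n}\subseteq\cB_H(\cH^{\otimes n})$, I claim that for $\rho,\sigma\in\cS(\cQ_{\phi(\cV)},I)$ and every $n$
\[
\max_{\bm{M}^n} D_{1+s}\big(P^{\bm{M}^n}_{\rho^{\otimes n}}\| P^{\bm{M}^n}_{\sigma^{\otimes n}}\big) = \max_{\bm{N}^n} D_{1+s}\big(P^{\bm{N}^n}_{\phi^*(\rho)^{\otimes n}}\| P^{\bm{N}^n}_{\phi^*(\sigma)^{\otimes n}}\big),
\]
where $\bm{M}^n$ ranges over POVMs on $\cH^{\otimes n}$ and $\bm{N}^n$ over $\cM(\cQ_{\cV^{\otimes n}},u)$. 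For the lower bound take $\bm{M}^n:=\phi^{\otimes n}(\bm{N}^n)$, a POVM on $\cH^{\otimes n}$ with the same statistics by the first paragraph; for the upper bound, given a POVM on $\cH^{\otimes n}$ Theorem~\ref{theorem:embed-measurement} replaces it by an element of $\cM(\cQ_{\phi(\cV)^{\otimes n}},I)$ with the same statistics on $\rho^{\otimes n},\sigma^{\otimes n}$, which is $\phi^{\otimes n}(\bm{N}^n)$ for a unique $\bm{N}^n\in\cM(\cQ_{\cV^{\otimes n}},u)$ (because $\phi^{\otimes n}$ is a cone- and unit-preserving bijection onto $\phi(\cV)^{\otimes n}$), and again the distributions match. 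Dividing by $n$, letting $n\to\infty$, and invoking Lemma~\ref{lemma:limofSandwich} on the quantum system for the left side and on $\cV$ for the right side gives \eqref{eq:TT3-2} (note the quantum SRR entropy of $\rho,\sigma$ coincides with their SRR entropy as states of the EJA $\phi(\cV)$, since spectral data and quadratic forms are intrinsic to a Jordan subalgebra); then \eqref{eq:TT3-1} follows by letting $s\to0$ and using Lemma~\ref{lemma:limitofRenyies} on both sides.

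For the one-to-one statement I would set $\psi(\rho'):=\phi(\rho')/\Tr\phi(\rho')$ for $\rho'\in\cS(\cQ_\cV,u)$ (which is simply $\phi$ once the embedding is taken isometric for the chosen normalizations): the defining relation of $\phi^*$ gives $\phi^*(\psi(\rho'))=\rho'$, so $\phi^*\circ\psi=\id$ on $\cS(\cQ_\cV,u)$, and \eqref{eq:TT3-3}--\eqref{eq:TT3-4} are immediate from \eqref{eq:TT3-1}--\eqref{eq:TT3-2} with $\rho=\psi(\rho')$, $\sigma=\psi(\sigma')$. I expect the main obstacle to be the bookkeeping of the second paragraph: verifying that $\phi^{\otimes n}$ really carries $\cM(\cQ_{\cV^{\otimes n}},u)$ bijectively onto $\cM(\cQ_{\phi(\cV)^{\otimes n}},I)$ and that $\phi(\cV)^{\otimes n}$ is genuinely a canonical Jordan subalgebra of $\cB_H(\cH^{\otimes n})$ (equivalently, that the composite structure on $\cV^{\otimes n}$ is compatible with the one inherited from $\cB_H(\cH^{\otimes n})$), so that Theorem~\ref{theorem:embed-measurement} applies with $\cV_1=\phi(\cV)^{\otimes n}$; once these structural compatibilities are in place, the rest is a direct transcription of the quantum argument through Lemma~\ref{lemma:limofSandwich}.
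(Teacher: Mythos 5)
Your proposal is correct and follows essentially the same route as the paper: both reduce everything to the measured/regularized characterization of the SRR entropy in Lemma~\ref{lemma:limofSandwich}, transport measurements across the embedding via $P^{\phi_n(\bm{M}^n)}_{\rho^{\otimes n}}=P^{\bm{M}^n}_{\phi_n^*(\rho^{\otimes n})}$ together with Theorem~\ref{theorem:embed-measurement} applied to $\phi(\cV)^{\otimes n}\subseteq\cB_H(\cH^{\otimes n})$, recover \eqref{eq:TT3-1} by letting $s\to 0$ with Lemma~\ref{lemma:limitofRenyies}, and obtain $\psi$ as the (normalized) embedding satisfying $\phi^*\circ\psi=\id$. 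Your extra bookkeeping (the measurement bijection and the canonicity of $\phi(\cV)^{\otimes n}$) only makes explicit steps the paper leaves implicit.
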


In order to apply Lemma~\ref{lemma:limofSandwich} for cannonical embedding $\phi$,
we need to define the cannonical embedding $\phi_n$ from $n$-composite system $\cV^{\otimes n}$ to $\cB_H(\cH^{\otimes n})$ induced by $\phi$.
For $x_n:=\bigotimes_{i=1}^n x_i\in\cV^{\otimes n}$, we define $\phi_n(x_n):=\bigotimes_{i=1}^n \phi(x_i)$.
Because of our choice of composite Jordan algebra (Definition~\ref{definition:CompositesysteminEJA}),
the map $\phi_n$ is cannonical embedding from $\cV^{\otimes n}$ to $\cB_H(\cH^{\otimes n})$.

\begin{proof}[Proof of Theorem~\ref{TT3}]
	First, we prove the equations \eqref{eq:TT3-1} and \eqref{eq:TT3-2}.
	Lemma~\ref{lemma:limitofRenyies}, i.e., the following relation,
	implies that we only have to prove the case of SRR entropy for any $s>0$:
	\begin{align}
		\lim_{s\to 0}\underline{D}_{1+s}(\rho||\sigma) &=D(\rho||\sigma).
    	\end{align}
	Now, we show the relation \eqref{eq:TT3-2} by applying Lemma~\ref{lemma:limofSandwich}, i.e., the following relation:
	\begin{align}
		\underline{D}_{1+s}(\rho||\sigma)=\lim_{n\to\infty}\frac{1}{n}\max_{M^n}D_{1+s}(P^{M^n}_{\rho^{\otimes n}}||P^{M^n}_{\sigma^{\otimes n}}),\quad s>0.
	\end{align}
	Then, we obtain the following relation:
	\begin{align}
		\underline{D}_{1+s}(\phi^*(\rho)\|\phi^*(\sigma))
		=&\lim_{n\to\infty}\frac{1}{n}\max_{M^n\in \cM(\cQ_{\cV^{\otimes n}},I)}D_{1+s}(P^{M^n}_{\phi^*(\rho)^{\otimes n}}||P^{M^n}_{\phi^*(\sigma)^{\otimes n}})\\
		=&\lim_{n\to\infty}\frac{1}{n}\max_{M^n\in \cM(\cQ_{\cV^{\otimes n}},I)}D_{1+s}(P^{M^n}_{\phi^*_n(\rho^{\otimes n})}||P^{M^n}_{\phi^*_n(\sigma^{\otimes n})})\\
		=&\lim_{n\to\infty}\frac{1}{n}\max_{M^n\in \cM(\cQ_{\cV^{\otimes n}},I)}D_{1+s}(P^{\phi_n(M^n)}_{\rho^{\otimes n}}||P^{\phi_n(M^n)}_{\sigma^{\otimes n}})\\
		=&\lim_{n\to\infty}\frac{1}{n}\max_{M^n\in \cM(\cQ_{\phi_n(\cV^{\otimes n})},I)}D_{1+s}(P^{M^n}_{\rho^{\otimes n}}||P^{M^n}_{\sigma^{\otimes n}})\label{eq:TT3-proof-1}
	\end{align}
	Now, we apply Theorem~\ref{theorem:embed-measurement} for the case $\cV=\cV^{\otimes n}$ and $\cV_1=\phi_n(\cV^{\otimes n})$,
	and therefore, we can replace $\cM(\cQ_{\phi_n(\cV^{\otimes n})},I)$ with $\cM(\cQ_{\cV^{\otimes n}},I)$ in the maxmization in \eqref{eq:TT3-proof-1}.
	Finally, we apply Lemma~\ref{lemma:limitofRenyies} again, and as a result, we obtain the following desired relation:
	\begin{align}
		\underline{D}_{1+s}(\phi^*(\rho)\|\phi^*(\sigma))
		=&\lim_{n\to\infty}\frac{1}{n}\max_{M^n\in \cM(\cQ_{\cV^{\otimes n}},I)}D_{1+s}(P^{M^n}_{\rho^{\otimes n}}||P^{M^n}_{\sigma^{\otimes n}})\label{eq:TT3-proof-1}\\
		=&\underline{D}_{1+s}(\rho\|\sigma).
	\end{align}

	Next, we prove the equations \eqref{eq:TT3-3} and \eqref{eq:TT3-4}.
	We choose $\psi$ as 
	\begin{align}
		\Tr \psi(x)\phi(y)=\langle x,y\rangle_\cV,
	\end{align}
	for any $x,y\in\cV$, and therefore, we obtain the following relation for any $x,y\in \cV$
	\begin{align}
		\langle \phi^\ast\circ\psi(x),y\rangle_\cV=\Tr \psi(x),\phi(y)=\langle x,y\rangle_\cV,
	\end{align}
	which implies the map $\phi^\ast\circ\psi$ is the identity map.
	Then, the equations \eqref{eq:TT3-3} and \eqref{eq:TT3-4} are shown by the equations \eqref{eq:TT3-1} and \eqref{eq:TT3-2} as follows:
	\begin{align}
		D(\psi(\rho')\|\psi(\sigma'))=D(\phi^\ast\circ\psi(\rho')\|\phi^\ast\circ\psi(\sigma'))
		=D(\rho'\|\sigma'),\\
		\underline{D}_{1+s}(\psi(\rho')\|\psi(\sigma'))=\underline{D}_{1+s}(\phi^\ast\circ\psi(\rho')\|\phi^\ast\circ\psi(\sigma'))
		=\underline{D}_{1+s}(\rho'\|\sigma').
	\end{align}
\end{proof}

Theorem~\ref{TT3} gives another proof of Stein's lemma in EJAs
through a canonical embedding map $\phi$ from
${\cal V}$ to ${\cal B}_H({\cal H})$.

\color{black}

\begin{theorem}\label{theorem:stein-another}
When an EJA ${\cal V}$ 
satisfies the conditions of Theorem \ref{TT3},
two states $\rho',\sigma' \in {\cal S}$ satisfy
\begin{align}
\lim_{n\to \infty}-\frac{1}{n}\log \beta_\epsilon^n({\rho'}^{\otimes n}\|
{\sigma'}^{\otimes n})
=D(\rho'\|\sigma').\label{eq:stein-another}
\end{align}
\end{theorem}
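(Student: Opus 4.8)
The plan is to prove \eqref{eq:stein-another} by transferring the hypothesis testing problem from ${\cal V}$ to the quantum system $\cB_H(\cH)$ through the canonical embedding $\phi$, invoking the known quantum Stein's Lemma \cite{HiaiPetz,OgawaNagaoka} there, and then rewriting the resulting relative entropy by Theorem~\ref{TT3}. Throughout, $\phi_n\colon\cV^{\otimes n}\to\cB_H(\cH^{\otimes n})$ denotes the canonical embedding induced by $\phi$ (so $\phi_n(\cV^{\otimes n})=\phi(\cV)^{\otimes n}$, which is a canonical Jordan subalgebra of $\cB_H(\cH^{\otimes n})$), and $\psi\colon\cS(\cQ_\cV,u)\to\cS(\cQ_{\phi(\cV)},I)$ is the map from Theorem~\ref{TT3}; note $\psi(\rho'),\psi(\sigma')$ are genuine density matrices on $\cH$.

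The heart of the argument is to show that the two error exponents coincide exactly, namely
\begin{align}
\beta^n_\epsilon({\rho'}^{\otimes n}\|{\sigma'}^{\otimes n})=\beta^n_\epsilon(\psi(\rho')^{\otimes n}\|\psi(\sigma')^{\otimes n}),
\end{align}
where the right-hand side is the quantum hypothesis testing quantity computed with tests $0\le T\le I$ in $\cB_H(\cH^{\otimes n})$. For ``$\ge$'', since $\phi_n$ is a Jordan homomorphism with $\phi_n(u^{\otimes n})=I$ and $\phi_n(x^2)=\phi_n(x)^2$, it maps effects of $\cV^{\otimes n}$ to effects of $\cB_H(\cH^{\otimes n})$; combining $\rho'=\phi^*(\psi(\rho'))$, the defining relation $\langle\phi^*(X),Y\rangle_\cV=\Tr X\phi(Y)$, and the tensor structure of the composite system (Definition~\ref{definition:CompositesysteminEJA}), one gets $\langle{\rho'}^{\otimes n},T\rangle=\Tr\psi(\rho')^{\otimes n}\phi_n(T)$ and likewise for $\sigma'$, so every EJA test $T$ yields a quantum test $\phi_n(T)$ with identical type~I and type~II errors. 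For ``$\le$'', I apply Theorem~\ref{theorem:embed-measurement} with ambient algebra $\cB_H(\cH^{\otimes n})$ and canonical Jordan subalgebra $\phi_n(\cV^{\otimes n})$ to the two‑outcome measurement $\{T',I-T'\}$ of an arbitrary quantum test $T'$: there is a measurement $\{T'',I-T''\}\in\cM(\cQ_{\phi_n(\cV^{\otimes n})},I)$ with the same statistics on all states of $\phi_n(\cV^{\otimes n})$, hence in particular on $\psi(\rho')^{\otimes n}$ and $\psi(\sigma')^{\otimes n}$; by self-duality, $T''$ and $I-T''$ lie in $\cQ_{\phi_n(\cV^{\otimes n})}=\phi_n(\cQ_{\cV^{\otimes n}})$, so $T''=\phi_n(T)$ for a unique effect $T$ of $\cV^{\otimes n}$ with $0\le T\le u^{\otimes n}$, which realizes the same errors. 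This gives the matching inequality and hence the equality.

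With the equality in hand, quantum Stein's Lemma \cite{HiaiPetz,OgawaNagaoka} applied to the density matrices $\psi(\rho'),\psi(\sigma')$ gives
\begin{align}
\lim_{n\to\infty}-\frac{1}{n}\log\beta^n_\epsilon(\psi(\rho')^{\otimes n}\|\psi(\sigma')^{\otimes n})=D(\psi(\rho')\|\psi(\sigma')),
\end{align}
where the relative entropy of Definition~\ref{definition:informationquantities} restricted to $\cB_H(\cH)$ with the normalized trace inner product is the ordinary quantum relative entropy; then \eqref{eq:TT3-3} of Theorem~\ref{TT3} turns the right-hand side into $D(\rho'\|\sigma')$, and chaining the three displays yields \eqref{eq:stein-another}. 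I expect the main obstacle to be the careful bookkeeping in the middle paragraph, in particular checking that $\phi_n$ preserves the order interval $[0,u^{\otimes n}]$ and that the test correspondence supplied by Theorem~\ref{theorem:embed-measurement} is statistics-preserving in both directions on the relevant subalgebra states, so that the two $\beta^n_\epsilon$ agree exactly rather than up to an error term; consistency of the inner-product normalization used to identify the EJA and quantum relative entropies on $\cB_H(\cH)$ should also be checked, but this is precisely what Theorem~\ref{TT3} is designed to absorb, and everything else is a direct citation.
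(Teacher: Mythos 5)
Your proposal is correct and follows essentially the same route as the paper's proof: both establish $\beta^n_\epsilon({\rho'}^{\otimes n}\|{\sigma'}^{\otimes n})=\beta^n_\epsilon(\psi(\rho')^{\otimes n}\|\psi(\sigma')^{\otimes n})$ by transporting tests through the canonical embedding and invoking Theorem~\ref{theorem:embed-measurement} to restrict quantum tests to the canonical Jordan subalgebra, then conclude via quantum Stein's Lemma and \eqref{eq:TT3-3}. Your version merely makes the two inequalities explicit (push-forward of EJA effects by $\phi_n$ for one direction; Theorem~\ref{theorem:embed-measurement} plus self-duality of $\cQ_{\phi_n(\cV^{\otimes n})}$ for the other), which is a slightly more careful write-up of the same argument.
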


\begin{proof}
	By applying Theorem \ref{TT3},
	we obtain a map $\psi:\cS(\cQ_\cV,u) \to \cS(\cQ_{\phi(\cV)},I)$ satisfies \eqref{eq:TT3-3}.
	Also, simiarly to the proof of Theorem~\ref{TT3},
	we can conclude $\beta_\epsilon^n(\rho'\|\sigma')=\beta_\epsilon^n(\psi(\rho')\|\psi(\sigma'))$ as follows:
	\begin{align}
		\beta^n_\epsilon(\psi(\rho')||\psi(\sigma'))=&\min_{T_n\in\cM(\cQ_{\phi(\cV),I})}\{\langle\psi(\sigma')^{\otimes n},T_n\rangle|\langle\psi(\rho')^{\otimes n},u-T_n\rangle\le \epsilon\}\\
		=&\min_{T_n\in\cM(\cQ_{\phi(\cV),I})}\{\langle\psi_n(\sigma'^{\otimes n}),T_n\rangle|\langle\psi_n(\rho'^{\otimes n}),u-T_n\rangle\le \epsilon\}\\
		=&\min_{T_n\in\cM(\cQ_{\phi(\cV),I})}\{\langle \sigma'^{\otimes n},\psi^\ast_n(T_n)\rangle|\langle\rho'^{\otimes n},u-\psi_n^\ast(T_n)\rangle\le \epsilon\}\\
		\stackrel{(a)}{=}&\min_{T_n\in\cM(\cQ_{\cV,u})}\{\langle \sigma'^{\otimes n},T_n\rangle|\langle\rho'^{\otimes n},u-T_n\rangle\le \epsilon\}\\
		=&\beta_\epsilon^n(\rho'\|\sigma'), \label{eq:proof-stein-another}
	\end{align}
	where the map $\psi_n:\cV^{\otimes n}\to \cB_H(\cH^{\otimes n})$ and $\psi_n^\ast:\cB_H(\cH^{\otimes n})\to\cV^{\otimes n}$ are the composite map and its dual map induced by $\psi$.
	The equation $(a)$ holds because of Theorem~\ref{theorem:embed-measurement} and the fact that $\psi_n^\ast\circ \phi$ is a canonical embedding map.
	By combinating \eqref{eq:TT3-3}, \eqref{eq:proof-stein-another}, and Stein's Lemma in quantum theory $\cB_H(\cH)$,
	we conclude the equation \eqref{eq:stein-another}.
\end{proof}

\subsection{Lorentz Type}
\label{section:LorentzType}

Next, we show that 
Lorentz type, i.e.,
Jordan algebra with Lorentz cone, satisfies the conditions of Theorem \ref{TT3}. 
Actually, it has already known in \cite{BGW2020}.
However, we give a new relation between Lorentz type and fermion annihilation and creation operators
and recover the construction in \cite{BGW2020} by our new relation and Jordan-Wigner transformation \cite{JW1928}.

\subsubsection{Formulation}

We consider $d+1$-dimensional vector space $\cV_{d+1}$.
Its element $v$ has the form $v=(v_0,v_1, \ldots, v_d)$.
The Jordan product $v \circ v'$ between $v,v'$ is given as
$(\sum_{j=0}^d v_j v_j', v_0v_1'+v_1v_0', \ldots, v_0v_d'+v_dv_0')$.
When we denote $(0,\underbrace{0,\ldots, 0}_{j-1},1,0,\ldots,0)$
by $e_j$.
Hence, it is sufficient to check the following condition for Lorentz type.
\begin{align}
e_0\circ e_0= e_0,~
e_j\circ e_0= e_j,~
e_j\circ e_{j'}= \delta_{j,j'} e_0 \label{CCM}
\end{align}
for $1 \le j,j'\le d$.
We also consider the inner product
$\langle e_j,e_{j'}\rangle:=\delta_{j,j'}$.

We denote Lorentz cone of $d+1$-dimensional vector space $\cV_{d+1}$
by 
${\cal Q}(\cV_{d+1})$, which is written as
\begin{align}
{\cal Q}(\cV_{d+1})= \left\{c_{0}e_0+ \sum_{j=0}^{2n}c_j e_j\middle|
c_{-1}\ge \sqrt{\sum_{j=0}^{2n}c_j^2}\right\}.
\end{align}
We denote its state space by 
${\cal S}(\cV_{d+1})$.

\subsubsection{Relation with fermion}
We consider fermion annihilation and creation operators
$a_k$ and $a_k^\dagger$ with $k=1, \ldots, n$ with the following commutation relations.
\begin{align}
2 a_k\circ a_{k'}^\dagger &=\delta_{k,k'} \label{XB1}\\
 a_k\circ a_{k'}&=0,\quad
 a_k^\dagger \circ a_{k'}^\dagger=0 \label{XB2}
\end{align}

We define $\lambda_{2k-1}:= a_k+a_k^\dagger$ and 
$\lambda_{2k}:= i(a_k-a_k^\dagger)$.
Then, we have
\begin{align}
\lambda_{2k-1}\circ \lambda_{2k}=&-i a_k\circ a_k^\dagger+ i a_k^\dagger\circ a_k=0 \\
\lambda_{2k-1}\circ \lambda_{2k-1}=&a_k\circ a_k^\dagger+ a_k^\dagger\circ a_k=I \\
\lambda_{2k}\circ \lambda_{2k}
=&-i^2 a_k\circ a_k^\dagger-i^2 a_k^\dagger\circ a_k \\
=&a_k\circ a_k^\dagger+ a_k^\dagger\circ a_k=I .
\end{align}
Also, for $k \neq k'$, we have
\begin{align}
\lambda_{2k-1}\circ \lambda_{2k'-1}&=0 \\
\lambda_{2k-1}\circ \lambda_{2k'}&=0 \\
\lambda_{2k-1}\circ \lambda_{2k'-1}&=0.
\end{align}
Therefore, 
the operators
$I, \lambda_1, \ldots, \lambda_{2n}$ generate
a Clifford algebra, i.e., a Lorentz type
$\cV_{2n+1}$.

However, in this system, we have other observables
$a_k a_{k'}+a_{k'}^\dagger a_{k}^\dagger$,
$i(a_k a_{k'}-a_{k'}^\dagger a_{k}^\dagger)$
 for any $k, k'$.
They cannot be written as linear combination of 
$I, \lambda_1, \ldots, \lambda_{2n}$.
That is,
when we are interested in 
the real and imaginary parts of the fermion annihilation and creation 
operators,
our system is written by Lorentz type
$\cV_{2n+1}$.

\subsubsection{Canonical embedding map with $d=2n$}\label{S3}
To find an canonical embedding map of a Lorentz type,
we employ
the above relation between 
fermion and $V_{2n+1}$
and
Jordan–Wigner transformation \cite{JW1928}, which show how to describe
$n$-mode fermion in $n$ qubits. 

We set ${\cal H}:= (\mathbb{C}^2)^{\otimes n}$.
We prepare the following notations.
\begin{align}
\sigma_0&:=
\left(
\begin{array}{cc}
1 & 0 \\
0 & 1
\end{array}
\right),~
\sigma_1:=
\left(
\begin{array}{cc}
0 & 1 \\
1 & 0
\end{array}
\right), \\
\sigma_2&:=
\left(
\begin{array}{cc}
0 & -i \\
i & 0
\end{array}
\right),~
\sigma_3:=
\left(
\begin{array}{cc}
1 & 0 \\
0 & -1
\end{array}
\right).
\end{align}
We define the operator $a_{k,JW}$ as
\begin{align}
a_{k,JW}= \sigma_0^{\otimes k-1}\otimes\frac{1}{2}(\sigma_{1}-i \sigma_{2}) \otimes \sigma_{3}^{\otimes n-k}. 
\end{align}
Jordan–Wigner transformation \cite{JW1928} gives the operators
$a_{k,JW}$ and 
$a_{k,JW}^\dagger$ with $k=1, \ldots,n$
satisfy the conditions \eqref{XB1} and \eqref{XB2} 
Then, the operators
\begin{align}
\lambda_{2k-1,JW}:&= a_{k,JW}+a_{k,JW}^\dagger \nonumber\\
&=\sigma_0^{\otimes k-1}\otimes \sigma_1\otimes \sigma_3^{\otimes n-k}\\
\lambda_{2k,JW}:&= i(a_{k,JW}-a_{k,JW}^\dagger) \nonumber\\
&=\sigma_0^{\otimes k-1}\otimes \sigma_2\otimes \sigma_3^{\otimes n-k}
\end{align}
satisfy the condition \eqref{CCM}
for Jordan algebra with Lorentz cone.
Therefore, the following map $\phi_{2n+1}:
V_{2n+1} \to {\cal B}_H({\cal H})$
is a Jordan homomorphsm.
\begin{align}
\phi_{2n+1}(e_{j})= \lambda_{j,JW},\quad
\phi_{2n+1}(e_0)=I
\end{align}
for $j=1,\ldots, 2n$.
Then, we have
\begin{align}
{\cal Q}(\phi_{2n+1}({\cal V}_{2n+1})) 
= \left\{c_{-1}I+ \sum_{j=1}^{2n}c_j \lambda_{j,JW}\middle|
c_{-1}\ge \sqrt{\sum_{j=1}^{2n}c_j^2}\right\}.
\end{align}
Since
\begin{align}
\Tr \frac{1}{2^n}\lambda_{j,JW} \lambda_{j',JW}=\delta_{j,j'},
\Tr \frac{1}{2^n}I \lambda_{j',JW}=0,
\Tr \frac{1}{2^n}I =1\label{HJK1}
\end{align}
for $j,j'=1, \ldots, 2n$,
we have
\begin{align}
&{\cal Q}(\phi_{2n+1}({\cal V}_{2n+1})) \notag\\
=&
\{X \in \phi_{2n+1}({\cal V}_{2n+1}) | 
\Tr X Y \ge 0 ,\forall Y \in \cQ(\phi_{2n+1}({\cal V}_{2n+1}))\}.
\end{align}
Hence, the embedding map $\phi_{2n+1}$ is a canonical embedding map.

\subsubsection{Canonical embedding map with $d=2n+1$}\label{S4}
We choose $\lambda_{2n+1,JW}:= \sigma_{3}^{\otimes n+1}$.
Then, we have
\begin{align}
\lambda_{2n+1,JW} \circ \lambda_{2n+1,JW}&=I \\
\lambda_{2n+1,JW} \circ \lambda_j&=0 
\end{align}
for $j= 1, \ldots, 2n$.
Therefore, the following map $\phi_{2n+2}:
V_{2n+2} \to {\cal B}_H({\cal H})$
is a Jordan homomorphsm.
\begin{align}
\phi_{2n+2}(e_{j})= \lambda_{j,EJW},\quad
\phi_{2n+2}(e_0)=I
\end{align}
for $j=1,\ldots, 2n+1$.
Since \eqref{HJK1} holds for $j,j'=1, \ldots, 2n+1$,
the embedding map $\phi_{2n+2}$ is a canonical embedding map.

Here, we compare the discussion by Barnum et.al. \cite{BGW2020}.
In the case of $d=2n$, our embedding map is essentially same as the equations (8-10) in \cite{BGW2020}.
In the case of $d=2n+1$, Ref. \cite{BGW2020} embedds the Lorentz type into the set of Hermitian matrices on 
Hilbert space with twice dimension
instead of taking an additional element $\sigma_3^{\otimes n+1}$ 
without considering the relation with fermion.

As a result, we embed Lorentz type into low dimensional quantum theory.

\subsection{Quaternion type}
\label{section:QuaternionType}

Next, we show that 
a quaternion type, i.e., a Jordan algebra with Hermitian matrices on quaternion, satisfies the conditions of Theorem \ref{TT3}. 
Actually, it has already known in \cite{BGW2020}.

\subsubsection{Formulation}

We denote $\mathbb{H}$ as the quaternion.
For a matrix $X$ with $\mathbb{H}$-valued entries, we say that $X$ is Hermitian
if $X^\dag=X$, where $X^\dag$ denotes Hermitian conjugate with the conjugation on $\mathbb{H}$.
Let $\cV$ be the vector space of $d\times d$ Hermitian matrices on $\mathbb{H}$,
and we define a Jordan product $\circ$ for $X,Y\in\cV$ as follows:
\begin{align}
	X\circ Y:=\frac{1}{2}(XY+YX).
\end{align}
This algebra composes an EJA \cite{Faraut},
and we call it \textit{quaternion type with dimension $d$}.
We denote quaternion type with dimension $d$ as $\mathrm{Herm}(d,\mathbb{H})$.

\subsubsection{Canonical Embedding map}

To find a canonical embedding map of a quaternion type,
we define a map $\phi_0:\mathbb{H}\to\mathrm{M}_2(\mathbb{C})$ as
\begin{align}
	\phi_0(a+b\bm{i}+c\bm{j}+d\bm{k}):=
	\begin{pmatrix}
		a+b\bm{i} & c+d\bm{i}\\
		-c+d\bm{i} & a-b\bm{i}\\
	\end{pmatrix}.
\end{align}
By definition, the map $\phi_0$ is linear map,
and moreover, $\phi_0$ preserve the matrix product, i.e., $\phi_0(x_1x_2)=\phi_0(x_1)\phi_0(x_2)$ holds.

By using $\phi_0$ on each entry, we define a map $\phi_{\mathbb{H}}:\mathrm{Herm}(d,\mathbb{H})\to\cB_H(\mathbb{C}^{2d})$ as follows.
Let $A=(a_{ij})_{ij}$ be a $d\times d$ Hermitian matrix with $\mathbb{H}$-valued entries. We define $\phi_{\mathbb{H}}(A)$ as the $2d\times 2d$ $\mathbb{C}$-vauled Block matrix $X=(X_{ij})_{ij}$ with $\phi_0(a_{ij})=X_{ij}$.
It is easy to show from the above definition of $\phi_{\mathbb{H}}$ that $\phi_{\mathbb{H}}(X)$ is a Hermitian matrix,
which implies that the range of $\phi_{\mathbb{H}}$ is contained by $\cB_H(\mathbb{C}^{2d})$.

Besides,
because $\phi_0$ is linear map preserving the matrix products,
the map $\phi_{\mathbb{H}}$ satisfies
\begin{align}
	\phi_{\mathbb{H}}(AB)=\left(\sum_k \phi_0(a_{ik}b_{kj})\right)_{ij}
	=\left(\sum_k X_{ik}Y_{kj}\right)_{ij}=XY=\phi_{\mathbb{H}}(A)\phi_{\mathbb{H}}(B),
\end{align}
for $A:=(a_{ij})_{ij},B:=(b_{ij})_{ij}\in\mathbb{H}^{d}$ and $X:=(X_{ij})_{ij},Y:=(Y_{ij})_{ij}\in\cB_H(\mathbb{C}^{2d})$.
In other words, 
the map $\phi_{\mathbb{H}}$ preserves the matrix products.
Because both the Jordan products and the inner produt induced by the trace is defined by the matrix product,
the map $\phi_{\mathbb{H}}$ is a Jordan homomorphism from $\mathrm{Herm}(d,\mathbb{H})$ to $\cB_H(\mathbb{C}^{2d})$ and the trace is a cannonical inner product.
As a result, $\phi_\mathbb{H}$ is a cannonical embedding map of the quaternion type.

\section{Conclusion}\label{section:conclusion}

In this paper,
we have dealt with EJAs and models of GPTs associated with EJAs.
Through mathematical properties of EJAs, we have established information quantities and information theoretical tools in the associated models.
By analyzing informtion quantities by information theoretical tools, we have obtained important inequailties for the proof of Stein's Lemma.
As a result, we have proven Stein's Lemma in the model associated with any EJA as the same statement as that of quantum and classical theories.
This result implies that the structure of EJAs is the mathematically essential structure for the relation between the exponent of hypothesis testing and relative entropy.
Moreover, we have discussed embedding from EJAs into quantum theory, which have given another proof of Stein's Lemma through the inequalities of information quantities that we have established.

Finally, we give two open problems.
The first problem is to prove other results of typical topics of quantum information theory even in EJAs.
For example, we can consider a generalization of C-Q and Q-Q channels and information transsmission with such channels.
Even for the task and even in EJAs, can we obtain the same results, the relation between the limit performance and informaiton quantities.
The second problem is to prove Stein's Lemma for any compositions other than the canonical composition in this paper. Even we assume the structure of EJAs for composition of GPTs, there are other compositions \cite{BGW2020}. It is still open whether Stein's Lemma holds in any composition.

\acknowledgments
HA was supported by JSPS KAKENHI Grant Number 25KJ0043.
M.H. was supported in part by the National Natural Science Foundation of China (Grants no. 62171212), and the General R$\&$D Projects of $1+1+1$ CUHK-CUHK(SZ)-GDST Joint Collaboration Fund (Grant No. GRDP2025-022).

\bibliographystyle{unsrt}
\bibliography{bibtex_EJA}

\begin{thebibliography}{10}

\bibitem{Holevo1975}
A.~S. Holevo.
\newblock Bounds for the quantity of information transmitted by a quantum
  communication channel.
\newblock {\em Problemy Peredachi Informatsii}, 9:3--11, 1973.
\newblock English translation: Probl. Inf. Transm., 9, 177--183 (1975).

\bibitem{Hayashi2}
M.~Hayashi.
\newblock Asymptotics of quantum relative entropy from a representation
  theoretical viewpoint.
\newblock {\em J. Phys. A: Math. Gen.}, 34:3413, 2001.

\bibitem{Nagaoka2001}
H.~Nagaoka.
\newblock Strong converse theorems in quantum information theory.
\newblock In {\em Proc. ERATO Conference on Quantum Information Science
  (EQIS)}, page~33, 2001.
\newblock also appeared as Chap. 3 of Asymptotic Theory of Quantum Statistical
  Inference, M. Hayashi eds.

\bibitem{HN2003}
M.~Hayashi and H.~Nagaoka.
\newblock General formulas for capacity of classical-quantum channels.
\newblock {\em IEEE Transaction on Information Theory}, 49:1753--1768, 2003.

\bibitem{Hayashi1}
M.~Hayashi.
\newblock {\em Quantum Information Theory: Mathematical Foundation}.
\newblock Graduate Texts in Physics. Springer, 2017.

\bibitem{HiaiPetz}
F.~Hiai and D.~Petz.
\newblock The proper formula for relative entropy and its asymptotics in
  quantum probability.
\newblock {\em Commun. Math. Phys.}, 143:99--114, 1991.

\bibitem{OgawaNagaoka}
T.~Ogawa and H.~Nagaoka.
\newblock Strong converse and stein's lemma in quantum hypothesis testing.
\newblock {\em IEEE Trans. Inf. Theory}, 46:2428--2433, 2000.

\bibitem{Hayashi3}
M.~Hayashi.
\newblock Optimal sequence of povms in the sense of stein's lemma in quantum
  hypothesis.
\newblock {\em J. Phys. A Math. Gen.}, 35:10759--10773, 2002.

\bibitem{G-Stein}
F.~G. Brandao and M.~B. Plenio.
\newblock A reversible theory of entanglement and its relation to the second
  law.
\newblock {\em Communications in Mathematical Physics}, 295:829, 2010.

\bibitem{Gap-G-Stein}
et~al. M.~Berta.
\newblock On a gap in the proof of the generalised quantum stein’s lemma and
  its consequences for the reversibility of quantum resources.
\newblock {\em Quantum}, 7:1103, 2023.

\bibitem{G-Stein-HY}
M.~Hayashi and H.~Yamasaki.
\newblock Generalized quantum stein's lemma and second law of quantum resource
  theories.
\newblock 2024.
\newblock arXiv:2408.02722.

\bibitem{G-Stein-L}
L.~Lami.
\newblock A solution of the generalised quantum stein's lemma.
\newblock 2024.
\newblock arXiv:2408.06410.

\bibitem{Plavala}
Martin Plávala.
\newblock General probabilistic theories: An introduction.
\newblock {\em Physics Reports}, 1033:1--64, 2023.

\bibitem{Janotta}
P.~Janotta and H.~Hinrichsen.
\newblock Generalized probability theories: what determines the structure of
  quantum theory?
\newblock {\em J. Phys. A: Math. Theor.}, 47:323001, 2014.

\bibitem{Barret}
J.~Barrett.
\newblock Information processing in generalized probabilistic theories.
\newblock {\em Phys. Rev. A}, 75:032304, 2007.

\bibitem{KMI2009}
T.~Miyadera G.~Kimura and H.~Imai.
\newblock Optimal state discrimination in general probabilistic theories.
\newblock {\em Phys. Rev. A}, 79:062306, 2009.

\bibitem{Muller2013}
M.~P. Müller and C.~Ududec.
\newblock Structure of reversible computation determines the self-duality of
  quantum theory.
\newblock {\em Phys. Rev. Lett.}, 108:130401, 2012.

\bibitem{KBBM2017}
J.~Barrett M.~Krumm, H.~Barnum and M.~P. Müller.
\newblock Thermodynamics and the structure of quantum theory.
\newblock {\em New J. Phys.}, 19:043025, 2017.

\bibitem{Matsumoto2018}
K.~Matsumoto and G.~Kimura.
\newblock Information storing yields a point-asymmetry of state space in
  general probabilistic theories.
\newblock 2018.
\newblock arXiv:1802.01162.

\bibitem{Arai1}
M.~Hayashi H.~Arai, Y.~Yoshida.
\newblock Perfect discrimination of non-orthogonal separable pure states on
  bipartite system in general probabilistic theory.
\newblock {\em Journal of Physics A: Mathematical and Theoretical},
  52(46):465304, 2019.

\bibitem{MAB2022}
H.~Arai S.~Minagawa and F.~Buscemi.
\newblock von neumann's information engine without the spectral theorem.
\newblock {\em Physical Review Research}, 4:033091, 2022.

\bibitem{PNL2023}
E.~P. Lobo et~al. R.~K.~Patra, S. G.~Naik.
\newblock Principle of information causality rationalizes quantum composition.
\newblock {\em Phys. Rev. Lett.}, 130:110202, 2023.

\bibitem{Short2010}
A.~J. Short and S.~Wehner.
\newblock Entropy in general physical theories.
\newblock {\em New J. Phys.}, 12:033023, 2010.

\bibitem{Barnum2012}
N.~Stepanik A. Wilce R.~Wilke M.~Leifer, R.~Spekkens.
\newblock Entropy and information causality in general probabilistic theories.
\newblock {\em New J. Phys.}, 14:129401, 2012.

\bibitem{ALP2019}
C.~Palazuelos et~al. G.~Aubrun, L.~Lami.
\newblock Entangleability of cones.
\newblock {\em Geom. Funct. Anal.}, 31:181--205, 2021.

\bibitem{ALP2021}
C.~Palazuelos et~al. G.~Aubrun, L.~Lami.
\newblock Entanglement and superposition are equivalent concepts in any
  physical theory.
\newblock 2021.
\newblock arXiv:2109.04446.

\bibitem{AH2024}
H.~Arai and M.~Hayashi.
\newblock Derivation of standard quantum theory via state discrimination.
\newblock {\em New Journal of Physics}, 26:053046, 2024.

\bibitem{BGW2020}
M.~A.~Graydon H.~Barnum and A.~Wilce.
\newblock Composites and categories of euclidean jordan algebras.
\newblock {\em Quantum}, 4:359, 2020.

\bibitem{Jordan}
E.~P.~Wigner P.~Jordan, J. von~Neumann.
\newblock {\em On an Algebraic Generalization of the Quantum Mechanical
  Formalism}.
\newblock Springer Berlin Heidelberg, 1993.

\bibitem{Faraut}
J.~Faraut and A.~Korányi.
\newblock {\em Analysis on Symmetric Cones}.
\newblock Oxford, 1994.
\newblock Online edn, Oxford Academic, 31 Oct. 2023.

\bibitem{Olsen}
H.~Hanche-Olsen and E.~Størmer.
\newblock {\em Jordan operator algebras}.
\newblock Monographs and Studies in Mathematics, vol. 21. Pitman, 1984.

\bibitem{Barnum1}
H.~Barnum and J.~Hilgert.
\newblock Strongly symmetric spectral convex bodies are jordan algebra state
  spaces.
\newblock 2019.
\newblock arXiv:1904.03753.

\bibitem{Barnum2}
J.~van de~Wetering H.~Barnum, C.~Ududec.
\newblock Self-duality and jordan structure of quantum theory follow from
  homogeneity and pure transitivity.
\newblock 2023.
\newblock http://arxiv.org/abs/2306.00362.

\bibitem{Barnum3}
M.~Wilce H.~Barnum.
\newblock Composites and categories of euclidean jordan algebras.
\newblock {\em Quantum}, 4, 2016.

\bibitem{Gowda}
M.~Seetharama Gowda.
\newblock Simultaneous spectral decomposition in euclidean jordan algebras and
  related systems.
\newblock {\em Linear and Multilinear Algebra}, 70(21):6535--6547, 2021.

\bibitem{Mccrimmon}
K.~Mccrimmon.
\newblock {\em A Taste of Jordan Algebras}.
\newblock Universitext. Springer, 2004.

\bibitem{JW1928}
P.~Jordan and E.~Wigner.
\newblock Über das paulische Äquivalenzverbot.
\newblock {\em Z. Physik}, 47:631--651, 1928.

\bibitem{Boyd}
S.~Boyd and L.~Vandenberghe.
\newblock {\em Convex Optimization}.
\newblock Cambridge University Press, 2004.

\end{thebibliography}

\appendix
\section{Appendix}
\renewcommand{\thesubsection}{\Alph{section}.\arabic{subsection}}

\subsection{Proof of concepts in Euclidean Jordan algebra}\label{appendix:conceptsinEJA}
At first, we introduce a homomorphism and an isomorphism in an Euclidean Jordan algebra.
We use these morphisms in order to show that a Classical system is isomorphic
to an Euclidean Jordan algebra where its all elements are simultaneous spectral decomposable.

\begin{definition}[Homomorphism and Isomorphism\cite{Mccrimmon}{[Definition1.2.2]}]
    Let $\cV,\cV'$ be Euclidean Jordan algebras.
    A linear map $f:\cV\to \cV'$ is called a Jordan homomorphism if $f$ satisfies the following condition for all $x,y\in \cV$,
    \begin{align}
        f(x\circ y)=f(x)\circ'f(y),
    \end{align}
    where $\circ'$ is the Jordan product in $\cV'$.
    In addition, if this map $f$ is a bijection, $f$ is called as a Jordan isomorphism.
    Moreover, if there exist a Jordan isomorphism $f:\cV\to \cV'$,
    we call that $\cV$ is isomorphic to $\cV'$.
\end{definition}

\begin{remark}
    A linear function $f:\cV\to \cV'$ is Jordan homomorphism if and only if the linear function $f$ satisfies $f(x^2)=f(x)^2$.
    This is shown by calculating $f((x+y)^2-x^2-y^2)$ using linearity of $f$.\cite{Mccrimmon}
    Originally, these morphisms are given as morphisms between two (non associative) commutative rings with $R$ modules
    because it does not need the conditions of (J2) and (J3) of Definition~\ref{definition:EuclideanJordanalgebra}
    ,where $R$ is a ring.
\end{remark}

The following Lemma is important for us to consider the correspondence of the space $\cV$ to the classical system.

\begin{lemma}[characterization of Classical system]\label{lemma:characterizationofClassicalsystem}
    If the all of elements in $\cV$ are classically,
    $\cV$ is isomorphic to the classical system.
\end{lemma}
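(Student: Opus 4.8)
The plan is to show that the hypothesis forces $\cV$ to be spanned by a Jordan frame, after which the Jordan product is necessarily the diagonal (classical) one. First I would fix a Jordan frame $\{e_i\}_{i=1}^r$, that is, a complete system of orthogonal primitive idempotents, which exists by the spectral theorem for Jordan frames with $r$ the rank of $\cV$. By the Peirce decomposition (Theorem~\ref{theorem:peircedecomposition}) one has $\cV=\bigoplus_{i=1}^r \cV(e_i,1)\oplus\bigoplus_{i<j}\bigl(\cV(e_i,\frac{1}{2})\cap\cV(e_j,\frac{1}{2})\bigr)$ with $\cV(e_i,1)=\mathbb{R}e_i$. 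The heart of the argument is that, under the hypothesis, the off-diagonal Peirce pieces must all vanish.

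To establish this I would introduce the regular element $x_0:=\sum_{i=1}^r i\,e_i$. Since the $e_i$ form a CSOI and its eigenvalues $1,2,\dots,r$ are pairwise distinct, Theorem~\ref{theorem:Spectraldecomposition} tells us that the unique spectral decomposition of $x_0$ is precisely $x_0=\sum_i i\,e_i$ with CSOI $\{e_i\}_{i=1}^r$. Now take an arbitrary $y\in\cV$. Since all elements of $\cV$ behave classically, $L_yL_{x_0}=L_{x_0}L_y$, so by Theorem~\ref{theorem:simultaneousspectral} the element $y$ admits a spectral decomposition $y=\sum_j\mu_j d_j$ with $d_j\in\bigoplus_k\cV(e_k,1)=\mathrm{span}\{e_1,\dots,e_r\}$. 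Hence $y\in\mathrm{span}\{e_1,\dots,e_r\}$, and as $y$ was arbitrary, $\cV=\mathrm{span}\{e_1,\dots,e_r\}$. Because the $e_i$ are nonzero and mutually orthogonal they form a basis; in particular the off-diagonal Peirce summands are indeed $\{0\}$ and $\dim\cV=r$.

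Finally I would exhibit the isomorphism explicitly. Let $f:\cV\to\mathbb{R}^r$ be the linear map determined by $f(e_i)=u_i$, where $\{u_i\}$ is the standard basis of the classical system of Example~\ref{example:classicalsystem} (with $d=r$). This is a linear bijection, and for $x=\sum_i\lambda_i e_i$, $y=\sum_i\mu_i e_i$ the orthogonality and idempotency of the $e_i$ give $x\circ y=\sum_i\lambda_i\mu_i e_i$, so that $f(x\circ y)=\sum_i\lambda_i\mu_i u_i=f(x)\circ f(y)$ by the definition of the classical Jordan product. Thus $f$ is a Jordan isomorphism, and $\cV$ is isomorphic to the classical system.

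The step I expect to be the main obstacle is the vanishing of the off-diagonal Peirce components: a direct argument would have to control the Peirce multiplication rules, whereas routing everything through the single regular element $x_0$ together with the simultaneous spectral decomposition theorem (Theorem~\ref{theorem:simultaneousspectral}) keeps it clean. One should also check the trivial edge case $r=1$, where $\cV=\mathbb{R}u$ and the statement is immediate.
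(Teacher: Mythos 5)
Your proposal is correct and follows essentially the same route as the paper: fix a Jordan frame $\{e_i\}$, pick an element with distinct eigenvalues on that frame (your $x_0=\sum_i i\,e_i$ plays the role of the paper's $x=\sum_i\lambda_i' e_i$ with distinct $\lambda_i'$), invoke Theorem~\ref{theorem:simultaneousspectral} to conclude every element lies in $\mathrm{span}\{e_1,\dots,e_r\}$, and then check that $f(e_i)=u_i$ is a Jordan isomorphism onto $\mathbb{R}^r$. Your version is slightly more explicit about why the off-diagonal Peirce components vanish, but the argument is the same.
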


\begin{proof}[Proof of characterization of Classical system Lemma~\ref{lemma:characterizationofClassicalsystem}]
    If all elements $x,y\in \cV$ are classically, from Theorem~\ref{theorem:simultaneousspectral}, all of elements $x,y\in \cV$ have a simultaneous spectral decomposition.
    We fix $x$ as $x=\sum_i\lambda_i' e_i$ where $\{e_i\}_{i=1}^n$ is the complete system of orthogonal primitive idempotents, $\lambda_i'$ are all distinct. Then, all elements $y\in \cV$ are decomposed as $y=\sum_{i=1}^n \mu_i' e_i$.
    Then, we construct a following homomorphism between an Euclidean Jordan algebra $\cV$ and Classical system, that is $f:V\to \mathbb{R}^n$,
    where $u_i$ takes $1$ in $i$th element and $0$ in others.
    \begin{align}
        f(e_i)=u_i\quad (i=1,\ldots,n).
    \end{align}
    Then,
    \begin{align}
        f(x\circ y)=\sum_{i=1}^n\lambda_i'\mu_i'u_i=f(x)\circ f(y)
    \end{align}
    Therefore, $f:V\to \mathbb{R}^n$ is a homomorphism. In addition, from $f(e_i)=u_i$, this is surjective. And from $\mathrm{Ker}f=\{0\}$, this is injective. Therefore this homomorphism is bijective, so this is isomorphism and $\cV$ is isomorphic to the Classical system.
\end{proof}

\begin{proof}[Proof of Lemma~\ref{lemma:projectionPeirce}]
    Let $x=\sum_i x_i+\sum_{i<j}x_{i,j}$ be the Peirce decomposition with CSOI $\{c_i\}$.
    Then, we calculate as follows:
    \begin{align}
        P_{c_i}(x)&=2L_{c_i}^2(x)-L_{c_i^2}(x)\\
        &=2L_{c_i}(x_i+\sum_{j:i<j}\frac{1}{2}x_{i,j})-(x_i+\sum_{j:i<j}\frac{1}{2}x_{i,j})=x_i
    \end{align}
\end{proof}

\begin{proof}[Proof of Lemma~\ref{lemma:orthogonalPeirce}]
    Applying identity $L_{x^2\circ y}-L_{x^2}L_{y}=2(L_{x\circ y}-L_xL_y)L_x$ for $x,y\in \cV$.
    The element $c_i$ in place of $x$ and the element $x$ in place of $y$, then applying this identity to the element $y$ we obtain
    \begin{align}
        L_{c_i}(x\circ y)-x\circ y=0,
    \end{align}
    where $x\in \cV(c_i,1), y\in \cV(c_i,0),i\neq j$.
    Moreover, exchange $x$ and $y$ as $x\in \cV(c_i,0)$ and $y\in \cV(c_i,1)$, we obtain
    \begin{align}
        L_{c_i}(xy)=0.
    \end{align}
    Therefore, we obtain $\cV(c_i,1)\circ \cV(c_j,1)\subset \cV(c_i,1)\circ \cV(c_i,0)=0$.
\end{proof}

\begin{proof}[Proof of Lemma~\ref{lemma:newCSOI}]
    Let $\{c_i\}$ be a CSOI.
    Considering Peirce decomposition by $\{c_i\}$, the space $\cV(i,1)$ is subalgebra of $\cV$ because $L_{c_i}(x\circ y)-x\circ y=0$ hols for $x,y\in \cV(i,1)$.
    Then, applying Theorem~\ref{theorem:Spectraldecomposition} to an element $x_i\in \cV(i,1)$, we obtain a family $\{c_{i,j}\}$ and coefficiences $\{\mu_j\}$ such that
    \begin{align}
        &\sum_{j=1}c_{i,j}=c_i,\\
        &\sum_{j=1}\mu_j c_{i,j}=x_i.
    \end{align}
\end{proof}


Now we prove the important lemma of quadratic form (Lemma~\ref{lemma:positivityofQradratic}).
\begin{proof}[Proof of \ref{lemma:positivityofQradratic}]
    At first, we will show this statement for $y\in \cQ_\cV$ and invertible $x\in \cV$.\\
    Suppose to $P_x(y)\notin \cQ_\cV$,we show by contradiction.\\
    For the element $y(t):=t y+(1-t)u\quad (t\in [0,1])$, $y(t)\in Q$ because of the convexity of $\cQ$. In particular, $y(t)$ is invertible in $t\in (0,1)$.
    We put on $z(t)=P_x(y(t))$. From $z(0)=x^2\in \cQ_\cV,z(1)=P_x(y)\notin \cQ_\cV$, $z(t)$ has a negative eigenvalue in $t=1$ and is positive in $t=0$.Hence, there exist $s,w\in \cV$ such that $L_{z(s)}(w)=0$ in $0<s\le 1$.
    Now we observe $P_{z(t)}$.$P_{z(t)}=2L_{z(t)}^2-L_{z(t)^2}$.The quantity $\langle P_{z(s)}(w),w\rangle <0$ because of $L_{z(t)^2}>0$ in $t\in (0,s)$.
    On the other hand, $P_{z(0)}=P_{x^2}=P_{x^2}>0$ holds. Hence, $P_{z(t)}$ has 0 eigenvalues in $0<t<s$.
This is a contradiction to invertibility of $z(t)$ in $0<t<s$.
    Therefore, $P_{x}(y)\in \cQ_\cV$.

    Next, we will show the statement for $y\in \cQ_\cV$ and $x\in \cV$.
    The element $x$ has finite eigenvalues because $\cV$ is a finite dimensional Euclidean Jordan algebra.
    Therefore, there exists $s,\epsilon$ such that $x+\epsilon u$ is invertible in $0<\epsilon <s$. \\
    So, if we take $\epsilon \to 0$ for $P_{x+\epsilon u}(y)\in \cQ_\cV$, then we obtain $P_{x}(y)\in \cQ_\cV$.
\end{proof}
By applying Lemma~\ref{lemma:positivityofQradratic}, now we prove self-duality of $\cQ_\cV$ (Lemma~\ref{lemma:selfduality}).
\begin{proof}[Proof of Lemma~\ref{lemma:selfduality}]
    We show that $\cQ_\cV\subset \cQ_\cV^*$.\\
    If $x=z^2\in \cQ_\cV$, $\langle y^2,z^2\rangle=\langle P_z(y^2),u\rangle$ for the $z\in \cV,y^2\in \cQ_\cV$.From Lemma~\ref{lemma:positivityofQradratic}, $P_z(y^2)\in \cQ_\cV$.Hence, there exists $w\in \cV$ such that $P_z(y^2)=w^2$.
    Therefore, $\langle y^2,z^2\rangle=\langle w^2,u\rangle=||w||^2\ge 0$.This means $x=z^2\in \cQ_\cV^*$.\\
    Next, we show that $\cQ_\cV^*\subset \cQ_\cV$.\\
    The quantity $\langle x,y^2\rangle\ge 0,\quad \forall y\in \cQ_\cV$ for $x\in \cQ_\cV^*$. In particular, for the spectral decomposition of $x$ as $x=\sum_i\lambda_i c_i$, $\lambda_i\ge 0 $ because of $c_i\in \cQ_\cV$.
    We can define $x=z^2$ as $z=\sum_i \sqrt{\lambda_i}c_i$.This means $x\in \cQ_\cV$.
\end{proof}

\subsection{Proofs of fundamental properties of entropies}
\label{appendix:fundamentalproperties}

\begin{proof}[Proof of Lemma~\ref{lemma:classicallyRenyies}]
    We will show that
    \begin{align}
        (P_{\sigma^{\frac{-s}{2(1+s)}}}(\rho))^{1+s}=\rho^{1+s}\circ \sigma^{-s}
    \end{align}
    Let $\rho=\sum_i\lambda_i c_i,\sigma=\sum_i \mu_i d_i$ be a simultaneous spectral decomposition by Theorem~\ref{theorem:simultaneousspectral}, where $d_i\in \oplus_j V(c_j,1)$.
    Then, by applying $\{d_j\}$ to Theorem~\ref{theorem:simultaneousspectral}, we obtain
    and we write down $\rho,\sigma$ as
    \begin{align}
        \rho   & =\sum_{i,j} \lambda'_{j} d_j. \\
        \sigma & =\sum_{i,j} \mu_{j}d_j.
    \end{align}
    \begin{align}
        P_{\sigma^{\frac{-s}{2(1+s)}}}(\rho) & =2L_{\sigma^{\frac{-s}{2(1+s)}}}L_{\sigma^{\frac{-s}{2(1+s)}}}(\rho)-L_{\sigma^{\frac{-s}{1+s}}}(\rho)                           \\
                                            & =\sum_{j}2\lambda'_{j}L_{\sigma^{\frac{-s}{2(1+s)}}}L_{\sigma^{\frac{-s}{2(1+s)}}}(d_j)-L_{\sigma^{\frac{-s}{1+s}}}(d_j) \\
                                            & =\sum_{j} \lambda'_{j} {\mu}_{j}^{\frac{-s}{1+s}}d_j
    \end{align}
    Hence,
    \begin{align}
        (P_{\sigma^{\frac{-s}{2(1+s)}}}(\rho))^{1+s}=\sum_{j}{\lambda'}_{j}^{1+s}{\mu}_{j}^{-s}d_i=\rho^{1+s}\circ \sigma^{-s}
    \end{align}
\end{proof}

\begin{proof}[Proof of Lemma~\ref{theorem:additivity}]
    In the case of the Relative entropy, we can show as follows:
    \begin{align}
        D(\rho_1\otimes \rho_2||\sigma_1\otimes \sigma_2)=\mathrm{tr}\rho_1\otimes \rho_2(\log\rho_1\otimes \rho_2-\log\sigma_1\otimes \sigma_2).
    \end{align}
    Here, we apply the relations $\log\rho_1\otimes \rho_2=\log(\rho_1\otimes u\circ u\otimes \rho_2)=\log (\rho_1\otimes u)+\log (u\otimes \rho_2)$, $\log u\otimes \rho=u\otimes \log\rho$, then we obtain
    \begin{align}
          & D(\rho_1\otimes\rho_2 ||\sigma_1\otimes \sigma_2) =\mathrm{tr}\rho_1\otimes \rho_2(\log\rho_1\otimes u+u\otimes \log \rho_2-\log\sigma_1\otimes u-u\otimes \log\sigma_2) \\
        = & \mathrm{tr}(\rho_1\circ\log\rho_1-\log\rho_1\circ\sigma_1)+\mathrm{tr}(\rho_2\circ\log\rho_2-\log\rho_2\circ\sigma_2)=D(\rho_1||\sigma_1)+D(\rho_2||\sigma_2).
    \end{align}

    In the case of Petz Relative R\'{e}nyi entropy, we can show as follows:
    \begin{align}
        \mathrm{tr}(\rho_1\otimes\rho_2)^{1+s}\circ(\sigma_1\otimes \sigma_2)^{-s}=\mathrm{tr}(\rho_1^{1+s}\otimes\rho_1^{1+s})\circ(\sigma_1^{-s}\otimes\sigma_2^{-s})=(\mathrm{tr}\rho_1^{1+s}\sigma_1^{-s})(\mathrm{tr}\rho_2^{1+s}\sigma_2^{-s}).
    \end{align}
    holds. Hence, we take the log both sides, then
    \begin{align}
         & \phi(-s|\rho_1\otimes \rho_2||\sigma_1\otimes \sigma_2)=\log(\mathrm{tr}\rho_1^{1+s}\sigma_1^{-s})(\mathrm{tr}\rho_2^{1+s}\sigma_2^{-s})=\phi(-s|\rho_1||\sigma_1)+\phi(-s|\rho_2||\sigma_2).
    \end{align}
    Therefore, we divide the both sides by $s$, then we obtain
    \begin{align}
        D_{1+s}(\rho_1\otimes \rho_2||\sigma_1\otimes \sigma_2)=D_{1+s}(\rho_1||\sigma_1)+D_{1+s}(\rho_2||\sigma_2).
    \end{align}

    In the case of Sandwiched Relative R\'{e}nyi entropy, we can show as follows:
    \begin{align}
        P_{(\sigma_1\otimes \sigma_2)^{\frac{-s}{2(1+s)}}}(\rho_1\otimes \rho_2)=P_{\sigma_1^{\frac{-s}{2(1+s)}}\otimes \sigma_2^{\frac{-s}{2(1+s)}}}(\rho_1\otimes \rho_2)=P_{\sigma_1^{\frac{-s}{2(1+s)}}}(\rho_1)\otimes P_{\sigma_2^{\frac{-s}{2(1+s)}}}(\rho_2).
    \end{align}
    We take the power both sides by $1+s$, then we obtain
    \begin{align}
        \left(P_{\sigma_1^{\frac{-s}{2(1+s)}}}(\rho_1)\otimes P_{\sigma_2^{\frac{-s}{2(1+s)}}}(\rho_2)\right)^{1+s}=\left(P_{\sigma_1^{\frac{-s}{2(1+s)}}}(\rho_1)\right)^{1+s}\otimes \left(P_{\sigma_2^{\frac{-s}{2(1+s)}}}(\rho_2)\right)^{1+s}.
    \end{align}
    We take the trace and the log, then we ontain
    \begin{align}
          & \tilde{\phi}(-s|\rho_1\otimes \rho_2||\sigma_1\otimes \sigma_2)                                                           \\
        = & \log\mathrm{tr}\left(P_{\sigma_1^{\frac{-s}{2(1+s)}}}(\rho_1)\right)^{1+s}\otimes \left(P_{\sigma_2^{\frac{-s}{2(1+s)}}}(\rho_2)\right)^{1+s}         \\
        = & \log\mathrm{tr}\left(P_{\sigma_1^{\frac{-s}{2(1+s)}}}(\rho_1)\right)^{1+s}+\log\mathrm{tr}\left(P_{\sigma_2^{\frac{-s}{2(1+s)}}}(\rho_2)\right)^{1+s} \\
        = & \tilde{\phi}(-s|\rho_1||\sigma_1)+\tilde{\phi}(-s|\rho_2||\sigma_2).
    \end{align}
    Then we divide both sides by $s$, we obtain
    \begin{align}
        \underline{D}_{1+s}(\rho_1\otimes \rho_2||\sigma_1\otimes \sigma_2)=\underline{D}_{1+s}(\rho_1||\sigma_1)+\underline{D}_{1+s}(\rho_2||\sigma_2).
    \end{align}
\end{proof}

\begin{proof}[Proof of Lemma~\ref{lemma:limitofRenyies}]
    In the case of Petz Relative R\'{e}nyi entropy, at first, we check the differential of $\rho^s$.\\
    Let $\rho=\sum_i\lambda_ic_i$ be the spectral decomposition of $\rho$.Then,
    \begin{align}
        \frac{d}{ds}\rho^s=\frac{d}{ds}\sum_i\lambda_i^sc_i=\sum_i \log\lambda_i \lambda_i^s c_i =\rho^s\circ\log\rho.
    \end{align}
    Hence, the differential of $\rho^{1+s}\circ \sigma^{-s}$ is
    \begin{align}
        \frac{d}{ds}\rho^{1+s}\circ \sigma^{-s}=(\rho^{1+s}\circ\log \rho)\circ \sigma^{-s}-\rho^{1+s}\circ(\sigma^{-s}\circ\log\sigma).
    \end{align}
    Then, the following equality holds.
    \begin{align}
          & \lim_{s\to 0}D_{1+s}(\rho||\sigma)=\frac{d}{ds}\phi(-s|\rho||\sigma)|_{s=0}                                                                              \\
        = & \frac{d}{ds}\log\mathrm{tr}\rho^{1+s}\circ\sigma^{-s}|_{s=0}                                                                                             \\
        = & \frac{1}{\mathrm{tr}\rho^{1+s}\circ\sigma^{-s}}\frac{d}{ds}\mathrm{tr}\rho^{1+s}\circ\sigma^{-s}|_{s=0}                                                  \\
        = & \frac{1}{\mathrm{tr}\rho^{1+s}\circ\sigma^{-s}}\mathrm{tr}(\rho^{1+s}\circ\log \rho)\circ \sigma^{-s}-\rho^{1+s}\circ(\sigma^{-s}\circ\log\sigma)|_{s=0} \\
        = & \mathrm{tr}\rho\log\rho-\rho\log\sigma=D(\rho||\sigma).
    \end{align}
    In the case of Sandwiched Relative R\'{e}nyi entropy, the following equality holds.
    \begin{align}
          & \lim_{s\to 0}\underline{D}_{1+s}(\rho||\sigma)=\frac{d}{ds}\tilde{\phi}(-s|\rho||\sigma)|_{s=0}                                                                                                                                                      \\
        = & \left.\frac{d}{ds}\log\mathrm{tr}\left(P_{\sigma^{\frac{-s}{2(1+s)}}}(\rho)\right)^{1+s}\right|_{s=0}
=\left. \frac{\frac{d}{ds}\mathrm{tr}\left(P_{\sigma^{\frac{-s}{2(1+s)}}}(\rho)\right)^{1+s}}{\mathrm{tr}\left(P_{\sigma^{\frac{-s}{2(1+s)}}}(\rho)\right)^{1+s}}\right|_{s=0}.\label{equation:sandwichlimitproof1}
    \end{align}
    Now, we consider the differential of $\left(P_{\sigma^{\frac{-s}{2(1+s)}}}(\rho)\right)^{1+s}$
    \begin{align}
          & \frac{d}{ds}\left(P_{\sigma^{\frac{-s}{2(1+s)}}}(\rho)\right)^{1+s} \\
        = & \left(\left(P_{\sigma^{\frac{-s}{2(1+s)}}}(\rho)\right)^{1+s}\circ \log \left(P_{\sigma^{\frac{-s}{2(1+s)}}}(\rho)\right)\right)
+(1+s)\left(P_{\sigma^{\frac{-s}{2(1+s)}}}(\rho)\right)^{s}\circ\frac{d}{ds}\left(P_{\sigma^{\frac{-s}{2(1+s)}}}(\rho)\right).\label{equation:sandwichlimitproof2}
    \end{align}
    Here,
    \begin{align}
          & \frac{d}{ds}P_{\sigma^{\frac{-s}{2(1+s)}}}(\rho)=\frac{d}{ds}2L_{\sigma^{\frac{-s}{2(1+s)}}}L_{\sigma^{\frac{-s}{2(1+s)}}}(\rho)-L_{\sigma^{\frac{-s}{1+s}}}(\rho)                                                  \\
        = & 2\frac{-1}{2(1+s)^2}\sigma^{\frac{-s}{2(1+s)}}\circ\left(\left(\log \sigma\circ \sigma^{\frac{-s}{2(1+s)}}\right)\circ \rho\right)+2\left(\sigma^{\frac{-s}{2(1+s)}}\circ\frac{-1}{2(1+s)^2}\sigma^{\frac{-s}{2(1+s)}}\right)\circ\left(\log \sigma\circ\rho\right) \nonumber \\ 
       & -  \frac{-1}{(1+s)^2}\left(\sigma^{\frac{-s}{1+s}}\circ\log \sigma\right)\circ\rho.\label{equation:sandwichlimitproof3}
    \end{align}
    Hence, from \eqref{equation:sandwichlimitproof1}, \eqref{equation:sandwichlimitproof2}, \eqref{equation:sandwichlimitproof3}, we obtain
    \begin{align}
        \lim_{s\to 0}\underline{D}_{1+s}(\rho||\sigma)=D(\rho||\sigma).
    \end{align}
\end{proof}

\begin{proof}[Proof of Lemma~\ref{lemma:monotonicityins}]
    Let $\rho=\sum_i \lambda_i c_i$ and $\sigma=\sum_i \mu_i e_i$ be the spectral decompositions of $\rho,\sigma$.
    Now we focus on $\log \mathrm{tr}\rho^{1+s}\circ\sigma^{-s}$.
    We calculate differentiation of $\log \mathrm{tr}\rho^{1+s}\circ \sigma^{-s}$ as follows:
    \begin{align}
        \phi''(-s|\rho||\sigma)&=\frac{d}{ds}\frac{\mathrm{tr}(\rho^{1+s}\circ\log\rho)\circ \sigma^{-s}-\rho^{1+s}\circ(\sigma^{-s}\circ\log\sigma)}{\mathrm{tr}\rho^{1+s}\circ\sigma^{-s}}\\
&=\frac{d}{ds}\frac{\sum_{i,j}\lambda_i^{1+s}\mu_j^{-s}(\log\lambda_i-\log\mu_j)\mathrm{tr}c_i\circ e_j}{\mathrm{tr}\rho^{1+s}\circ\sigma^{-s}}\\
&=\frac{\sum_{i,j}\lambda_i^{1+s}\log\lambda_i\mu_j^{-s}(\log\lambda_i-\log\mu_j)\mathrm{tr}c_i\circ e_j-\sum_{i,j}\lambda_i^{1+s}\mu_j^{-s}\log\mu_j(\log\lambda_i-\log\mu_j)\mathrm{tr}c_i\circ e_j}{\mathrm{tr}\rho^{1+s}\circ\sigma^{-s}}\nonumber\\
&-\left(\frac{\sum_{i,j}\lambda_i^{1+s}\mu_j^{-s}(\log\lambda_i-\log\mu_j)\mathrm{tr}c_i\circ e_j}{\mathrm{tr}\rho^{1+s}\circ\sigma^{-s}}\right)^2.
    \end{align}
    By applying Schwarz inequality to the vector $(\lambda_i^{\frac{1+s}{2}}\mu_j^{\frac{-s}{2}}(\mathrm{tr}c_i\circ e_j)^{\frac{1}{2}})_{i,j}$ and $(\lambda_i^{\frac{1+s}{2}}\mu_j^{\frac{-s}{2}}(\log\lambda_i-\log\mu_j)(\mathrm{tr}c_i\circ e_j)^{\frac{1}{2}})_{i,j}$, we obtain
    $\phi''(-s|\rho||\sigma)\ge 0$.
    Therefore, $\phi(-s|\rho||\sigma)$ is convex and $D_{1+s}(\rho||\sigma)=\frac{\phi(-s|\rho||\sigma)}{s}$ is monotone increasing.
\end{proof}

\begin{proof}[Lemma~\ref{lemma:Jenseninequality}]
    Let $x=\sum_i \lambda_i c_i$ be a spectral decomposition of $x$. From the Lemma~\ref{lemma:CSOIisMeasurement}, $\{c_i\}$ is a Measurement.\\
    Let $f(x)=\sum_i f(\lambda_i)c_i$ be from Definition~\ref{definition:fofEJAelement}, and $p_i=\mathrm{tr}\rho\circ c_i$  be a probability distribution defined by Definition~\ref{definition:distribution}.
    From an original Jensen's inequality, we obtain
    \begin{align}
        \mathrm{tr}\rho\circ f(x)=\sum_i p_i f(\lambda_i)\ge f(\sum_i p_i\lambda_i)=f(\sum_i\mathrm{tr}\rho\circ c_i\lambda_i)=f(\mathrm{tr}\rho\circ x).
    \end{align}
\end{proof}

\begin{proof}[Proof of Lemma~\ref{lemma:numberofeigenvalues}]
    From the spectral decomposition of $x$ ,we can write down as $x^{\otimes n}=\sum_{j_1,\ldots j_d}\lambda_1^{j_1}\ldots\lambda_d^{j_d}c_1^{j_1}\otimes \cdots \otimes c_d^{j_d}\quad(j_1+\cdots+j_d=n)$.\\
    The numbers $j_1,\ldots, j_{d-1}$ take the values from $0$ to $n$ but $j_d$ is decided by $j_1,\ldots,j_{d-1}$ because of the relation $j_1+\cdots+j_d=n$.
    Therefore, the eigenvalues of $x^{\otimes n}$, $\lambda_1^{j_1}\cdots \lambda_d^{j_d}$ takes at most $(n+1)^{d-1}$ values.\\
    The spectral decomposition of $x^{\otimes n}$ has all distinct eigenvalues and a complete system of orthogonal idempotents. Hence, the number of eigenvalues and elements of the set of a complete system of orthogonal idempotents are bounded by $(n+1)^{d-1}$
\end{proof}

\subsection{Proofs about Petz Relative R\'{e}nyi entropy}\label{appendix:proofofRelativeRenyi}

\begin{proof}[Proof of Lemma~\ref{lemma:Monotonicity of Relativerenyi1}]
    Simillary to the proof of Lemma~\ref{lemma:RepresentEntropies}, we define a new CSOI $\{c_{i,j}\}$.
    The spectral decomposition of $\sigma$ is given as $\sigma=\sum_i \mu_i e_i$.
    Now we define another CSOI $\{c_{i,j}\}$ by $c_{i,j}:=P(e_i)c'_{i,j}$ by Lemma~\ref{lemma:newCSOI} ,where the spectral decomposition $\rho=\sum_j\lambda_j c_j$ and $\kappa_\sigma(\rho)=\sum_{i,j}\lambda_{i,j}c_{i,j}$ hold.
    Then. we have the following relations similarly to the proof of Lemma~\ref{lemma:RepresentEntropies}.
    \begin{align}
        P(e_i)\rho&=\sum_{j}\lambda_{i,j}c_{i,j}.\label{equation:proofofRRE1}\\
        \kappa_{\sigma}(\rho)&=\sum_i P(e_i)\rho=\sum_{i,j}\lambda_{i,j}c_{i,j}.\label{equation:proofofRRE2}\\
        \sum_{j}c_{i,j}&=e_i.\label{equation:proofofRRE3}
    \end{align}
    Then, the following relation holds:
    \begin{align}
        \mathrm{tr}(\rho\circ c_{i,j})\stackrel{(a)}{=}\mathrm{tr}(P(c_{i,j})\rho)\stackrel{(b)}{=}\mathrm{tr}(\lambda_{i,j}c_{i,j}).\label{equation:proofofRRE5}
    \end{align}
    The equation (a) is shown by the Euclidean condition.
    The equation (b) is shown by the condition \eqref{equation:proofofRRE1} and \eqref{equation:proofofRRE3} similarly to the proof in Lemma~\ref{lemma:RepresentEntropies}.
    Now, we focus on $\mathrm{tr}\rho\circ\frac{c_{i,j}}{\mathrm{tr}c_{i,j}}$.
    We apply Jensen inequality in EJAs (Lemma~\ref{lemma:Jenseninequality}).
    Then we obtain
    \begin{align}
        (\mathrm{tr}\rho\circ\frac{c_{i,j}}{\mathrm{tr}c_{i,j}})^{1+s}\le \mathrm{tr}\rho^{1+s}\circ\frac{c_{i,j}}{\mathrm{tr}c_{i,j}},\quad s\ge 0.\label{equation:proofofRRE4}
    \end{align}
    Therefore, the following relation holds:
    \begin{align}
        \mathrm{tr}\sigma^{-s}\circ\kappa_\sigma(\rho)^{1+s}&\stackrel{(a)}{=}\sum_{i,j}\mathrm{tr}\mu_j^{-s}\lambda_{i,j}^{1+s}c_{i,j}\\
&\stackrel{(b)}{=}\sum_{i,j}\mu_i^{-s}\mathrm{tr}c_{i,j}(\mathrm{tr}\rho\circ\frac{c_{i,j}}{\mathrm{tr}c_{i,j}})^{1+s}\\
&\stackrel{(c)}{\le}\sum_{i,j}\mu_i^{-s}\mathrm{tr}c_{i,j}(\mathrm{tr}\rho^{1+s}\circ \frac{c_{i,j}}{\mathrm{tr}c_{i,j}})\\
&=\sum_{i,j}\mathrm{tr}\mu_i^{-s}\rho^{1+s}\circ c_{i,j}\stackrel{(d)}{=}\mathrm{tr}\rho^{1+s}\circ \sigma^{-s}\label{equation:proofofRRE6}
    \end{align}
    The equality (a) is shown by the relation \eqref{equation:proofofRRE2} ,spectral decomposition of $\sigma$ and \eqref{equation:proofofRRE3}.
The equality (b) is shown by the relation \eqref{equation:proofofRRE5}.
The inequality (c) is shown by Lemma~\ref{lemma:Jenseninequality}.
The equality (d) is shown by \eqref{equation:proofofRRE3} and spectral decomposition of $\sigma$.
Therefore, we divide \eqref{equation:proofofRRE6} by $s>0$, and then we obtain the conclusion.
\end{proof}

\begin{proof}[Proof of Lemma~\ref{lemma:Monotonicity of Relativerenyi2}]
    The spectral decomposition of $\sigma$ and CSOI $\{c_{i,j}\}$ are given similarly to the proof of Lemma~\ref{lemma:Monotonicity of Relativerenyi1}, i.e., 
    $\sigma=\sum_{i}\mu_i e_i$ and $\kappa_\sigma(\rho)=\sum_{i,j}\lambda_{i,j}c_{i,j}$.
    In addition, \eqref{equation:proofofRRE1}, \eqref{equation:proofofRRE2} and \eqref{equation:proofofRRE3} hold.
	Now, for a measurement $\bm{M}=\{M_i\}$, we define new measurement ${\bm{M}'}^\rho_\sigma$ as follows:
	
    \begin{align}
        {\bm{M}'}^\rho_\sigma&:=\{\sum_{j,k}\bm{M}^\rho_\sigma(i,j,k)\}=\{{\bm{M}'}^\rho_\sigma(i)\},
    \end{align}
    where $\bm{M}^\rho_\sigma(i,j,k)=P_{c_{j,k}}(M_i)$ defined in Definition~\ref{definition:pinchedmeasurement}.

    Then, we obtain the following inequality:
    \begin{align}
        \mathrm{tr}\rho^{1+s}\circ \sigma^{-s}&\stackrel{(a)}{\ge}\mathrm{tr}\kappa_\sigma(\rho)^{1+s}\circ\sigma^{-s}\\
        &\stackrel{(b)}{=}\sum_{j,k}\mu_j^{-s}\lambda^{1+s}_{j,k}\mathrm{tr}c_{j,k}\\
        &=\sum_{j,k}\mu_j^{-s}\lambda^{1+s}_{j,k}\mathrm{tr}c_{j,k}\circ\sum_iM_i\\
        &\stackrel{(c)}{=}\sum_{i,j,k}(\mathrm{tr}\rho\circ \bm{M}^\rho_\sigma(i,j,k))^{1+s}(\mathrm{tr}\sigma\circ M^\rho_\sigma(i,j,k))^{-s}\\
        &\stackrel{(d)}{\ge}\sum_i(\mathrm{tr}\rho\circ {\bm{M}'}^\rho_\sigma(i))^{1+s}(\mathrm{tr}\sigma\circ {\bm{M}'}^\rho_\sigma(i))^{-s}\\
        &\stackrel{(e)}{=}\sum_i(\mathrm{tr}\kappa_\sigma(\rho)\circ M_i)^{1+s}(\mathrm{tr}\sigma\circ M_i)^{-s}\\
        &\stackrel{(f)}{\ge}|\bm{C}_\sigma|^{-(1+s)}\sum_i(\mathrm{tr}\rho\circ M_i)^{1+s}(\mathrm{tr}\sigma\circ M_i)^{-s}\label{equation:proofofRRE9}
    \end{align}
The inequality (a) is shown in the proof of Lemma~\ref{lemma:Monotonicity of Relativerenyi1}.
The equality (b)  is shown by the condition \eqref{equation:proofofRRE1} and \eqref{equation:proofofRRE3} similarly to the proof in Lemma~\ref{lemma:RepresentEntropies}.
The equality (c) is shown by the following relations of $\rho,\sigma$:
\begin{align}
    \mathrm{tr}\rho\circ \bm{M}^\rho_\sigma(i,j,k)&=\mathrm{tr}\rho\circ P_{c_{j,k}}(M_i)\\
    &\stackrel{(g)}{=}\mathrm{tr}(P_{c_{j,k}}(\rho))\circ M_i\\
    &\stackrel{(h)}{=}\lambda_{j,k}\mathrm{tr}c_{j,k}\circ M_i.\label{equation:proofofRRE7}
\end{align}
\begin{align}
    \mathrm{tr}\sigma\circ \bm{M}^\rho_\sigma(i,j,k)&=\mathrm{tr}\sigma\circ P_{c_{j,k}}(M_i)\\
    &=\mathrm{tr}(P(c_{j,k})\sigma)\circ M_i\\
    &=\mu_j\mathrm{tr}c_{j,k}\circ M_i.\label{equation:proofofRRE8}
\end{align}
The equality (g) is shown by the Euclidean condition.
The equality (h) is shown by \eqref{equation:proofofRRE5}.

The inequality (d) is shown by the monotonicity of classical Relative R\'{e}nyi entropy. 
The equality (e) is shown by taking sum with respect to $j,k$ in \eqref{equation:proofofRRE7} and \eqref{equation:proofofRRE8}.
The inequality (f) is shown as follows:
First, we apply the pinching inequality (Lemma~\ref{lemma:PinchingInequality}).
\begin{align}
    |\bm{C}_\sigma|\kappa_\sigma(\rho)\ge \rho.
\end{align}
In addition, $P_{\sqrt{M_i}}(|\bm{C}_\sigma|\kappa_\sigma(\rho)-\rho)\ge 0$ because of Lemma~\ref{lemma:positivityofQradratic}.
Finally, we take trace of $P_{\sqrt{M_i}}(|\bm{C}_\sigma|\kappa_\sigma(\rho)-\rho)$, we obtain
\begin{align}
    \mathrm{tr}P_{\sqrt{M_i}}(|\bm{C}_\sigma|\kappa_\sigma(\rho)-\rho)&=\left\langle|\bm{C}_\sigma|\kappa_\sigma(\rho)-\rho,P_{\sqrt{M_i}}(u)\right\rangle\\
    &=\left\langle|\bm{C}_\sigma|\kappa_\sigma(\rho)-\rho,M_i\right\rangle\\
    &=|\bm{C}_\sigma|\mathrm{tr}\kappa_\sigma(\rho)\circ M_i-\mathrm{tr}\rho\circ M_i \ge 0.
\end{align}
Therefore, $|\bm{C}_\sigma|^{1+s}(\mathrm{tr}\kappa_\sigma(\rho)\circ M_i)^{1+s}\ge(\mathrm{tr}\rho\circ M_i)^{1+s}$ for $s>0$.

By taking logarithm in \eqref{equation:proofofRRE9} and divide by $s>0$, then we obtain the conclusion.
\end{proof}

\subsection{Proofs about Sandwiched Relative R\'{e}nyi entropy}\label{appendix:proofofSandwich}
\begin{proof}[Proof of Lemma~\ref{lemma:inequalityofSandwich}]
    At first, we show the following inequality:
    \begin{align}
        \underline{D}_{1+s}(\rho||\sigma)\ge D_{1+s}(\kappa_\sigma(\rho)||\sigma).\label{equation:proofofSRR1}
    \end{align}
    Similarly to the proof of Lemma~\ref{lemma:Monotonicity of Relativerenyi1}, we give 
    the spectral decomposition $\sigma=\sum_i \mu_ie_i$ and CSOI $\{c_{i,j}\}$ satisfying \eqref{equation:proofofRRE1}, \eqref{equation:proofofRRE2} and \eqref{equation:proofofRRE3}.
    Then, we calculate as follows:
    \begin{align}
        \mathrm{tr}\sigma^{-s}\circ \kappa_\sigma(\rho)^{1+s}&=\sum_{i,j}\mathrm{tr}\mu_j^{-s}\lambda_{i,j}^{1+s}c_{i,j}\\
        &=\sum_{i,j}\mathrm{tr}c_{i,j}\left(\mathrm{tr}\frac{\rho\circ (\mu_j^{\frac{-s}{1+s}}c_{i,j})}{\mathrm{tr}c_{i,j}}\right)^{1+s}\\
        &=\sum_{i,j}\mathrm{tr}c_{i,j}\left(\mathrm{tr}P_{\sigma^{\frac{-s}{2(1+s)}}}(\rho)\circ \frac{c_{i,j}}{\mathrm{tr}c_{i,j}}\right)^{1+s}\\
        &\stackrel{(a)}{\le} \sum_{i,j}\mathrm{tr}c_{i,j}\mathrm{tr}\left(P_{\sigma^{\frac{-s}{2(1+s)}}}(\rho)\right)^{1+s}\circ\frac{c_{i,j}}{\mathrm{tr}c_{i,j}}\\
        &=\mathrm{tr}\left(P_{\sigma^{\frac{-s}{2(1+s)}}}(\rho)\right)^{1+s}.\label{equation:proofofSRR3}
    \end{align}
    The inequality (a) is shown by Jensen's inequality with EJAs (Lemma~\ref{lemma:Jenseninequality}) for the state $\frac{c_{i,j}}{\mathrm{tr}c_{i,j}}$.
    Therefore, by taking logarithm in \eqref{equation:proofofSRR3} and dividing by $s$ we obtain the conclusion.

    Next, we show the following inequality:
    \begin{align}
        D_{1+s}(\kappa_\sigma(\rho)||\sigma)+\frac{1+s}{s}\log|\bm{C}_\sigma|\ge \underline{D}_{1+s}(\rho||\sigma).\label{equation:proofofSRR2}
    \end{align}
    By pinching inequality, we have $\rho\le |\bm{C}_\sigma|\kappa_\sigma(\rho)$.
    By applying Lemma~\ref{lemma:positivityofQradratic}, the following relation holds:
    \begin{align}
        P_{\sigma^{\frac{-s}{2(1+s)}}}(\rho)\le |\bm{C}_\sigma|P_{\sigma^{\frac{-s}{2(1+s)}}}(\kappa_\sigma(\rho)).
    \end{align}
    From next Lemma~\ref{lemma:traceineq}, we obtain 
    \begin{align}
        \mathrm{tr}\left(P_{\sigma^{\frac{-s}{2(1+s)}}}(\rho)\right)^{1+s}&\le |\bm{C}_\sigma|^{1+s}\mathrm{tr}P_{\sigma^{\frac{-s}{2(1+s)}}}\left(\kappa_\sigma(\rho)\right)^{1+s}.\label{equation:proofofSRR4}\\
&=|\bm{C}_\sigma|^{1+s}\mathrm{tr}\kappa_\sigma(\rho)^{1+s}\circ \sigma^{-s}
    \end{align}
    By taking logarithm in \eqref{equation:proofofSRR4} and divide by $s$, we obtain conclusion.
    
    Finally, conbining \eqref{equation:proofofSRR1} and \eqref{equation:proofofSRR2}, we obtain Lemma~\ref{lemma:inequalityofSandwich}.
\end{proof}

\begin{lemma}\label{lemma:traceineq}
    Let $x,y$ be elements in EJAs satisfying $0\le x\le y$.
    Then, $\mathrm{tr}x^{1+s}\le \mathrm{tr}y^{1+s}$ for $s\ge 0$.
\end{lemma}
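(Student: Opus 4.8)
The plan is to imitate the quantum proof that $X\mapsto\mathrm{tr}\,X^{p}$ is monotone on the positive cone for $p\ge1$, replacing operator manipulations by the Euclidean axiom (J3) and the self-duality $\cQ=\cQ^{\ast}$ (Lemma~\ref{lemma:selfduality}). The case $s=0$ is immediate, since $\mathrm{tr}y-\mathrm{tr}x=\langle y-x,u\rangle\ge0$; so assume $s>0$ and put $p:=1+s$. Consider the affine path $A(t):=(1-t)x+ty$ for $t\in[0,1]$; each $A(t)$ lies in $\cQ$ by convexity of the cone, hence has a spectral decomposition with nonnegative eigenvalues, so $A(t)^{s}\in\cQ$. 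I would show
\[
\frac{d}{dt}\,\mathrm{tr}\big(A(t)^{p}\big)
= p\,\mathrm{tr}\big(A(t)^{s}\circ(y-x)\big)
= p\,\big\langle A(t)^{s},\,y-x\big\rangle ,
\]
and then, since $A(t)^{s}\in\cQ$ and $y-x\in\cQ$, self-duality gives $\langle A(t)^{s},y-x\rangle\ge0$ for every $t$. Integrating over $[0,1]$ yields $\mathrm{tr}(y^{p})-\mathrm{tr}(x^{p})\ge0$.

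The substantive point is the derivative formula $\frac{d}{dt}\mathrm{tr}\big(g(A(t))\big)=\mathrm{tr}\big(g'(A(t))\circ A'(t)\big)$ for $g\in C^{1}([0,R])$, where $R$ bounds the eigenvalues of all $A(t)$. First I would prove it for polynomial $g$. By power-associativity each $A(t)^{n}$ is well defined (this is what (J2) guarantees) and $\frac{d}{dt}A^{n}=A'\circ A^{n-1}+A\circ\frac{d}{dt}A^{n-1}$ by bilinearity of $\circ$; combined with (J3), a short induction on $n$ shows $\big\langle A^{m},\frac{d}{dt}A^{n}\big\rangle=n\,\langle A^{m+n-1},A'\rangle$ for all integers $m\ge0$, $n\ge1$ (the inductive step uses $\langle A^{m},A'\circ A^{n}\rangle=\langle A^{m}\circ A^{n},A'\rangle=\langle A^{m+n},A'\rangle$ and $\langle A^{m},A\circ\frac{d}{dt}A^{n}\rangle=\langle A^{m+1},\frac{d}{dt}A^{n}\rangle$). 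Taking $m=0$ and summing over monomials gives the formula for polynomials. For $g(t)=t^{p}$ I would approximate $g$ in the $C^{1}$-norm on $[0,R]$ by polynomials $g_{k}$ (e.g.\ integrate a Weierstrass approximation of $g'$); using the bound $\|h(w)\|^{2}=\sum_{i}h(\lambda_{i})^{2}\,\mathrm{tr}(c_{i})\le \mathrm{tr}(u)\,\max_{i}h(\lambda_{i})^{2}$ for $w=\sum_{i}\lambda_{i}c_{i}$, one gets $\mathrm{tr}(g_{k}(A(t)))\to\mathrm{tr}(g(A(t)))$ and $\frac{d}{dt}\mathrm{tr}(g_{k}(A(t)))\to\mathrm{tr}(g'(A(t))\circ A'(t))$ uniformly on $[0,1]$, which transfers differentiability and the formula to $g$.

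The main obstacle is exactly this derivative identity: because the spectral CSOI of $A(t)$ varies with $t$ and there is no matrix product with which to exploit trace-cyclicity, one cannot differentiate $\mathrm{tr}(A(t)^{p})$ naively; the clean route is the polynomial-plus-approximation argument above, in which (J3) plays the role that trace-cyclicity plays in the quantum proof. Everything else is routine: positivity of $A(t)$ and of $A(t)^{s}$ from the spectral theorem, and nonnegativity of the pairing $\langle A(t)^{s},y-x\rangle$ from Lemma~\ref{lemma:selfduality}. (An alternative I considered is to reduce to the commutative case by pinching $y$ with the spectral CSOI $\bm{C}_{x}$ of $x$, which gives $0\le x\le\kappa_{\bm{C}_{x}}(y)$ with $x$ and $\kappa_{\bm{C}_{x}}(y)$ behaving classically, hence $\mathrm{tr}(x^{p})\le\mathrm{tr}(\kappa_{\bm{C}_{x}}(y)^{p})$; but then one is left to show that pinching does not increase $\mathrm{tr}(\cdot^{p})$, which is no easier than the original statement, so I would not pursue it.)
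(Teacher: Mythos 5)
Your proof is correct, but it takes a genuinely different route from the paper's. The paper argues in three lines: writing $x=\sum_i x_i c_i$ and $y=\sum_i y_i d_i$, it applies the \emph{classical} Jensen inequality to the stochastic kernel $\{\langle c_j,d_i\rangle\}_i$ (nonnegative by self-duality, summing to $\langle c_j,u\rangle=1$ under the paper's trace normalization) to get $\mathrm{tr}\,y^{1+s}\ge\sum_j\langle c_j,y\rangle^{1+s}$, then uses $\langle c_j,y\rangle\ge\langle c_j,x\rangle\ge 0$ (self-duality plus $y-x\in\cQ$) and monotonicity of $t\mapsto t^{1+s}$ to finish with $\sum_j\langle c_j,x\rangle^{1+s}=\mathrm{tr}\,x^{1+s}$; in effect it pinches $y$ onto the Jordan frame of $x$ and compares classically, which is exactly the alternative you dismissed as "no easier" — the point you missed is that the intermediate quantity $\sum_j\langle c_j,y\rangle^{1+s}$ is already a classical Rényi-type sum, so no further pinching monotonicity is needed. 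Your route instead integrates $\frac{d}{dt}\mathrm{tr}\bigl(A(t)^{1+s}\bigr)=(1+s)\langle A(t)^{s},y-x\rangle\ge 0$ along the segment, with the derivative identity established for polynomials via power-associativity and (J3) and transferred to $t^{1+s}$ by $C^1$ approximation; the induction $\langle A^{m},\frac{d}{dt}A^{n}\rangle=n\langle A^{m+n-1},A'\rangle$ is sound, and the approximation step is handled correctly since $g'(t)=(1+s)t^{s}$ is continuous at $0$ for $s>0$. What your approach buys is generality — it shows $\mathrm{tr}\,g(x)\le\mathrm{tr}\,g(y)$ for any increasing $C^1$ function $g$, convex or not, and it transplants the trace-cyclicity mechanism of the quantum proof into the EJA setting via (J3) — at the cost of noticeably more machinery (calculus, Weierstrass approximation, uniform eigenvalue bounds along the segment) than the paper's elementary double-sum argument, which however needs both convexity and monotonicity of $t\mapsto t^{1+s}$ and leans on the normalization $\mathrm{tr}\,c_i=1$ of primitive idempotents.
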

\color{black}
\begin{proof}
    The spectral decompositions of $x,y$ are given as $x=\sum_i x_i c_i,y=\sum_i y_i d_i$.
    Then,
    \begin{align}
        \mathrm{tr}y^{1+s}&=\sum_i y_i^{1+s}\langle u,d_i\rangle=\sum_{i,j}y_i^{1+s}\langle c_j,d_i\rangle\\
        &\stackrel{(a)}{\ge} \sum_j(\sum_i y_i\langle c_j,d_i\rangle)^{1+s}=\sum_j\langle c_j, y\rangle^{1+s}\stackrel{(b)}{\ge} \sum_j\langle c_j,x\rangle^{1+s}\\
        &\stackrel{(c)}{=}\sum_j x_j^{1+s}\langle c_j,u\rangle=\mathrm{tr}x^{1+s}.
    \end{align}
    The inequality (a) is shown by Jensen's inequality (Lemma~\ref{lemma:Jenseninequality}) for the probability distribution $\{\langle c_j,d_i\rangle\}_i$, where
    $\langle c_j, d_i\rangle=\langle c_j, P_{d_i}(u)\rangle=\mathrm{tr}P_{d_i}(c_j)\ge 0$ and $\langle c_j,u\rangle=1$ hold by normalization of inner product in Section~\ref{subsection:EuclideanJordanalgebra} Lemma~\ref{lemma:positivityofQradratic}.
    The inequality (b) is shown by the condition $0\le x\le y$.
    The equality (c) is shown by normalization of the norm $\langle c_j,u\rangle=1$ discussed in Section~\ref{subsection:EuclideanJordanalgebra}.
Therefore, we obtain the conclusion.
\end{proof}

\begin{proof}[Proof of Lemma~\ref{lemma:observationinequalityofSandwich}]
    By applying Lemma~\ref{lemma:inequalityofSandwich} to states $\rho^{\otimes n},\sigma^{\otimes n}$ in $V^{\otimes n}$, we obtain
    \begin{align}
        D_{1+s}(\kappa_{\sigma^{\otimes n}}(\rho^{\otimes n})||\sigma^{\otimes n})+\frac{1+s}{s}\log|\bm{C}_{\sigma^{\otimes n}}|\ge n\underline{D}_{1+s}(\rho||\sigma)\stackrel{(a)}{=} \underline{D}_{1+s}(\rho^{\otimes n}||\sigma^{\otimes n})\ge D_{1+s}(\kappa_{\sigma^{\otimes n}}(\rho^{\otimes n})||\sigma^{\otimes n}).\label{equation:proofofSRR5}
    \end{align}
    The equality (a) is shown by additivity (Lemma~\ref{theorem:additivity}).
    By deviding \eqref{equation:proofofSRR5} by $n$ and applying Lemma~\ref{lemma:numberofeigenvalues}, we obtain
    \begin{align}
       \frac{1}{n} D_{1+s}(\kappa_{\sigma^{\otimes n}}(\rho^{\otimes n})||\sigma^{\otimes n})+\frac{1+s}{ns}\log(n+1)^{d-1}\ge \underline{D}_{1+s}(\rho||\sigma)\ge\frac{1}{n} D_{1+s}(\kappa_{\sigma^{\otimes n}}(\rho^{\otimes n})||\sigma^{\otimes n}),\label{equation:proofofSRR6}
    \end{align}
    where $d:=|\bm{C}_\sigma|$.
    Now, we take a limit of $n$ in \eqref{equation:proofofSRR6}, we obtain
    \begin{align}
        \underline{D}_{1+s}(\rho||\sigma)=\lim_{n\to\infty} \frac{1}{n}D_{1+s}(\kappa_{\sigma^{\otimes n}}(\rho^{\otimes n})||\sigma^{\otimes n})\label{equation:proofofSRR7}
    \end{align}
    Combining \eqref{equation:proofofSRR7} and  Theorem~\ref{theorem:MonotonicityofRelativerenyi}, we obtain the conclusion.
\end{proof}

\subsection{Proofs about Relative entropy}\label{appendix:proofofRelative}
\begin{proof}[Proof of Lemma~\ref{lemma:DirectpartofRelative1}]
    By definition of Relative entropy, the following relation holds:
    \begin{align}
        D(\kappa_\sigma(\rho)||\sigma)-D(\rho||\sigma)=\mathrm{tr}\kappa_\sigma(\rho)\circ \log\kappa_\sigma(\rho)-\mathrm{tr}\kappa_\sigma(\rho)\circ \log\sigma-(\mathrm{tr}\rho\circ\log\rho-\mathrm{tr}\rho\circ\log\sigma).\label{equation:proofofR1}
    \end{align}
    Now, the following two relations holds:
    \begin{align}
        \mathrm{tr}\kappa_\sigma(\rho)\circ \log\sigma&\stackrel{(a)}{=}\mathrm{tr}\rho\circ\log\sigma.\label{equation:proofofR2}\\
        \mathrm{tr}\kappa_\sigma(\rho)\circ\log\kappa_\sigma(\rho)&\stackrel{(b)}{=}\mathrm{tr}\rho\circ \log\kappa_\sigma(\rho).\label{equation:proofofR3}
    \end{align}
    The equality (a) and (b) is shown by Eualidean condition.
    Therefore, applying \eqref{equation:proofofR1} to \eqref{equation:proofofR2} and \eqref{equation:proofofR3}, we obtain
    \begin{align}
        D(\kappa_\sigma(\rho)||\sigma)-D(\rho||\sigma)=\mathrm{tr}\rho\circ\kappa_\sigma(\rho)-\mathrm{tr}\rho\circ\log\rho=-D(\rho||\kappa_\sigma(\rho)).
    \end{align}
    Hence, we obtain conclusion.
\end{proof}

\begin{proof}[Proof of Lemma~\ref{lemma:DirectpartofRelative2}]
    At first, we consider the case which $\rho$ is an external point of state space of $\cV$, i.e., $\rho$ is an element of a jordan frame.
    \begin{align}
        H(\rho)=-\mathrm{tr}\rho\circ \log\rho=0.
    \end{align}   
    For a jordan frame $\{c_i\}_{i=1}^d$, there exists $\lambda_i$ satisfying $P_{c_i}(\rho)=\lambda_i\rho$.
    This equality is derived from $\cV(i,1)=\mathbb{R}c_i$, where $\cV(i,1)$ is a direct sum factor on Peirce decomposition by $\{c_i\}_{i=1}^d$.
    Therefore, we obtain
    \begin{align}
        H(\kappa_{\bm{C}}(\rho))&=-\mathrm{tr}\sum_{i=1}^dP_{c_i}(\rho)\circ \log\sum_{j=1}^dP_{c_j}(\rho).\\
&=-\mathrm{tr}(\sum_{i=1}^d\lambda_ic_i)\circ\log(\sum_{j=1}^d\lambda_j c_j).\\
&=-\mathrm{tr}(\sum_{i=1}^d\lambda_ic_i)\circ(\sum_{j=1}^d\log\lambda_j c_j).\\
&=-\mathrm{tr}\sum_{i=1}^d \lambda_i\log\lambda_i c_i.\\
&\stackrel{(a)}{=}-\sum_{i=1}^d\lambda_i\log\lambda_i\stackrel{(b)}{\le}\log|\bm{C}|.
    \end{align} 
    The equation (a) is shown by a normalization of a norm.
    The inequality (b) is shown as follows:
    The inequality $\lambda_i\ge 0$ is shown by $\rho\ge 0$ and $P_{c_i}(\rho)\ge 0$.
    The inequlity $\rho\ge 0$ is shown by a spectral decomposition by a CSOI $\{\rho,u-\rho\}$.
    The inequality $P_{c_i}(\rho)\ge 0$ is shown by Lemma~\ref{lemma:positivityofQradratic}.
    In addition, $\lambda_i\le 1$ is shown by $\kappa_{\bm{C}}(\rho)=\sum_i P_{c_i}(\rho)=\sum_i\lambda_i\rho$ is a state.
    
    Next we consider the case which $\rho$ is a convex conbination of external points.
    For states $\rho,\rho_i$ and a probability distribution $\{p_i\}_{i=1}^d$,
    we obtain
    \begin{align}
        D(\sum_{i=1}^dp_i\rho_i||\kappa_{\bm{C}}(\sum_{j=1}^dp_j\rho_j))&=D(\sum_{i=1}^dp_i\rho_i||\sum_{j=1}^dp_j\kappa_{\bm{C}}(\rho_j))\\
        &\stackrel{(a)}{\le} \sum_{i=1}^dp_iD(\rho_i||\kappa_{\bm{C}}(\rho_i))\stackrel{(b)}{\le}\sum_{i=1}^dp_i\log|\bm{C}|=\log|\bm{C}|.
    \end{align}
    The inequality (a) is shown by joint convexity (Theorem~\ref{theorem:jointconvexity}).
    The inequality (b) is shown by an external point case.
    Therefore, we obtain the conclusion.
\end{proof}
\color{black}
\subsection{Proof from Theorem~\ref{theorem:Stein2} to Theorem~\ref{theorem:Stein} and from Theorem~\ref{theorem:Stein} to Theorem~\ref{theorem:Stein2}}\label{appendix:proofofStein}
We fix $0<\epsilon<1$.
(1)When $B^\dag(\rho||\sigma)=B(\rho||\sigma)=D(\rho||\sigma)$ holds, for arbitrary $\delta>0$
there exists a family $\{T_n\}$ satisfying the following condition because $B(\rho||\sigma)>D(\rho||\sigma)-\delta$ holds.

\begin{align}
    \varliminf_{n\to\infty}-\frac{1}{n}\log\mathrm{tr}\sigma^{\otimes n}\circ T_n\ge D(\rho||\sigma)-\delta,\quad\lim_{n\to\infty}\mathrm{tr}\rho^{\otimes n}\circ (u-T_n)=0.
\end{align}

Because $\lim_{n\to\infty}\mathrm{tr}\rho^{\otimes n}\circ(u-T_n)=0$ holds, for $\epsilon$ there exist $N$ such that $\mathrm{tr}\rho^{\otimes n}\circ(u-T_n)<\epsilon$ holds for every $n\ge N$.
For $n\ge N$, we obtain
\begin{align}
    \mathrm{tr}\sigma^{\otimes n}\circ T_n \ge \beta_\epsilon^n(\rho||\sigma).
\end{align}

Therefore,
\begin{align}
    -\frac{1}{n}\log\beta^n_\epsilon(\rho||\sigma)\ge -\frac{1}{n}\log\mathrm{tr}\sigma^{\otimes n}\circ T_n,\quad n\ge N,\quad 0<\epsilon<1.
\end{align}
Taking limit inferior, we obtain
\begin{align}
    \varliminf_{n\to \infty}-\frac{1}{n}\log\beta^n_\epsilon(\rho||\sigma)\ge\varliminf_{n\to\infty}-\frac{1}{n}\log\mathrm{tr}\sigma^{\otimes n}\circ T_n\ge D(\rho||\sigma)-\delta.
\end{align}
We take $\delta\to 0$, then we obtain
\begin{align}
    \varliminf_{n\to\infty}-\frac{1}{n}\log\beta^n_\epsilon(\rho||\sigma)\ge D(\rho||\sigma).\label{equation:proofofstein1}
\end{align}

(2)We suppose that there exists the family $\{T_n\}$ satisfying following conditions:
\begin{align}
    &\varlimsup_{n\to\infty}-\frac{1}{n}\log\beta_\epsilon^n(\rho||\sigma)<D(\rho||\sigma),\\
&\beta^n_\epsilon(\rho||\sigma)=\mathrm{tr}\sigma^{\otimes n}\circ T_n,\quad \mathrm{tr}\rho^{\otimes n}\circ(u-T_n)\le \epsilon,\quad \forall n
\end{align}
Then, $B^\dag(\rho||\sigma)<D(\rho||\sigma)$ holds and this is contradiction.
Therefore, in order to satisfy $B^\dag(\rho||\sigma)=D(\rho||\sigma)$,
it is necessarily to satisfy 
\begin{align}
    \varlimsup_{n\to\infty}-\frac{1}{n}\log\beta_\epsilon^n(\rho||\sigma)\ge D(\rho||\sigma).\label{equation:proofofstein2}
\end{align}

(3)Combining \eqref{equation:proofofstein1} and \eqref{equation:proofofstein2}, we obtain
\begin{align}
    \lim_{n\to\infty}-\frac{1}{n}\log\beta^n_\epsilon(\rho||\sigma)=D(\rho||\sigma).
\end{align}

(4) Next, under the condition of Stein's Lemma, we show $B^\dag(\rho||\sigma)\le D(\rho||\sigma)\le B(\rho||\sigma)$.
From Stein's Lemma, we obtain
\begin{align}
    \lim_{n\to\infty}-\frac{1}{n}\log\beta^n_\epsilon(\rho||\sigma)=D(\rho||\sigma),
\end{align}
where $0<\epsilon<1$.
For arbitrary $0<\epsilon<1$, there exists a number $N$ and a family $\{T_n\}$ such that
\begin{align}
    &\mathrm{tr}\rho^{\otimes n}\circ (u-T_n)<\epsilon,\label{equation:A.51}\\
    &-\frac{1}{n}\log\mathrm{tr}\sigma^{\otimes n}\circ T_n= -\frac{1}{n}\log\beta^n_{\epsilon}(\rho||\sigma),
\end{align}
for $n\ge N$.
Therefore, by definition of $B(\rho||\sigma)$, we obtain $B(\rho||\sigma)\ge D(\rho||\sigma)$.
Similarly to obtaining $B(\rho||\sigma)\ge D(\rho||\sigma)$, we obtain $D(\rho||\sigma)\ge B^\dag(\rho||\sigma)$ by following way:
For a family $\{T_n\}$ satisfying $\varliminf_{n\to\infty}\mathrm{tr}\rho^{\otimes n}\circ (u-T_n)<1$ and arbitrary $0<\epsilon<1$, there exist a number $N$
\begin{align}
&\inf_{n\ge k}\mathrm{tr}\rho^{\otimes n}\circ (u-T_n)\le \epsilon,\\
    &\varliminf_{n\to\infty}-\frac{1}{n}\log\mathrm{tr}\sigma^{\otimes n}\circ T_n\le \lim_{n\to\infty}-\frac{1}{n}\log\beta^n_\epsilon(\rho||\sigma)=D(\rho||\sigma),
\end{align}
where $k\ge N$.
Then, $\varliminf_{n\to\infty}\mathrm{tr}\rho^{\otimes n}\circ (u-T_n)<1$ and $\varliminf_{n\to\infty}-\frac{1}{n}\log\mathrm{tr}\sigma^{\otimes n}\circ T_n=\varliminf_{n\to\infty}-\frac{1}{n}\log\beta^n_\epsilon(\rho||\sigma)=D(\rho||\sigma)$ holds.
Therefore, we obtain $D(\rho||\sigma)\ge B^\dag(\rho||\sigma)$.

\subsection{Classical Stein's Lemma}\label{appendix:classicalStein}
We consider the case of simple hypothesis testing.
We put the element of null hypothesis as $\rho$ and the element of alternative hypothesis as $\sigma$.
Also, we consider the probability distributions $p,q$ on the sample space $\Omega=\{1,\ldots,m\}$, where 
the distribution of null hypothesis is $p$ and the distribution of alternative hypothesis is $q$.
Now, we proceed the procedure which we obtain set of $n$ events $A_n\subset \Omega_n:=\{1,\ldots,m\}^n$.
Then, the first type error is described as
\begin{align}
    \beta(A_n):=1-\sum_{(i_1,\ldots,i_n)\in A_n}p_{i_1}\cdots p_{i_n}.\label{equation:classicalstein1}
\end{align}
The second type error is described as
\begin{align}
    \alpha(A_n):=\sum_{((i_1,\ldots,i_n)\in A_n)}q_{i_1}\cdots q_{i_n}.\label{equation:classicalstein2}
\end{align}
We define the following error probability.
\begin{definition}\label{definition:classicalsteinbeta}
    Let $p,q$ be the probability distribution on the sample space $\Omega$.
    We fix a set $A_n\subset \Omega_n$.
    The first and second type errors are defined as \eqref{equation:classicalstein1} and \eqref{equation:classicalstein2} for the set $A_n$.
    Then, we define a error probability as follows:
    \begin{align}
        \beta^n_\epsilon(p||q):=\min_{A_n}\{\alpha(A_n)|\beta(A_n)\le \epsilon\},\quad 0<\epsilon<1.
    \end{align}
\end{definition}
The classical Stein's Lemma is represented as follows:
\begin{theorem}\label{theorem:classicalstein}
    Let $p,q$ be probability distributions on the sample space $\Omega$.
    Then the following relation holds for the error probability defined in Definition~\ref{definition:classicalsteinbeta}.
    \begin{align}
        \lim_{n\to\infty}-\frac{1}{n}\log\beta^n_{\epsilon}(p||q)=D(p||q),\quad 0<\forall \epsilon <1.
    \end{align} 
\end{theorem}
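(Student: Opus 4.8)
The plan is to prove classical Stein's Lemma by the standard information-spectrum / typical-sequence argument, splitting it into a direct (achievability) part and a converse (optimality) part, with the whole argument driven by the empirical behavior of the normalized log-likelihood ratio $Z_n(x^n):=\frac{1}{n}\sum_{k=1}^n\log\frac{p_{i_k}}{q_{i_k}}$ for $x^n=(i_1,\ldots,i_n)$. The key structural fact is that $Z_n$ is an average of i.i.d.\ bounded random variables under $p^{\otimes n}$ (boundedness holds because $p,q$ are supported on the finite set $\Omega$; I would first reduce to the case $\operatorname{supp}(p)\subseteq\operatorname{supp}(q)$, since otherwise $D(p\|q)=\infty$ and the deterministic test that rejects every sequence using a symbol outside $\operatorname{supp}(q)$ already forces $\beta^n_\epsilon$ to decay faster than any exponential), so the weak law of large numbers gives $Z_n\to D(p\|q)$ in $p^{\otimes n}$-probability.

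For the direct part I would fix $\delta>0$ and take the acceptance region $A_n:=\{x^n: Z_n(x^n)\ge D(p\|q)-\delta\}$. The law of large numbers gives $\beta(A_n)=p^{\otimes n}(A_n^c)\to 0$, hence $\beta(A_n)\le\epsilon$ for all large $n$, so $A_n$ is admissible in the minimization defining $\beta^n_\epsilon(p\|q)$. On $A_n$ one has the pointwise bound $q_{i_1}\cdots q_{i_n}\le e^{-n(D(p\|q)-\delta)}\,p_{i_1}\cdots p_{i_n}$, and summing over $x^n\in A_n$ yields $\alpha(A_n)\le e^{-n(D(p\|q)-\delta)}$. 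Therefore $-\frac{1}{n}\log\beta^n_\epsilon(p\|q)\ge D(p\|q)-\delta$ eventually, and letting $\delta\to 0$ gives the $\ge$ inequality.

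For the converse I would take an arbitrary sequence of admissible regions $A_n$ with $\beta(A_n)\le\epsilon$ and intersect it with the non-atypical set $B_n:=\{x^n: Z_n(x^n)\le D(p\|q)+\delta\}$. Since $p^{\otimes n}(A_n)\ge 1-\epsilon$ and $p^{\otimes n}(B_n^c)\to 0$, we get $p^{\otimes n}(A_n\cap B_n)\ge 1-\epsilon-o(1)$, which is bounded away from $0$. On $A_n\cap B_n\subseteq B_n$ the reverse pointwise bound $q_{i_1}\cdots q_{i_n}\ge e^{-n(D(p\|q)+\delta)}\,p_{i_1}\cdots p_{i_n}$ holds, so $\alpha(A_n)\ge q^{\otimes n}(A_n\cap B_n)\ge e^{-n(D(p\|q)+\delta)}(1-\epsilon-o(1))$, whence $-\frac{1}{n}\log\alpha(A_n)\le D(p\|q)+\delta+o(1)$. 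Taking the infimum over admissible $A_n$ and then $\delta\to 0$ gives the $\le$ inequality; combining the two bounds yields the claimed limit.

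The main obstacle — really the only nontrivial point — is making the concentration of $Z_n$ just strong enough and handling the boundary and support issues cleanly. Only the weak form of the law of large numbers (convergence in probability) is needed for both parts, so no large-deviation estimates or refined typicality are required; the care that remains is bookkeeping, namely checking that $p^{\otimes n}(A_n^c)$ and $p^{\otimes n}(B_n^c)$ vanish uniformly in the choice of $A_n$ and disposing of the degenerate case $D(p\|q)=\infty$ separately. Everything else reduces to the two elementary likelihood-ratio inequalities displayed above.
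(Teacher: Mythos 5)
Your proof is correct, but note that the paper itself does not prove Theorem~\ref{theorem:classicalstein} at all: it is stated in Appendix~\ref{appendix:classicalStein} as the known classical Stein's lemma and then invoked (via Lemma~\ref{lemma:direct}) as an external ingredient. So there is no in-paper argument to compare against; your write-up supplies exactly the standard weak-typicality / information-spectrum proof that the authors are implicitly relying on. The two likelihood-ratio inequalities and the weak law of large numbers for $Z_n$ are precisely what is needed, the admissibility of $A_n=\{Z_n\ge D(p\|q)-\delta\}$ for large $n$ is correctly justified, and the converse correctly exploits that $B_n$ does not depend on the competing test, so the ``uniformity'' you worry about is a non-issue. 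Two cosmetic points: in the degenerate case $\mathrm{supp}(p)\not\subseteq\mathrm{supp}(q)$ your phrase ``rejects every sequence using a symbol outside $\mathrm{supp}(q)$'' has the accept/reject convention backwards relative to the paper's definitions (\,$A_n$ is the region where the null $p$ is accepted, so sequences containing a symbol with $q_i=0$ should be placed \emph{inside} $A_n$, giving $\alpha(A_n)=0$ and $\beta(A_n)=(1-p_{i_0})^n\to 0$\,); and in the converse you should record that $1-\epsilon-p^{\otimes n}(B_n^c)>0$ for large $n$ so that the logarithm is defined. Neither affects the validity of the argument.
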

Similarly to the proof of Quantum Stein's Lemma, we define the following quantities:
\begin{definition}
    Let $p,q$ be the probability distributions on sample space $\Omega$.
    Then, for the family $\{A_n\subset \Omega_n\}$, we define the following quantities:
    \begin{align}
        B(p||q)&:=\sup_{\{A_n\}}\{\varliminf_{n\to\infty}-\frac{1}{n}\log\alpha(A_n)|\lim_{n\to\infty}\beta(A_n)=0\}.\\
B^\dag(p||q)&:=\sup_{\{A_n\}}\{\varliminf_{n\to\infty}-\frac{1}{n}\log\alpha(A_n)|\varliminf_{n\to\infty}\beta(A_n)<1\}.
    \end{align}
\end{definition}
Similarly to Appendix~\ref{appendix:proofofStein}, Classical Stein's Lemma implies the following theorem.
\begin{theorem}
    For probability distributions on a sample space $\Omega$, the following equality holds.
    \begin{align}
        B^\dag(p||q)=B(p||q)=D(p||q).
    \end{align}
\end{theorem}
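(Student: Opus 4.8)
The plan is to deduce this statement from Classical Stein's Lemma (Theorem~\ref{theorem:classicalstein}) by exactly the argument that takes Theorem~\ref{theorem:Stein} to Theorem~\ref{theorem:Stein2} in Appendix~\ref{appendix:proofofStein}, only now carried out for the probability distributions $p,q$ and the classical errors $\alpha,\beta$. First I would record the trivial inequality $B^\dag(p\|q)\ge B(p\|q)$: every family $\{A_n\}$ with $\lim_{n\to\infty}\beta(A_n)=0$ also satisfies $\varliminf_{n\to\infty}\beta(A_n)<1$, so the feasible set defining $B^\dag$ contains the one defining $B$ and the supremum can only grow. Hence it suffices to prove the two one-sided bounds $B(p\|q)\ge D(p\|q)$ (direct part) and $D(p\|q)\ge B^\dag(p\|q)$ (converse part), after which chaining $D(p\|q)\ge B^\dag(p\|q)\ge B(p\|q)\ge D(p\|q)$ forces equality throughout.

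For the direct part I would fix a sequence $\epsilon_k\downarrow 0$ and, for each $k$, invoke Theorem~\ref{theorem:classicalstein} in the form $\lim_{n\to\infty}-\frac1n\log\beta^n_{\epsilon_k}(p\|q)=D(p\|q)$, choosing for each $n$ an optimal acceptance region $A^{(k)}_n$ that attains $\beta^n_{\epsilon_k}(p\|q)$, so $\beta(A^{(k)}_n)\le\epsilon_k$ and $\alpha(A^{(k)}_n)=\beta^n_{\epsilon_k}(p\|q)$. A diagonal construction then glues them: pick $N_1<N_2<\cdots$ so that $\big|-\frac1n\log\beta^n_{\epsilon_k}(p\|q)-D(p\|q)\big|<\epsilon_k$ for all $n\ge N_k$, and set $A_n:=A^{(k)}_n$ on $N_k\le n<N_{k+1}$. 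Then $\beta(A_n)\le\epsilon_k\to 0$ along the construction, while $\inf_{n\ge N_k}-\frac1n\log\alpha(A_n)\ge D(p\|q)-\epsilon_k$, so $\varliminf_{n\to\infty}-\frac1n\log\alpha(A_n)\ge D(p\|q)$; this family is admissible for $B$ and gives $B(p\|q)\ge D(p\|q)$.

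For the converse part I would take an arbitrary admissible family $\{A_n\}$ with $r:=\varliminf_{n\to\infty}\beta(A_n)<1$, fix $\epsilon$ with $r<\epsilon<1$, and extract a subsequence $\{n_j\}$ with $\beta(A_{n_j})\le\epsilon$ for all $j$. By the definition of $\beta^{n_j}_\epsilon(p\|q)$ we get $\alpha(A_{n_j})\ge\beta^{n_j}_\epsilon(p\|q)$, hence $-\frac1{n_j}\log\alpha(A_{n_j})\le-\frac1{n_j}\log\beta^{n_j}_\epsilon(p\|q)$, and letting $j\to\infty$ with Theorem~\ref{theorem:classicalstein} again yields $\varliminf_{n\to\infty}-\frac1n\log\alpha(A_n)\le\varliminf_{j\to\infty}-\frac1{n_j}\log\alpha(A_{n_j})\le D(p\|q)$. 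Taking the supremum over all admissible $\{A_n\}$ gives $B^\dag(p\|q)\le D(p\|q)$, completing the argument.

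The only step carrying any genuine content is the diagonalization in the direct part, where one must upgrade Stein's Lemma's per-$\epsilon$ exponent statement to a single family whose type I error actually tends to $0$, as required by the definition of $B$; the rest is bookkeeping with $\varliminf$, subsequences, and the definitions of $\alpha,\beta,\beta^n_\epsilon$. I expect that diagonal gluing to be the main — though quite mild — obstacle.
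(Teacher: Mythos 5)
Your proof is correct and follows essentially the same route as the paper, which simply asserts that the classical statement follows ``similarly to Appendix~\ref{appendix:proofofStein}'', i.e.\ by deducing $B(p\|q)\ge D(p\|q)$ and $D(p\|q)\ge B^\dag(p\|q)$ from Theorem~\ref{theorem:classicalstein} exactly as in part (4) of that appendix. Your explicit diagonal gluing of the per-$\epsilon$ optimal regions into a single family with vanishing type~I error is in fact more careful than the paper's own sketch, which passes from ``for each $\epsilon$ there is a family with $\beta(A_n)\le\epsilon$'' to admissibility for $B$ without spelling out that step.
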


\end{document}